\newcommand{\R}{\mathbb{R}}
\newcommand{\Z}{\mathbb{Z}}
\newcommand{\N}{\mathbb{N}}
\newcommand{\M}{\mathcal{M}}
\newcommand{\E}{\mathcal{E}}
\renewcommand{\d}{\mathrm{d}}
\date{}
\theoremstyle{plain}
\newtheorem{lem}{Lemma}[section]
\newtheorem{thm}{Theorem}[section]
\newtheorem{prop}{Proposition}[section]
\newtheorem{coro}{Corollary}[section]
\newtheorem{mydef}{Definition}[section]
\newtheorem{remark}{Remark}[section]
\numberwithin{equation}{section}
\newcommand{\ffi}{\varphi}
\newcommand{\e}{\varepsilon}
\newcommand{\dr}{\partial}
\newcommand{\dive}{\mathrm{div}}
\newcommand{\Ll}{\mathscr{L}}
\newcommand{\tr}{\mathrm{tr}}
\newcommand{\Lb}{\underline{L}}
\newcommand{\Pol}{\mathrm{Pol}}
\newcommand{\D}{\mathbf{D}}
\newcommand{\GO}[1]{O\left( #1 \right)}
\renewcommand{\l}{\left\|}
\renewcommand{\r}{\right\|}
\newcommand{\half}{\frac{1}{2}}
\newcommand{\enstq}[2]{\left\{#1~\middle|~#2\right\}}
\newcommand{\saut}{\par\leavevmode\par}
\newcommand{\h}{\mathfrak{h}}
\newcommand{\F}{\mathfrak{F}}
\newcommand{\dy}{\Delta}
\newcommand{\spe}[1]{\mathrm{spect}\left( #1 \right)}
\newcommand{\supp}[1]{\mathrm{supp}\left( #1\right)}
\newcommand{\BG}{\mathrm{BG}}
\newcommand{\tBox}{\Tilde{\Box}}
\newcommand{\nabb}{\mbox{$\nabla \mkern-13mu /$\,}}
\newcommand{\lap}{\mbox{$\Delta \mkern-13mu /$\,}}
\newcommand{\pth}[1]{\left( #1 \right)}
\renewcommand{\a}{\alpha}
\renewcommand{\b}{\beta}
\newcommand{\si}{\sigma}
\newcommand{\la}{\lambda}
\title{Geometric optics approximation \\ for the Einstein vacuum equations}
\author{Arthur Touati
\thanks{Institut des Hautes Etudes Scientifiques, Bures-sur-Yvette, France (\href{mailto:touati@ihes.fr}{touati@ihes.fr})}
}
\begin{document}

\maketitle

\begin{abstract}
We show the stability of the geometric optics approximation in general relativity by constructing a family $(g_\lambda)_{\lambda\in(0,1]}$ of high-frequency metrics solutions to the Einstein vacuum equations in 3+1 dimensions without any symmetry assumptions. In the limit $\lambda\to 0$ this family approaches a fixed background $g_0$ solution of the Einstein-null dust system, illustrating the backreaction phenomenon. We introduce a precise second order high-frequency ansatz and identify a generalised wave gauge as well as polarization conditions at each order.  The validity of our ansatz is ensured by a weak polarized null condition satisfied by the quadratic non-linearity in the Ricci tensor. The Einstein vacuum equations for $g_\lambda$ are recast as a hierarchy of transport and wave equations, and their coupling induces a loss of derivatives. In order to solve it, we take advantage of a null foliation associated to $g_0$ as well as a Fourier cut-off adapted to our ansatz. The construction of high-frequency initial data solving the constraint equations is the content of our companion paper \cite{Touati2023a}.
\end{abstract}

\tableofcontents

\section{Introduction}

\subsection{Presentation of the result}\label{section presentation}

In the present article, we are interested in oscillating solutions of the Einstein vacuum equations
\begin{align}
R_{\a\b}(g)=0, \label{EVE'}
\end{align}
where $R_{\a\b}(g)$ is the Ricci tensor of the Lorentzian metric $g$ defined on a manifold $\mathcal{M}$. Our main goal is to extend Choquet-Bruhat's construction from \cite{ChoquetBruhat1969} in order to fully justifies the geometric optics approximation for \eqref{EVE'}. Choquet-Bruhat constructs approximate oscillating vacuum spacetimes, i.e a family $(g_\la)_{\la\in(0,1]}$ of metrics oscillating at frequency $\la^{-1}$ satisfying $R_{\a\b}(g_\la)=\GO{\la}$. We improve this result by constructing a similar family satisfying $R_{\a\b}(g_\la)=0$. We give a rough version of our main result.
\begin{thm}[Rough version of Theorem \ref{theo main}]\label{theo rough'}
There exists a family of metrics $(g_\lambda)_{\lambda\in(0,1]}$ of the form
\begin{align}
g_\lambda = g_0 + \lambda g^{(1)} \left( \frac{u_0}{\lambda}\right) + \GO{\lambda^2}\label{ansatz mega rough'}
\end{align}
solving \eqref{EVE'} in generalised wave gauge on $[0,1]\times \R^3$ where $g_0$ is a given background metric close to Minkowski, $u_0$ a solution of $g_0^{-1}(\d u_0, \d u_0)=0$ and $g^{(1)}$ is a periodic tensor in the variable $\frac{u_0}{\lambda}$.
\end{thm}
In addition to a more refined high-frequency ansatz compared to the one of \cite{ChoquetBruhat1969}, we need to choose a gauge in order to deal with the invariance by diffeomorphism of \eqref{EVE'}. We choose to work in a generalised wave gauge, i.e we rewrite \eqref{EVE'} as a system of quasi-linear wave equations for the metric coefficients. The initial data for this system are given by our companion paper \cite{Touati2023a} which provides oscillating solutions of the constraint equations on $\R^3$. 

\saut
As it can be seen on \eqref{ansatz mega rough'}, the sequence $(g_\la)_{\la\in(0,1]}$ constructed in this article illustrates the backreaction phenomenon, i.e it converges weakly to a non-vacuum spacetime in the high-frequency limit $\la\to0$, denoted by $g_0$ in Theorem \ref{theo rough'}:
\begin{equation}\label{behaviour}
\begin{aligned}
g_\la & \to g_0,\quad\text{uniformly in $L^\infty$,}
\\ \dr g_\lambda& \rightharpoonup \dr g_0,\quad\text{weakly in $L^2_{loc}$.} 
\end{aligned} 
\end{equation}
The background metric $g_0$ is a generic solution of the Einstein-null dust system 
\begin{equation*}
\left\{
\begin{aligned}
R_{\alpha\beta}(g_0) & = F_0^2 \dr_\alpha u_0 \dr_\beta u_0, 
\\ g_0^{-1}(\d u_0, \d u_0) & =0, 
\\ 2 g_0^{\rho\sigma}\dr_\rho u_0 \dr_\sigma F_0 + (\Box_{g_0} u_0) F_0 & = 0. 
\end{aligned}
\right.
\end{equation*}
Therefore, our result is in agreement with Burnett's conjecture from \cite{Burnett1989}, which states that the closure of the set of vacuum spacetimes for the topology associated with the convergence \eqref{behaviour} identifies with the set of kinetic spacetimes solving the massless Einstein-Vlasov system. 

\saut
From the rough ansatz \eqref{ansatz mega rough'}, we can also deduce that Theorem \ref{theo rough'} is a low-regularity existence result for the Einstein vacuum equations. Indeed, by differentiating \eqref{ansatz mega rough'} twice, we obtain $\l g_\la \r_{H^2}\sim \la^{-1}$. This implies that a direct application of the best existence result for \eqref{EVE'}, the celebrated bounded $L^2$ curvature theorem from \cite{Klainerman2015}, to the oscillating data provided by \cite{Touati2023a} would only give a time of existence of order $\la$. Therefore, our result requires a specific construction exploiting the structure of the high-frequency ansatz \eqref{ansatz mega rough'}. Such a construction allows us to show existence on a time interval uniform in $\la$.

\saut
The remainder of this introduction discusses further the main two aspects of our result.
\begin{itemize}
\item In Section \ref{section GO}, we present the standard strategy of geometric optics and describes briefly the new challenges posed by the semi-linear terms and the gauge freedom of \eqref{EVE'}. We also discuss in depth the construction of \cite{ChoquetBruhat1969}.
\item In Section \ref{section Burnett intro}, we compare our result with the state of the art on Burnett's conjecture.
\end{itemize}

\subsection{Geometric optics}\label{section GO}

From a physical point of view, the strategy of geometric optics aims at describing the appearance of rays from wave propagation. The word "optics" reminds us of the very first motivations of physicists to describe the propagation of light from the Maxwell equations. From a mathematical point of view, geometric optics aims at constructing oscillating solutions to non-linear hyperbolic system of equations. We refer to \cite{Rauch2012} and \cite{Metivier2009} for two very rich presentations of the field of geometric optics.

\subsubsection{Towards weakly non-linear geometric optics and transparency}

Consider a generic first order non-linear hyperbolic system
\begin{align}
A_0(u)\dr_t u + \sum_{i=1}^d A_i(u) \dr_i u & = F(u),\label{generic hyp system}
\end{align}
where $u:\R^{1+d}\longrightarrow \R^N$ is the unknown, $A_0$ is a positive definite $N\times N$ matrix, the $A_i$'s are symmetric $N\times N$ matrices, and $F$ is a $\R^N$-valued function. One can wonder what happens to oscillating initial data of the form 
\begin{align*}
u_\la(0,x)= \exp\pth{i\frac{\ffi_0(x)}{\la}} f_0(x) , 
\end{align*}
where $f_0:\R^{d}\longrightarrow \R^N$ is the initial amplitude, $\ffi_0:\R^d\longrightarrow\R$ is a scalar function and $\la>0$ is a small wavelength. In the case where \eqref{generic hyp system} is linear, i.e when $A_0$ and the $A_i$'s don't depend on $u$ and $F$ is a linear function of $u$, this question has been settled in \cite{Lax1957}. There, it is shown that the solution $u_\la$ can be written as a so called WKB ansatz
\begin{align}
u_\la =  \exp\pth{i\frac{\ffi}{\la}} \pth{ f^{(0)} + \la f^{(1)} + \la^2f^{(2)} + \cdots },\label{ansatz u la}
\end{align}
where the phase $\ffi$ satisfies the eikonal equation $\det\pth{ A_0 \dr_t \ffi + \sum_{i=1}^d A_i \dr_i\ffi }=0$, and each $f^{(i)}$ satisfies a polarization condition and is transported along the rays, i.e along the level surfaces of $\ffi$. The whole point of geometric optics is to understand what survives when one goes to the non-linear system. In this case, one expects the waves to interact and to create harmonics. Therefore, a more refined ansatz than \eqref{ansatz u la} is required
\begin{align*}
u_\la \sim \la^p \sum_{n\geq 0}\la^n U_n\pth{ t,x , \frac{\ffi(t,x)}{\la}},
\end{align*}
where the profiles $U_n(t,x,\theta)$ are periodic in $\theta$. The prefactor $\la^p$ measures the amplitude of the oscillations, and the value of $p$ dictates the role played by the non-linear interaction. If $p$ is large, the first profile is still linearly propagated. However if one decreases $p$, we reach the regime of weakly non-linear geometric optics when non-linear terms enter the transport equation for the first profile. We refer to \cite{Hunter1983,Joly1992,Joly1993} for standard results on this regime. For quadratic interactions as in the Einstein vacuum equations, the threshold for this regime is $p=1$. Looking at \eqref{ansatz mega rough'}, this article deals with the weakly non-linear regime for the Einstein vacuum equations. 

\saut
For some systems, the interaction between two waves vanishes due to the particular structure of the non-linear terms: the transport equations are then linear, even in the weakly non-linear regime. This situation is known as transparency in the geometric optics literature, see \cite{Joly2000}. As discussed in depth in \cite{Lannes2013}, transparency is closely related to the famous null condition for quadratic interaction introduced in \cite{Christodoulou1986,Klainerman1986}. As it is well known, the Einstein vacuum equations in wave gauge don't satisfy the null condition but rather a weak form of it, which still plays a major role in both Choquet-Bruhat's approximate construction and the present article's exact construction.

\subsubsection{Approximate geometric optics in general relativity}

The application of geometric optics to general relativity and the Einstein vacuum equations is motivated by a potential non-linear description of gravitational waves. The standard description of gravitational waves is based on the linearized gravity setting, where one considers a metric of the form $g=m+h$ and linearizes \eqref{EVE'} around the Minkowski metric $m$. Assuming the Lorentz gauge and the TT-gauge, one can consider plane wave solutions to the linearized equations with two degrees of freedom corresponding to two polarizations. See Chapter 35 of \cite{Misner1973} for more details.

\saut
Though being historically crucial, linearized gravity fails to describe gravitational waves in a more physical context. First, the restriction to small perturbations $h$ can't, by definition, describe strong gravitational fields. Moreover, one cannot describe the energy of a gravitational wave since any quadratic expression in the perturbation $h$ is neglected. This has the following consequence: the background spacetime on which the gravitational wave propagates doesn't see the wave. As first shown in \cite{ChoquetBruhat1969}, geometric optics provides a mathematical framework that goes beyond linearized gravity. Note that other averaging schemes have been considered in \cite{Brill1964,Isaacson1968,Isaacson1968a} to describe the energy of a gravitational wave.

\saut
We now describe the construction of \cite{ChoquetBruhat1969} (see also Chapter 11 of \cite{ChoquetBruhat2009}). The author considers WKB ansatz for the metric of the spacetime, i.e introduces a small parameter $\lambda>0$ and construct $g_\lambda$ of the form
\begin{align}
g_\lambda & = g_0 + \lambda g^{(1)}\left( \frac{u_0}{\lambda}\right) + \lambda^2 g^{(2)}\left( \frac{u_0}{\lambda}\right)\label{ansatz CB}
\end{align}
satisfying $R_{\mu\nu}(g_\lambda)=\GO{\lambda}$. As in the present article, the notation $g^{(i)}\left( \frac{u_0}{\lambda}\right)$ means that $g^{(i)}$ are symmetric 2-tensor depending on an extra phase argument $\frac{u_0}{\lambda}$.  Since the Ricci tensor involves at most two derivatives of the metric, we have formally $R_{\mu\nu}(g_\lambda) = \frac{1}{\lambda} R_{\mu\nu}^{(-1)}+ R_{\mu\nu}^{(0)} + \GO{\lambda}$. The goal of Choquet-Bruhat's construction is to identify conditions on $g_0$, $g^{(1)}$ and $g^{(2)}$ so that $R_{\mu\nu}^{(-1)}=R_{\mu\nu}^{(0)}=0$. These conditions are the following. To ensure $R_{\mu\nu}^{(-1)}=0$, one must have $g_0^{-1}(\d u_0, \d u_0)=0$ and the polarization condition
\begin{align}
g_0^{\mu\nu} \left( \dr_\mu u_0 g^{(1)}_{\alpha\nu} - \half \dr_\alpha u_0 g^{(1)}_{\mu\nu} \right) = 0.\label{pola g1 intro'}
\end{align}
To ensure $R_{\mu\nu}^{(0)}=0$, one must have the following transport equation for $g^{(1)}$
\begin{align}
2\dr^\alpha u_0 \D_\alpha g^{(1)}_{\mu\nu} + (\D^\alpha \D_\alpha u_0 ) g^{(1)}_{\mu\nu} & = 0. \label{eq g1 intro'}
\end{align}
Moreover, one needs $R_{\mu\nu}(g_0)=\tau \dr_\mu u_0 \dr_\nu u_0$ with $\tau>0$ depending quadratically on $g^{(1)}$ and $g^{(2)}$ to satisfy a polarization condition with a RHS also depending quadratically on $g^{(1)}$. Note that the polarization condition \eqref{pola g1 intro'} is propagated by the transport equation \eqref{eq g1 intro'} and corresponds to the two degrees of freedom of the TT-gauge from linearized gravity. Moreover, the equation for $g_0$ shows how this strategy successfully describes the impact of the wave on the background spacetime. Finally, as announced above, the special structure of the Einstein vacuum equations implies transparency, i.e that gravitational waves do not suffer self-distortion, which translates as the linearity of the transport equation \eqref{eq g1 intro'}.

\saut
Before we describe the exact construction of this article, let us mention that WKB ansatz have also been considered in the linearized gravity setting in \cite{Andersson2021} in order to describe new geometric effect such as the spin Hall effect for gravitational waves, by analogy with the spin Hall effect for light (see \cite{Oancea2020a} from the same authors).

\subsubsection{Exact geometric optics in general relativity}\label{section exact GO}

The article \cite{ChoquetBruhat1969} constructs first order approximate high-frequency vacuum spacetimes. Two questions arise in the spirit of the geometric optics: first, can one construct approximate vacuum spacetimes of any order? That is, for a given $N\in\N$, can one construct $(g_\la)_{\la\in(0,1]}$ such that $R_{\mu\nu}(g_\lambda)=\GO{\lambda^N}$? \textit{A priori}, an iterative procedure similar to Choquet-Bruhat's construction or the author's might give a positive answer, though a very technical one. Of greater importance is the second question: given Choquet-Bruhat's $g_\la$, can one show stability of the geometric optics approximation? That is, can one show that $g_\la$ stays close to an exact vacuum spacetime for a fixed time interval? This question is for instance raised in \cite{Metivier2009}, and Theorem \ref{theo rough'} and its precise version Theorem \ref{theo main} provide a positive answer, thus fully justifying the geometric optics approximation for the Einstein vacuum equations. Note that this is far from automatic, and in many cases approximate solutions of arbitrary order are strongly unstable, see for example \cite{Lebeau2001}.

\saut
Let us now describe the new challenges posed by the construction of exact high-frequency vacuum spacetimes compared with Choquet-Bruhat's construction or with standard geometric optics results. The most striking difference is the need for a more refined high-frequency ansatz, since satisfying $R^{(-1)}_{\mu\nu}=R^{(0)}_{\mu\nu}=0$ is not enough anymore. The ansatz \eqref{ansatz CB} is strictly included in the ansatz \eqref{ansatz mega rough'}, and the $\GO{\la^2}$ in \eqref{ansatz mega rough'} hides some key terms such as a non-oscillating remainder, which absorbs every terms in the Ricci tensor of order $\la^2$ or higher. Compared with \cite{ChoquetBruhat1969}, we also provide the oscillating behaviour of $g^{(1)}$ and $g^{(2)}$, thus describing the creation of harmonics. In order to control the creation of harmonics, we heavily rely on the exact structure of the quadratic non-linearity in \eqref{EVE'}. This is already present in \cite{ChoquetBruhat1969} for second harmonic but we extend it to the third harmonic in $R^{(1)}_{\mu\nu}$.

\saut
The second difference is the need of a gauge choice, not needed if one only constructs approximate solutions to \eqref{EVE'}. We choose to work in a generalised wave gauge, which is part of our geometric optics construction. Concretely, the standard wave gauge $\Box_g x^\a=0$ condition is replaced by $\Box_{g_\la}x^\a=\GO{\la}$ where the $\GO{\la}$ contains oscillating terms determined during the construction. This is the major difference with standard geometric optics results such as \cite{Joly1993}, where the systems of equations at stake are by nature already hyperbolic, whereas for \eqref{EVE'} hyperbolicity precisely comes at a price of a gauge choice. Note that geometric optics for semi-linear gauge invariant systems such as the Yang-Mills equations have been studied in \cite{Jeanne2002}. The situation here is different since \eqref{EVE'} are quasi-linear. Dealing with the oscillations in the gauge and using them to absorb some non-linear interactions is the main formal challenge of this article.

\saut
The third difference is linked to well-posedness, and as for the second, is totally absent from \cite{ChoquetBruhat1969}. As explained at the end of Section \ref{section presentation}, the scaling of weakly non-linear geometric optics implies that Theorem \ref{theo rough'} corresponds to a regime for which well-posedness is not proved, since the $H^2$ norm of $g_\la$ diverges in the high-frequency limit, while only the $H^1$ norm is bounded. This loss of one derivative manifests itself in the coupled wave-transport system between the non-oscillating remainder and $g^{(2)}$. Regaining the missing derivative is the main analytical challenge of this article. A similar issue already occurs in \cite{Huneau2018a}, where the elliptic gauge  is of great help. Here, we introduce Fourier multipliers adapted to the high-frequency expansion of the metric and rely on the background null foliation.

\subsection{Burnett's conjecture}\label{section Burnett intro}

We now turn to the second motivation of our result, the so-called Burnett conjecture, introduced by Burnett in \cite{Burnett1989}. In this article, he consideres a family of metrics $(h_\lambda)_\lambda$ converging to a metric $h_0$ as in \eqref{behaviour}, with the $h_\la$'s all satisfying \eqref{EVE'}. Burnett then asks: what is $R_{\mu\nu}(h_0)-\half R(h_0)(h_0)_{\mu\nu}$? In other words, what effective stress-energy tensor can be produced via weak limits of vacuum spacetimes? He proposes the following double conjecture: 
\begin{itemize}
\item the metric $h_0$ solves the massless Einstein-Vlasov system
\begin{equation}\label{Einstein Vlasov}
\left\{
\begin{aligned}
R_{\alpha\beta}(h_0) & = \int_{h_0^{-1}(p,p)=0} f(x,p) p_\alpha p_\beta \d\mu_{h_0}, 
\\ p^\alpha \dr_\alpha f - p^\alpha p^\beta \Gamma(g_0)^\rho_{\alpha\beta} \dr_{p^\rho}f & =0, 
\end{aligned}
\right.
\end{equation} 
where $\d\mu_{h_0}$ is the measure on the tangent bundle,
\item conversely, any solution of \eqref{Einstein Vlasov} is the limit in the sense \eqref{behaviour} of a sequence of vacuum spacetimes $(h_\lambda)_\lambda$.
\end{itemize}
Before we review the literature, let us explain how the direct and indirect parts of the conjecture interact. The direct conjecture deals with given sequences of solutions to \eqref{EVE'} and tries to characterize their limits' Ricci tensor. The singular nature of these vacuum solutions implies that their own existence is not trivial. Therefore, by answering the indirect conjecture one actually constructs concrete examples of sequences exhibiting the singular behaviour required by the direct conjecture. 

\saut
The direct conjecture has first been adressed in \cite{Green2011}, where it is proved that the effective stress-energy tensor is traceless and satisfies the weak energy condition. This work has led to strong debates among cosmologists, see \cite{Buchert2015,Green2015}, on the role played by backreaction in cosmology. 

\saut
In a series of papers, Huneau and Luk prove Burnett's conjecture in $\mathbb{U}(1)$ symmetry, a symmetry under which the 3+1 Einstein vacuum equations reduce to the 2+1 Einstein-wave map system (see \cite{Moncrief1986}). In this setting, the direct part of the conjecture is proved by the means of microlocal defect measure in \cite{Huneau2019} (see also \cite{Guerra2021}), while the indirect conjecture is proved for $N$ null dusts (a discretized multiphase version of \eqref{Einstein Vlasov}) in \cite{Huneau2018a} and will be fully adressed in the forthcoming \cite{Huneau2023}. The two articles \cite{Huneau2018a,Huneau2023} are both based on high-frequency ansatz in the spirit of geometric optics, but highly benefit from the $\mathbb{U}(1)$ symmetry which allows the construction of an elliptic gauge.

\saut
In \cite{Luk2020}, Luk and Rodnianski provide the first result without any symmetry assumption. Based on their work on the low-regularity characteristic Cauchy problem in double null gauge (see \cite{Luk2012,Luk2017}), they prove both sides of Burnett's conjecture. Their proof is not based on a high-frequency ansatz, and thus can handle loss of strong convergence not only by oscillation but also by concentration. It also sheds new light on the Einstein-null dusts system and on the formation of trapped surfaces. They impose extra regularity along the angular directions compared with the null directions associated to the double null gauge. This has the following consequence: \cite{Luk2020} only proves Burnett's conjecture for two null dusts.

\saut
Our present geometric optics construction proves Burnett's indirect conjecture without any symmetry assumption but only for one null dust. Therefore, it is somehow an intermediate result between Huneau and Luk's approach and Luk and Rodnianski's. However, thanks to \cite{Huneau2018a} and multiphase geometric optics results such as \cite{Joly1993}, there is hope that the present proof can be extended to an arbitrary number of null dusts, since the generalised wave gauge does not single out any null direction.

\subsection{Acknowledgments}

The author would like to thank his advisor Cécile Huneau for all the helpful suggestions and encouragement. The author is also very grateful to the anonymous referees for their constructive inputs that considerably improve the initial manuscript.

\subsection{Outline of the article}

The remainder of this article is structured as follows.
\begin{itemize}
\item In Section \ref{section statement}, we introduce the usual analytical material, define our class of initial data and state the main result on the existence of the sequence $(g_\lambda)_\lambda$, i.e Theorem \ref{theo main}.
\item In Section \ref{section strategy}, we present the strategy of the proof of Theorem \ref{theo main}.
\item Sections \ref{section high-frequency ansatz} to \ref{section solving EVE} are devoted to the proof of Theorem \ref{theo main}.
\end{itemize}

\section{Statement of the results}\label{section statement}

In this section, we first introduce some material, define the background spacetime and finally state our main result.

\subsection{Preliminaries}

\subsubsection{Notations and function spaces}\label{section notations}

We will work on the manifold $\mathcal{M}\vcentcolon = [ 0,1]\times \R^3$. For $t\in[0,1]$, we denote by $\Sigma_t$ the hypersurface $\{t\}\times \R^3$. We introduce the following standard conventions and notations.
\begin{itemize}
\item The greek indices go from 0 to 3 and will always refer to the usual global coordinates system $(t,x^1,x^2,x^3)$ on $\mathcal{M}$, while latin lower case indices go from 1 to 3 and correspond to the three spatial directions.
\item Let $T$ and $S$ be two symmetric 2-tensors and $g$ a Lorentzian metric on $\mathcal{M}$. We define a scalar product by
\begin{align*}
|T\cdot S|_g = g^{\alpha\beta}g^{\mu\nu} T_{\alpha\mu} S_{\beta\nu},
\end{align*}
with associated norm $|T|^2_g = |T\cdot T|_g$.  The trace of $T$ with respect to $g$ is defined by $\tr_g T = g^{\alpha\beta}T_{\alpha\beta}$.
\item If $f$ is a scalar function on $\mathcal{M}$, we denote by $\nabla f$ a spatial derivative of $f$ and by $\dr f$ a time derivative $\dr_t f$ or $\nabla f$. The wave operator associated to a Lorentzian metric is defined by
\begin{align*}
\Box_g f & = g^{\alpha\beta} \left( \dr_\alpha\dr_\beta f - \Gamma^\mu_{\alpha\beta}\dr_\mu f \right),
\end{align*}
where $\Gamma^\mu_{\alpha\beta}$ are the Christoffel symbols. The principal part of this operator is denoted $\tBox_g$, i.e $\tBox_g f = g^{\alpha\beta}  \dr_\alpha\dr_\beta f$.
\end{itemize}
On each slice $\Sigma_t$ we consider the usual function spaces $L^p$ and $W^{k,p}$, by which we always mean $L^p(\Sigma_t)$ and $W^{k,p}(\Sigma_t)$ for some $t$ depending on the context.  We will also use the weighted Sobolev spaces defined below.
\begin{mydef}[Weighted Sobolev spaces]
For $1\leq p < + \infty$, $\delta\in\R$ and $k\in\N$ we define the space $W^{k,p}_\delta$ as the completion of $C^\infty_c$ for the norm
\begin{align*}
\l u\r_{W^{k,p}_\delta} & = \sum_{0\leq |\alpha|\leq k}\l  \langle x  \rangle^{\delta+|\alpha|} \nabla^\alpha u   \r_{L^p},
\end{align*}
where the $L^p$ norm is defined with the Euclidean volume element. We extend this definition to tensors of any type by summing over all components in coordinates.  Some special cases are $H^k_\delta \vcentcolon = W^{k,2}_\delta$ and $L^p_\delta \vcentcolon = W^{0,p}_\delta$.
\end{mydef}

We also define the following $L^\infty$-based spaces.
\begin{mydef}
For $k\in\N$ and $\delta\in\R$ we define $C^k_\delta$ as the completion of $C^\infty_c$ for the norm
\begin{align*}
\l u\r_{C^{k}_\delta} & = \sum_{0\leq |\alpha|\leq k}\l  \langle x  \rangle^{\delta+|\alpha|} \nabla^\alpha u   \r_{L^\infty},
\end{align*}
\end{mydef}

Let us recall some usual facts about these spaces (see \cite{ChoquetBruhat2009} for the proofs).

\begin{prop}\label{prop WSS chap 2}
Let $s,s',s_1,s_2,m\in\N$, $\delta,\delta',\delta_1,\delta_2,\beta\in\R$ and $1\leq p<+\infty$.
\begin{enumerate}
\item[(i)] If $s\leq \min(s_1,s_2)$, $s<s_1+s_2 - \frac{3}{p}$ and $\delta<\delta_1+\delta_2+\frac{3}{p}$ we have the continuous embedding
\begin{align*}
W^{s_1,p}_{\delta_1}\times W^{s_2,p}_{\delta_2} \subset W^{s,p}_\delta.
\end{align*}
\item[(ii)] If $m<s-\frac{3}{p}$ and $\beta\leq\delta + \frac{3}{p}$ we have the continuous embedding
\begin{align*}
W^{s,p}_\delta \subset C^m_\beta.
\end{align*}
\end{enumerate}
\end{prop}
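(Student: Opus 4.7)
\medskip

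\noindent\textbf{Proof proposal.} The plan for both statements is a classical dyadic decomposition argument that reduces the weighted inequalities on $\R^3$ to the corresponding unweighted Sobolev estimates on a fixed reference annulus. This is the standard Cantor/Nirenberg--Walker strategy underlying the weighted Sobolev calculus used throughout \cite{ChoquetBruhat2009}.

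First I would set up the decomposition. Let $A_0 = \{|x| \leq 2\}$ and, for each integer $j \geq 1$, let $A_j = \{2^{j-1} \leq |x| \leq 2^{j+1}\}$. Then $\R^3 = \bigcup_j A_j$ with finite overlap, and on $A_j$ one has $\langle x \rangle \sim 2^j$ uniformly in $j$. For each $j \geq 1$, introduce the rescaling $\Phi_j(y) = 2^j y$, so that $\Phi_j$ maps the fixed reference annulus $A = \{1/2 \leq |y| \leq 2\}$ onto $A_j$. A direct change of variables shows that for any function $u$ and any multi-index $\alpha$,
\begin{align*}
\l \nabla^\alpha u \r_{L^p(A_j)} = 2^{j\left( \frac{3}{p} - |\alpha| \right)} \l \nabla^\alpha (u\circ \Phi_j) \r_{L^p(A)}.
\end{align*}
Combined with $\langle x \rangle \sim 2^j$ on $A_j$, this yields the key identity
\begin{align*}
\l \langle x \rangle^{\delta + |\alpha|} \nabla^\alpha u \r_{L^p(A_j)} \sim 2^{j \left( \delta + \frac{3}{p} \right)} \l \nabla^\alpha (u\circ \Phi_j) \r_{L^p(A)}.
\end{align*}
Thus, up to uniform constants, $\l u \r_{W^{k,p}_\delta}^p$ is comparable to $\sum_j 2^{jp(\delta + 3/p)} \l u\circ \Phi_j \r_{W^{k,p}(A)}^p$, plus the contribution of $A_0$ which is handled by standard unweighted estimates.

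For part 1, given $u\in W^{s_1,p}_{\delta_1}$ and $v\in W^{s_2,p}_{\delta_2}$, I would localize $uv$ to each $A_j$, rescale, apply the classical unweighted product estimate
\begin{align*}
\l (u\circ\Phi_j)(v\circ\Phi_j)\r_{W^{s,p}(A)} \lesssim \l u\circ\Phi_j\r_{W^{s_1,p}(A)} \l v\circ\Phi_j\r_{W^{s_2,p}(A)},
\end{align*}
valid under $s\leq \min(s_1,s_2)$ and $s<s_1+s_2 - 3/p$ (this is the standard Sobolev multiplication theorem on a bounded domain with the cone condition), and then reassemble using the scaling identity. The weights on the two factors combine to $2^{j(\delta_1 + \delta_2 + 6/p)}$ on $A_j$, while the target weight demands $2^{j(\delta + 3/p)}$. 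The mismatch gives a prefactor $2^{j(\delta - \delta_1 - \delta_2 - 3/p)}$, which is summable in $j \geq 0$ precisely when $\delta < \delta_1 + \delta_2 + 3/p$. A Cauchy--Schwarz (or Hölder) in $j$ finishes the estimate.

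For part 2, the argument is simpler: on each $A_j$ apply the classical Sobolev embedding $W^{s,p}(A) \hookrightarrow C^m(A)$, valid for $m < s - 3/p$, to the rescaled function $u\circ \Phi_j$. The scaling identity for $L^\infty$ reads $\l \nabla^\alpha u\r_{L^\infty(A_j)} = 2^{-j|\alpha|}\l \nabla^\alpha (u\circ\Phi_j)\r_{L^\infty(A)}$, hence $\l \langle x\rangle^{\beta + |\alpha|} \nabla^\alpha u\r_{L^\infty(A_j)} \lesssim 2^{j\beta}\l u\circ \Phi_j\r_{C^m(A)}$. Undoing the rescaling on the right-hand side via $\l u\circ\Phi_j\r_{W^{s,p}(A)} \sim 2^{-j(\delta+3/p)} \l \langle x\rangle^{\delta+\cdot}\nabla^\cdot u\r_{L^p(A_j)}$ produces a prefactor $2^{j(\beta - \delta - 3/p)}$, which is bounded uniformly in $j$ iff $\beta \leq \delta + 3/p$. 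Taking the supremum in $j$ gives the desired $C^m_\beta$ bound.

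The only mildly delicate point is the bookkeeping of the exponent $3/p$, which originates in the Jacobian of the scaling $\Phi_j$ and produces the characteristic shift in both threshold conditions; no genuine analytical obstacle appears, since once the reduction to the fixed annulus $A$ is carried out, one relies on completely standard unweighted Sobolev inequalities on a bounded Lipschitz domain.
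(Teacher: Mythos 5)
Your proof is correct. Note that the paper itself does not prove this proposition at all: it is quoted as a standard fact with the proofs deferred to \cite{ChoquetBruhat2009}, and your dyadic annular decomposition with rescaling to a fixed reference annulus is exactly the classical argument behind those cited results, with the $3/p$ shifts correctly traced to the Jacobian of the dilation. The only cosmetic remark is that summability of the prefactor $2^{j(\delta-\delta_1-\delta_2-3/p)}$ is not really what is needed \textemdash{} since the sequences of localized norms are already $\ell^p$ in $j$, mere boundedness of the prefactor would do \textemdash{} so the strict inequality on the weights enters only as a sufficient hypothesis, which is consistent with the statement being proved.
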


\saut
There won't be any issue of regularity in the time variable so if $f$ is a scalar function defined on $\mathcal{M}$, then $\l f \r_X \leq C$ without any precision on $t$ will always mean
\begin{align*}
\sup_{t\in[0,1]} \l f\r_{X(\Sigma_t)}\leq C,
\end{align*}
where $X$ is one of the function spaces defined above.

\subsubsection{High-frequency notations}\label{section high-frequency notations}

In this section we introduce some notations regarding the oscillating behaviour of all the quantities involved, such as scalar functions, metrics, and more generally tensor fields defined on $\mathcal{M}$. 
\begin{itemize}
\item Such a quantity is said to be high-frequency or oscillating if it admits an expansion in powers of the small parameter $\lambda>0$ with coefficients of the form
\begin{align}
\mathrm{T} \left( \frac{u_0}{\lambda}\right) f \label{T u0 lambda f},
\end{align}
where $f$ is defined on $\mathcal{M}$ and $\mathrm{T}$ is a linear combination of trigonometric functions, i.e elements of
\begin{align}
\enstq{\theta\in \R\longmapsto\sin(k\theta)}{k\in\N} \cup \enstq{\theta\in \R\longmapsto\cos(k\theta)}{k\in\N} .\label{ensemble fcts trigo}
\end{align}
The phase $u_0$ is defined in the next section, as a solution of the eikonal equation for the background metric $g_0$. The phase $\frac{u_0}{\la}$ will be considered as the fundamental harmonic or first harmonic, while the phase $\frac{ku_0}{\la}$ is the $k$-th harmonic.
\item When considering a high-frequency quantity such as a tensor $S$, we denote by $S^{(i)}$ the coefficients of $\lambda^i$ in the expansion defining $S$, which thus expands formally as
\begin{align*}
S = \sum_{i\in\Z}\lambda^i S^{(i)}.
\end{align*}
Note that $S^{(i)}$ is a tensor of the same type as $S$. Moreover,  if $j\in \Z$ we define $S^{(\geq j)}$ by \[S^{(\geq j)} = \sum_{k\geq j} \lambda^{k-j}S^{(k)}.\]  This allows us to clearly truncate high-frequency expansions at a fixed order as in
\begin{align*}
S= \sum_{k\leq j-1}\lambda^k S^{(k)} + \lambda^j S^{(\geq j)}.
\end{align*}
\item In terms of derivation, if $Q(t,x)=\mathrm{T}\pth{\frac{u_0(t,x)}{\la}}q(t,x)$ is an oscillating quantity, then we introduce the two following notations
\begin{align*}
\dr_\theta Q = \mathrm{T}'\pth{\frac{u_0}{\la}}q, \qquad \tilde{\dr}_\a Q = \mathrm{T}\pth{\frac{u_0}{\la}}\dr_\a q.
\end{align*}
In particular, we have
\begin{align*}
\dr_\a Q = \frac{\dr_\a u_0}{\la} \dr_\theta Q + \tilde{\dr}_\a Q.
\end{align*}
\end{itemize}

\subsection{The background spacetime}\label{section BG}

The background quantities are defined on $\M$ and are composed of a metric $g_0$, an optical function $u_0$ and a density $F_0$. They solve the Einstein-null dust system on $\M$:
\begin{align}
R_{\mu\nu}(g_0) & = F_0^2 \dr_\mu u_0 \dr_\nu u_0, \label{Ricci g0}
\\g_0^{-1} (\d  u_0 , \d u_0) & = 0, \label{eq u0}
\\-2L_0 F_0+(\Box_{g_0}u_0)F_0&=0,\label{eq F0}
\end{align}
where $L_0$ is the spacetime gradient of $u_0$, i.e
\begin{align}
L_0 = - g_0^{\alpha\beta}\dr_\alpha u_0 \dr_\beta.\label{def L0}
\end{align}
By \eqref{eq u0}, $L_0$ is null and geodesic, i.e 
\begin{align*}
g_0(L_0,L_0)=0 \quad \text{and} \quad \D_{L_0}L_0=0,
\end{align*}
where $\D$ denotes throughout this article the covariant derivative associated to $g_0$. We also assume that $g_0$ satisfies in the coordinates $(t,x^1,x^2,x^3)$ the standard wave condition
\begin{align}
g_0^{\mu\nu}\Gamma(g_0)^\alpha_{\mu\nu}=0.\label{wave condition g0}
\end{align}
Note that this condition implies that the wave operator associated to $g_0$ reduces to its principal part, i.e $\Box_{g_0}=\tBox_{g_0}$. A standard computation shows that under the wave condition \eqref{wave condition g0} the Einstein type equation \eqref{Ricci g0} rewrites as
\begin{align}
\tBox_{g_0}(g_0)_{\alpha\beta} & = P_{\alpha\beta}(g_0)(\dr g_0,\dr g_0)-2F^2_0\dr_\alpha u_0\dr_\beta u_0,\label{eq g0}
\end{align}
where $P_{\alpha\beta}(g_0)(\dr g_0,\dr g_0)$ is a quadratic non-linearity (see \eqref{quadratic non-linearity} for its exact expression).  We make the following assumptions on the background quantities $(g_0,u_0,F_0)$.  In what follows, $N\geq 10$, $\delta>-\frac{3}{2}$ and $\e>0$ is our smallness threshold.
\begin{itemize}
\item \textbf{Assumptions on the metric} $g_0$. There exists a constant $\e>0$ such that for all $t\in[0,1]$
\begin{align}
\l g_0 - m \r_{H^{N+1}_\delta} + \l \dr_t g_0  \r_{H^{N}_{\delta+1}} + \l \dr_t^2 g_0  \r_{H^{N-1}_{\delta+2}} \leq \e\label{estim g0}
\end{align} 
where $m$ is the Minkowski metric on $\mathcal{M}$. For convenience, we also assume that $\dr_t$ is the unit normal to $\Sigma_0$ of $g_0$, and that the second fundamental form of $\Sigma_0$ is traceless. The former assumption simplifies the construction of the initial data for the high-frequency metric (see Section \ref{section ID spacetime metric}) and the latter is only used in \cite{Touati2023a} to simplify the construction of high-frequency solutions to the constraint equations. The estimates \eqref{estim g0} corresponds to the asymptotically flat setting.
\item \textbf{Assumptions on the optical function} $u_0$. We assume that there exists a constant non-zero vector field $\mathfrak{z}=(\mathfrak{z}_1,\mathfrak{z}_2,\mathfrak{z}_3)$ such that 
\begin{align}
\l \nabla u_0 - \mathfrak{z}  \r_{H^{N}_{\delta+1}}\leq \e,\label{estim u0-z}
\end{align}
where $\nabla u_0 = (\dr_1 u_0, \dr_2 u_0, \dr_3 u_0)$ is the spatial Euclidean gradient of $u_0$. This implies that the level sets of $u_0$ restricted to any $\Sigma_t$ are asymptotically planes in $\R^3$ and that if $\e$ is small enough there exists $c>0$ such that
\begin{align}
\inf_{x\in\R^3}|\nabla u_0|(x) > c.\label{estim inf nabla u0}
\end{align}
This in turn implies that $u_0$ has no critical point. Finally, we assume that $L_0$ defined by \eqref{def L0} is future-directed and since $\dr_t$ is assumed to be the unit normal of $g_0$ to $\Sigma_0$ and $u_0$ solves \eqref{eq u0}, this implies
\begin{align}
\dr_t u_0 = | \nabla u_0|_{\bar{g}_0}\label{dt u_0}
\end{align}
on $\Sigma_0$, where $\bar{g}_0$ is the induced metric on $\Sigma_0$. Finally, we mention another consequence of \eqref{estim inf nabla u0}: with a stationary phase argument we can show that
\begin{align*}
\left| \int_K \mathrm{T}\left(\frac{u_0}{\lambda}\right)\psi \right| \lesssim \lambda \l \psi \r_{W^{1,\infty}(K)},
\end{align*}
where $K$ is any compact subset of $\R^3$, $\mathrm{T}$ is any trigonometric functions and $\psi$ any test function defined on $K$. This in turn implies that the family $\left( \mathrm{T}\left(\frac{u_0}{\lambda}\right)\right)_{\lambda\in(0,\lambda_0]}$ converges weakly to 0 in $L^2(K)$ when $\la$ tends to 0.
\item \textbf{Assumptions on the density} $F_0$.  We assume that $F_0$ is initially compactly supported, i.e there exists $R>0$ such that $F_0\restriction{\Sigma_0}$ is supported in $\{ |x| \leq R \}$. Since $L_0$ is null and geodesic, \eqref{eq F0} implies that 
\begin{align*}
\supp{F_0}\subset J_0^+\left( \{ |x| \leq R \} \right),
\end{align*}
where $J_0^+$ denotes the causal future associated to $g_0$. We can assume for simplicity that 
\begin{align*}
J_0^+\left( \{ |x| \leq R \} \right)\subset \enstq{(t,x)\in\mathcal{M}}{|x|\leq C_{\mathrm{supp}}R},
\end{align*}
for some constant $C_{\mathrm{supp}}>0$. Finally we assume the following estimate:
\begin{align}
\l F_0 \r_{H^N} \leq \e.\label{estim F0}
\end{align}
\end{itemize}
In this article we don't prove the existence of the background solution $(g_0,F_0,u_0)$ on $\mathcal{M}$, it follows from \cite{ChoquetBruhat2006} adapted to the null dust case. We will denote by $C_0$ any numerical constant depending on the background estimates stated here, that is depending on $\delta$, $N$, $R$ or $\mathfrak{z}$.

\saut
In addition to the usual coordinates on $[0,1]\times \R^3$ we will use a null frame attached to the optical function $u_0$, which we define as follows. We denote by $\mathcal{H}_u$ the level sets of $u_0$, i.e
\begin{align*}
\mathcal{H}_u & = \enstq{(t,x)\in\mathcal{M}}{u_0(t,x)=u}.
\end{align*}
Since $u_0$ is a solution of the eikonal equation \eqref{eq u0}, each $\mathcal{H}_u$ is a null hypersurface generated by the geodesic vector field $L_0$. Thanks to \eqref{estim inf nabla u0} they induce a foliation of the spacetime. For $t\in[0,1]$ and $u$ in the image of $u_0$ we define the following 2-surfaces
\begin{align*}
P_{t,u} = \Sigma_t \cap \mathcal{H}_u.
\end{align*}
Thanks to \eqref{estim u0-z}, each $P_{t,u}$ have the topology of a plane in $\R^3$. We denote by $\mathring{g}_0$ the induced metric on $P_{t,u}$. We consider a null vector field $\Lb_0$ such that $g_0(\Lb_0,L_0)=-2$ and an orthonormal frame $(e_1,e_2)$ of $TP_{t,u}$ for $\mathring{g}_0$.  This defines the \textit{background null frame} $(L_0,\Lb_0,e_1,e_2)$, and such a choice is always possible, at least locally. We use the latin upper case letters as indices for the frame $(e_1,e_2)$ on $TP_{t,u}$.  We consider the subset 
\begin{align}
\mathcal{T}_0=\{ L_0, e_1,e_2 \} \label{def T0}
\end{align} 
of the background null frame. The use of such a frame originates in the seminal work of Christodoulou and Klainerman on the stability of Minkowski spacetime (see \cite{Christodoulou1993}) and a concise presentation can be found in \cite{Szeftel2018}.

\saut
Further geometric objects related to the background null frame will be needed for the proof of Theorem \ref{theo main}, linked to the commutator $[L_0,\Box_{g_0}]$. We refer to Lemma \ref{lem commute null} for the main estimate on this commutator and to Appendix \ref{section background null structure} for its proof.

\subsection{Initial data}\label{section initial data}

We reproduce here the main result of \cite{Touati2023a} where we construct high-frequency solutions $(\bar{g}_\lambda,K_\lambda)$ of the constraint equations on $\R^3$, that is
\begin{align}
R(\bar{g}_\lambda)+(\tr_{\bar{g}_\lambda}K_\lambda)^2-|K_\lambda|^2_{\bar{g}_\lambda} & = 0,   \label{hamiltonian constraint general chap 3}
\\ -\dive_{\bar{g}_\lambda} K_\lambda +\d\tr_{\bar{g}_\lambda}K_\lambda & = 0.\label{momentum constraint general chap 3}
\end{align}

\begin{thm}\label{theo initial data}
Let $(g_0,u_0,F_0)$ be the solution of the Einstein-null dust system described in Section \ref{section BG}, and let $\e>0$ be the smallness threshold. There exists $\e_0=\e_0(\delta,R)>0$ such that if $0<\e\leq \e_0$, there exists for all $\lambda\in (0,1]$ a solution $(\bar{g}_\lambda,K_\lambda)$ solution of the constraint equations \eqref{hamiltonian constraint general chap 3}-\eqref{momentum constraint general chap 3} on $\R^3$ of the form
\begin{align}
\bar{g}_\lambda & = \bar{g}_0 + \lambda \cos\left( \frac{u_0}{\lambda}\right) \bar{F}^{(1)} + \lambda^2 \left( \sin\left( \frac{u_0}{\lambda}\right) \bar{F}^{(2,1)} + \cos\left( \frac{2u_0}{\lambda}\right) \bar{F}^{(2,2)} \right) + \lambda^2 \bar{\h}_\lambda,\label{g bar theo constraint}
\\ K_\lambda & = K^{(0)}_\lambda + \lambda K^{(1)}_\lambda + \lambda^2 K^{(\geq 2)}_\lambda, \label{K theo constraint}
\end{align}
with 
\begin{align}
K^{(0)}_\lambda & = K_0 + \half \sin\left( \frac{u_0}{\lambda}\right) |\nabla u_0|_{\bar{g}_0} \bar{F}^{(1)},\label{K0 theo constraint}
\\ \left( K^{(1)}_\lambda\right)_{ij} & = -\half \cos \left( \frac{u_0}{\lambda}\right)\left(  -N_0\bar{F}^{(1)}_{ij} + (\dr_t + N_0)^\rho \Gamma(g_0)_{\rho(i}^k \bar{F}^{(1)}_{j)k} + \frac{1}{2|\nabla u_0|_{\bar{g}_0}} (\Box_{g_0} u_0) \bar{F}^{(1)}_{ij}  \right)\label{K1 theo constraint}
\\&\quad -\half   \left|\nabla u_0 \right|_{\bar{g}_0} \left( \cos\left( \frac{u_0}{\lambda}\right) \bar{F}^{(2,1)}_{ij} - 2\sin\left( \frac{2u_0}{\lambda}\right)\bar{F}^{(2,2)}_{ij}   \right) ,\nonumber
\end{align}
where $K_0$ is the second fundamental form of $\Sigma_0$ for $g_0$ and $N_0=-\frac{\bar{g}_0^{ij}\dr_i u_0 \dr_j}{| \nabla u_0 |_{\bar{g}_0}}$.
Moreover:
\begin{itemize}
\item[(i)] the tensors $\bar{F}^{(1)}$, $\bar{F}^{(2,1)}$ and $\bar{F}^{(2,2)}$ are supported in $\{|x|\leq R\}$ and there exists $C_{\mathrm{cons}}=C_{\mathrm{cons}}(\delta,R)>0$ such that
\begin{align}
\l \bar{F}^{(1)} \r_{H^N} + \l \bar{F}^{(2,1)}  \r_{H^{N-1}} + \l \bar{F}^{(2,2)} \r_{H^{N-1}} & \leq C_{\mathrm{cons}} \e,\label{estim F bar chap 3}
\end{align}
\item[(ii)] the tensor $\bar{F}^{(1)}$ is $\bar{g}_0$-traceless, tangential to $P_{0,u}$ and satisfies
\begin{align}
\left| \bar{F}^{(1)} \right|^2_{\bar{g}_0} = 8 F_0^2 \label{energie condition theo chap 3},
\end{align}
\item[(iii)] the tensors $\bar{\h}_\lambda$ and $K^{(\geq 2)}_\lambda$ belong to the spaces $H^5_\delta$ and $H^4_{\delta+1}$ respectively and satisfy
\begin{align}
\max_{ r\in\llbracket 0,4\rrbracket} \lambda^r \l \nabla^{r+1} \bar{\h}_\lambda \r_{L^2_{\delta+r+1}}&  \leq C_{\mathrm{cons}} \e ,\label{estim h bar chap 3}
\\ \max_{r\in\llbracket 0,4\rrbracket } \lambda^r \l \nabla^r K_\lambda^{(\geq 2)} \r_{L^2_{\delta+r+1}}&  \leq C_{\mathrm{cons}} \e .  \label{estim K geq 2 chap 3}
\end{align}
\end{itemize}
\end{thm}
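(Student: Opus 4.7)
The plan is to substitute the prescribed ansatz \eqref{g bar theo constraint}--\eqref{K1 theo constraint} into the constraints \eqref{hamiltonian constraint general chap 3}--\eqref{momentum constraint general chap 3}, separate the resulting expressions according to their oscillatory content (the harmonics $1$, $e^{\pm iu_0/\lambda}$, $e^{\pm 2iu_0/\lambda}$), and match each harmonic order by order in $\lambda$. This procedure cleanly splits the problem into two parts: the oscillatory modes impose conditions that determine the profiles $\bar{F}^{(1)}$, $\bar{F}^{(2,1)}$, $\bar{F}^{(2,2)}$ together with their algebraic structure, while the non-oscillatory residue yields a coupled quasilinear elliptic system on $\R^3$ for the remainders $(\bar{\h}_\lambda, K_\lambda^{(\geq 2)})$.

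The most singular contribution is at order $\lambda^{-1}$ in the momentum constraint: its frequency-one coefficient forces $\bar{g}_0^{ik}(\dr_k u_0)\bar{F}^{(1)}_{ij}=0$ and $\tr_{\bar{g}_0}\bar{F}^{(1)}=0$, which is exactly the polarization part of (ii). The zero-harmonic component of the Hamiltonian constraint at order $\lambda^0$ then collects the quadratic self-interaction of the oscillating part of $K_\lambda^{(0)}$, schematically $\tfrac{1}{8}|\nabla u_0|^2_{\bar{g}_0}|\bar{F}^{(1)}|^2_{\bar{g}_0}$; combining this with the restriction of the background equation \eqref{Ricci g0} to $\Sigma_0$, which yields $R(\bar{g}_0)-|K_0|^2_{\bar{g}_0}=2F_0^2|\nabla u_0|^2_{\bar{g}_0}$ thanks to \eqref{dt u_0} and the vanishing of $\tr K_0$, produces the energy condition $|\bar{F}^{(1)}|^2_{\bar{g}_0}=8F_0^2$. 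The frequency-$2$ component of the same quadratic interactions is absorbed algebraically by $\bar{F}^{(2,2)}$, and $\bar{F}^{(2,1)}$ similarly cancels the residual frequency-$1$ source at the next order. Once these profiles are fixed, the equations for $(\bar{\h}_\lambda, K_\lambda^{(\geq 2)})$ form a coupled elliptic system whose principal parts are controlled by the background metric; I would solve it by the classical conformal method on $\R^3$, writing $\bar{g}_\lambda=\varphi_\lambda^4\tilde{g}_\lambda$ with $\tilde{g}_\lambda$ the already prescribed conformal representative and performing a York decomposition of $K_\lambda^{(\geq 2)}$, so as to reduce the system to a Lichnerowicz equation coupled to a vector Laplace-type equation, both solvable by a contraction argument in the weighted Sobolev spaces $H^s_\delta$ starting from the known background solution $(\bar{g}_0,K_0)$.

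The main obstacle is obtaining bounds uniform in $\lambda$. Each derivative landing on an oscillatory factor produces a factor of $\lambda^{-1}$, so the source terms of the elliptic system naively carry large inverse powers of $\lambda$. The polarization conditions and the energy condition have been designed precisely to remove the leading $\lambda^{-1}$ and averaged $\lambda^0$ contributions, but residual losses reappear once higher derivatives fall on the oscillating pieces. The weighted estimates \eqref{estim h bar chap 3}--\eqref{estim K geq 2 chap 3} encode exactly this trade-off: one accepts a growth of $\lambda^{-r}$ on the $r$-th derivative, and closes the argument by an induction on $r$ in which elliptic regularity on $\R^3$ combined with the embeddings of Proposition \ref{prop WSS chap 2} handles the quasilinear coefficients, and the smallness parameter $\e$ provides the contraction. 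This delicate bookkeeping between oscillatory gains on the source side and controlled losses on the remainder side is what I expect to be the hardest part of the proof.
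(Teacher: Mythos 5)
This theorem is not proved in the present paper at all: it is the main result of the companion paper \cite{Touati2022} and is only quoted in Section \ref{section initial data} (the abstract and the introduction state this explicitly), so there is no in-paper proof to compare your proposal against. That said, your two-step plan — expand the constraints \eqref{hamiltonian constraint general chap 3}--\eqref{momentum constraint general chap 3} in harmonics of $u_0/\lambda$ and powers of $\lambda$, use the leading harmonics to fix the profiles $\bar{F}^{(1)},\bar{F}^{(2,1)},\bar{F}^{(2,2)}$ (with the background Einstein--null dust constraints absorbing the non-oscillating average of the quadratic terms via \eqref{energie condition theo chap 3}), and then solve an elliptic system for $(\bar{\h}_\lambda,K^{(\geq 2)}_\lambda)$ uniformly in $\lambda$ in weighted spaces — is consistent in spirit with the construction of \cite{Touati2022}, and the conformal-method route you suggest is plausible: the paper even flags that the assumption $\tr_{\bar g_0}K_0=0$ is used only in \cite{Touati2022}, which is exactly the maximal-slice hypothesis one would want for such a solve.

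Two points in your write-up are nevertheless imprecise or incomplete. First, the algebraic step: the frequency-one coefficient at order $\lambda^{-1}$ of the momentum constraint gives three scalar relations and cannot by itself force both tangentiality and $\bar g_0$-tracelessness of $\bar{F}^{(1)}$; you need the $\lambda^{-1}$ piece of the Hamiltonian constraint as well, and in the actual construction $\bar{F}^{(1)}$ is largely \emph{prescribed} free data (its polarization chosen, its norm then fixed by $F_0$ through \eqref{energie condition theo chap 3}) rather than derived, while $K^{(1)}_\lambda$ in \eqref{K1 theo constraint} is prescribed so as to match the spacetime ansatz (compare Lemma \ref{lem second fundamental form evol}), not produced by the constraint hierarchy. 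Second, and more seriously, the genuinely hard content of the theorem is item (iii): solving the coupled quasilinear elliptic system for the remainder with the $\lambda$-graded estimates \eqref{estim h bar chap 3}--\eqref{estim K geq 2 chap 3}, where every derivative hitting an oscillating source costs a factor $\lambda^{-1}$ and the nonlinear coupling between the Lichnerowicz-type equation and the momentum part must not destroy this grading. Your proposal names the difficulty and asserts "contraction plus induction on $r$", but does not exhibit why the weights close (which harmonics enter at which order, why the sources for the remainder only appear at order $\lambda^2$, how the fixed point is set up uniformly in $\lambda\in(0,1]$). As it stands this is a programme rather than a proof; the missing quantitative bookkeeping is precisely what the companion paper supplies.
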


We make two comments about the formulation of Theorem \ref{theo initial data}, compared to the corresponding statement in \cite{Touati2023a}.
\begin{itemize}
\item The main result of \cite{Touati2023a} is a purely elliptic result, it only requires a given solution of the maximal null dust constraint equations and not a spacetime solution of the Einstein-null dust system, as it may seem from the above formulation. Obviously, by projecting the background spacetime of Section \ref{section BG} on $\Sigma_0$, one recovers the assumptions of \cite{Touati2023a}.
\item Similarly, the expressions of $K^{(0)}_\la$ and $K^{(1)}_\la$ given by \eqref{K0 theo constraint} and \eqref{K1 theo constraint} are equal to the ones given in \cite{Touati2023a}, thanks to some of the assumptions made on the background spacetime in the previous section, namely the fact that $\dr_t$ is the unit normal to $\Sigma_0$ for $g_0$ and the wave condition.
\end{itemize}

\subsection{Local existence in generalised wave gauge} \label{section local existence}

The following theorem is our main result.

\begin{thm}\label{theo main}
Let $(g_0,u_0,F_0)$ be the solution of the Einstein-null dust system described in Section \ref{section BG}, and let $\e>0$ be the smallness threshold.  There exists $\lambda_0>0$ and $\e_0=\e_0 (\delta,R)>0$ such that if $0<\e\leq\e_0$,  there exists for all $\lambda\in (0,\lambda_0]$ a solution $g_\lambda$ of the form 
\begin{align}
g_\lambda = g_0 + \lambda g^{(1)}\left( \frac{u_0}{\lambda} \right) + \lambda^2 g^{(2)}\left( \frac{u_0}{\lambda} \right) + \lambda^2 \tilde{\mathfrak{h}}_\lambda, \label{ansatz theo}
\end{align}
to \eqref{EVE'} on $[0,1]\times \R^3$ in generalised wave gauge.  Moreover:
\begin{enumerate}
\item[(i)] the tensors $g^{(i)}$ for $i=1,2$ are supported in $J_0^+\left( \{ |x| \leq R \} \right)$ and are periodic and smooth functions of the argument $\frac{u_0}{\lambda}$,
\item[(ii)] there exists $C=C(\delta,R)>0$ such that the following estimate holds
\begin{align}
\max_{r\in\llbracket 0,10 \rrbracket} \lambda^r\l\nabla^r g^{(1)} \r_{L^2} + \max_{r\in\llbracket 0,6 \rrbracket} \lambda^r\l\nabla^r g^{(2)} \r_{L^2}  + \max_{r\in\llbracket 1,5 \rrbracket} \lambda^r\l\nabla^{r+1} \tilde{\mathfrak{h}}_\lambda\r_{L^2_{\delta+r}} & \leq C\e,\label{estim théorème}
\end{align}
\item[(iii)] the tensor $g^{(1)}_{\mu\nu}=\cos\left(\frac{u_0}{\lambda}\right) F^{(1)}_{\mu\nu}$ satisfies the following polarization condition, energy condition and transport equation,
\begin{align}
g_0^{\mu\nu} \left( \dr_{\mu}u_0  F^{(1)}_{\sigma\nu} -\half  \dr_\sigma u_0  F^{(1)}_{\mu\nu} \right) & = 0,\label{pola F1 theo}
\\  \frac{1}{8}\left|F^{(1)}\right|^2_{g_0} - \frac{1}{16}\left(\tr_{g_0}F^{(1)}\right)^2 & = F_0^2 ,\label{energie F1 theo}
\\ 2\dr^\alpha u_0 \D_\alpha F^{(1)}_{\mu\nu} + (\Box_{g_0}u_0 ) F^{(1)}_{\mu\nu} & = 0 . \label{eq F1 theo}
\end{align}
\end{enumerate}
\end{thm}

Before we present the strategy behind the proof of Theorem \ref{theo main} in Section \ref{section strategy}, let us make some general remarks.
\begin{itemize}
\item The notation $g^{(i)}\left(\frac{u_0}{\lambda}\right)$ is used in \eqref{ansatz theo} to emphasize the oscillating character of $g^{(i)}$, but these tensors also depends on the spacetime variables and this notation actually stands for the following functions on the spacetime
\begin{align*}
g^{(i)}_{\alpha\beta}\left( \frac{u_0}{\lambda} \right)(t,x) = g^{(i)}_{\alpha\beta}\left( t,x,\frac{u_0(t,x)}{\lambda} \right),
\end{align*}
where $(t,x)\in [0,1]\times \R^3$.
\item The exact expression of the ansatz for $g_\lambda$ is given in Section \ref{section high-frequency metric}. Compared to \eqref{ansatz theo}, we will then describe precisely the oscillating behaviour of each $g^{(i)}$ for $i=1,2$, i.e give a finite frequency decomposition. Note that in Theorem \ref{theo main} we already give the frequency decomposition of $g^{(1)}$, that is $g^{(1)}=\cos\left(\frac{u_0}{\lambda}\right)F^{(1)}$.
\item As explained in depth in the introduction, our main goal is to consider the high-frequency limit, that is $\lambda\to0$. From this perspective, the restriction $\lambda\in(0,\lambda_0]$ with $\lambda_0$ potentially very small is meaningless and only plays a role in the improvement of one of the bootstrap assumptions (see Proposition \ref{prop BA F n+1}).
\item The generalised wave condition defining our gauge has a complicated expression and we choose not to present it in Theorem \ref{theo main}. In particular, it depends on the exact ansatz for $g_\lambda$ and the coupling between the transport equations for the waves and the wave equation for the remainder in \eqref{ansatz theo}. See Section \ref{section GWC et PC} for its exact expression.
\item The tensor $g^{(2)}$ also satisfies polarization conditions similar to \eqref{pola F1 theo} but with RHS depending on $g^{(1)}$. See Section \ref{section GWC et PC} for their exact expressions.
\item The oscillating terms in \eqref{ansatz theo} depend on the phase $\frac{u_0}{\lambda}$, where $u_0$ is the solution of the background eikonal equation, i.e \eqref{eq u0}. We emphasize the fact that we don't solve the eikonal equation for the metric $g_\lambda$,  as in \cite{Huneau2018} but as opposed to \cite{Luk2020}. This would require the construction of a null foliation in 3+1 dimension and hence would complicate our proof. 
\end{itemize}

\section{Strategy of proof}\label{section strategy}

In this section, we sketch the proof of Theorem \ref{theo main}, the challenges of which we already introduced briefly in Section \ref{section exact GO}.

\subsection{The high-frequency ansatz}\label{section strategy a}

In any coordinates system, the Ricci tensor of a Lorentzian metric $g$ reads
\begin{align}
2R_{\alpha\beta}(g)=-\tBox_g g_{\alpha\beta}+H^\rho \dr_\rho g_{\alpha\beta} + g_{\rho (\alpha}\dr_{\beta)}H^\rho+P_{\alpha\beta}(g)(\dr g,\dr g) .\label{Ricci generalised wave coordinates}
\end{align}
We encounter three types of terms in \eqref{Ricci generalised wave coordinates}.
\begin{itemize}
\item $\tBox_g$ stands for the principal part of the wave operator. Because of this term, the Einstein vacuum equations \eqref{EVE'} are quasi-linear. Moreover, if we think of \eqref{EVE'} as a system of second order partial differential equations for the metric coefficients, the $-\tBox_g g_{\alpha\beta}$ term represents the diagonal part of the Ricci operator.
\item The terms involving $H^\rho=g^{\mu\nu}\Gamma^\rho_{\mu\nu}$ are the gauge terms of the Ricci operator and their derivatives in \eqref{Ricci generalised wave coordinates} involve second order derivatives of the metric coefficients. They represent the non-diagonal part of the Ricci operator and working in generalised wave gauge precisely means that we prescribe the value of $H^\rho$ in order to solve the loss of hyperbolicity it causes.
\item The quadratic non-linearity $P_{\alpha\beta}(g)(\dr g,\dr g)$ is given in coordinates by
\begin{align}
P_{\alpha\beta}(g)(\dr g,\dr g) & = g^{\mu\rho}g^{\nu\sigma}\pth{ \dr_{(\alpha}g_{\rho\sigma}\dr_\mu g_{\beta)\nu}    - \frac{1}{2}\dr_\alpha g_{\rho\sigma} \dr_\beta g_{\mu\nu}  - \dr_\rho g_{\alpha\nu}\dr_\sigma g_{\beta\mu} + \dr_\rho g_{\sigma\alpha} \dr_\mu g_{\nu\beta} }. \label{quadratic non-linearity}
\end{align}
It satisfies what we call the weak polarized null condition, see Section \ref{section weak polarized null condition}.
\end{itemize}
While being highly non-linear, the Ricci tensor and the three parts defining it all enjoy some special structures. The goal of this section is to show how we benefit from these various structures to construct $g_\la$, whose precise ansatz is given by
\begin{align}
g_\lambda= g_0 + \lambda g^{(1)} + \lambda^2 \left( g^{(2)} + \h_\lambda \right) + \lambda^3 g^{(3)}.\label{ansatz rough strategy}
\end{align}
The remainder $\h_\lambda$ is a symmetric 2-tensor and each $g^{(i)}$ are oscillating symmetric 2-tensor as defined in Section \ref{section high-frequency notations}. In the rest of this section and for the sake of clarity we drop the index $\lambda$ and only write $g$ and $\h$ instead of $g_\lambda$ and $\h_\lambda$.

\saut 
Without loss of generality, we can assume that $g^{(1)}$ has the form $g^{(1)}_{\alpha\beta}=\cos\left(\frac{u_0}{\lambda}\right) F^{(1)}_{\alpha\beta}$, as stated in Theorem \ref{theo main}. The Ricci tensor being non-linear, we must incorporate the creation of harmonics into $g$.  Even though terms like $P_{\alpha\beta}(g)(\dr g,\dr g) $ are of the form $g^{-1}g^{-1}\dr g \dr g$, thus corresponding to the interaction of four oscillating terms, the derivatives lose $\la$ powers so at leading order in $\la$ the terms $g^{-1}g^{-1}\dr g \dr g$ corresponds actually to an interaction of only two oscillating terms. Therefore, we expect $g^{(2)}$ to oscillate with frequencies $\frac{u_0}{\la}$ and $\frac{2u_0}{\la}$, i.e to contain the first and second harmonic. More precisely, by analogy with our toy model in \cite{Touati2023}, $g^{(2)}$ oscillates as $\sin\pth{\frac{u_0}{\la}}$ and $\cos\pth{\frac{2u_0}{\la}}$.

\saut
Since the Ricci tensor involves at most second order derivatives of the metric, it admits \textit{a priori} a formal expansion of the form
\begin{align}
R_{\mu\nu}(g_\lambda) & = \frac{1}{\lambda}R^{(-1)}_{\mu\nu} +  R^{(0)}_{\mu\nu} + \lambda R^{(1)}_{\mu\nu} + \cdots.\label{expansion Ricci mega rough}
\end{align}
As every geometric optics construction, it all comes down to plugging the high-frequency ansatz \eqref{ansatz rough strategy} in \eqref{EVE'}, obtaining the expression of the different orders in \eqref{expansion Ricci mega rough}, equating them to zero and deduce equations and conditions for each terms in \eqref{ansatz rough strategy}.

\saut
In order to understand the structures at stake, we introduce two key structural notions: admissible harmonics and non-tangential tensors.
\begin{mydef}\label{def admissible frequency}
Let $k=0,1$. The \textbf{admissible harmonics} at the $\lambda^k$ level in the Ricci tensor (i.e in $R^{(k)}_{\mu\nu}$) are $\frac{\ell u_0}{\la}$ for $\ell\in \llbracket 1,k+1\rrbracket$. A non-admissible harmonic at a certain level is said to be forbidden at this level.
\end{mydef}
\begin{mydef}\label{def non-tangential}
Let $S$ be a symmetric 2-tensor and recall \eqref{def T0}.
\begin{enumerate}
\item[(i)] We say that $S$ is \textbf{non-tangential} if $S_{XY}=0$ for all $X,Y\in\mathcal{T}_0$.  In coordinates this is equivalent to the existence of a 1-form $Q$ such that $S_{\alpha\beta}  = \dr_{(\alpha} u_0 Q_{\beta)}$.
\item[(iii)] We say that $S$ is \textbf{fully non-tangential} if the only non-zero null components of $S$ is $S_{\Lb_0\Lb_0}$. In coordinates this is equivalent to the existence of a scalar function $f$ such that $S_{\alpha\beta} = f \dr_\alpha u_0 \dr_\beta u_0$.
\end{enumerate}
\end{mydef}
We also define the polarization tensor of a symmetric 2-tensor, already introduced in \cite{ChoquetBruhat1969}.
\begin{mydef}\label{def pola}
Let $S$ be a symmetric 2-tensor, we define its polarization tensor by
\begin{align}
\Pol_\sigma(S) = g_0^{\mu\nu} \left( \dr_\mu u_0 S_{\sigma\nu} - \half \dr_\sigma u_0 S_{\mu\nu} \right).\label{def polarization tensor}
\end{align}
Its components in the background null frame (see Section \ref{section BG}) are
\begin{align}
\Pol_{L_0}(S) & = - S_{L_0L_0} ,\label{Pol L}
\\ \Pol_A (S) & = - S_{AL_0}, \label{Pol A}
\\ \Pol_{\Lb_0} (S) & = - \delta^{AB} S_{AB}.\label{Pol Lb}
\end{align}
\end{mydef}

\subsubsection{Quasi-linear forbidden harmonics}\label{section quasi-linearity}

We start by looking at the wave operator in the Ricci tensor, also called the wave part in the sequel, that is $-\tBox_g g_{\alpha\beta}$. Let us see how it acts on an oscillating scalar function, we have
\begin{align}
\tBox_g \left( \mathrm{T}\left( \frac{u_0}{\lambda} \right) f \right) & = \mathrm{T}\left( \frac{u_0}{\lambda} \right)\tBox_g f+ \frac{1}{\lambda}\mathrm{T}'\left( \frac{u_0}{\lambda} \right) \left( 2g^{\mu\nu}\dr_\mu u_0 \dr_\nu f + (\tBox_g u_0)f  \right)\label{box fct trigo bis 2}
\\&\quad + \frac{1}{\lambda^2}\mathrm{T}''\left( \frac{u_0}{\lambda} \right) g^{-1}(\d u_0, \d u_0) f,\nonumber
\end{align}
for any scalar function $f$ and $\mathrm{T}$ a trigonometric function. Thanks to \eqref{eq u0} and \eqref{Pol L}, we have schematically
\begin{align}
g^{-1}(\d u_0, \d u_0) = \lambda \Pol_{L_0}\pth{g^{(1)}} + \lambda^2 \pth{ \Pol_{L_0}\pth{g^{(2)}} + \h_{L_0L_0}} + \GO{\lambda^3},\label{terme eikonal strategy}
\end{align}
and
\begin{align*}
\tBox_g \pth{ \la g^{(1)} } & = L_0\dr_\theta g^{(1)} + \Pol_{L_0}\pth{g^{(1)}}\dr_\theta^2 g^{(1)}  + \la \pth{ \Pol_{L_0}\pth{g^{(2)}}\dr_\theta^2 g^{(1)} + \h_{L_0L_0}\dr_\theta^2 g^{(1)}} +  \GO{\la^2},
\\ \tBox_g\pth{ \la^2 g^{(2)} } & = \la \pth{ L_0 \dr_\theta g^{(2)} + \Pol_{L_0}\pth{g^{(1)}}\dr_\theta^2 g^{(2)} }  + \GO{\la^2},
\end{align*}
where we neglected lower order terms. Regrouping terms with respect to their $\la$ power, we obtain the following idealized equations by equating $\tBox_g g$ to 0:
\begin{align}
L_0 \dr_\theta g^{(1)} + \Pol_{L_0}\pth{g^{(1)}}\dr_\theta^2 g^{(1)} & = 0, \label{eq g1 strategy}
\\ L_0 \dr_\theta g^{(2)} + \Pol_{L_0}\pth{g^{(1)}}\dr_\theta^2 g^{(2)}  +  \Pol_{L_0}\pth{g^{(2)}}\dr_\theta^2 g^{(1)} + \h_{L_0L_0}\dr_\theta^2 g^{(1)} & = 0. \label{eq g2 strategy}
\end{align}
We recover here the main feature of geometric optics: the waves $g^{(i)}$ solve transport equations along the rays of the optical function $u_0$. These equations will absorb all the terms oscillating as $g^{(i)}$, i.e containing only admissible harmonics (recall Definition \ref{def admissible frequency}). The terms $\Pol_{L_0}\pth{g^{(i)}}\dr_\theta^2 g^{(k+2-i)}$ in \eqref{eq g1 strategy} and \eqref{eq g2 strategy} for $k=0,1$ contains necessarily forbidden harmonics that can't be absorbed by the transport equations. They are precisely the terms expressing the quasi-linear nature of $\tBox_g g$, i.e the ones coming from \eqref{terme eikonal strategy}. We refer to them as quasi-linear forbidden harmonics in the rest of this section. We will use the structures of the gauge part and the quadratic part of the Ricci tensor to remove them.

\saut
The remainder $\h$ in \eqref{ansatz rough strategy} being non-oscillating, its contribution to the Ricci tensor from the wave part is simply $-\la^2\tBox_g \h$. Therefore, the condition $R^{(\geq 2)}_{\mu\nu}=0$ will require $\h$ to solve a wave equation, absorbing any harmonics appearing at the $\lambda^2$ level. For instance, one of the $\la^2$ contribution of $\tBox_g\pth{ \la^2 g^{(2)} }$ is of the form $\la^2 \tilde{\dr}^2 g^{(2)}$, this oscillating term will appear in the RHS of the wave equation for $\h$
\begin{align}
\tBox_g \h = \tilde{\dr}^2 g^{(2)} + \text{other terms}.\label{eq h basic}
\end{align}
We choose to highlight the term $\tilde{\dr}^2 g^{(2)}$ since it is the source of the loss of one derivative in the coupling with \eqref{eq g2 strategy} (where $\h_{L_0L_0}$ appears), see Section \ref{section bad coupling strat} below.

\saut
The role played by $\h$, i.e absorbing every terms in $R^{(\geq 2)}_{\mu\nu}$, explains why $g^{(3)}$ is different than $g^{(1)}$ and $g^{(2)}$. Indeed the transport term $\la^2L_0\dr_\theta g^{(3)}$ (coming from $\tBox_g\pth{ \la^3 g^{(3)} } $) will simply appear in the RHS of \eqref{eq h basic}. Therefore, $g^{(3)}$ won't be defined as the solution of a transport equation, unlike $g^{(1)}$ and $g^{(2)}$. See Section \ref{section gauge part strategy} below.

\subsubsection{Semi-linear forbidden harmonics}\label{section weak polarized null condition}

We now turn our attention to the semi-linear terms, i.e the quadratic non-linearity $P_{\alpha\beta}(g)(\dr g,\dr g)$ in the Ricci tensor given by \eqref{quadratic non-linearity}. This term plays a key role in our construction, since it is the source of backreaction and the cause of new forbidden harmonics. To understand this, let us first look at its $\la^0$ contribution to the Ricci tensor. Schematically we have
\begin{align}
P_{\alpha\beta}(g)(\dr g,\dr g) & = P_{\alpha\beta}(g_0)(\dr g_0,\dr g_0) +  g_0^{-2}\dr g_0 \dr u_0 \dr_\theta g^{(1)} +  g_0^{-2}\pth{ \dr u_0 \dr_\theta g^{(1)} }^2  + \GO{\lambda}.\label{P0 strategy}
\end{align}
The first term in \eqref{P0 strategy} will simply be identified with the corresponding term in the background equation \eqref{eq g0}. The second term is linear in $\dr_\theta g^{(1)}$, and thus can be added to the RHS of \eqref{eq g1 strategy} since it corresponds to an admissible harmonic. The most crucial term in \eqref{P0 strategy} is the third one. Since $g^{(1)} = \cos\pth{\frac{u_0}{\la}}F^{(1)}$, it schematically expands as
\begin{align}
g_0^{-2}\pth{ \dr u_0 \dr_\theta g^{(1)} }^2 & = g_0^{-2}\pth{ \dr u_0 F^{(1)} }^2 + \cos\pth{\frac{2u_0}{\la}} g_0^{-2}\pth{ \dr u_0 F^{(1)} }^2.\label{P0 cross terms}
\end{align}
The first term \eqref{P0 cross terms} is non-oscillating and thus must be absorbed by the background equation \eqref{eq g0}. Looking at this equation, the only possibility is to identify it with $-2F_0^2\dr_\a u_0 \dr_\b u_0$, which explains why $P_{\alpha\beta}(g)(\dr g,\dr g)$ is the source of backreaction. The oscillating term in \eqref{P0 cross terms} is a forbidden harmonic and for this reason cannot be absorbed by the transport equation \eqref{eq g1 strategy}. We refer to this term as the semi-linear forbidden harmonic, and as for the quasi-linear forbidden harmonic $\Pol_{L_0}\pth{g^{(1)}}\dr_\theta^2g^{(1)}$ in \eqref{eq g1 strategy}, it needs to be remove by the gauge terms.

\saut
Similarly, the $\la^1$ contribution to the Ricci tensor of $P_{\alpha\beta}(g)(\dr g,\dr g)$ contains among others the term $g_0^{-2}\dr u_0 \dr_\theta g^{(1)} \dr u_0 \dr_\theta g^{(2)}$. It is a forbidden harmonic and cannot be absorbed by the transport equation \eqref{eq g2 strategy}. It is also a semi-linear forbidden harmonic and must be removed by gauge terms as the quasi-linear forbidden harmonics $\Pol_{L_0}\pth{g^{(1)}}\dr_\theta^2g^{(2)}$ and $\Pol_{L_0}\pth{g^{(2)}}\dr_\theta^2g^{(1)}$ in \eqref{eq g2 strategy}.

\saut
Let us understand the structure of the semi-linear forbidden harmonics 
\begin{align}
g_0^{-2}\pth{ \dr u_0 \dr_\theta g^{(1)} }^2 \quad \text{and} \quad g_0^{-2}\dr u_0 \dr_\theta g^{(1)} \dr u_0 \dr_\theta g^{(2)}.\label{semi-linear forbidden harmonics}
\end{align}
For this, we introduce the following quadratic form acting on symmetric 2 tensors
\begin{align*}
\mathcal{Q}_{\alpha\beta}(T,S) & = g_0^{\mu\rho}g_0^{\nu\sigma} \pth{ \dr_{(\alpha}u_0 T_{\rho\sigma}\dr_\mu u_0 S_{\beta)\nu}  - \frac{1}{2}\dr_\alpha u_0 T_{\rho\sigma} \dr_\beta u_0 S_{\mu\nu}  - \dr_\rho u_0 T_{\alpha\nu}\dr_\sigma u_0 S_{\beta\mu} }.
\end{align*}
This expression is obtained from \eqref{quadratic non-linearity} by replacing the contractions with $g$ by contractions with $g_0$ and replacing the $\dr g \dr g$ by $\dr u_0 T \dr u_0 S$ (the fourth term in \eqref{quadratic non-linearity} does not contribute because of \eqref{eq u0}). It is such that the problematic terms in \eqref{semi-linear forbidden harmonics} are precisely given by $\mathcal{Q}_{\a\b}\pth{ \dr_\theta g^{(1)} , \dr_\theta g^{(1)} }$ and $\mathcal{Q}_{\a\b}\pth{ \dr_\theta g^{(1)} , \dr_\theta g^{(2)} }+ \mathcal{Q}_{\a\b}\pth{ \dr_\theta g^{(2)} , \dr_\theta g^{(1)} }$ respectively. A cumbersome computation leads to
\begin{align}
\mathcal{Q}_{\a\b}(T,T) & =   \mathcal{E}(T,T) \dr_\alpha u_0 \dr_\beta u_0   +  g_0^{\nu\sigma} \dr_{(\alpha}u_0 \Pol_\sigma(T)  T_{\beta)\nu}  -   \Pol_{\a}(T)   \Pol_{\b}(T) , \label{2Q(T,T)}
\\ \mathcal{Q}_{\a\b}(T,S) + \mathcal{Q}_{\a\b}(S,T) & =  2 \mathcal{E}(T,S) \dr_\alpha u_0 \dr_\beta u_0   + g_0^{\nu\sigma} \dr_{(\alpha}u_0 \pth{ \Pol_\sigma(T)  S_{\beta)\nu} + \Pol_\sigma(S)  T_{\beta)\nu}   }\label{Q(T,S)+Q(S,T)}
\\&\quad  -   \Pol_{(\a}(T)   \Pol_{\b)}(S)  ,    \nonumber
\end{align}
where
\begin{align}
\mathcal{E}(T,S)= \frac{1}{4}  (\tr_{g_0}T) ( \tr_{g_0}S) -\frac{1}{2}  |T\cdot S|_{g_0}\label{def energie E}.
\end{align}
Since $L_0$ is a null vector field for $g_0$, the nature of the quadratic form $\mathcal{Q}_{\a\b}$ can be understood in the context of null conditions. The standard null condition has been introduced in \cite{Klainerman1986,Christodoulou1986} as a criterion for global existence in the small data regime for solutions to quadratic wave systems in 3+1 dimensions. In our context, it would translate to $\mathcal{Q}_{\a\b}(T,T)=0$. In \cite{ChoquetBruhat2000}, Choquet-Bruhat introduces the polarized null condition, which here reads
\begin{align*}
\Pol(T)=0 \Longrightarrow \mathcal{Q}_{\a\b}(T,T)=0.
\end{align*}
As it can be seen on \eqref{2Q(T,T)}, the quadratic form $\mathcal{Q}_{\a\b}$ does not satisfy neither the standard null condition nor the polarized null condition. However, it does satisfy what we call the weak polarized null condition:
\begin{align*}
\Pol(T)=0 \Longrightarrow \mathcal{Q}_{\a\b}(T,T) \; \text{is fully non-tangential},
\end{align*}
where we recall Definition \ref{def non-tangential}. For $\mathcal{Q}_{\a\b}(T,S) + \mathcal{Q}_{\a\b}(S,T)$, the weak polarized null condition reads
\begin{align*}
\Pol(T)=0 \Longrightarrow \mathcal{Q}_{\a\b}(T,S) + \mathcal{Q}_{\a\b}(S,T) \; \text{is non-tangential}.
\end{align*}
More precisely, if $\Pol(T)=0$ then
\begin{align}
\mathcal{Q}_{\a\b}(T,T) & =   \mathcal{E}(T,T) \dr_\alpha u_0 \dr_\beta u_0   , \label{2Q(T,T) bis}
\\ \mathcal{Q}_{\a\b}(T,S) + \mathcal{Q}_{\a\b}(S,T) & =  2 \mathcal{E}(T,S) \dr_\alpha u_0 \dr_\beta u_0   + g_0^{\nu\sigma} \dr_{(\alpha}u_0 \Pol_\sigma(S)  T_{\beta)\nu} .   \label{Q(T,S)+Q(S,T) bis}
\end{align}

\begin{remark}
The expression for $\mathcal{E}\left( T ,  T \right)$ does not seem to have a sign in the general case, but it does if $T$ satisfies $\Pol(T)=0$. Indeed, in this case, we can show that
\begin{align*}
\mathcal{E}\left( T ,  T \right) & = -\half \delta^{AB}\delta^{CD} T_{AC}T_{BD},
\end{align*} 
where the letters $A$, $B$, $C$ and $D$ refer to vectors of the orthonormal basis $(e_1,e_2)$ of $TP_{t,u}$. In this article, the energy $\mathcal{E}(T,T)$ is only used for tensors satisfying $\Pol(T)=0$, see in particular Section \ref{section algebraic properties F1}. 
\end{remark}

\begin{remark}
The weak polarized null condition is linked to the weak null condition identified by Lindblad and Rodnianski in their proof of the stability of Minkowski in wave coordinates. We postpone the discussion of this aspect to Section \ref{section remark null}.
\end{remark}

\subsubsection{The gauge term leads to polarization conditions}\label{section gauge part strategy}

In the last two sections, we showed how the quasi-linear aspects of the wave operator and the absence of the null condition for the quadratic non-linearity in the Ricci tensor cause forbidden harmonics to appear in $R^{(0)}_{\mu\nu}$ and $R^{(1)}_{\mu\nu}$. In this section, we show how the gauge terms in the Ricci tensor can deal with these forbidden harmonics. 

\saut
The gauge term $H^\rho$ rewrites
\begin{align*}
H^\rho = g^{\mu\nu}g^{\rho\sigma}\left(  \dr_\mu g_{\sigma\nu} - \half \dr_\sigma g_{\mu\nu}  \right).
\end{align*}
Its main contribution to the Ricci tensor of the oscillating metric $g$ is then given by
\begin{align}
g_{\rho (\alpha}\dr_{\beta)}H^\rho & = \frac{1}{\lambda} \dr_{(\alpha} u_0 \Pol_{\beta)}\left(\dr_\theta^2 g^{(1)} \right) +  \dr_{(\alpha} u_0 \Pol_{\beta)}\left(\dr_\theta^2 g^{(2)} \right)  + \lambda \dr_{(\alpha} u_0 \Pol_{\beta)}\left(\dr_\theta^2 g^{(3)} \right) 
\\&\quad + \GO{\la^2},\nonumber
\end{align}
where we neglected any lower order terms (recall also Definition \ref{def pola}). The fact that the gauge term manifests itself as polarization tensors in the Ricci tensor is the main reason why we introduced them in the first place, and why we expressed both the quasi-linear and semi-linear forbidden harmonics in terms of the polarization tensors of the different waves.

\saut
If we collect the expansions of the wave part and quadratic part of the Ricci tensor from the previous sections we obtain schematically an expression for the oscillating part of the Ricci tensor (we neglect the non-oscillating terms in $R^{(0)}_{\mu\nu}$ and the $\h_{L_0L_0}$ in $R^{(1)}_{\mu\nu}$):
\begin{align*}
R^{(-1)}_{\mu\nu} & = \dr_{(\mu} u_0 \Pol_{\nu)}\left(\dr_\theta^2 g^{(1)} \right) ,
\\ R^{(0)}_{\mu\nu} & = L_0\dr_\theta g^{(1)} + \Pol_{L_0}\pth{g^{(1)}} \dr^2_\theta g^{(1)}  + \text{admissible harmonic} 
\\&\quad + \dr_{(\mu} u_0 \Pol_{\nu)}\left(\dr_\theta^2 g^{(2)} \right) + \text{semi-linear forbidden harmonic} ,
\\ R^{(1)}_{\mu\nu} & = L_0 \dr_\theta g^{(2)} + \Pol_{L_0}\pth{g^{(1)}}\dr_\theta^2 g^{(2)}  +  \Pol_{L_0}\pth{g^{(2)}}\dr_\theta^2 g^{(1)} + \text{admissible harmonic}   
\\&\quad + \dr_{(\mu} u_0 \Pol_{\nu)}\left(\dr_\theta^2 g^{(3)} \right) + \text{semi-linear forbidden harmonic} .
\end{align*}
From the requirement $R^{(-1)}_{\mu\nu}=0$, we deduce that $\Pol\pth{g^{(1)}}=0$. This polarization condition was already identified in \cite{ChoquetBruhat1969}, and it implies that the physically significant part of $g^{(1)}$ satisfies the TT-gauge assumptions. For our construction, it has three major consequences. First, the quasi-linear forbidden harmonics in $R^{(0)}_{\mu\nu}$ vanishes. Second, the semi-linear forbidden harmonic in $R^{(0)}_{\mu\nu}$ are now fully non-tangential (recall \eqref{2Q(T,T) bis}). This implies that we can absorb them by imposing a polarization condition on $g^{(2)}$:
\begin{align*}
\Pol \pth{ g^{(2)} } = \mathrm{const.}\times \mathcal{E}(T,T) \d u_0.
\end{align*}
This condition implies in turn that the quasi-linear forbidden harmonic in $R^{(1)}_{\mu\nu}$ vanishes (this would not hold if $Q(T,T)$ were only non-tangential). Finally, $\Pol\pth{g^{(1)}}=0$ implies that the semi-linear forbidden harmonic in $R^{(1)}_{\mu\nu}$ are now non-tangential (recall \eqref{Q(T,S)+Q(S,T) bis}). We can thus absorb them by imposing a polarization condition on $g^{(3)}$. Here, we used twice the following simple fact: the equation $\dr_{(\mu}u_0 \Pol_{\nu)}(T)=U_{\mu\nu}$ is solvable if and only if $U_{\mu\nu}$ is non-tangential.

\saut
Assuming that these various polarization conditions hold, we now have
\begin{align*}
R^{(-1)}_{\mu\nu} & = 0,
\\ R^{(0)}_{\mu\nu} & = L_0\dr_\theta g^{(1)}  + \text{admissible harmonic} ,
\\ R^{(1)}_{\mu\nu} & = L_0 \dr_\theta g^{(2)}  + \text{admissible harmonic} .  
\end{align*}
We then obtain $R^{(0)}_{\mu\nu} =R^{(1)}_{\mu\nu} =0$ by defining $g^{(1)}$ and $g^{(2)}$ as the solutions of transport equations along the rays with appropriate RHS. In particular, since the admissible harmonic in $R^{(0)}_{\mu\nu}$ are linear terms in $g^{(1)}$ (see \eqref{P0 strategy}), the transport equation for $g^{(1)}$ is linear. As explained in depth in the introduction, this is known as transparency in the geometric optics literature. As shown here, transparency is made possible by the particular structure of the quadratic non-linearity and the presence of gauge terms. The link between transparency and the standard null condition has already been discussed in \cite{Lannes2013}. Here we see how it survives with a weaker form of the standard null condition with the help of gauge invariance.

\subsubsection{The gauge term leads to a generalised wave condition}\label{section wave gauge strategy}

The gauge term $H^\rho$ and its main contribution $g_{\rho (\alpha}\dr_{\beta)}H^\rho$ to the Ricci tensor are also responsible for a more standard loss of hyperbolicity for the Einstein vacuum equations. In our case, this loss of hyperbolicity affects the wave equation for the remainder $\h$. Very schematically, it reads
\begin{align}
\tBox_g \h = \dr \h \dr \h + \dr^2\h, \label{eq h ill-posed}
\end{align}
where we neglected all contributions from the oscillating parts of \eqref{ansatz rough strategy}, and singled out the second derivatives of $\h$ coming from $g_{\rho (\alpha}\dr_{\beta)}H^\rho$. To remove $\dr^2\h$ from \eqref{eq h ill-posed} and get a well-posed equation, we need to set to zero the corresponding parts of $H^\rho$, i.e the $\dr \h$ in $H^\rho$. Moreover, as the discussion of Section \ref{section bad coupling strat} will show, some terms in $g^{(2)}$ are actually coupled to $\h$ and we also need to set their contribution to $H^\rho$ to zero. The precise definition of the parts of $H^\rho$ we want to cancel is postponed to Section \ref{section gauge part}.

\subsubsection{Propagation of the polarization conditions}\label{section propagation strategy}

As explained in Section \ref{section gauge part strategy}, we need the waves $g^{(1)}$ and $g^{(2)}$ to satisfy both transport equations along the rays and polarization conditions. In \cite{ChoquetBruhat1969}, Choquet-Bruhat proves that the polarization condition for $g^{(1)}$, that is $\Pol\left( g^{(1)} \right)=0$, is propagated by the transport equation $g^{(1)}$ satisfies, which schematically reads $L_0g^{(1)}=g^{(1)}$. By commuting $L_0$ and $\Pol$, she deduced a transport equation for $\Pol\pth{g^{(1)}}$, allowing the propagation of $\Pol\left( g^{(1)} \right)=0$ from the data. Unfortunately, the transport equations and polarization conditions satisfied by $g^{(2)}$ (see Sections \ref{section bg system} and \ref{section GWC et PC}) are much more complicated than the one $g^{(1)}$ satisfies and the previous strategy seems unfeasible. Instead of deriving the equations for $\Pol\left( g^{(2)} \right)$ directly from the equations that $g^{(2)}$ satisfies, we treat the polarization conditions for $g^{(2)}$ as gauge conditions. This means that we first solve the hierarchy of equations as if they were satisfied, and then derive the desired transport equations from the contracted Bianchi identities, which states that the Einstein tensor of a Lorentzian metric is divergence free.  This shows how we recover the polarization conditions on the whole spacetime, since our choice of initial data will be such that they hold initially (see Corollary \ref{coro ID}).

\subsubsection{A remark on null conditions}\label{section remark null}

In this cultural section, we intend to motivate the designation "weak polarized null condition" and to make the link with the weak null condition of Lindblad and Rodnianski. This condition is introduced in \cite{Lindblad2003} for general non-linear wave systems with quadratic non-linearity. It is also proved there that the Einstein vacuum equations in wave coordinates do satisfy this condition. It plays a crucial role in the proof of the stability of Minkowski spacetime in wave coordinates, see \cite{Lindblad2010}.

\saut
Since we consider the same system of equations as Lindblad and Rodnianski, namely \eqref{EVE'} in a similar gauge, their weak null condition is satisfied here. Actually, the term $\mathcal{E}\left( T ,  T \right) \dr_\alpha u_0  \dr_\beta u_0$ in \eqref{2Q(T,T)} precisely corresponds to the problematic term $P(\dr_\mu h,\dr_\nu h)$ from \cite{Lindblad2010} (see in particular Proposition 3.1 there). However, the phenomena observed in the present article, that is a non-linear effect (backreaction) together with linear propagation (transparency), seem to single out the Einstein vacuum equations in (generalised or not) wave coordinates among systems satisfying Lindblad and Rodnianski's weak null condition. Indeed, consider the following system
\begin{equation}\label{toy system}
\left\{
\begin{aligned}
\Box \ffi & = (\dr_t \psi)^2,
\\ \Box \psi & = Q(\dr \ffi, \dr \ffi),
\end{aligned}
\right.
\end{equation}
where $\ffi$ and $\psi$ are scalar functions defined on $\R^{3+1}$, $\Box$ is the flat wave operator and $Q$ satisfies the standard null condition. Let us perform a geometric optics expansion for $\ffi$ and $\psi$, in the spirit of the present article. Consider functions of the form
\begin{align}
\ffi = \ffi_0 + \lambda \ffi^{(1)}\left( \frac{v}{\lambda} \right) \quad \text{and} \quad \psi = \psi_0 + \lambda \psi^{(1)}\left( \frac{v}{\lambda} \right),\label{ansatz ffi psi}
\end{align}
where $v$ solves the flat eikonal equation (say $v=t-x$ or $v=t-r$), and $\ffi^{(1)}$ and $\psi^{(1)}$ also depends on the variable $(t,x)$. The ansatz \eqref{ansatz ffi psi} defines approximate solutions of the system \eqref{toy system} up to terms of order $\lambda$ if and only if the following hierarchy of equations hold
\begin{align*}
\Box \ffi_0 & = (\dr_t \psi_0)^2 + (\dr_t v)^2 \pi \left( \dr_\theta \psi^{(1)} \right)^2,
\\ \Box \psi_0 &  = Q(\dr \ffi_0, \dr\ffi_0),
\\ \Ll_v \dr_\theta \ffi^{(1)} & = 2 \dr_t\psi_0 \dr_t v \dr_\theta \psi^{(1)} + (\dr_t v)^2 \left( \mathrm{Id}-\pi\right) \left( \dr_\theta \psi^{(1)} \right)^2,
\\ \Ll_v \dr_\theta \psi^{(1)} & = 2Q(\dr\ffi_0,\dr v) \dr_\theta \ffi^{(1)},
\end{align*}
where $\pi$ denotes the mean operator with respect to the oscillating variable and $\Ll_v=2\dr^\mu v\dr_\mu + \Box v$ is the transport operator along the rays. Since $(\ffi_0,\psi_0)$ does not satisfy \eqref{toy system}, we observe backreaction (which is of no surprise since \eqref{toy system} doesn't satisfy the standard null condition). We don't observe transparency since the transport system for $\pth{ \ffi^{(1)},\psi^{(1)}}$ is non-linear. However, it is well known that the system \eqref{toy system} does satisfy the weak null condition. This simple example explains in particular the need in our work for a new designation "weak polarized null condition", where the adjective "polarized" refers to the fact that transparency is obtained through polarization conditions for the waves.

\begin{remark}
Note that the specificity of the Einstein vacuum equations is even more evident in other gauges than wave gauges. For instance, in double null gauges or in Coulomb type gauges, \eqref{EVE'} satisfy equivalents of the null condition. This is crucially used in seminal works such as \cite{Christodoulou1993} or \cite{Klainerman2015}. 
\end{remark}

\subsection{Solving the hierarchy of equations}\label{section strategy b}

In this section, we explain how we prove well-posedness for the hierarchy of equations derived from putting the ansatz \eqref{ansatz rough strategy} into the Einstein vacuum equations \eqref{EVE'}. As explained above, the hierarchy of equations is made of transport equations for $g^{(1)}$ and $g^{(2)}$ and a wave equation for $\h$. As we will see, well-posedness would be trivial if not for the quasi-linear nature of the wave operator in the Ricci tensor.

\subsubsection{The background transport equations}

As explained above, the transport equation for $g^{(1)}$ is linear and reads schematically
\begin{align}
L_0  g^{(1)} & = g^{(1)}.\label{eq g1}
\end{align}
The transport equation for $g^{(2)}$ reads schematically
\begin{align}
L_0  g^{(2)} & = g^{(2)} + \tilde{\dr}^2 g^{(1)},\label{eq g2}
\end{align}
where in the RHS of \eqref{eq g2} we only consider the worst term from the regularity point of view. In particular, we neglected the term $\h_{L_0L_0}$ from \eqref{eq g2 strategy}. The system \eqref{eq g1}-\eqref{eq g2} is triangular and solving it causes no problem. In the sequel, the exact form of this system will be called the \textit{background system} because the regularity of its solution only depends on the background spacetime and does not require any bootstrap argument.

\begin{remark}\label{remark N geq 10}
The threshold for the background regularity, i.e $N\geq 10$ (see Section \ref{section BG}), is chosen so that all these background quantities and their derivatives can be ultimately bounded in $L^\infty$. Let us explain briefly why the value $10$ fills this requirement. If the background metric coefficients are in $H^{N+1}$ (we neglect the weights at spacelike infinity in this discussion), then \eqref{eq g1} and \eqref{eq g2} imply that $g^{(1)}\in H^N$ and $g^{(2)}\in H^{N-2}$. Second derivatives of $g^{(2)}$ will appear as a source term in the wave equation for the remainder $\h$. In order to prove local well-posedness for this equation we will differentiate it 4 times (see \eqref{estim broad h et F} below) and our worse term is thus $\tilde{\dr}^6 g^{(2)}$. We want to bound it in $L^\infty$ so the usual Sobolev embedding $H^2\xhookrightarrow{}L^\infty$ implies that we need $N-2\geq 8$.
\end{remark}

\subsubsection{The quasi-linear coupling and the loss of derivatives}\label{section bad coupling strat}

In this section, we discuss the major impact of the term $\h_{L_0L_0}$ in \eqref{eq g2 strategy}, which has been neglected so far since it causes no formal issues and thus does not enter the discussion on forbidden harmonics. Recall that the presence of this term as source term in the transport equation for $g^{(2)}$ is directly linked to the quasi-linear aspect of $\tBox_g g$. 

\saut
To make things clearer, we assume that $\h_{L_0L_0}$ is the only source term in the transport equation for $g^{(2)}$, which we can thus write as $g^{(2)}_{\alpha\beta} = \sin\left( \frac{u_0}{\lambda}\right) \F_{\alpha\beta} $. Similarly, we assume that $\Box_g \F$ is the only source term in the wave equation for $\h$ (this term was already singled out in \eqref{eq h basic}). The coupled system between $\F$ and $\h$ then reads
\begin{align}
L_0 \F & = \h,\label{eq F strat}
\\ \Box_g \h & = \sin\left(\frac{u_0}{\lambda}\right)\Box_g \F,\label{eq h strat}
\end{align}
where we simply write $\h$ instead of $\h_{L_0L_0}\dr^2_\theta g^{(1)}$. Performing energy estimates on \eqref{eq F strat} implies schematically that $\dr\F\sim\dr\h$ while the energy estimate for \eqref{eq h strat} gives $\dr\h\sim \dr^2\F$. Therefore, there is an \textit{a priori} loss of one derivative in the coupling between \eqref{eq F strat} and \eqref{eq h strat}. Regaining this derivative is the main analytical challenge of our work.

\subsubsection{Regaining the missing derivative}\label{section regaining the missing derivative}

The main idea is to show that $\Box_g \F$ is better than $\dr^2\F$, i.e better than any second order derivatives of $\F$. To show this, we would like to derive a transport operator for $\Box_g \F$ from \eqref{eq F strat}. However we can't hope for any structural property of $[L_0,\Box_g]$ since $L_0$ is a null vector for $g_0$ and not $g$. The idea is to first benefit from the expansion defining $g$ and write schematically 
\begin{align}
\Box_g \F & = \Box_{g_0}\F + \lambda \dr^2 \F. \label{decomposition cruciale}
\end{align} 
We improve the first term in \eqref{decomposition cruciale} by considering the commutator $[L_0,\Box_{g_0}]$ for which we can now hope for structural properties. Formally we have from \eqref{eq F strat}
\begin{align}
L_0 \left( \Box_{g_0}\F \right) & = [L_0,\Box_{g_0}]\F + \Box_{g_0}\h.\label{eq box F strat}
\end{align}
We need to know what are the second derivatives appearing in the commutator $[L_0,\Box_{g_0}]$.  This is the content of the following key lemma.
\begin{lem}\label{lem commute null}
Let $f$ be compactly supported, we have
\begin{align}
\l [L_0,\Box_{g_0}]f \r_{L^2} \leq C(C_0)\left( \l \dr L_0 f \r_{L^2} + \l \Box_{g_0}f \r_{L^2} + \l \dr f \r_{L^2} \right).\label{estim commute null}
\end{align}
Moreover, if $r\geq 1$ we have
\begin{align}
\l \nabla^r [L_0,\Box_{g_0}]f \r_{L^2} \leq C(C_0)\left( \l \nabla^r  \dr L_0 f \r_{L^2} + \l \nabla^r \Box_{g_0}f \r_{L^2} + \l \dr f \r_{H^r} + \l \dr^2 f \r_{H^{r-1}} \right).  \label{estim commute null nabla r}
\end{align}
\end{lem}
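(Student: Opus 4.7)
The natural starting point is a direct computation in the global coordinate system. Using the wave gauge condition \eqref{wave condition g0} for $g_0$, $\Box_{g_0}$ reduces to $g_0^{\alpha\beta}\partial_\alpha\partial_\beta$, and a straightforward expansion yields
\begin{align*}
[L_0,\Box_{g_0}]f = L_0(g_0^{\alpha\beta})\partial_\alpha\partial_\beta f - 2g_0^{\alpha\beta}(\partial_\alpha L_0^\gamma)\partial_\beta\partial_\gamma f - \Box_{g_0}(L_0^\gamma)\partial_\gamma f,
\end{align*}
the third-order terms in $f$ cancelling by symmetry of mixed partials. The last summand is first-order in $f$; since $L_0^\gamma=-g_0^{\gamma\delta}\partial_\delta u_0$, the assumptions \eqref{estim g0} and \eqref{estim u0-z} combined with $N\geq 10$ and the Sobolev embedding of Proposition \ref{prop WSS chap 2} yield $\Box_{g_0}L_0\in L^\infty$, contributing at most $C(C_0)\|\partial f\|_{L^2}$.

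The key step is to recognise the $\partial^2 f$ coefficient as minus the deformation tensor of $L_0$. Using $\nabla g_0=0$ to rewrite $L_0(g_0^{\alpha\beta})$ in terms of Christoffel symbols, and the symmetry of $\nabla L_0$ inherited from $L_0$ being the $g_0$-gradient of $-u_0$, the various Christoffel contributions telescope and the symmetrised second-order coefficient collapses to $-\pi^{\alpha\beta}$, where $\pi^{\alpha\beta}:=\nabla^\alpha L_0^\beta+\nabla^\beta L_0^\alpha$. Converting $\partial_\alpha\partial_\beta f$ into the covariant Hessian $\nabla^2_{\alpha\beta}f$ absorbs an additional $\Gamma\cdot\partial f$ term into the first-order error, giving the clean rewriting
\begin{equation*}
[L_0,\Box_{g_0}]f = -\pi^{\alpha\beta}\nabla^2_{\alpha\beta}f + E(f),\qquad \|E(f)\|_{L^2}\leq C(C_0)\|\partial f\|_{L^2}.
\end{equation*}

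The bound on $\pi^{\alpha\beta}\nabla^2_{\alpha\beta}f$ then exploits the background null structure. Since $L_0$ is null and geodesic, $\nabla_{L_0}L_0=0$ forces $\pi_{L_0,X}=0$ for every frame vector $X$, so in the null frame $(L_0,\underline{L}_0,e_A)$ of Section \ref{section BG} the only nonzero upper-frame components are $\pi^{L_0L_0}=\tfrac14\pi_{\underline{L}_0\underline{L}_0}$, $\pi^{L_0A}=-\tfrac12\pi_{\underline{L}_0A}$ and $\pi^{AB}=2\chi^{AB}$, where $\chi_{AB}$ is the null second fundamental form of $P_{t,u}$. The first two contractions produce respectively $L_0(L_0f)$ and $e_A(L_0f)+[L_0,e_A]f-(\nabla_{L_0}e_A)f$, both absorbed into $\|\partial L_0f\|_{L^2}+\|\partial f\|_{L^2}$ since the connection coefficients are bounded in $L^\infty$. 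For the tangential piece, decomposing $\chi^{AB}=\tfrac12(\mathrm{tr}\chi)\delta^{AB}+\widehat\chi^{AB}$ the trace part couples to $\delta^{AB}(\nabla^2 f)_{AB}=\Box_{g_0}f+\nabla^2_{L_0\underline{L}_0}f$ (from the null-frame expansion of the wave operator), the second summand reducing to $\underline{L}_0(L_0f)$ modulo commutators and hence absorbable by the RHS of \eqref{estim commute null}.

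The main obstacle is the shear contribution $\widehat\chi^{AB}(\nabla^2 f)_{AB}$, which couples to the trace-free tangential Hessian of $f$, a quantity not directly bounded by the right-hand side. The plan to handle it is to use the compact support of $f$ to integrate one tangential derivative by parts, transferring it onto the shear $\widehat\chi$ (which lies in $H^{N-1}\subset L^\infty$ by $N\geq 10$ and Proposition \ref{prop WSS chap 2}) and producing commutators $[e_A,\partial]$ that are controlled by the background connection coefficients; the resulting first-order expression in $f$ closes into $\|\partial f\|_{L^2}$. The detailed algebra of this step is the content of Appendix \ref{section background null structure}. Finally, the higher-order estimate \eqref{estim commute null nabla r} is obtained by commuting $\nabla^r$ through the identity established above: the commutators $[\nabla^r,L_0]$ and $[\nabla^r,\Box_{g_0}]$ produce terms in $L^2$ controlled exactly by the additional $\|\partial f\|_{H^r}$ and $\|\partial^2 f\|_{H^{r-1}}$ norms explicitly allowed on the right, which is why these extra terms are absent from the $r=0$ case.
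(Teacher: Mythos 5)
Your reduction is sound up to the last step, and it essentially parallels the paper's own argument: the coordinate computation plus the wave gauge condition \eqref{wave condition g0}, the identification of the principal part as a contraction of $\D L_0$ with the Hessian, and the use of $\D_{L_0}L_0=0$ together with $L_0^\mu(\D_{e_A}L_0)_\mu=0$ to discard every component carrying an $L_0$ slot, leaving the purely tangential second derivatives. The genuine gap is your treatment of the shear term $\widehat\chi^{AB}\nabla^2_{AB}f$. An integration by parts ``transferring one tangential derivative onto $\widehat\chi$'' cannot yield the claimed bound: the quantity sits inside an $L^2$ norm, so there is nothing controlled to integrate against (pairing with an arbitrary dual element of $L^2$ puts the derivative on that element, which is not controlled; squaring and integrating by parts between the two Hessian factors produces third derivatives of $f$). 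More fundamentally, the trace-free tangential Hessian is a genuinely second-order quantity: for $f$ oscillating at frequency $k$ in a direction tangent to $P_{t,u}$ it has size $k^2$ while $\l \dr f \r_{L^2}$ has size $k$, so no estimate of the form $\l \widehat\chi^{AB}\nabla^2_{AB}f \r_{L^2}\lesssim \l \dr f\r_{L^2}$ can hold (the shear is small for $\e$ small, but it is not zero). Hence the step you defer to the appendix is precisely the step that fails as described.

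The correct mechanism, and the one the paper uses, is a two-dimensional elliptic estimate rather than an integration by parts onto $\widehat\chi$: the scalar Bochner identity on each surface $P_{t,u}$ controls $\int_{P_{t,u}}|\mathrm{Hess}\,f|^2_{\mathring{g}_0}\,\d\mu_{t,u}$ by $\int_{P_{t,u}}(\lap f)^2\,\d\mu_{t,u}$ plus a Gauss-curvature term bounded by the background regularity; integrating along the foliation gives $\l e_Ae_Bf\r_{L^2}\lesssim \l \lap f\r_{L^2}+\l \dr f\r_{L^2}$, and the null-frame expression of the wave operator then gives $\l \lap f\r_{L^2}\lesssim \l \Box_{g_0}f\r_{L^2}+\l \dr L_0 f\r_{L^2}+\l \dr f\r_{L^2}$. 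In other words, the trace-free part costs exactly as much as the trace part — this is why $\l\Box_{g_0}f\r_{L^2}$ and $\l\dr L_0 f\r_{L^2}$ must also appear in its estimate — so your trace/trace-free splitting is harmless but buys nothing, and it certainly does not let the shear contribution close with first-order terms alone. The remaining ingredients of your proposal (boundedness of $\Box_{g_0}L_0^\alpha$, absorption of the components of the deformation tensor with an $L_0$ slot, the identity $\delta^{AB}\nabla^2_{AB}f=\Box_{g_0}f+\nabla^2_{L_0\Lb_0}f$, and the $r\geq 1$ case by commuting $\nabla^r$ and absorbing $[\dr L_0,\nabla^r]f$ and $[\Box_{g_0},\nabla^r]f$ into $\l\dr f\r_{H^r}+\l\dr^2 f\r_{H^{r-1}}$) are correct and match the paper, but the lemma is not proved until the tangential Hessian is handled by the Bochner/elliptic argument.
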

The proof of this lemma is postponed to Appendix \ref{section background null structure} and is based on the background null foliation induced by the optical function $u_0$.  In the case of equation \eqref{eq box F strat}, the term $\Box_{g_0}\F$ from \eqref{estim commute null} is treated with Gronwall's inequality and the term $ \dr L_0 \F$ rewrites as $\dr \h$ thanks to \eqref{eq F strat}. By using again the expansion for $g$, we can use equation \eqref{eq h strat} for the term $\Box_{g_0}\h$ in \eqref{eq box F strat}. Therefore, we are able to show that $\Box_{g_0}\F$ is at the level of $\dr\h$, which is consistent with \eqref{eq h strat}. This is done in detail in Section \ref{section estim box g0 F}.

\saut
It remains to improve the second term in \eqref{decomposition cruciale}. As our notation suggests it, we don't use any structure of the second order derivatives involved but rather benefit from the small factor $\lambda$ in front. The idea is to cancel one derivative with this factor, which rigorously involves a spatial Fourier projector $\Pi_\leq$, that is a Fourier multiplier supported in $\left\{ |\xi|\leq \frac{1}{\lambda} \right\}$. If we also define $\Pi_\geq=\mathrm{Id}-\Pi_\leq$, we have indeed Bernstein-like estimates in $L^2$:
\begin{align*}
\lambda\dr\Pi_\leq (f) \sim  f  \quad \text{and} \quad \frac{1}{\lambda}\Pi_\geq(f) \sim \dr f.
\end{align*}
The proof of these estimates and the exact definition of the projectors $\Pi_\leq$ and $\Pi_\geq$ are postponed to Appendix \ref{section proof lem commute bis}. In order to benefit from these projectors, we need to introduce them into the system \eqref{eq F strat}-\eqref{eq h strat}. Instead of this system, we thus solve
\begin{align}
L_0 \F & = \Pi_\leq ( \h) ,\label{eq F strat bis}
\\ \Box_g \h & = \sin\left(\frac{u_0}{\lambda}\right)\Box_g \F   + \frac{1}{\lambda} \Pi_\geq ( \h)  .\label{eq h strat bis}
\end{align}
Using the two Bernstein-like estimates,  equation \eqref{eq F strat bis} now implies that $\lambda\dr^2\F \sim \dr \h$ while the extra term in \eqref{eq h strat bis} is absorbed by the energy estimate for the wave operator. This modification of the system is only possible because \eqref{eq F strat}-\eqref{eq h strat} are obtained as part of the high-frequency expansion of the Ricci tensor. This allows us to artificially remove $\Pi_\geq ( \h)$ from the transport equation for $\F$ by simply rewriting the $\lambda\h$ term as 
\begin{align*}
 \lambda \Pi_\leq ( \h)  +  \lambda^2 \times \frac{1}{\lambda} \Pi_\geq ( \h) .
\end{align*}
In particular, this modification leaves \eqref{EVE'} invariant. This shows how one can use \eqref{decomposition cruciale} to gain a derivative, and obtain
\begin{align}
\l\dr\nabla^r \h \r_{L^2} + \l\dr\nabla^r \F \r_{L^2} \lesssim \frac{1}{\lambda^r},\label{estim broad h et F}
\end{align}
for $r\in \llbracket 0,4\rrbracket$ (we omitted the weight at spacelike infinity for $\h$). Note that since we modified the equation for $\F$, we need the following commutator estimate to deal with the first term in \eqref{decomposition cruciale}:
\begin{lem}\label{lem commute bis}
Let $u$ and $v$ be two tempered distribution. We have 
\begin{align}
\l \left[ u , \Pi_\leq  \right]\nabla v \r_{L^2} \lesssim  \left( \l \nabla  u \r_{L^\infty}  +   \lambda \l u \r_{H^{\frac{7}{2}}}  \right)\l v \r_{L^2}.\label{commute bis}
\end{align}
\end{lem}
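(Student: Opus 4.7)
The plan is to decompose $u = u_\leq + u_\geq$ according to its own frequency content at the scale $\lambda^{-1}$, and to treat the two pieces by completely different arguments: for the low-frequency part $u_\leq$, a classical kernel-based commutator computation that produces the term involving $\l \nabla u \r_{L^\infty}$; for the high-frequency part $u_\geq$, a direct estimate exploiting the fact that the $H^{7/2}$ norm controls $\l u_\geq \r_{L^\infty}$ with a gain of $\lambda^2$. First I would write $\Pi_\leq = \chi(\lambda D)$ for some smooth cutoff $\chi \in C_c^\infty(\R^3)$ equal to $1$ near the origin, so that $\Pi_\leq$ is convolution with a Schwartz kernel $K_\lambda(x) = \lambda^{-3} K(x/\lambda)$ of $L^1$-mass $O(1)$. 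In particular, $\Pi_\leq$ is bounded on $L^p(\R^3)$ for every $p \in [1,\infty]$ and $\Pi_\leq \nabla$ has $L^2 \to L^2$ operator norm $O(\lambda^{-1})$.

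For the low-frequency piece, the kernel representation followed by integration by parts in $y$ (legitimate since $u_\leq$ is smooth by Paley--Wiener) yields
\begin{align*}
[u_\leq, \Pi_\leq](\nabla v)(x) = \Pi_\leq(\nabla u_\leq \cdot v)(x) + \int (u_\leq(x) - u_\leq(y))\, (\nabla K_\lambda)(x-y)\, v(y)\, \d y.
\end{align*}
A Taylor expansion $u_\leq(x)-u_\leq(y) = (x-y)\cdot \int_0^1 \nabla u_\leq(y+s(x-y))\,\d s$ in the second integral exhibits the kernel $(x-y)\otimes \nabla K_\lambda(x-y)$, which is still Schwartz with $L^1$-norm $O(1)$; Young's inequality then produces $\l \nabla u_\leq \r_{L^\infty} \l v \r_{L^2}$. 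The first term satisfies the same bound by the $L^2$-boundedness of $\Pi_\leq$. Since $\Pi_\leq$ is bounded on $L^\infty$, one concludes $\l [u_\leq, \Pi_\leq] \nabla v \r_{L^2} \lesssim \l \nabla u \r_{L^\infty} \l v \r_{L^2}$.

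For the high-frequency piece, the key estimate is
\begin{align*}
\l u_\geq \r_{L^\infty} \lesssim \lambda^2 \l u \r_{H^{7/2}},
\end{align*}
obtained by dyadic decomposition at frequency scales $2^k/\lambda$ for $k \geq 0$: Bernstein contributes a factor $(2^k/\lambda)^{3/2}$ from the three-dimensional embedding $L^2 \hookrightarrow L^\infty$ on each block, while the $H^{7/2}$ norm provides $(2^k/\lambda)^{-7/2}$, leaving $\lambda^2 \sum_k 2^{-2k} \l u \r_{H^{7/2}}$. Equipped with this, the direct piece is bounded by
\begin{align*}
\l u_\geq \Pi_\leq \nabla v \r_{L^2} \lesssim \l u_\geq \r_{L^\infty} \l \Pi_\leq \nabla v \r_{L^2} \lesssim \lambda\, \l u \r_{H^{7/2}} \l v \r_{L^2}.
\end{align*}
For the remaining term I would rewrite $\Pi_\leq(u_\geq \nabla v) = \Pi_\leq \nabla(u_\geq v) - \Pi_\leq(\nabla u_\geq \cdot v)$ so that no derivative ever falls on $v$; the first piece is again $O(\lambda \l u \r_{H^{7/2}} \l v \r_{L^2})$ via $\l \Pi_\leq \nabla \r_{L^2\to L^2} = O(\lambda^{-1})$ and H\"older, while the second is $O(\l \nabla u \r_{L^\infty} \l v \r_{L^2})$ since $\l \nabla u_\geq \r_{L^\infty} \lesssim \l \nabla u \r_{L^\infty}$ by the $L^\infty$-boundedness of the frequency projectors.

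The main structural point is the exponent $7/2 = 3/2 + 2$: the $3/2$ is the Sobolev threshold for $L^\infty$ control in three dimensions, while the extra $2$ supplies exactly the power of $\lambda$ needed to absorb the two $\lambda^{-1}$ losses from $\Pi_\leq \nabla$ while still leaving one factor of $\lambda$ on the right-hand side. The strict inequality $7/2 > 3/2$ is precisely what allows the dyadic geometric series to converge without any $\varepsilon$-loss, so no interpolation beyond standard Bernstein is needed.
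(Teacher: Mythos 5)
Your proof is correct, and it takes a genuinely different route from the paper's. The paper runs the whole argument through the Bony decomposition of Appendix B: it expands both factors dyadically, groups the commutator as $\sum_j[S_{j-1}u,\Pi_\leq]\dy_j\nabla v+\sum_j[\dy_j u,\Pi_\leq]S_{j-1}\nabla v+\sum_{|i-j|\leq 1}[\dy_i u,\Pi_\leq]\dy_j\nabla v$, uses the spectral supports \eqref{support anneau}--\eqref{support boule} to show each block vanishes unless $2^j\sim\lambda^{-1}$ or the $u$-frequency is $\gtrsim\lambda^{-1}$, and then applies Lemma \ref{lem commute BCD} blockwise together with Bernstein, the surviving high-$u$-frequency sums being exactly where $\lambda\l u\r_{H^{\frac{7}{2}}}$ enters. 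You instead split only $u$ at the single scale $\lambda^{-1}$: for $u_\leq$ you rederive a Coifman--Meyer-type kernel/Taylor estimate (essentially the paper's Lemma \ref{lem commute BCD}, adapted by first integrating the derivative off $v$, which is why the extra term $\Pi_\leq(\nabla u_\leq\, v)$ appears), yielding the $\l\nabla u\r_{L^\infty}\l v\r_{L^2}$ contribution; for $u_\geq$ you drop the commutator structure entirely and estimate the two terms separately, using $\l u_\geq\r_{L^\infty}\lesssim\lambda^2\l u\r_{H^{\frac{7}{2}}}$ (Bernstein per block, $\l\dy_j u\r_{L^2}\lesssim 2^{-\frac{7}{2}j}\l u\r_{H^{\frac{7}{2}}}$, and the geometric series over $2^j\gtrsim\lambda^{-1}$), the operator bound $\l\Pi_\leq\nabla w\r_{L^2}\lesssim\lambda^{-1}\l w\r_{L^2}$ of \eqref{P low}, and the Leibniz rewriting $\Pi_\leq(u_\geq\nabla v)=\Pi_\leq\nabla(u_\geq v)-\Pi_\leq(\nabla u_\geq\, v)$ with $\l\nabla u_\geq\r_{L^\infty}\lesssim\l\nabla u\r_{L^\infty}$. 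Both arguments land exactly on \eqref{commute bis}, and your structural remark about $\frac{7}{2}=\frac{3}{2}+2$ is the same mechanism that drives the paper's $A_2$ and $A_3$ estimates. What your version buys is economy: one kernel estimate plus one Bernstein-type bound, with no paraproduct bookkeeping and no tracking of spectral supports of products. What the paper's buys is that it reuses verbatim the machinery it sets up anyway (Propositions \ref{prop dyadic blocks} and \ref{charac Hs}, Lemma \ref{lem commute BCD}) instead of reproving a variant of the kernel commutator bound.
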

This lemma will be used to estimate $\left[ \Pi_\leq , \Box_{g_0}\right]$. Its proof is postponed to Appendix \ref{section proof lem commute bis} and relies on the Littlewood-Paley decomposition, and we introduce in Section \ref{section littlewood paley} the material needed.

\begin{remark}
Strictly speaking, the introduction of the Fourier projectors $\Pi_\leq$ and $\Pi_\geq$, going from \eqref{eq F strat}-\eqref{eq h strat} to \eqref{eq F strat bis}-\eqref{eq h strat bis}, does not solve all of our issues. It actually replaces the loss of one derivative by the loss of one power of $\la$, since \eqref{eq F strat bis} now implies $\dr^2 \F \sim \frac{1}{\la}\dr h$. Plugging this into the energy estimate for \eqref{eq h strat bis}, this translates as a time of existence of order $\la$, which is far from satisfactory. A careful bootstrap argument is still required to regain this power of $\la$, see Section \ref{section solving reduced system}.
\end{remark}

\subsubsection{The order of precision of the ansatz}

Finally, let us comment on the order of precision of the high-frequency ansatz defining $g$. As \eqref{ansatz rough strategy} shows, the order of precision is 2, since the non-oscillating remainder $\h$ appears first at the order $\lambda^2$ in $g$.  This is similar to the construction of \cite{Huneau2018a} in $\mathbb{U}(1)$ symmetry, but different from the one of \cite{Touati2023}, where we are able to construct high-frequency ansatz of arbitrary precision for a semi-linear toy model.

\saut
As in \cite{Touati2023}, the value of this order of precision is linked to the estimates used to solve the equations at stake.  To make this clearer, let us assume that this value, denoted by $K\in\Z$, is not fixed yet. The transport equations only involve the transport operator $L_0$ which is part of the background and therefore their resolution is independent from $\lambda$. However, we use energy estimates associated to $g$ to solve \eqref{eq h strat bis}. As shown in \cite{Sogge1995} (see also Lemma \ref{lem EE} in this article) this energy estimates reads schematically
\begin{align}
\l \dr u \r_{L^2}(t)  \lesssim  \exp\left(\int_0^t \l\dr g \r_{L^\infty}\right)  \left( \l \dr u \r_{L^2}(0) + \int_0^t \l\tBox_{g_\lambda} u \r_{L^2}\right).\label{EE intro chap 2}
\end{align}
Therefore, in order to construct a solution to \eqref{eq h strat bis} with a time of existence independent of $\lambda$, we need $\l\dr g \r_{L^\infty}$ to be bounded uniformly in the high-frequency limit $\lambda\to 0$, since it appears as coefficient in \eqref{EE intro chap 2}. If we focus on the term $\lambda^K \h$ in $g$, the usual Sobolev embedding $H^2\xhookrightarrow{}L^\infty$ gives 
\begin{align*}
\l \lambda^K \dr \h \r_{L^\infty}\lesssim\lambda^K \l \dr \h \r_{H^2} \lesssim \lambda^{K-2},
\end{align*}
where we used the fact that one loses one power of $\lambda$ for each derivative of $\dr\h$ (see \eqref{estim broad h et F}). This shows why $K\geq 2$ is necessary. The present article then shows that $K=2$ is sufficient to construct local high-frequency solutions to \eqref{EVE'} on $[0,1]\times \R^3$. 

\begin{remark}
As explained in the introduction, constructing similar solutions to \eqref{EVE'} but of greater order of precision would be an interesting problem from the geometric optics point of view. Moreover, it could pave the way for the study of the long time dynamics of high-frequency gravitational waves. Indeed, the main result of \cite{Touati2023} shows on a toy model how the vector field method and the Klainerman-Sobolev inequality can be applied to geometric optics approximation but require $K\geq 4$.
\end{remark}

\subsection{Outline of the proof}

We give here an outline of the article's remainder which is concerned with the proof of Theorem \ref{theo main} and follows the strategy depicted in the previous sections. 
\begin{itemize}
\item In Section \ref{section high-frequency ansatz}, we give the final expression of the ansatz for $g$ and compute its Ricci tensor.
\item In Section \ref{section hierarchy}, we give the hierarchy of equations as well as the polarization and generalised wave gauge conditions. The hierarchy is composed of the background system and the reduced system. We also define the initial data and provide basic estimates.
\item In Section \ref{section solving bg system}, we solve the background system.
\item In Section \ref{section solving reduced system}, we solve the reduced system. 
\item In Section \ref{section solving EVE}, we show that the metric $g$ is actually a solution of \eqref{EVE'} by propagating the polarization and generalised wave gauge conditions.
\end{itemize}

\section{The high-frequency ansatz}\label{section high-frequency ansatz}

\subsection{High-frequency expansion of the metric}\label{section high-frequency metric}

The metric we consider is given by
\begin{align}
g_\lambda= g_0 + \lambda g^{(1)} + \lambda^2 \left( g^{(2)} + \h_\lambda \right) + \lambda^3  g^{(3)}\label{g} ,
\end{align}
where
\begin{align}
g^{(1)} & = F^{(1)}\cos\left( \frac{u_0}{\lambda}\right),\label{expression g1}
\\ g^{(2)} & =  \F_\lambda\sin\left( \frac{u_0}{\lambda}\right) + F^{(2,1)}\sin\left( \frac{u_0}{\lambda}\right) + F^{(2,2)}\cos\left( \frac{2u_0}{\lambda}\right).\label{expression g2}
\end{align}
The symmetric 2-tensors $F^{(1)}$ and $F^{(2,i)}$ for $i=1,2$ are called the background perturbations. They solve the background system, defined in Section \ref{section bg system}. The symmetric 2-tensors $\F_\lambda$ and $\h_\lambda$ solve the reduced system, defined in Section \ref{section reduced system}. We use a different font to emphasize their special role. The exact definition of $g^{(3)}$ will be given in Section \ref{section def g4}, let us just say for now that it is a polynomial in terms of the background perturbations and their derivatives and is independent from $\h_\lambda$ but does depends on $\F_\lambda$ without derivatives. The background perturbations tensors don't depend on $\lambda$, unlike the tensors $\F_\lambda$, $\h_\lambda$ and the metric $g_\lambda$ itself. However, for the sake of clarity we drop from now on this notation and only write $\F$, $\h$ and $g$.

\begin{remark}
The presence of $\F$ and $F^{(2,1)}$ in $g^{(2)}$ seems redundant and is not necessary. However it seems useful to distinguish between the part of the $\frac{u_0}{\la}$ oscillation in $g^{(2)}$ coupled to $\h$ (see Sections \ref{section bad coupling strat} for a informal discussion) and the part whose only purpose is to absorb terms coming from $g^{(1)}$ such as $\Box_{g_0}F^{(1)}$ and which can thus be treated as a background quantity.
\end{remark}

We can formally compute the high-frequency expansion of the inverse of $g$. In coordinates it is given by
\begin{align}
g^{\mu\nu} & = g_0^{\mu\nu} + \lambda(g^{\mu\nu})^{(1)} + \lambda^2(g^{\mu\nu})^{(2)} + \lambda^3 H^{\mu\nu},\label{inverse g}
\end{align}
where 
\begin{align}
(g^{\mu\nu})^{(1)} & = - ( F^{(1)})^{\mu\nu}\cos\left( \frac{u_0}{\lambda}\right),\label{inverse g1}
\\(g^{\mu\nu})^{(2)} & = g_0^{\mu\rho} (F^{(1)})^{\nu\sigma} F^{(1)}_{\rho\sigma} \cos^2\left( \frac{u_0}{\lambda}\right)  - \F^{\mu\nu}\sin\left( \frac{u_0}{\lambda}\right)  - \h^{\mu\nu}  \label{inverse g2}
\\&\quad - ( F^{(2,1)})^{\mu\nu}\sin\left( \frac{u_0}{\lambda}\right) - ( F^{(2,2)})^{\mu\nu}\cos\left( \frac{2u_0}{\lambda}\right). \nonumber 
\end{align}
On the RHS of these expressions, all the inverses are with respect of the background metric $g_0$. For example we have $( F^{(1)})^{\mu\nu} = g_0^{\mu\rho}g_0^{\nu\sigma}F^{(1)}_{\rho\sigma}$.  The quantity $H^{\mu\nu}$ is defined so that \eqref{inverse g} holds with $(g^{\mu\nu})^{(i)}$ for $i=1,2$ being defined by \eqref{inverse g1}-\eqref{inverse g2}, and one can verify by hand that $H^{\mu\nu}=\GO{1}$.

\begin{remark}
We denote by $g^{-k}$ any product of $k$ coefficients of the inverse metric $g^{-1}$. This also applies to the background inverse metric $g_0^{-1}$.
\end{remark}

We conclude this section by computing the Christoffel symbols of the metric $g$. They are given by
\begin{align}
\Gamma^\rho_{\mu\nu} & = \Gamma(g_0)^\rho_{\mu\nu} + (\tilde{\Gamma}^{(0)})^\rho_{\mu\nu} + \GO{\lambda},\label{christoffel}
\end{align}
where 
\begin{align}
(\tilde{\Gamma}^{(0)})^\rho_{\mu\nu} = -\half \sin\left( \frac{u_0}{\lambda}\right) g_0^{\rho\sigma} \left( \dr_\mu u_0 F^{(1)}_{\nu\sigma}  +  \dr_\nu u_0 F^{(1)}_{\mu\sigma}   - \dr_\sigma u_0 F^{(1)}_{\mu\nu} \right).\label{gammatilde0}
\end{align}

\subsection{High-frequency expansion of the Ricci tensor}\label{section Ricci}

In this section, we compute the Ricci tensor of the metric $g$ given by \eqref{g}. In order to lighten the computations, we assume that $F^{(1)}$ satisfies
\begin{align}
F^{(1)}_{L_0 \a}=0 \quad \text{and} \quad \tr_{g_0}F^{(1)}=0. \label{assumptions on F1}
\end{align}
These are very reasonable assumptions, since they correspond to the polarization condition $\Pol\pth{F^{(1)}}=0$ already identified in \cite{ChoquetBruhat1969} and discussed in Section \ref{section gauge part strategy}, together with the stronger requirement $F^{(1)}_{L_0\Lb_0}=0$. The fact that $F^{(1)}$ satisfies these conditions will be proved in Section \ref{section propagation ordre 1}.

\saut
The expansion of the Ricci tensor is based on \eqref{Ricci generalised wave coordinates}, i.e its decomposition in generalised wave coordinates, and we first expand the wave part $\tBox_g g_{\alpha\beta}$, then the quadratic non-linearity $P_{\alpha\beta}$ and finally the gauge part $H^\rho $, in Propositions \ref{prop expansion wave part}, \ref{prop expansion quad part} and \ref{prop expansion gauge part} respectively. Finally, Proposition \ref{prop expansion ricci tensor} gives the expansion of the Ricci tensor of the oscillating metric $g$. The computations leading to these propositions are postponed to Appendix \ref{appendix calcul}.

\subsubsection{The wave part}\label{section wave part}

We start by expanding the wave part $\tBox_gg_{\a\b}$ of the Ricci tensor of $g$.

\begin{prop}\label{prop expansion wave part}
The wave part of the Ricci tensor of $g$ admits the expansion
\begin{align}
\tBox_g g_{\alpha\beta} & = W^{(0)}_{\alpha\beta} + \lambda W^{(1)}_{\alpha\beta} + \lambda^2 W^{(\geq 2)}_{\alpha\beta},   \label{expansion wave part}
\end{align}
where 
\begin{align}
W^{(0)}_{\alpha\beta} & = \tBox_{g_0}(g_0)_{\alpha\beta} - \sin\left( \frac{u_0}{\lambda} \right) \pth{ -2L_0 F^{(1)}_{\alpha\beta} + (\tBox_{g_0} u_0)F^{(1)}_{\alpha\beta}  },\label{W0}
\\ W^{(1)}_{\alpha\beta} & = \cos\left(\frac{u_0}{\lambda}\right) \pth{ -2 L_0F^{(2,1)}_{\alpha\beta} + (\tBox_{g_0}u_0)F^{(2,1)}_{\alpha\beta} +  \tilde{W}^{(1,1)}_{\alpha\beta}   -2 L_0\F_{\alpha\beta} + (\tBox_{g_0}u_0)\F_{\alpha\beta}   + \h_{L_0L_0}F^{(1)}_{\alpha\beta} }  \label{W1}
\\&\quad -2  \sin\left(\frac{2u_0}{\lambda}\right) \pth{  -2 L_0F^{(2,2)}_{\alpha\beta} + (\tBox_{g_0}u_0)F^{(2,2)}_{\alpha\beta} +  \tilde{W}^{(1,2)}_{\alpha\beta}  } \nonumber
\\&\quad +\frac{1}{2} \sin\left(\frac{2u_0}{\lambda}\right)\F_{L_0L_0}F^{(1)}_{\alpha\beta} + \cos\left(\frac{u_0}{\lambda}\right)\cos\left(\frac{2u_0}{\lambda}\right) F^{(2,2)}_{L_0L_0} F^{(1)}_{\alpha\beta} \nonumber ,
\\ W^{(\geq 2)}_{\a\b} & = \tBox_g \h_{\alpha\beta} +  \sin\left( \frac{u_0}{\lambda} \right)\tBox_g \F_{\alpha\beta}  + \lambda \tBox_g g^{(3)}_{\alpha\beta} +  \Tilde{W}^{(\geq 2)}_{\alpha\beta} \label{W2} ,
\end{align}
with lower order terms given schematically by
\begin{align}
\tilde{W}^{(1,1)} & = g_0^{-2} F^{(1)} \dr^2 g_0 +  g_0^{-2} \dr^2 F^{(1)} + g_0^{-2} (F^{(1)})^2 \dr^2 u_0, \label{Wtilde 11}
\\ \tilde{W}^{(1,2)} & = g_0^{-2} F^{(1)} \dr^2 g_0 +  g_0^{-2} \dr^2 F^{(1)} + g_0^{-2} (F^{(1)})^2 \dr^2 u_0 + g_0^{-1}(\dr u_0)^2 F^{(2,1)} F^{(1)}, \label{Wtilde 12}
\\ \tilde{W}^{(\geq 2)} & = (g^{-1})^{(\geq 3)} (\dr u_0)^2 F^{(1)} \label{Wtilde geq 2}
\\&\quad + (g^{-1})^{(\geq 2)}\left( \dr^2 g_0 + \dr^2 u_0 F^{(1)} + \dr u_0 \pth{ \dr F^{(1)} +  \dr F^{(2,1)} +  \dr \F } + (\dr u_0)^2 \pth{ F^{(2,i)} + \F} \right) \nonumber
\\&\quad + (g^{-1})^{(\geq 1)} \left(  \dr^2 F^{(1)} + \dr^2 u_0 \pth{ F^{(2,i)} +\F }  \right)  + g^{-1} \dr^2 F^{(2,i)}.\nonumber
\end{align}
\end{prop}

\begin{remark}
Let us briefly comment on these various expressions, recalling the formal discussion of Section \ref{section quasi-linearity}.
\begin{itemize}
\item The absence of the quasi-linear forbidden harmonic $\frac{2u_0}{\la}$ in $W^{(0)}_{\a\b}$ coming from $F^{(1)}_{L_0L_0}F^{(1)}_{\a\b}$ is due to \eqref{assumptions on F1}.
\item The quasi-linear forbidden harmonic $\frac{3u_0}{\la}$ in $W^{(1)}_{\a\b}$ is due to $F^{(2)}_{L_0L_0}$, and will eventually vanish thanks to polarization conditions.
\end{itemize}
\end{remark}

\subsubsection{The quadratic non-linearity}

In this section we expand $P_{\alpha\beta}(g)(\dr g, \dr g)$ (see \eqref{quadratic non-linearity} for an exact expression). 

\begin{prop}\label{prop expansion quad part}
The quadratic non-linearity in the Ricci tensor of $g$ admits the expansion
\begin{equation}
P_{\alpha\beta}(g)(\dr g,\dr g) = P^{(0)}_{\alpha\beta} + \lambda P^{(1)}_{\alpha\beta} + \lambda^2 P^{(\geq 2)}_{\alpha\beta}, \label{expansion P}
\end{equation}
where
\begin{align}
P^{(0)}_{\alpha\beta} & = P_{\alpha\beta}(g_0)(\dr g_0, \dr g_0) - \frac{1}{4}\left| F^{(1)} \right|^2_{g_0} \dr_\alpha u_0 \dr_\beta u_0 \label{P0}
\\&\quad -\sin\pth{ \frac{u_0}{\la}} \pth{ -2 L_0^\rho \Gamma(g_0)^\nu_{(\alpha \rho} F^{(1)}_{\beta)\nu} + \dr_{(\alpha}u_0\hat{P}^{(0,1)}_{\beta)} }  +\frac{1}{4} \cos\pth{\frac{2u_0}{\la}}  \left| F^{(1)} \right|^2_{g_0} \dr_{\alpha} u_0  \dr_{\beta} u_0   \nonumber,
\\  P^{(1)}_{\alpha\beta} & = \cos\pth{\frac{u_0}{\la}}\pth{ -2L_0^\rho \Gamma(g_0)^\nu_{(\alpha\rho}\pth{  \F_{\nu\beta)} +  F^{(2,1)}_{\nu\beta)}} + P^{(1,1)}_{\a\b} + \dr_{(\alpha}u_0 \hat{P}^{(1,1)}_{\beta)} } \label{P1}
\\&\quad + \sin\pth{\frac{2u_0}{\la}}\pth{ 4L_0^\rho \Gamma(g_0)^\nu_{(\alpha\rho}  F^{(2,2)}_{\nu\beta)}  +P^{(1,2)}_{\alpha\beta} +  \dr_{(\alpha}u_0  \hat{P}^{(1,2)}_{\beta)} } + \cos\left( \frac{3u_0}{\lambda}\right)\dr_{(\alpha} u_0  \hat{P}_{\beta)}^{(1,3)}  \nonumber,
\end{align}
with lower order terms given schematically by
\begin{align}
\hat{P}^{(0,1)} & = g_0^{-2}F^{(1)} \dr g_0 \label{Phat0},
\\ P^{(1,k)} & = \left( g_0^{-3}F^{(1)}\dr g_0  +  g_0^{-2}  \dr F^{(1)}  \right)\left( \dr g_0  +  \dr u_0 F^{(1)}\right)  \label{P1k},
\\ \hat{P}^{(1,i)} & = g_0^{-2} \left( \dr g_0   +  \dr u_0 F^{(1)} \right) \left( \F + F^{(2,k)}\right) +\dr u_0 g_0^{-3}(F^{(1)})^3,\label{Phat 1i}
\\ P^{(\geq 2)} & = (g^{-2}\dr g \dr g)^{(\geq 2)},
\end{align}
for $k=1,2$ and $i=1,2,3$.
\end{prop}

\begin{remark}
Let us briefly comment on these various expressions, recalling the formal discussion of Section \ref{section weak polarized null condition}.
\begin{itemize}
\item Thanks to $\Pol\pth{F^{(1)}}=0$ (which follows from \eqref{assumptions on F1}), the semi-linear forbidden harmonic $\frac{2u_0}{\la}$ in $P^{(0)}_{\a\b}$ is fully non-tangential. See \eqref{2Q(T,T) bis}, where $\mathcal{E}\pth{F^{(1)},F^{(1)}}$ is simplified thanks to $\tr_{g_0}F^{(1)}=0$ from \eqref{assumptions on F1}.
\item Again thanks to $\Pol\pth{F^{(1)}}=0$, the semi-linear forbidden harmonic $\frac{3u_0}{\la}$ in $P^{(1)}_{\a\b}$ is non-tangential, see \eqref{Q(T,S)+Q(S,T) bis}.
\end{itemize}
\end{remark}

\subsubsection{The gauge part}\label{section gauge part}

In this section we expand the gauge term $H^\rho$. As explained in Sections \ref{section gauge part strategy} and \ref{section wave gauge strategy}, it is responsible for polarization conditions for each wave in \eqref{g} and for a generalised wave gauge condition linked to the equation for the remainder $\h$. The generalised wave gauge is equivalent to setting a group of problematic terms to 0, and these terms are grouped in $\Upsilon^\rho$ below.

\begin{prop}\label{prop expansion gauge part}
The gauge term $H^\rho$ admits the expansion
\begin{align}
H^\rho & =  \lambda  (H^{(1)})^\rho  + \lambda^2 \left( (H^{(2)})^\rho + \Upsilon^\rho \right)  + \lambda^3  (H^{(\geq 3)})^\rho,  \label{Hrho formel}.
\end{align}
where
\begin{align}
(H^{(1)})^\rho & =  \cos\left( \frac{u_0}{\lambda} \right) \left( g_0^{\rho\sigma} \Pol_\sigma \left( F^{(2,1)} \right) + g_0^{\rho\sigma} \Pol_\sigma\left(\F\right) + (\Tilde{H}^{(1,1)})^\rho \right) \label{H1} 
\\&\quad + \sin\left( \frac{2u_0}{\lambda} \right) \left( -2 g_0^{\rho\sigma} \Pol_\sigma \left( F^{(2,2)} \right)    -\frac{1}{4}g_0^{\rho\sigma} \dr_\sigma u_0 \left| F^{(1)} \right|^2_{g_0} \right)\nonumber,
\\ (H^{(2)})^\rho & = g_0^{\rho\sigma} \Pol_\sigma \left( \dr_\theta g^{(3)} \right) + (\tilde{H}^{(2)})^\rho\label{H2},
\\  \Upsilon^\rho & = g^{\mu\nu}g^{\rho\sigma} \left( \dr_\mu \h_{\sigma\nu} - \half \dr_\sigma \h_{\mu\nu} + \sin\left(\frac{u_0}{\lambda}\right) \left( \dr_\mu \F_{\sigma\nu} - \half \dr_\sigma \F_{\mu\nu}  \right) +  \lambda  \left( \tilde{\dr}_\mu g^{(3)}_{\sigma\nu} - \half \tilde{\dr}_\sigma g^{(3)}_{\mu\nu}  \right)  \right)\label{def Upsilon h}
\\&\quad - g_0^{\rho\sigma} \h^{\mu\nu} \left( \dr_\mu (g_0)_{\sigma\nu} - \half \dr_\sigma (g_0)_{\mu\nu} - \sin\left(\frac{u_0}{\lambda}\right)  \left(   \dr_\mu u_0 F^{(1)}_{\sigma\nu} - \half \dr_\sigma u_0 F^{(1)}_{\mu\nu} \right) \right),\nonumber
\end{align}
with lower order terms given schematically by
\begin{align}
\Tilde{H}^{(1,1)} & = g_0^{-2} \left( \dr F^{(1)} + g_0^{-1} F^{(1)} \dr g_0 \right),\label{Htilde 11}
\\ \tilde{H}^{(2)} & = g_0^{-3}\left((1+g_0^{-1}) (F^{(1)})^2 + F^{(2,i)} + \F \right) (\dr g_0 + \dr u_0 F^{(1)} ) \label{Htilde 2}
\\&\quad +  g_0^{-3}F^{(1)}(\dr F^{(1)} + \dr u_0 F^{(2,i)} + \dr u_0 \F) + g_0^{-2} \dr F^{(2,i)}\nonumber ,
\\ H^{(\geq 3)} & = (g^{-2})^{(\geq 3)} \left( \dr g_0 + \dr u_0 F^{(1)} \right) \label{H geq 3}
\\&\quad + (g^{-2})^{(\geq 2)} \left( \dr F^{(1)} + \dr u_0 F^{(2,i)} + \dr u_0 \F \right) + (g^{-2})^{(\geq 1)} \dr F^{(2,i)}\nonumber
\\&\quad + g^{-1}(g^{-1})^{(\geq 1)} \dr u_0 \dr_\theta g^{(3)} . \nonumber
\end{align}
\end{prop}

\begin{remark}
Concretely, $\Upsilon^\rho$ contains all derivatives of $\h$ and $\F$ present in $H^\rho$. Derivatives of $\F$ are also problematic since the analysis of the equations will show that $\dr\F$ is at the level of $\dr \h$. This also explains the presence of the $\tilde{\dr} g^{(3)}$ terms since $g^{(3)}$ will involve $ \F$. The second line in \eqref{def Upsilon h} is also linked to $g^{(3)}$ and will be fully explained in Remark \ref{remark chelou}.
\end{remark}

For clarity, we denote by $\mathring{H}$ the part of $H$ containing only allowed terms from the wave equation perspective, that is
\begin{align}
\mathring{H}^\rho = H^\rho - \lambda^2 \Upsilon^\rho.\label{def H rond}
\end{align}

\subsubsection{Expression of the Ricci tensor}

We conclude this section by putting together Propositions \ref{prop expansion wave part}, \ref{prop expansion quad part} and \ref{prop expansion gauge part}. Before that, we define the main transport operator
\begin{align}
\Ll_0 \vcentcolon = -2 \D_{L_0} + \Box_{g_0}u_0, \label{def Ll_0}
\end{align}
where we recall that $\D$ is the covariant derivative associated to $g_0$. It acts on tensor fields of all type, including scalar functions.

\begin{prop}\label{prop expansion ricci tensor}
Under the assumptions \eqref{assumptions on F1}, the Ricci tensor of $g$ admits the expansion
\begin{align*}
R_{\a\b} = R^{(0)}_{\a\b} + \la R^{(1)}_{\a\b} + \la^2 R^{(\geq 2)}_{\a\b}.
\end{align*}
The term $R^{(0)}_{\a\b}$ is given by
\begin{align}
2R^{(0)}_{\a\b} & = - \tBox_{g_0}(g_0)_{\alpha\beta}  +P_{\alpha\beta}(g_0)(\dr g_0, \dr g_0) - \frac{1}{4}  \left| F^{(1)} \right|^2_{g_0}  \dr_\alpha u_0 \dr_\beta u_0  \label{expression Ricci0}
\\&\quad + \sin\pth{\frac{u_0}{\la}} \pth{  \Ll_0 F^{(1)}_{\alpha\beta}  - \dr_{(\alpha}u_0\pth{  \Pol_{\b)} \left( F^{(2,1)} \right) +  \Pol_{\b)}\left(\F\right) + \Tilde{H}^{(1,1)}_{\b)} +\hat{P}^{(0,1)}_{\beta)} } } \nonumber
\\&\quad - \cos\pth{\frac{2u_0}{\la}} \dr_{(\a}u_0 \pth{   4 \Pol_{\b)} \left( F^{(2,2)} \right)    + \frac{3}{8}  \dr_{\b)} u_0 \left| F^{(1)} \right|^2_{g_0}    } \nonumber .
\end{align}
The term $R^{(1)}_{\a\b}$ is given by
\begin{align}
2R^{(1)}_{\a\b} & =      \cos\pth{\frac{u_0}{\la}}\bigg(  -\Ll_0F^{(2,1)}_{\alpha\beta}  -  \tilde{W}^{(1,1)}_{\alpha\beta} +   P^{(1,1)}_{\a\b}  - \Ll_0\F_{\alpha\beta}  - \h_{L_0L_0}F^{(1)}_{\alpha\beta}           \label{expression Ricci1}
\\&\quad \hspace{4cm} +   \D_{(\a} \left(  \Pol \left( F^{(2,1)} \right) +  \Pol\left(\F\right) + \Tilde{H}^{(1,1)} \right)_{\b)} \bigg)\nonumber
\\&\quad + \sin\left(\frac{2u_0}{\lambda}\right) \Bigg(  -2 \Ll_0F^{(2,2)}_{\alpha\beta}  +  2\tilde{W}^{(1,2)}_{\alpha\beta} + P^{(1,2)}_{\alpha\beta}  \nonumber
\\&\quad \hspace{3cm}+\D_{(\a} \left( -2  \Pol \left( F^{(2,2)} \right)    -\frac{1}{4} \d u_0 \left| F^{(1)} \right|^2_{g_0} \right)_{\b)} \Bigg)\nonumber
\\&\quad + \cos\pth{\frac{u_0}{\la}} \dr_{(\alpha}u_0 \hat{P}^{(1,1)}_{\beta)}  + \sin\pth{\frac{2u_0}{\la}}  \dr_{(\alpha}u_0  \hat{P}^{(1,2)}_{\beta)} + \cos\left( \frac{3u_0}{\lambda}\right)\dr_{(\alpha} u_0  \hat{P}_{\beta)}^{(1,3)} \nonumber
\\&\quad -\frac{1}{2} \sin\left(\frac{2u_0}{\lambda}\right)\F_{L_0L_0}F^{(1)}_{\alpha\beta} - \cos\left(\frac{u_0}{\lambda}\right)\cos\left(\frac{2u_0}{\lambda}\right) F^{(2,2)}_{L_0L_0} F^{(1)}_{\alpha\beta}\nonumber
\\&\quad  - \half \sin\pth{\frac{2u_0}{\la}}  \left( g_0^{\rho\sigma} \Pol_\sigma \left( F^{(2,1)} \right) + g_0^{\rho\sigma} \Pol_\sigma\left(\F\right) + (\Tilde{H}^{(1,1)})^\rho \right) \dr_\rho u_0 F^{(1)}_{\a\b} \nonumber
\\&\quad -2 \sin\pth{\frac{u_0}{\la}}  \sin\left( \frac{2u_0}{\lambda} \right)   \Pol_{L_0} \pth{ F^{(2,2)} }   F^{(1)}_{\a\b} \nonumber
\\&\quad +   \dr_{(\a} u_0  \pth{  \Pol_{\b)} \left( \dr_\theta^2 g^{(3)} \right) + (g_0)_{\rho \b)}  \dr_\theta(\tilde{H}^{(2)})^\rho + (g_0)_{\rho \b)}  \dr_\theta(\Upsilon^\rho)^{(0)} } \nonumber
\\&\quad - \half \sin\pth{ \frac{2u_0}{\la}}  F^{(1)}_{\rho(\a} \dr_{\b)} u_0  \left( g_0^{\rho\sigma} \Pol_\sigma \left( F^{(2,1)} \right) + g_0^{\rho\sigma} \Pol_\sigma\left(\F\right) + (\Tilde{H}^{(1,1)})^\rho \right) \nonumber
\\&\quad -4 \cos\pth{\frac{u_0}{\la}} \cos\pth{ \frac{2u_0}{\la}}  F^{(1)}_{\rho(\a} \dr_{\b)} u_0   g_0^{\rho\sigma} \Pol_\sigma \left( F^{(2,2)} \right)  . \nonumber
\end{align}
The term $R^{(\geq 2)}_{\a\b}$ is given by
\begin{align}
2R^{(\geq 2)}_{\a\b} & = - \tBox_g \h_{\alpha\beta} -  \sin\left( \frac{u_0}{\lambda} \right)\tBox_g \F_{\alpha\beta}  - \lambda \tBox_g g^{(3)}_{\alpha\beta} -  \Tilde{W}^{(\geq 2)}_{\alpha\beta} \label{expression Ricci2}
\\&\quad  + P^{(\geq 2)}_{\a\b} + \pth{ H^\rho \dr_\rho g_{\alpha\beta} + g_{\rho (\alpha}\dr_{\beta)}H^\rho }^{(\geq 2)} . \nonumber
\end{align}
\end{prop}

\begin{remark}
As discussed in Section \ref{section gauge part strategy}, the high-frequency expansion of the Ricci tensor \textit{a priori} contains a term of order $\la^{-1}$. This term is zero thanks to $\Pol\pth{F^{(1)}}=0$ which follows again from \eqref{assumptions on F1}.
\end{remark}

\subsection{Towards the hierarchy}\label{section towards the hierarchy}

In this section, we give a first version of the various partial differential equations and algebraic conditions we impose on each tensors in \eqref{g} so that $g$ solves \eqref{EVE'}. We thus deduce them from the requirements that 
\begin{align*}
R^{(0)}_{\a\b}  = 0, \quad R^{(1)}_{\a\b}  = 0 \quad\text{and}\quad R^{(\geq 2)}_{\a\b} = 0.
\end{align*}
The final version will be properly given in Section \ref{section hierarchy}, where the modification of the coupling between $\h$ and $\F$ will be introduced (see Section \ref{section regaining the missing derivative} for where this modification is first discussed).

\subsubsection{Ensuring $R^{(0)}_{\a\b}=0$}\label{section ensuring R0=0}

Canceling the non-oscillating term and the two oscillating terms in \eqref{expression Ricci0} is enough to ensure $R^{(0)}_{\a\b}=0$. Set to zero, the non-oscillating term is a wave equation for the background metric $g_0$ with $F^{(1)}$ as source term
\begin{align}
- \tBox_{g_0}(g_0)_{\alpha\beta}  +P_{\alpha\beta}(g_0)(\dr g_0, \dr g_0) - \frac{1}{4}  \left| F^{(1)} \right|^2_{g_0}  \dr_\alpha u_0 \dr_\beta u_0 = 0. \label{eq g0 bis}
\end{align}
Given the equation \eqref{eq g0} satisfied by $g_0$, we see that \eqref{eq g0 bis} is equivalent to the following quadratic condition for $F^{(1)}$
\begin{align}
  \left| F^{(1)} \right|^2_{g_0} & = 8F_0^2. \label{condition quadratique F1}
\end{align}
Recall also that all the computations of Section \ref{section high-frequency ansatz} (and in particular Proposition \ref{prop expansion ricci tensor}) are performed under the assumptions \eqref{assumptions on F1} on $F^{(1)}$. Canceling the $\sin\pth{\frac{u_0}{\la}}$ term in \eqref{expression Ricci0} is done by imposing the following transport equation for $F^{(1)}$
\begin{align}
\Ll_0 F^{(1)} & = 0,\label{eq F1 first}
\end{align}
and the following polarization conditions on $F^{(2,1)}$ and $\F$
\begin{align}
\Pol \pth{F^{(2,1)}} & = - \tilde{H}^{(1,1)} - \hat{P}^{(0,1)},
\\ \Pol \pth{\F} & = 0.\label{condition Pol F=0}
\end{align}
Canceling the $\cos\pth{\frac{2u_0}{\la}}$ term in \eqref{expression Ricci0} is done by imposing the following polarization conditions on $F^{(2,2)}$
\begin{align*}
\Pol \pth{F^{(2,2)}} & = - \frac{3}{32} \left| F^{(1)} \right|^2_{g_0} \d u_0.
\end{align*}
For convenience, we define the following tensors
\begin{align}
V^{(2,1)} & = \Pol \pth{F^{(2,1)}} + \tilde{H}^{(1,1)} + \hat{P}^{(0,1)},\label{def V 2 1}
\\ V^{(2,2)} & = \Pol \pth{F^{(2,2)}} + \frac{3}{32} \left| F^{(1)} \right|^2_{g_0} \d u_0, \label{def V 2 2}
\end{align}
so that the polarization conditions on $F^{(2,1)}$ and $F^{(2,2)}$ simply rewrite
\begin{align}
V^{(2,1)} = 0 \quad \text{and} \quad V^{(2,2)}=0.\label{condition V=0}
\end{align}

\subsubsection{Ensuring $R^{(1)}_{\a\b}=0$}\label{section ensuring R1=0}

Before deriving equations and conditions from the requirement $R^{(1)}_{\a\b}=0$, let us rewrite $R^{(1)}_{\a\b}$ with the help of the tensors $V^{(2,1)}$ and $V^{(2,2)}$. From \eqref{expression Ricci1}, \eqref{def V 2 1} and \eqref{def V 2 2} we see that
\begin{align}
2R^{(1)}_{\a\b} & =      \cos\pth{\frac{u_0}{\la}}\pth{  -\Ll_0F^{(2,1)}_{\alpha\beta}  -  \tilde{W}^{(1,1)}_{\alpha\beta} +   P^{(1,1)}_{\a\b} -   \D_{(\a} \hat{P}^{(0,1)}_{\b)}  - \Ll_0\F_{\alpha\beta}  - \h_{L_0L_0}F^{(1)}_{\alpha\beta}     }\label{expression Ricci1 bis}
\\&\quad + \sin\left(\frac{2u_0}{\lambda}\right) \Bigg(  -2 \Ll_0F^{(2,2)}_{\alpha\beta}  +  2\tilde{W}^{(1,2)}_{\alpha\beta} + P^{(1,2)}_{\alpha\beta}  + \half  g_0^{\rho\sigma} F^{(1)}_{\rho(\a} \dr_{\b)} u_0 \hat{P}^{(0,1)}_\si \nonumber
\\&\quad \hspace{5cm} - \half  \hat{P}^{(0,1)}_{L_0}  F^{(1)}_{\a\b}  -\frac{1}{16} \D_{(\a} \left(  \d u_0 \left| F^{(1)} \right|^2_{g_0} \right)_{\b)}    \Bigg)\nonumber
\\&\quad +   \dr_{(\a} u_0  \bigg(  \Pol_{\b)} \left( \dr_\theta^2 g^{(3)} \right) + (g_0)_{\rho \b)}  \dr_\theta(\tilde{H}^{(2)})^\rho + (g_0)_{\rho \b)}  \dr_\theta(\Upsilon^\rho)^{(0)} \nonumber
\\&\quad \hspace{3cm} +  \cos\pth{\frac{u_0}{\la}}  \hat{P}^{(1,1)}_{\beta)}  + \sin\pth{\frac{2u_0}{\la}}    \hat{P}^{(1,2)}_{\beta)} + \cos\left( \frac{3u_0}{\lambda}\right)  \hat{P}_{\beta)}^{(1,3)} \bigg)\nonumber
\\&\quad + \cos\pth{\frac{u_0}{\la}}  \D_{(\a} \left( V^{(2,1)} +  \Pol\left(\F\right)  \right)_{\b)} -2\sin\pth{\frac{u_0}{\la}}\D_{(\a}  V^{(2,2)}_{\b)}\nonumber
\\&\quad  + \half \sin\pth{\frac{2u_0}{\la}} \pth{ \left(  V^{(2,1)}_{L_0}  +  2\Pol_{L_0}\left(\F\right)  \right) F^{(1)}_{\a\b} - g_0^{\rho\sigma} \pth{  V^{(2,1)}_\sigma  + \Pol_\sigma\left(\F\right)  }   F^{(1)}_{\rho(\a} \dr_{\b)} u_0    }\nonumber
\\&\quad -4 \cos\pth{\frac{u_0}{\la}} \cos\pth{ \frac{2u_0}{\la}}g_0^{\rho\sigma}  F^{(1)}_{\rho(\a} \dr_{\b)} u_0  V^{(2,2)}_\si \nonumber
\\&\quad + \pth{ \cos\left(\frac{u_0}{\lambda}\right)\cos\left(\frac{2u_0}{\lambda}\right) -2 \sin\pth{\frac{u_0}{\la}}  \sin\left( \frac{2u_0}{\lambda} \right) }V^{(2,2)}_{L_0} F^{(1)}_{\alpha\beta}\nonumber,
\end{align}
where we also used
\begin{align*}
\F_{L_0L_0} = - \Pol_{L_0}(\F) ,
\end{align*}
and
\begin{align*}
F^{(2,2)}_{L_0L_0} = - \Pol_{L_0}\pth{F^{(2,2)}} = - V^{(2,2)}_{L_0},
\end{align*}
which follows from \eqref{def V 2 2}. We see that the terms in \eqref{expression Ricci1 bis} can be put in three categories that are canceled by different mechanisms.
\begin{itemize}
\item The first two lines can be set to zero by imposing transport equations for $F^{(2,1)}$, $F^{(2,2)}$ and $\F$:
\begin{align}
 \Ll_0F^{(2,1)}_{\alpha\beta} & =  -  \tilde{W}^{(1,1)}_{\alpha\beta} +   P^{(1,1)}_{\a\b} -   \D_{(\a} \hat{P}^{(0,1)}_{\b)} , \label{eq F21 first}
 \\   \Ll_0F^{(2,2)}_{\alpha\beta}  & =   \tilde{W}^{(1,2)}_{\alpha\beta} +\half P^{(1,2)}_{\alpha\beta}  + \frac{1}{4}g_0^{\rho\sigma} F^{(1)}_{\rho(\a} \dr_{\b)} u_0 \hat{P}^{(0,1)}_\si - \frac{1}{4} \hat{P}^{(0,1)}_{L_0}  F^{(1)}_{\a\b}  -\frac{1}{32} \D_{(\a} \left(  \d u_0 \left| F^{(1)} \right|^2_{g_0} \right)_{\b)} ,\label{eq F22 first}
 \\  \Ll_0\F_{\alpha\beta} & = - \h_{L_0L_0}F^{(1)}_{\alpha\beta}  . \label{eq F first}
\end{align}
\item The third line is composed of non-tangential terms and can be set to zero by imposing the following polarization condition on $g^{(3)}$:
\begin{align}\label{first def g3}
\Pol_{\b} \left( \dr_\theta^2 g^{(3)} \right) =- (g_0)_{\rho \b}  \dr_\theta(\tilde{H}^{(2)})^\rho  -  \cos\pth{\frac{u_0}{\la}}  \hat{P}^{(1,1)}_{\beta}  - \sin\pth{\frac{2u_0}{\la}}    \hat{P}^{(1,2)}_{\beta} - \cos\left( \frac{3u_0}{\lambda}\right)  \hat{P}_{\beta}^{(1,3)} .
\end{align}
Note that we deliberately leave $(g_0)_{\rho \b)}  \dr_\theta(\Upsilon^\rho)^{(0)}$ aside. Moreover, note that the RHS of \eqref{first def g3} does not depend on $\h$, thanks to the addition of the last line in $\Upsilon^\rho$ (see Remark \ref{remark chelou} for why we want $g^{(3)}$ to be independent from $\h$).
\item The last four lines are proportional to either $V^{(2,1)}$, $V^{(2,2)}$ or $\Pol(\F)$, which will eventually vanish according to the previous discussion on ensuring $R^{(0)}_{\a\b}=0$, see \eqref{condition Pol F=0} and \eqref{condition V=0}.
\end{itemize}

\subsubsection{Ensuring $R^{(\geq 2)}_{\a\b}=0$}\label{section ensuring R2=0}

From \eqref{expression Ricci2}, we see that $R^{(\geq 2)}_{\a\b}=0$ is ensured if we impose the following wave equation for the remainder 
\begin{align}
  \tBox_g \h_{\alpha\beta} &= -  \sin\left( \frac{u_0}{\lambda} \right)\tBox_g \F_{\alpha\beta}  - \lambda \tBox_g g^{(3)}_{\alpha\beta} -  \Tilde{W}^{(\geq 2)}_{\alpha\beta}  + P^{(\geq 2)}_{\a\b} + \pth{ \mathring{H}^\rho \dr_\rho g_{\alpha\beta} + g_{\rho (\alpha}\dr_{\beta)}\mathring{H}^\rho }^{(\geq 2)} ,\label{eq h first}
\end{align}
and the following generalised wave gauge condition
\begin{align*}
\Upsilon^\rho = 0,
\end{align*}
where we recall that $H^\rho = \mathring{H}^\rho + \la^2 \Upsilon^\rho$. 

\begin{remark}
As explained in depth in Section \ref{section regaining the missing derivative}, \eqref{eq F first} and \eqref{eq h first} are not the final version of the equations that $\F$ and $\h$ will solve, see Section \ref{section reduced system} below.
\end{remark}

\section{The hierarchy of equations}\label{section hierarchy}

In this section, we give the final version of the equations that the different terms in \eqref{g} solve. We divide them into two groups of equations, the background system presented in Section \ref{section bg system} and the reduced system presented in Section \ref{section reduced system}. Note that $g^{(3)}$ is defined in Section \ref{section def g4} but doesn't solve a differential equation.  

\subsection{The background system}\label{section bg system} 

The \textit{background system} is the system satisfied by $F^{(1)}$, $F^{(2,1)}$ and $F^{(2,2)}$. According to \eqref{eq F1 first}, \eqref{eq F21 first} and \eqref{eq F22 first} it writes
\begin{align}
\Ll_0F^{(1)}_{\alpha\beta} & =0 \label{eq F1},
\\  \Ll_0F^{(2,1)}_{\alpha\beta} & =  -  \tilde{W}^{(1,1)}_{\alpha\beta} +   P^{(1,1)}_{\a\b} -   \D_{(\a} \hat{P}^{(0,1)}_{\b)} ,\label{eq F21}
\\ \Ll_0F^{(2,2)}_{\alpha\beta}  & =   \tilde{W}^{(1,2)}_{\alpha\beta} +\half P^{(1,2)}_{\alpha\beta}  + \frac{1}{4}g_0^{\rho\sigma} F^{(1)}_{\rho(\a} \dr_{\b)} u_0 \hat{P}^{(0,1)}_\si - \frac{1}{4} \hat{P}^{(0,1)}_{L_0}  F^{(1)}_{\a\b}  -\frac{1}{32} \D_{(\a} \left(  \d u_0 \left| F^{(1)} \right|^2_{g_0} \right)_{\b)} .\label{eq F22}
\end{align}
The background system admits a triangular structure. It will be solved in Section \ref{section solving bg system} and all the estimates and support properties of the background perturbations are given in Theorem \ref{theo bg system}. Since \eqref{eq F1}, \eqref{eq F21} and \eqref{eq F22} are first order transport equation, the initial data for the background system consist in 
\begin{align*}
F^{(1)}_{\alpha\beta} \restriction{\Sigma_0} \quad \text{and} \quad F^{(2,i)}_{\alpha\beta} \restriction{\Sigma_0}.
\end{align*}
They are given in Section \ref{section ID bg system}.

\subsection{The reduced system}\label{section reduced system} 

We present now the system solved by the tensors $\F$ and $\h$, called the \textit{reduced system}. We already derived equations for $\F$ and $\h$ in Sections \ref{section ensuring R1=0} and \ref{section ensuring R2=0}, namely \eqref{eq F first} and \eqref{eq h first}. However, as explained in Section \ref{section regaining the missing derivative}, the coupled system \eqref{eq F first}-\eqref{eq h first} is ill-posed and we need to modify it. We introduce the Fourier cut-offs $\Pi_\leq$ and $\Pi_\geq$.
\begin{mydef}\label{def P}
Let $\chi_1 : \R^3\longrightarrow[0,1]$ be a smooth function supported in \[\enstq{ \xi\in\R^3}{|\xi|\leq 2},\] and such that $\chi_1(\xi)=1$ if $|\xi|\leq 1$. For $\lambda>0$ we define $\chi_\lambda(\xi)=\chi_1(\lambda\xi)$, thus $\chi_\lambda$ is supported in $\enstq{ \xi\in\R^3}{|\xi|\leq \frac{2}{\lambda}}$ and $\chi_\lambda(\xi)=1$ if $|\xi|\leq \frac{1}{\lambda}$. We define $\Pi_\leq$ by
\begin{align*}
\Pi_\leq (f) = \mathcal{F}^{-1} \left( \chi_\lambda \mathcal{F}(f)\right),
\end{align*}
for $f\in L^2$. We also define
\begin{align*}
\Pi_\geq & = \mathrm{Id}-\Pi_\leq.
\end{align*}
\end{mydef}
The reduced system, i.e the equations for $\F$ and $\h$, is then 
\begin{align}
\Ll_0\F_{\alpha\beta} & = - \Pi_\leq \left( \h_{L_0L_0} \right) F^{(1)}_{\alpha\beta} \label{eq F},
\\ \tBox_g \h_{\alpha\beta} & = -  \sin\left( \frac{u_0}{\lambda} \right)\tBox_g \F_{\alpha\beta}  - \lambda \tBox_g g^{(3)}_{\alpha\beta} - \frac{1}{\lambda}\cos\left(\frac{u_0}{\lambda} \right)\Pi_\geq \left(\h_{L_0L_0}  \right)F^{(1)}_{\alpha\beta} + \mathcal{R}(g),\label{eq h}
\end{align}
where we set
\begin{align}
\mathcal{R}(g) & = - \tilde{W}^{(\geq 2)}_{\alpha\beta} + \left( \mathring{H}^\rho \dr_\rho g_{\alpha\beta} + g_{\rho(\alpha} \dr_{\beta)} \mathring{H}^\rho \right)^{(\geq 2)} + P^{(\geq 2)}_{\alpha\beta}.\label{def R(g)}
\end{align}
Since \eqref{eq F} is a first order transport equation and \eqref{eq h} is a second order wave equation the initial data for the reduced system consist in
\begin{align*}
\F_{\alpha\beta}\restriction{\Sigma_0}, \quad \h_{\alpha\beta}\restriction{\Sigma_0} \quad \text{and} \quad \dr_t \h_{\alpha\beta}\restriction{\Sigma_0}.
\end{align*}
They are given in Section \ref{section ID reduced system}.

\subsection{Generalised wave gauge and polarization conditions}\label{section GWC et PC}

As it is usual when solving the Einstein equations, we impose gauge conditions. Since we solve a wave equation for $\h$, we need a generalised wave gauge. Moreover, we also impose polarization conditions on $F^{(1)}$, $F^{(2,i)}$ and $\F$. These various algebraic conditions appeared naturally in the discussions of Sections \ref{section ensuring R0=0}, \ref{section ensuring R1=0} and \ref{section ensuring R2=0} and we recall them for convenience.

\begin{mydef}\label{def generalised wave gauge}
The generalised wave gauge considered in Theorem \ref{theo main} is defined by
\begin{align}
\Upsilon^\rho = 0,\label{Upsilon =0}
\end{align}
where $\Upsilon^\rho$ is defined in \eqref{def Upsilon h}.
\end{mydef}

The purpose of this generalised wave gauge is to remove from the RHS of the wave equation for $\h$ any second order derivatives of $\h$, which would induce a loss of derivatives in the standard energy estimate. For later use, we define $\tilde{\Upsilon}^\rho$ as the terms in $\Upsilon^\rho$ independent from $\dr\h$, i.e
\begin{align}
\tilde{\Upsilon}^\rho & = g^{\mu\nu}g^{\rho\sigma} \left(  \sin\left(\frac{u_0}{\lambda}\right) \left( \dr_\mu \F_{\sigma\nu} - \half \dr_\sigma \F_{\mu\nu}  \right) +  \lambda  \left( \tilde{\dr}_\mu g^{(3)}_{\sigma\nu} - \half \tilde{\dr}_\sigma g^{(3)}_{\mu\nu}  \right)  \right)\label{def Upsilon tilde}
\\&\quad - g_0^{\rho\sigma} \h^{\mu\nu} \left( \dr_\mu (g_0)_{\sigma\nu} - \half \dr_\sigma (g_0)_{\mu\nu} - \sin\left(\frac{u_0}{\lambda}\right)  \left(   \dr_\mu u_0 F^{(1)}_{\sigma\nu} - \half \dr_\sigma u_0 F^{(1)}_{\mu\nu} \right) \right) . \nonumber
\end{align}

The polarization conditions prescribe the polarization tensors of $F^{(1)}$, $F^{(2,i)}$ and $\F$.

\begin{mydef}\label{def polarization conditions}
The polarization conditions are 
\begin{align}
V^{(2,j)}_\beta=0 , \label{Vij=0}
\end{align}
where $V^{(2,j)}$ are defined in \eqref{def V 2 1} and \eqref{def V 2 2}, together with and
\begin{align}
\Pol_\b\pth{ F^{(1)}} & = 0,\label{pola F 1 == 0}
\\ F^{(1)}_{L_0\Lb_0} & = 0,\label{F 1 Lb L == 0}
\\ \Pol_\b(\F) & = 0 . \label{pol F =0}
\end{align}
\end{mydef}

\begin{remark}
Note that \eqref{assumptions on F1} is equivalent to \eqref{pola F 1 == 0}-\eqref{F 1 Lb L == 0}.
\end{remark}

\subsection{Definition of $g^{(3)}$}\label{section def g4}

Unlike the other terms in \eqref{g}, the tensor $g^{(3)}$ doesn't solve any differential equation but only a polarization condition, already given in \eqref{first def g3}, which we recall for convenience:
\begin{align}\label{def g3}
\Pol_{\b} \left( \dr_\theta^2 g^{(3)} \right) =- (g_0)_{\rho \b}  \dr_\theta(\tilde{H}^{(2)})^\rho  -  \cos\pth{\frac{u_0}{\la}}  \hat{P}^{(1,1)}_{\beta}  - \sin\pth{\frac{2u_0}{\la}}    \hat{P}^{(1,2)}_{\beta} - \cos\left( \frac{3u_0}{\lambda}\right)  \hat{P}_{\beta}^{(1,3)} .
\end{align}
Since the RHS of \eqref{def g3} is clearly purely oscillating (note that any $\dr_\theta$-derivative is purely oscillating), it is possible to integrate twice \eqref{def g3} with respect to the oscillating variable and obtain $g^{(3)}$ if and only if we know how to solve an equation of the form
\begin{equation}
\Pol_\beta(T)=\Omega_\beta, \label{eq Pol(T)=Omega}
\end{equation}
with $\Omega_\beta$ a 1-form and for $T$ a symmetric 2-tensor. According to \eqref{Pol L}, \eqref{Pol A} and \eqref{Pol Lb} we just need to set 
\begin{align*}
T_{L_0L_0} & =-\Omega_{L_0},
\\ T_{L_0 A} & = -\Omega_A,
\\ T_{11} & = -\Omega_{\Lb_0},
\end{align*}
with all the other coefficients of $T$ being zero.  This implies that we can solve \eqref{eq Pol(T)=Omega} for $T$ satisfying $|T|\lesssim |\Omega|$ with an implicit constant depending only on the background spacetime.

\saut
Thanks to \eqref{Htilde 2} and \eqref{Phat 1i}, the RHS of \eqref{def g3} depends on the background quantities $g_0$, $F^{(1)}$, $F^{(2,i)}$ and their first derivatives and on the non-background field $\F$ . However, it does not depend on $\h$.  We summarize this discussion in the next lemma.

\begin{lem}\label{lem g3}
Given $F^{(1)}$, $F^{(2,i)}$ and $\F$, there exists a symmetric 2-tensor $g^{(3)}$ solution of \eqref{def g3} of the form
\begin{align*}
g^{(3)} & = \sum_{\mathrm{T}\in\mathcal{A}}\mathrm{T}\left(\frac{u_0}{\lambda}\right) \left( g^{(3,\mathrm{T})}(\F) + g^{(3,\mathrm{T},\mathrm{BG})} \right),
\end{align*}
where $\mathcal{A}$ is a finite subset of trigonometric functions (see \eqref{ensemble fcts trigo}), $g^{(3,\mathrm{T})}(\F)$ satisfies schematically
\begin{align}
 g^{(3,\mathrm{T})}(\F)  = g_0 g_0^{-3}\left( \dr g_0 + \dr u_0 F^{(1)} \right) \F \label{g3T(F)},
\end{align}
and where $g^{(3,\mathrm{T},\mathrm{BG})} $ is a polynomial in the background quantities $g_0$, $F^{(1)}$, $F^{(2,i)}$ and their first derivatives. In the rest of this article, we will use the notation
\begin{align*}
g^{(3)}(\F) & = \sum_{\mathrm{T}\in\mathcal{A}}\mathrm{T}\left(\frac{u_0}{\lambda}\right)  g^{(3,\mathrm{T})}(\F) ,
\\ g^{(3,\mathrm{BG})} & = \sum_{\mathrm{T}\in\mathcal{A}}\mathrm{T}\left(\frac{u_0}{\lambda}\right)  g^{(3,\mathrm{T},\mathrm{BG})} .
\end{align*}
\end{lem}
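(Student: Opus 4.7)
The plan is to treat \eqref{pola g3} as a purely algebraic equation for $g^{(3)}$, solved frequency by frequency. First, I would verify that the right-hand side, call it $\Omega_\beta$, is purely oscillating: the term $-(g_0)_{\rho\beta}\dr_\theta(\tilde{H}^{(2)})^\rho$ is oscillating because $\dr_\theta$ kills any $\theta$-independent component; by \eqref{H1} the tensor $(H^{(1)})^\rho$ is itself purely oscillating, so $\cos(u_0/\lambda)\dr_\theta(H^{(1)})^\rho$ is oscillating after applying product-to-sum identities; $\hat{P}^{(1)}_\beta[g^{(2)}]$ is purely oscillating by \eqref{Phat 1 g2}; and $\cos(3u_0/\lambda)\hat{P}^{(1,3)}_\beta$ is manifestly so. Collecting frequencies, we may write
\[
\Omega_\beta = \sum_{T \in \mathcal{A}} T(u_0/\lambda)\, \Omega^{(T)}_\beta,
\]
where $\mathcal{A}$ is a finite collection of trigonometric functions of \emph{non-zero} frequency $k_T \in \N^*$ and each $\Omega^{(T)}_\beta$ is a 1-form on spacetime not depending on $\theta$.

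Second, I would make the ansatz $g^{(3)} = \sum_{T \in \mathcal{A}} T(u_0/\lambda)\, S^{(T)}$ with the same set $\mathcal{A}$. Since $T'' = -k_T^2 T$ and the operator $\Pol_\beta$ depends only on the background $(g_0, u_0)$ and is linear in its argument, equation \eqref{pola g3} decouples into the algebraic family
\[
\Pol_\beta(S^{(T)}) = \frac{1}{k_T^2}\, \Omega^{(T)}_\beta, \qquad T \in \mathcal{A}.
\]
Each equation of this form is solved by the explicit inversion scheme recalled in the paragraph preceding the lemma: using \eqref{Pol L}--\eqref{Pol Lb}, set $S^{(T)}_{L_0 L_0}$, $S^{(T)}_{L_0 A}$ and $S^{(T)}_{11}$ to be the corresponding null-frame components of $-k_T^{-2}\Omega^{(T)}_\beta$ and all the remaining null-frame components to zero. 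This produces a symmetric 2-tensor $S^{(T)}$ bounded pointwise by $C(C_0)|\Omega^{(T)}|$.

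Third, I would read off the dependence of $g^{(3)}$ on $\F$. Examining the schematic expressions \eqref{Htilde 2} and \eqref{Phat 1 i g2}, the tensor $\F$ enters the RHS of \eqref{pola g3} only linearly, via the $\F$ component inside $(g^{-1})^{(2)}$ in $\tilde{H}^{(2)}$ and via the $\F$ component of $g^{(2)}$ in $\hat{P}^{(1)}_\beta[g^{(2)}]$; in both cases the resulting contribution has the schematic form $g_0^{-3}(\dr g_0 + \dr u_0 F^{(1)})\,\F$. Decomposing accordingly $\Omega^{(T)}_\beta = \Omega^{(T,\F)}_\beta + \Omega^{(T,\mathrm{BG})}_\beta$ and using the linearity of the inversion in step two produces the splitting $S^{(T)} = g^{(3,T)}(\F) + g^{(3,T,\mathrm{BG})}$ with $g^{(3,T)}(\F)$ satisfying the schematic identity \eqref{g3T(F)} and with $g^{(3,T,\mathrm{BG})}$ polynomial in $g_0, F^{(1)}, F^{(2,i)}$ and their first derivatives. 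Since the entire construction is purely algebraic, no derivatives of $\F$ are introduced and $\h$ never enters, as asserted.

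There is no genuine analytical obstacle here; the lemma is essentially bookkeeping. The only point requiring verification is the absence of a $\theta$-independent (i.e.\ zero-frequency) mode in $\Omega_\beta$, which is exactly what step one ensures and is precisely the reason we may divide by $k_T^2$ in step two.
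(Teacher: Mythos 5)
Your proposal is correct and follows essentially the same route as the paper's own justification (the discussion preceding the lemma): check via \eqref{H1} and \eqref{Phat 1 g2} that the RHS of \eqref{pola g3} is purely oscillating, decompose in frequencies, invert $\Pol$ algebraically using the null-frame components \eqref{Pol L}--\eqref{Pol Lb}, and read off the linear $\F$-dependence and the absence of $\h$ from \eqref{Htilde 2} and \eqref{Phat 1 i g2}. The only nitpick is a harmless sign: since $\mathrm{T}''=-k_{\mathrm{T}}^2\mathrm{T}$, the decoupled equations read $\Pol_\beta\bigl(S^{(\mathrm{T})}\bigr)=-k_{\mathrm{T}}^{-2}\Omega^{(\mathrm{T})}_\beta$, which the explicit inversion handles identically.
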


Note that since $g^{(3)}$ only absorbs oscillating terms, it has the same support as $F^{(1)}$, $F^{(2,i)}$ or $\F$, which will be proved to be compact in this article.

\subsection{Initial data for the spacetime metric}\label{section ID spacetime metric}

In this section, we define the initial data on $\Sigma_0$ for the different tensors appearing in the high-frequency ansatz \eqref{g}. They are based on the solution of the constraint equations presented in Theorem \ref{theo initial data}, which is the main result of \cite{Touati2023a}.

\subsubsection{Initial data for the background system}\label{section ID bg system}

We start with the tensors $F^{(1)}$, $F^{(2,1)}$ and $F^{(2,2)}$.  For the spatial components, we simply set
\begin{align}
F^{(1)}_{ij} \restriction{\Sigma_0} = \bar{F}^{(1)}_{ij}, \quad F^{(2,1)}_{ij} \restriction{\Sigma_0} = \bar{F}^{(2,1)}_{ij}, \quad F^{(2,2)}_{ij}  \restriction{\Sigma_0} = \bar{F}^{(2,2)}_{ij}. \label{ID F ij}
\end{align}
For the time components of $F^{(1)}$, we set 
\begin{align}
F^{(1)}_{0\alpha}\restriction{\Sigma_0}=0\label{ID F1 0},
\end{align}
which is consistent with \eqref{assumptions on F1}. For the time components of $F^{(2,i)}$, we need to ensure that $F^{(2,i)}$ satisfy on $\Sigma_0$ the polarization conditions \eqref{Vij=0}. For clarity, we define the 1-forms $\Omega^{(i)}$ for $i=1,2$ so that 
\begin{align}
 V^{(2,i)}_\beta = \Pol_\beta\left( F^{(2,i)} \right) - \Omega^{(i)}_\beta  .\label{def inter omega}
\end{align}
Note from \eqref{def V 2 1}, \eqref{def V 2 2}, \eqref{Htilde 11} and \eqref{Phat0} that $\Omega^{(i)}$ involve only $g_0$ and its derivatives or $F^{(1)}$ and its first derivatives $\dr F^{(1)}$. For $\nabla F^{(1)}$, the initial data on $\Sigma_0$ are obtained by differentiating directly $F^{(1)}_{\alpha\beta}$ defined in \eqref{ID F ij} and \eqref{ID F1 0} and for $\dr_t F^{(1)}_{\alpha\beta}$ they are obtained by rewriting the equation we want $F^{(1)}$ to satisfy, that is \eqref{eq F1} which rewrites as
\begin{align}
\dr_t F^{(1)}_{\alpha\beta} & = - N_0 F^{(1)}_{\alpha\beta}  +   (\dr_t + N_0)^\rho \Gamma(g_0)^\mu_{\rho (\alpha} F^{(1)}_{\mu\beta)} + \frac{1}{2|\nabla u_0|_{\bar{g}_0}}(\Box_{g_0}u_0)  F^{(1)}_{\alpha\beta},  \label{ID dt F1}
\end{align}
where we used the fact that on $\Sigma_0$ we have $L_0= |\nabla u_0|_{\bar{g}_0} \left( \dr_t + N_0 \right)$. At this stage, the RHS of \eqref{ID dt F1} is fully defined on $\Sigma_0$, which implies that this is also the case for $\Omega^{(i)}$. It remains to define the initial data for $F^{(2,i)}_{0\alpha}$ for $i=1,2$ by
\begin{align}
F^{(2,i)}_{0A} \restriction{\Sigma_0} & = - \bar{F}^{(2,i)}_{AN_0} - \frac{1}{|\nabla u_0|_{\bar{g}_0}} \Omega^{(i)}_A \restriction{\Sigma_0},\label{ID F2 0A }
\\   F^{(2,i)}_{0N_0} \restriction{\Sigma_0}   & = -\half \bar{F}^{(2,i)}_{N_0N_0}  -  \frac{1}{2|\nabla u_0|_{\bar{g}_0}^2} \Omega^{(i)}_{L_0} \restriction{\Sigma_0},\label{ID F2 0N}
\\ F^{(2,i)}_{00} \restriction{\Sigma_0} & = 0.\label{ID F2 00}
\end{align}
Note that these definitions only ensure 
\begin{align}
V_{\mathcal{T}_0}^{(2,i)}\restriction{\Sigma_0}=0, \label{initial pola tangential}
\end{align}
thanks to \eqref{Pol L} and \eqref{Pol A}. The fact that $V^{(2,i)}_{\Lb_0}$ also vanishes on $\Sigma_0$ is the content of Lemma \ref{lem V=0 initialement}.

\subsubsection{Initial data for the reduced system}\label{section ID reduced system}

We define here the initial data for the solution of the reduced system. For the tensor $\F$ we simply set 
\begin{align}
\F_{\alpha\beta} \restriction{\Sigma_0} & = 0 . \label{ID F=0}
\end{align}
Before we look at $\h$, note that with \eqref{ID F=0} we can apply Lemma \ref{lem g3} and fully compute $g^{(3)}\restriction{\Sigma_0}$. Indeed it is independent from $\h$ and depends only on $F^{(1)}$, $F^{(2,i)}$ and their derivatives which can be computed on $\Sigma_0$ by looking at the transport equations we want them to satisfy, that is \eqref{eq F1}, \eqref{eq F21} and \eqref{eq F22}.

\saut
Since $\h$ solves a second order equation we need to prescribe $\h$ and $\dr_t \h$ on $\Sigma_0$. We start with the spatial components of $\h$, defined so that the induced metric on $\Sigma_0$ of the spacetime metric $g$ defined by \eqref{g} is precisely the Riemannian metric $\bar{g}$ obtained in Theorem \ref{theo initial data}. This imposes
\begin{align}
\h_{ij} \restriction{\Sigma_0}  =  \bar{\h}_{ij} - \lambda g^{(3)}_{ij} . \label{ID h ij}
\end{align}
where $\bar{\h}$ is given in \eqref{g bar theo constraint}. 

\begin{remark}\label{remark chelou}
The definition \eqref{ID h ij} is the reason why we need $g^{(3)}$ to be independent of $\h$. This requirement led to the presence of the term
\begin{align}
- g_0^{\rho\sigma} \h^{\mu\nu} \left( \dr_\mu (g_0)_{\sigma\nu} - \half \dr_\sigma (g_0)_{\mu\nu} - \sin\left(\frac{u_0}{\lambda}\right)  \left(   \dr_\mu u_0 F^{(1)}_{\sigma\nu} - \half \dr_\sigma u_0 F^{(1)}_{\mu\nu} \right) \right),\label{terme chelou}
\end{align}
in $\Upsilon^\rho$ (see \eqref{def Upsilon h}). Otherwise, this term would have been present in the RHS of \eqref{def g3}, leading to $g^{(3)}$ being a function of $\h$. Defining $\h_{ij} \restriction{\Sigma_0} $ would have thus required to invert a system of equations of the form 
\begin{align*}
\h + \la g^{(3)}(\h) = \bar{\h}.
\end{align*}
If $\la$ is small enough, one can easily solve this system of equations but we prefered to avoid this and to include \eqref{terme chelou} in $\Upsilon^\rho$.
\end{remark}

For the time components of $\h$, we choose them so that they compensate the one of $g^{(3)}$, that is
\begin{align}
\h_{0\alpha} \restriction{\Sigma_0} & = - \lambda g^{(3)}_{0\alpha}. \label{ID h 0} 
\end{align}
Note that we can now obtain the initial data for the derivatives of $\F$. Indeed, \eqref{ID F=0} obviously gives us that $\nabla \F_{\alpha\beta}\restriction{\Sigma_0}=0$ and since we want $\F$ to solve \eqref{eq F}, the time derivative of $\F$ is initially given by
\begin{align}
 \dr_t \F_{\alpha\beta}  & = \frac{1}{2|\nabla u_0|_{\bar{g}_0} }  \Pi_\leq \left( \h_{L_0L_0} \right) F^{(1)}_{\alpha\beta} \label{ID dt F}
\end{align}
where we also used \eqref{ID F=0} to simplify \eqref{eq F}.

\saut
We now to define the initial data for $\dr_t \h_{\alpha\beta}$. The spatial components, i.e $\dr_t \h_{ij}$, are chosen so that the second fundamental form of $\Sigma_0$ as an hypersurface in $(\mathcal{M},g)$ is given by $K$ obtained in Theorem \ref{theo initial data}. The second fundamental form is defined by $-\half \mathcal{L}_{T}g$ where $T$ is the unit normal to $\Sigma_0$ for $g$ and $\mathcal{L}$ is the Lie derivative.  \textit{A priori}, the unit normal $T$ depends on $\lambda$. This shows that in order to obtain the dependence of $\dr_t \h_{ij}$ on $\lambda$, we need to know more about $T$.

\begin{lem}\label{lem unit normal}
The future unit normal $T$ to $\Sigma_0$ for the metric $g$ given by \eqref{g} satisfies
\begin{align}
T = \dr_t + Z,\label{def unit normal}
\end{align}
with the vector field $Z=Z^t \dr_t + \bar{Z}^i\dr_i$ given by
\begin{align}
Z^t & = \left( 1 + \lambda^4 \bar{g}^{ij} g^{(2)}_{0i} g^{(2)}_{0j} \right)^{-\half} -1,\label{def Zt}
\\ \bar{Z}^i & = -\lambda^2 \bar{g}^{ij}g^{(2)}_{0j} \left( 1 + \lambda^4 \bar{g}^{ij} g^{(2)}_{0i} g^{(2)}_{0j} \right)^{-\half}.\label{def Zi}
\end{align}
\end{lem}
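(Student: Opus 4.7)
The plan is to first compute the pullback of $g$ to $\Sigma_0$ explicitly using the initial data prescriptions of Section \ref{section ID spacetime metric}, and then solve the two defining conditions for the future unit normal (orthogonality to the slice plus unit length) as a small algebraic system in the components of $T$.

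First I would read off $g_{\alpha\beta}\restriction{\Sigma_0}$ from the ansatz \eqref{g}. For the time-time component, $F^{(1)}_{00}=0$ by \eqref{ID F1 0}, $\F_{00}\restriction{\Sigma_0}=0$ by \eqref{ID F=0}, and $F^{(2,i)}_{00}=0$ by \eqref{ID F2 00}, hence $g^{(1)}_{00}=g^{(2)}_{00}=0$ on $\Sigma_0$. Combining with $\h_{00}\restriction{\Sigma_0}=-\lambda g^{(3)}_{00}$ from \eqref{ID h 0} and $(g_0)_{00}\restriction{\Sigma_0}=-1$ (since $\dr_t$ is $g_0$-unit normal to $\Sigma_0$), the terms $\lambda^2 \h_{00}$ and $\lambda^3 g^{(3)}_{00}$ cancel, giving $g_{00}\restriction{\Sigma_0}=-1$. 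The analogous computation for $g_{0i}$ uses $F^{(1)}_{0i}=0$, $\F_{0i}\restriction{\Sigma_0}=0$, and the cancellation $\lambda^2\h_{0i}+\lambda^3 g^{(3)}_{0i}=0$, which leaves exactly $g_{0i}\restriction{\Sigma_0}=\lambda^2 g^{(2)}_{0i}$. (For $g_{ij}\restriction{\Sigma_0}$, the same bookkeeping combined with \eqref{ID F ij} and \eqref{ID h ij} recovers the Cauchy datum $\bar{g}_{ij}$ of Theorem \ref{theo initial data}, so I may denote $\bar{g}=g\restriction{\Sigma_0}$ without ambiguity.)

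Writing $T=T^t\dr_t+T^i\dr_i$, the orthogonality conditions $g(T,\dr_i)=0$ become $T^t g_{0i}+\bar{g}_{ij}T^j=0$, so
\begin{align*}
T^j = -T^t \lambda^2 \bar{g}^{jk} g^{(2)}_{0k}.
\end{align*}
Substituting into $g(T,T)=-1$ and using $g_{00}\restriction{\Sigma_0}=-1$, the cross terms $2T^t T^i g_{0i}=-2(T^t)^2\lambda^4 \bar{g}^{ij}g^{(2)}_{0i}g^{(2)}_{0j}$ and the quadratic term $T^i T^j g_{ij}=(T^t)^2\lambda^4\bar{g}^{ij}g^{(2)}_{0i}g^{(2)}_{0j}$ combine to $-(T^t)^2\lambda^4\bar{g}^{ij}g^{(2)}_{0i}g^{(2)}_{0j}$, producing
\begin{align*}
(T^t)^2\bigl(1+\lambda^4\bar{g}^{ij}g^{(2)}_{0i}g^{(2)}_{0j}\bigr) = 1.
\end{align*}
Taking the positive square root (to ensure $T$ is future-directed) yields $T^t=(1+\lambda^4\bar{g}^{ij}g^{(2)}_{0i}g^{(2)}_{0j})^{-1/2}$. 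Setting $Z^t\vcentcolon=T^t-1$ and $\bar{Z}^i\vcentcolon=T^i$ then gives precisely \eqref{def Zt}--\eqref{def Zi}.

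The proof is essentially algebraic, so there is no serious analytic obstacle; the only subtlety is the cancellation that makes $g_{0\alpha}\restriction{\Sigma_0}$ manageable. This cancellation is not automatic but is built into the choice of initial data: the prescription $\h_{0\alpha}\restriction{\Sigma_0}=-\lambda g^{(3)}_{0\alpha}$ is specifically tailored so that the $\lambda^3 g^{(3)}_{0\alpha}$ contribution to $g_{0\alpha}$ disappears, leaving a clean $\lambda^2$ leading order driven solely by $g^{(2)}_{0i}$ (and zero for the time-time component). Making this cancellation explicit is the main bookkeeping step, after which the expressions for $Z^t$ and $\bar{Z}^i$ drop out of a one-line algebraic inversion.
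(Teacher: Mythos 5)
Your proposal is correct and follows essentially the same route as the paper: reduce $g\restriction{\Sigma_0}$ to $g_{00}=-1$, $g_{0i}=\lambda^2 g^{(2)}_{0i}$ via the initial-data prescriptions (including the cancellation built into \eqref{ID h 0}), then impose $g(T,\dr_i)=0$ and $g(T,T)=-1$ and pick the future-directed root. The only cosmetic difference is that you solve directly for $T^t$ from $(T^t)^2(1+\lambda^4\bar{g}^{ij}g^{(2)}_{0i}g^{(2)}_{0j})=1$, whereas the paper writes the equivalent quadratic in $Z^t=T^t-1$.
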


\begin{proof}
Since the unit normal to a spacelike hypersurface for a given Lorentzian metric is unique, we just need to prove that we can construct the vector field $Z$ such that $T$ defined by \eqref{def unit normal} satisfies
\begin{align*}
g(T,T)=-1 \quad \text{and} \quad g(T,\dr_i)=0.
\end{align*}
Note that thanks to \eqref{ID F1 0}, \eqref{ID F2 00}, \eqref{ID F=0} and \eqref{ID h 0} and the fact that $\dr_t$ is the unit normal to $\Sigma_0$ for $g_0$ we have $g_{00}=-1$ and $g_{0i}= \lambda^2 g^{(2)}_{0i}$. This gives
\begin{align*}
g(T,\dr_i) & = \lambda^2  g^{(2)}_{0i}\left( 1 + Z^t \right)  +   \bar{g}_{ij}\bar{Z}^j,
\end{align*}
so we impose 
\begin{align}
\bar{Z}^i & = -\lambda^2 \bar{g}^{ij}g^{(2)}_{0j}\left( 1 + Z^t \right)\label{def Zi en fonction Z1}.
\end{align}
Moreover, we have
\begin{align*}
g(T,T) & = -1 + 2 g_{0\alpha}Z^\alpha + g_{\alpha\beta}Z^\alpha Z^\beta,
\end{align*}
so we impose $2 g_{0\alpha}Z^\alpha + g_{\alpha\beta}Z^\alpha Z^\beta=0$. After expanding the quadratic term and inserting \eqref{def Zi en fonction Z1} this condition rewrites as a quadratic equation for $Z^t$
\begin{align*}
 \left( 1 + \lambda^4 \bar{g}^{ij} g^{(2)}_{0i}  g^{(2)}_{0j} \right) \left( Z^t \right)^2 + 2 \left( 1 + \lambda^4 \bar{g}^{ij} g^{(2)}_{0i}  g^{(2)}_{0j} \right)Z^t +  \lambda^4 \bar{g}^{ij} g^{(2)}_{0i}  g^{(2)}_{0j} =0.
\end{align*}
The two roots of this equation are
\begin{align*}
\pm \left( 1 + \lambda^4 \bar{g}^{ij} g^{(2)}_{0i}  g^{(2)}_{0j}  \right)^{-\half} - 1.
\end{align*}
Since we want $T$ to be future directed, we choose the root where $\pm=+$, i.e \eqref{def Zt}. Inserting this into \eqref{def Zi en fonction Z1} gives \eqref{def Zi}.
\end{proof}

We now compute the expression of the second fundamental form of $\Sigma_0$ as an hypersurface in $(\mathcal{M},g)$ with $g$ being given by \eqref{g}. 

\begin{lem}\label{lem second fundamental form evol}
If $g$ is given by \eqref{g} and if $\dr_t F^{(1)}_{\alpha\beta}$ is given by \eqref{ID dt F1} then on $\Sigma_0$ the following holds
\begin{align*}
-\half \mathcal{L}_T g_{ij} & = K^{(0)}_\lambda + \lambda K^{(1)}_\lambda + \lambda^2 \left( -\half \mathcal{L}_T \h + \tilde{K}_{evol}^{(\geq 2)} \right),
\end{align*}
where $K^{(0)}_\lambda$ and $K^{(1)}_\lambda$ are given by \eqref{K0 theo constraint} and \eqref{K1 theo constraint}. Moreover, we have schematically
\begin{align}
\lambda^2 \tilde{K}_{evol}^{(\geq 2)} & = Z^\alpha \dr g_0 + \lambda Z^\alpha \dr g^{(1)} +  \lambda^2 \left(\dr \F + \dr F^{(2,i)} + Z^\alpha \dr g^{(2)} \right)+ \lambda^3 T g^{(3)}.\label{K evol geq 2}
\end{align}
\end{lem}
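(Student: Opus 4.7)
The proof is a direct computation expanding $\mathcal{L}_T g_{ij}\restriction{\Sigma_0}$ in powers of $\lambda$ and identifying the leading orders with the explicit expressions $K^{(0)}_\lambda$ and $K^{(1)}_\lambda$ from Theorem \ref{theo initial data}. I would split $\mathcal{L}_T = \mathcal{L}_{\dr_t} + \mathcal{L}_Z$ using linearity of the Lie derivative in the vector field. Since $\dr_t$ has constant coordinate components, $\mathcal{L}_{\dr_t} g_{ij} = \dr_t g_{ij}$, while $\mathcal{L}_Z g_{ij} = Z^\alpha \dr_\alpha g_{ij} + g_{\alpha j} \dr_i Z^\alpha + g_{i\alpha} \dr_j Z^\alpha$ is small by Lemma \ref{lem unit normal} ($\bar{Z}^i = \GO{\lambda^2}$, $Z^t = \GO{\lambda^4}$) and ultimately contributes only to $\tilde{K}_{evol}^{(\geq 2)}$ and to the $-\half \mathcal{L}_Z \h$ piece that combines with $-\half \dr_t \h$ to form $-\half \mathcal{L}_T \h$.

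The main step is the term-by-term expansion
\begin{align*}
\dr_t g_{ij} = \dr_t(g_0)_{ij} + \lambda \dr_t g^{(1)}_{ij} + \lambda^2 \dr_t g^{(2)}_{ij} + \lambda^2 \dr_t \h_{ij} + \lambda^3 \dr_t g^{(3)}_{ij}.
\end{align*}
The first term yields $-2 K_{0,ij}$ since $\dr_t$ is the unit normal to $\Sigma_0$ for $g_0$ (Section \ref{section BG}). For $\lambda \dr_t g^{(1)}_{ij} = \lambda \dr_t\bigl(\cos(u_0/\lambda)F^{(1)}_{ij}\bigr)$, the chain rule produces at order $\lambda^0$ the term $-\sin(u_0/\lambda)|\nabla u_0|_{\bar{g}_0} F^{(1)}_{ij}$ on $\Sigma_0$ (using \eqref{dt u_0}), whose $-\half$ multiple is the oscillating part of $K^{(0)}_\lambda$ in \eqref{K0 theo constraint}, plus an order $\lambda$ contribution $\lambda \cos(u_0/\lambda)\dr_t F^{(1)}_{ij}$ in which \eqref{ID dt F1} is substituted for $\dr_t F^{(1)}$; using $F^{(1)}_{0\alpha}\restriction{\Sigma_0}=0$ to reduce the Christoffel contraction to spatial indices, this yields exactly the first line of $K^{(1)}_\lambda$ in \eqref{K1 theo constraint}.

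An analogous computation for $\lambda^2 \dr_t g^{(2)}_{ij}$: differentiating the phases of $\lambda^2 \sin(u_0/\lambda) F^{(2,1)}_{ij}$ and $\lambda^2 \cos(2u_0/\lambda) F^{(2,2)}_{ij}$ produces at order $\lambda$ the expression $\lambda|\nabla u_0|_{\bar{g}_0}\bigl(\cos(u_0/\lambda) F^{(2,1)}_{ij} - 2\sin(2u_0/\lambda) F^{(2,2)}_{ij}\bigr)$, matching the second line of $K^{(1)}_\lambda$ after multiplication by $-\half$. The piece $\lambda^2 \sin(u_0/\lambda) \F_{ij}$ contributes only at order $\lambda^2$ through $\dr_t \F$, since $\F\restriction{\Sigma_0}=0$ by \eqref{ID F=0}, and lands in the $\dr \F$ part of the schematic \eqref{K evol geq 2}. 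The amplitude derivatives $\lambda^2 \dr_t F^{(2,i)}_{ij}$ populate the $\dr F^{(2,i)}$ part; $\lambda^3 \dr_t g^{(3)}_{ij}$ lands at order $\lambda^2$ (one $\lambda^{-1}$ from the phase derivative) and corresponds to the $\lambda^3 T g^{(3)}$ piece; and $\lambda^2 \dr_t \h_{ij}$ combines with the $\lambda^2 \mathcal{L}_Z \h_{ij}$ contribution into $\lambda^2 (\mathcal{L}_T \h)_{ij}$. All remaining pieces of $\mathcal{L}_Z g$, obtained by expanding $Z^\alpha \dr_\alpha g$ along \eqref{g}, populate the $Z^\alpha \dr g^{(k)}$ entries of \eqref{K evol geq 2}.

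The main bookkeeping subtlety is the systematic tracking of the $\lambda^{-1}$ factor produced each time a derivative falls on a phase $u_0/\lambda$, which is what converts a superficially $\GO{\lambda^2}$ term of the ansatz into an $\GO{1}$ or $\GO{\lambda}$ contribution to $\mathcal{L}_T g$. Once this bookkeeping is done, matching against $K^{(0)}_\lambda$ and $K^{(1)}_\lambda$ from Theorem \ref{theo initial data} is a purely mechanical verification.
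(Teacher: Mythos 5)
Your proposal is correct and follows essentially the same route as the paper: expand $\mathcal{L}_T g$ along the ansatz \eqref{g} using the decomposition of $T$ from Lemma \ref{lem unit normal}, track the $\lambda^{-1}$ losses from differentiating the phases, use $\dr_t u_0 = |\nabla u_0|_{\bar{g}_0}$ on $\Sigma_0$ together with \eqref{ID dt F1}, \eqref{ID F1 0}, \eqref{ID F ij} and \eqref{ID F=0}, and match the $\lambda^0$ and $\lambda^1$ terms against \eqref{K0 theo constraint}--\eqref{K1 theo constraint} while relegating the $Z$-contributions and amplitude derivatives to the schematic remainder \eqref{K evol geq 2}. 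The explicit split $\mathcal{L}_T=\mathcal{L}_{\dr_t}+\mathcal{L}_Z$ is only a cosmetic variation on the paper's term-by-term treatment.
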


\begin{proof}
Recalling that $g$ is defined by \eqref{g} we have
\begin{align*}
-\half \mathcal{L}_T g & = -\half \mathcal{L}_T g_0 -\frac{\lambda}{2} \mathcal{L}_T g^{(1)} -\frac{\lambda^2}{2} \mathcal{L}_T g^{(2)} -\frac{\lambda^2}{2} \mathcal{L}_T \h -\frac{\lambda^3}{2} \mathcal{L}_T g^{(3)}. 
\end{align*}
We compute or schematically estimate each term in this expression, using the expansion of $T$ given in Lemma \ref{lem unit normal}.  The latter gives
\begin{align*}
-\half \mathcal{L}_T g_0 = K_0 +  \GO{Z^\alpha \dr g_0},
\end{align*}
where $Z^\alpha$ denotes either $Z^t$ or $\bar{Z}^i$ and where we used the fact that $\dr_t$ is the unit normal to $\Sigma_0$ for $g_0$. Similarly, using $\dr_t u_0 = |\nabla u_0|_{\bar{g}_0}$ on $\Sigma_0$ we have
\begin{align*}
-\frac{\lambda}{2} \mathcal{L}_T g^{(1)}_{ij} & = \frac{1}{2} \sin\left(\frac{u_0}{\lambda}\right)  |\nabla u_0|_{\bar{g}_0}  \bar{F}^{(1)}_{ij} -\frac{\lambda}{2}\cos\left(\frac{u_0}{\lambda}\right) \dr_t F^{(1)}_{ij} + \GO{\lambda Z^\alpha \dr g^{(1)}}
\\& = \half \sin\left(\frac{u_0}{\lambda}\right)  |\nabla u_0|_{\bar{g}_0}  \bar{F}^{(1)}_{ij}
\\&\quad  -\frac{\lambda}{2}\cos\left(\frac{u_0}{\lambda}\right)\left( - N_0 \bar{F}^{(1)}_{ij}  +   (\dr_t + N_0)^\rho \Gamma(g_0)^k_{\rho (i} \bar{F}^{(1)}_{k j)} + \frac{1}{2|\nabla u_0|_{\bar{g}_0}}(\Box_{g_0}u_0)  \bar{F}^{(1)}_{ij}  \right)
\\&\quad + \GO{\lambda Z^\alpha \dr g^{(1)}},
\end{align*}
where we used \eqref{ID dt F1} and \eqref{ID F1 0} to compute $\dr_t F^{(1)}_{ij}$ and \eqref{ID F ij} to compute $F^{(1)}_{ij}$. Using in addition \eqref{ID F=0} we now obtain
\begin{align*}
-\frac{\lambda^2}{2} \mathcal{L}_T g^{(2)}_{ij} & = - \frac{\lambda}{2}\cos\left(\frac{u_0}{\lambda}\right)  |\nabla u_0|_{\bar{g}_0} \bar{F}^{(2,1)}_{ij} + \lambda\sin\left(\frac{2u_0}{\lambda}\right)  |\nabla u_0|_{\bar{g}_0} \bar{F}^{(2,2)}_{ij}
\\&\quad + \GO{ \lambda^2 \left(\dr \F + \dr F^{(2,i)} + Z^\alpha \dr g^{(2)} \right)  }.
\end{align*}
This concludes the proof of the lemma, by recalling the expression of $K^{(i)}_\lambda$ for $i=1,2$ given in \eqref{K0 theo constraint} and \eqref{K1 theo constraint}.
\end{proof}

\saut
As explained above,  we want the initial data for $\dr_t \h_{ij}$ to ensure that the second fundamental form of $\Sigma_0$ for $g$, i.e $-\half \mathcal{L}_T g_{ij} $, matches the tensor $K_\lambda$ given by Theorem \ref{theo initial data}. According to the previous lemma, this is equivalent to 
\begin{align}
-\half \mathcal{L}_T\h + \tilde{K}^{(\geq 2)}_{evol} = K^{(\geq 2)}_\lambda,\label{ID dt h ij inter}
\end{align}
where $K_\lambda^{(\geq 2)}$ is given in Theorem \ref{theo initial data}. Using the expansion of $T$, this defines the initial data for $\dr_t \h_{ij}$ on $\Sigma_0$ by
\begin{align}
 \dr_t \h_{ij}  \restriction{\Sigma_0}    = \frac{1}{1 + Z^t} \left(  -  \bar{Z}^k\dr_k \h_{ij} + \h\left( [Z,\dr_{(i}],\dr_{j)}\right)  - 2(K_\lambda^{(\geq 2)})_{ij}  +  2( \tilde{K}_{evol}^{(\geq 2)})_{ij} \right).\label{ID dt h ij}
\end{align}

\saut
It remains to define the initial value for $\dr_t \h_{0\alpha}$. They are chosen so that the generalised wave condition \eqref{Upsilon =0} holds on $\Sigma_0$.  This condition rewrites as
\begin{align}
 g^{\mu\nu}  \left( \dr_\mu \h_{\sigma\nu} - \half \dr_\sigma \h_{\mu\nu} \right) = - g_{\sigma\rho}  \tilde{\Upsilon}^\rho.\label{wave gauge inter}
\end{align}
Note that the initial data for $\tilde{\Upsilon}^\rho$ are fully defined. Indeed the last line in \eqref{def Upsilon tilde} depends on background quantities and on $g\restriction{\Sigma_0}$. For the first line, we use \eqref{ID F=0} for the tangential derivatives or \eqref{ID dt F} for the initial value of $\dr_t \F$. A similar argument using Lemma \ref{lem g3} allows us to deal with $\dr g^{(3)}$ in $\tilde{\Upsilon}^\rho$.

\saut
Thanks to the decomposition of $g^{\mu\nu}$ on $\Sigma_0$, \eqref{wave gauge inter} rewrites as
\begin{align*}
- T^\mu T^\nu  \left( \dr_\mu \h_{\sigma\nu} - \half \dr_\sigma \h_{\mu\nu} \right)  = - g_{\sigma\rho}  \tilde{\Upsilon}^\rho -   \bar{g}^{k\ell}  \left( \dr_k \h_{\sigma\ell} - \half \dr_\sigma \h_{k\ell} \right),
\end{align*}
where the RHS is fully defined thanks to \eqref{ID h ij}, \eqref{ID h 0} and \eqref{ID dt h ij}. We project this expression onto the initial frame $(T,\dr_i)$ and rearrange terms to obtain the initial value of $T \h_{TT}$ and $T \h_{iT}$:
\begin{align}
 T \h_{TT} \restriction{\Sigma_0}  & = 2 \h_{\mu T}TT^\mu   +  2  g_{T\rho}  \tilde{\Upsilon}^\rho + 2T^\sigma \bar{g}^{k\ell}  \left( \dr_k \h_{\sigma\ell} - \half \dr_\sigma \h_{k\ell} \right),\label{ID T h TT}
\\ T \h_{iT} \restriction{\Sigma_0}   & =  \h_{i\nu}TT^\nu + \half T^\mu T^\nu \dr_i \h_{\mu\nu} + g_{i\rho}  \tilde{\Upsilon}^\rho +   \bar{g}^{k\ell}  \left( \dr_k \h_{i\ell} - \half \dr_i \h_{k\ell} \right). \label{ID T h iT}
\end{align}
Using the decomposition of $T$ given by Lemma \ref{lem unit normal} we can deduce from \eqref{ID T h TT} and \eqref{ID T h iT} the initial value of $\dr_t \h_{0\alpha}$ (similarly to how we obtain \eqref{ID dt h ij} from \eqref{ID dt h ij inter}).

\subsubsection{Initial estimates and properties}

In the following corollary, we summarize the formal properties satisfied by the initial data defined in the previous sections and give the estimates they satisfy. It shows in particular that the initial data defined above from the solution of the constraint equations obtained in \cite{Touati2023a} are compatible with the generalised wave gauge condition and the polarization conditions (with the exception of $V^{(2,i)}_{\Lb_0}=0$).

\begin{coro}\label{coro ID}
The initial data for the background and reduced systems defined in Sections \ref{section ID bg system} and \ref{section ID reduced system} are such that
\begin{itemize}
\item[(i)] the first and second fundamental forms of the hypersurface $\Sigma_0$ are given by $(\bar{g}_\la,K_\la)$ from Theorem \ref{theo initial data} and therefore solve the constraint equations,
\item[(ii)] the tensors $F^{(1)}$, $F^{(2,1)}$ and $F^{(2,2)}$ are initially supported in $\{|x|\leq R\}$ and there exists a constant $C_{\mathrm{in}}>0$ such that
\begin{align}
\l F^{(1)} \r_{H^N(\Sigma_0)}  +  \l F^{(2,1)} \r_{H^{N-1}(\Sigma_0)}  +  \l F^{(2,2)} \r_{H^{N-1}(\Sigma_0)}  & \leq C_{\mathrm{in}} \e,\label{estim F 1 21 22 initial}
\\ \max_{k\in\llbracket 0,4 \rrbracket} \lambda^{r} \l \dr\nabla^r \h \r_{L^2_{\delta+ r+1}(\Sigma_0)} & \leq C_{\mathrm{in}} \e,\label{estim h initial}
\end{align}
\item[(iii)] the following initial polarization conditions hold
\begin{align}
\Pol \left( F^{(1)} \right) \restriction{\Sigma_0} & = 0,\label{ID pola F1}
\\ V^{(2,i)}_{\mathcal{T}_0}  \restriction{\Sigma_0}& = 0,\label{ID V2i=0}
\\ \Pol \left( \F \right) \restriction{\Sigma_0} & = 0,\label{ID pola F}
\end{align}
\item[(iv)] the initial backreaction condition holds
\begin{align}
 \mathcal{E}\left( F^{(1)} , F^{(1)} \right) \restriction{\Sigma_0} & = -4 F_0^2,\label{ID energie F1}
\end{align}
where $\mathcal{E}$ is defined in \eqref{def energie E},
\item[(v)] the initial generalised wave gauge condition holds
\begin{align}
\Upsilon^\rho \restriction{\Sigma_0} & = 0.\label{ID Upsilon =0}
\end{align}
\end{itemize}
\end{coro}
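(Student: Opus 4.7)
The proof is almost entirely a verification that the definitions given in Sections \ref{section ID bg system} and \ref{section ID reduced system} produce initial data with the claimed properties. The strategy is to treat each item separately, reusing the fact that most conditions were already built into the construction.

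\textbf{Item (i).} The induced metric on $\Sigma_0$ for $g$ given by \eqref{g} is $g_{ij}\restriction{\Sigma_0}$. Combining the definition \eqref{ID h ij} of $\h_{ij}$ with the expressions \eqref{ID F ij} of $F^{(1)}_{ij}$ and $F^{(2,i)}_{ij}$, one recovers exactly the formula \eqref{g bar theo constraint} for $\bar{g}_\lambda$, so $g_{ij}\restriction{\Sigma_0} = \bar{g}_\lambda$. For the second fundamental form, I will invoke Lemma \ref{lem second fundamental form evol} directly: by definition of $\dr_t \h_{ij}$ in \eqref{ID dt h ij}, the identity \eqref{ID dt h ij inter} holds, which together with Lemma \ref{lem second fundamental form evol} yields $-\tfrac{1}{2}\mathcal{L}_T g_{ij} = K^{(0)}_\lambda + \lambda K^{(1)}_\lambda + \lambda^2 K^{(\geq 2)}_\lambda = K_\lambda$. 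Since $(\bar{g}_\lambda, K_\lambda)$ solves the constraints by Theorem \ref{theo initial data}, so does our induced data.

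\textbf{Item (ii).} The support properties for $F^{(1)}$ and $F^{(2,i)}$ follow from the support property of $\bar{F}^{(1)}$, $\bar{F}^{(2,i)}$ (Theorem \ref{theo initial data} (i)) and the fact that the various $\Omega^{(i)}$ in \eqref{def inter omega} inherit the support of $F^{(1)}$ via \eqref{V21}--\eqref{V22}, \eqref{Htilde 11}--\eqref{Htilde 12}, \eqref{Phat0}, \eqref{Phat0 2 g1}. The estimates \eqref{estim F 1 21 22 initial} follow from \eqref{estim F bar chap 3} together with \eqref{ID F2 0A }--\eqref{ID F2 00}, where one bounds $\Omega^{(i)}$ by product estimates using the background estimates from Section \ref{section BG}. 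For \eqref{estim h initial}, I split $\h = \bar{\h} - \lambda g^{(3)}(\cdot\restriction\Sigma_0)$ (for spatial components) or $\h = -\lambda g^{(3)}$ (for temporal components), use \eqref{estim h bar chap 3} for $\bar{\h}$, and control $g^{(3)}$ and its derivatives through Lemma \ref{lem g3} (noting $\F\restriction{\Sigma_0}=0$, so only $g^{(3,\mathrm{BG})}$ contributes). For $\dr_t \h_{ij}$ and $\dr_t \h_{0\alpha}$ defined in \eqref{ID dt h ij}, \eqref{ID T h TT} and \eqref{ID T h iT}, one uses \eqref{estim K geq 2 chap 3}, the $L^\infty$-smallness of $Z$ from Lemma \ref{lem unit normal}, and the schematic expression \eqref{K evol geq 2} for $\tilde{K}^{(\geq 2)}_{evol}$ to obtain the desired bound in $L^2_{\delta+1}$; for the weighted higher derivatives, one loses at most one $\lambda^{-1}$ per derivative when it hits an oscillating phase, giving the bound \eqref{estim h initial} uniformly in $\lambda$.

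\textbf{Item (iii).} The condition $\Pol(F^{(1)})=0$ on $\Sigma_0$ follows from \eqref{Pol L}--\eqref{Pol Lb}: with $F^{(1)}_{0\alpha}=0$ from \eqref{ID F1 0}, one has $F^{(1)}_{L_0\alpha}=|\nabla u_0|_{\bar{g}_0}F^{(1)}_{(\dr_t+N_0)\alpha}=|\nabla u_0|_{\bar{g}_0}F^{(1)}_{N_0\alpha}$ which must be combined with $F^{(1)}_{ij}=\bar{F}^{(1)}_{ij}$ being tangential to $P_{0,u}$ (Theorem \ref{theo initial data} (ii)) to conclude $F^{(1)}_{L_0\alpha}=0$, and with $\bar{g}_0$-tracelessness of $\bar{F}^{(1)}$ to deduce $\tr_{g_0}F^{(1)}=\bar{g}_0^{ij}\bar{F}^{(1)}_{ij}=0$. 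The condition $V^{(2,i)}_{\mathcal{T}_0}=0$ is exactly \eqref{initial pola tangential}, obtained by construction from \eqref{ID F2 0A }--\eqref{ID F2 00}. Finally $\Pol(\F)=0$ follows immediately from $\F\restriction{\Sigma_0}=0$ in \eqref{ID F=0}.

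\textbf{Item (iv).} Using $F^{(1)}_{0\alpha}=0$ and $g_0^{00}=-1$, $g_0^{0i}=0$, $g_0^{ij}=\bar{g}_0^{ij}$ on $\Sigma_0$, I compute $\tr_{g_0}F^{(1)} = \bar{g}_0^{ij}\bar{F}^{(1)}_{ij}=0$ and $|F^{(1)}|^2_{g_0}=|\bar{F}^{(1)}|^2_{\bar{g}_0}=8F_0^2$ from \eqref{energie condition theo chap 3}. Plugging these into \eqref{def energie E} gives $\mathcal{E}(F^{(1)},F^{(1)})=0-\tfrac{1}{2}\cdot 8F_0^2=-4F_0^2$.

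\textbf{Item (v).} This is the only nontrivial verification. The definitions \eqref{ID T h TT}--\eqref{ID T h iT} of $T\h_{TT}$ and $T\h_{iT}$ were arranged so that \eqref{wave gauge inter} holds on $\Sigma_0$; translating back through \eqref{def unit normal} yields initial values of $\dr_t\h_{0\alpha}$ for which $\Upsilon^\rho=0$ initially. The subtle point, which is the main (minor) obstacle in the proof, is to check that at the moment of the definition all quantities appearing on the RHS of \eqref{ID T h TT}--\eqref{ID T h iT} are already defined: the initial values of $\dr \F$ and $\dr g^{(3)}$ appearing in $\tilde{\Upsilon}^\rho$ are indeed available thanks to $\F\restriction{\Sigma_0}=0$, \eqref{ID dt F}, and Lemma \ref{lem g3} combined with the known initial values of $F^{(1)}$, $F^{(2,i)}$ and their derivatives (the latter being computable from \eqref{eq F1}, \eqref{eq F21}, \eqref{eq F22}).
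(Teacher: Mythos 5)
Your proposal is correct and follows essentially the same route as the paper: each item is verified directly from the definitions of the initial data, using \eqref{ID F ij}--\eqref{ID dt h ij inter} for item (i), the support and bounds of Theorem \ref{theo initial data} together with the splitting $\h=\bar{\h}+\lambda g^{(3)}$, the smallness of $Z$ and \eqref{K evol geq 2} for item (ii), the tangentiality and tracelessness of $\bar{F}^{(1)}$ plus \eqref{ID F2 0A }--\eqref{ID F2 00} and \eqref{ID F=0} for item (iii), the condition \eqref{energie condition theo chap 3} for item (iv), and \eqref{ID T h TT}--\eqref{ID T h iT} for item (v). Your explicit computations for items (iii) and (iv) and the remark on the well-definedness of $\tilde{\Upsilon}^\rho$ only spell out details the paper treats more briefly.
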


\begin{proof}
The first item of the corollary holds thanks to \eqref{ID F ij}, \eqref{ID F=0}, \eqref{ID h ij} and \eqref{ID dt h ij inter}. We now turn to the second item the corollary. The support property of $F^{(1)}$, $F^{(2,1)}$ and $F^{(2,2)}$ follows directly from the support property of $\bar{F}^{(1)}$,  $\bar{F}^{(2,1)}$ and $\bar{F}^{(2,2)}$ stated in Theorem \ref{theo initial data} and \eqref{ID F ij}, \eqref{ID F1 0}, \eqref{ID F2 0A }, \eqref{ID F2 0N} and \eqref{ID F2 00}. Similarly, we deduce \eqref{estim F 1 21 22 initial} from \eqref{estim F bar chap 3}.  

Proving \eqref{estim h initial} amounts to bound the Sobolev norm of the initial data for $\h$ and $\dr_t \h$. First, from \eqref{ID h ij} and \eqref{ID h 0} we can schematically write $\h = \bar{\h} + \lambda g^{(3)}$. On $\Sigma_0$, $g^{(3)}$ is compactly supported and depends only on background quantities (see the discussion following \eqref{ID F=0}). The estimate
\begin{align}
\l \nabla^{r+1} \h \r_{L^2_{\delta+r+1}(\Sigma_0)} \lesssim \frac{\e}{\lambda^r},\label{estim h initial inter}
\end{align}
for $r\in\llbracket 0,4\rrbracket$ thus follows from \eqref{estim h bar chap 3}. We now look at $\dr_t \h_{ij}$, defined in \eqref{ID dt h ij}. Since $Z=\GO{\lambda^2}$ and only contains background quantities (see \eqref{def Zi} and \eqref{def Zt}) the worse terms in \eqref{ID dt h ij} are the last two (the first two can be estimated with \eqref{estim h initial inter}). For them we use \eqref{estim K geq 2 chap 3} and \eqref{K evol geq 2} (note that $\tilde{K}^{(\geq 2)}_{evol}$ involves the non-background quantities $\dr_t \F$ for which we need to use \eqref{ID dt F}). This proves that
\begin{align*}
\l \dr_t \nabla^r \h_{ij} \r_{L^2_{\delta+r+1}(\Sigma_0)} \lesssim \frac{\e}{\lambda^r},
\end{align*}
for $r\in\llbracket 0,4\rrbracket$.  We proceed similarly for $\dr_t \h_{0\alpha}$ using \eqref{ID T h TT} and \eqref{ID T h iT}. This concludes the proof of \eqref{estim h initial}.

\saut
The initial polarization conditions \eqref{ID V2i=0} and \eqref{ID pola F} are consequences of \eqref{ID F2 0A }-\eqref{ID F2 0N}-\eqref{ID F2 00} and \eqref{ID F=0}, while \eqref{ID pola F1} holds since $F^{(1)}$ is initially $P_{0,u}$-tangent and $g_0$-traceless (see the corresponding properties of $\bar{F}^{(1)}$ in Theorem \ref{theo initial data}). Samewise, \eqref{ID energie F1} follows from \eqref{energie condition theo chap 3}. The initial generalised wave gauge condition \eqref{ID Upsilon =0} follows from \eqref{ID T h TT} and \eqref{ID T h iT}.
\end{proof}

\subsection{Reformulation of Theorem \ref{theo main}}

The conclusion of Sections \ref{section high-frequency ansatz} and \ref{section hierarchy} is the following: proving Theorem \ref{theo main} is equivalent to constructing solutions to both the background and reduced system (\eqref{eq F1}-\eqref{eq F22} and \eqref{eq F}-\eqref{eq h} respectively) on $[0,1]\times \R^3$ with initial data given in Section \ref{section ID spacetime metric}, and which satisfies moreover the generalised wave gauge condition and polarization conditions (see Definition \ref{def generalised wave gauge} and \ref{def polarization conditions} respectively). The rest of this article is devoted to this task, which we divide in three parts:
\begin{itemize}
\item in Section \ref{section solving bg system}, we solve the background system (see Theorem \ref{theo bg system}),
\item in Section \ref{section solving reduced system}, we solve the reduced system (see Theorem \ref{theo reduced system}),
\item in Section \ref{section solving EVE}, we show that the generalised wave gauge condition and polarization conditions are satisfied (see Propositions \ref{prop V2i} and \ref{prop Upsilon}).
\end{itemize}

\subsection{Basic transport and energy estimates}

Before proceeding with this program, we collect here the basic transport estimate for the operator $L_0$ and the energy estimate for the operator $\tBox_g$.

\begin{lem}\label{lem L0 h =f}
Let $h$ a scalar function on $[0,T]\times \R^3$ for some $T\leq 1$, compactly supported. The following holds for all $t\in[0,T]$
\begin{align}
\l h \r_{L^2}^2(t) \leq  C(C_0) \left( \l h \r_{L^2}^2(0) + \int_0^t \left( \l h \r_{L^2}^2(s) + \l L_0 h \r_{L^2}^2(s) \right) \d s \right)
\label{estim transport L2}.
\end{align}
\end{lem}

\begin{proof}
We start by decomposing $L_0$ in the $(\dr_t,\dr_i)$ basis:
\begin{align*}
L_0 = L_0^t \dr_t + L_0^i \dr_i.
\end{align*}
Recalling \eqref{def L0} we have
\begin{align*}
L_0^t & = - g_0^{t t} \dr_t u_0 - g_0^{i t} \dr_i u_0,
\\ L_0^i & = - g_0^{t i} \dr_t u_0- g_0^{j i} \dr_j u_0.
\end{align*}
Since $u_0$ satisfies the eikonal equation \eqref{eq u0} we have
\begin{align*}
g_0^{t t} ( \dr_t u_0)^2  + 2 g_0^{t i} \dr_t u_0 \dr_i u_0 + | \nabla u_0 |^2_{\bar{g}_0}= 0.
\end{align*}
Taking $\e$ small enough in \eqref{estim g0} and \eqref{estim u0-z} then ensures that 
\begin{align}
\left| \frac{1}{L_0^t} \right|  + \left| L_0^\mu \right| +  \left| \dr L_0^\mu \right| \leq C(C_0) ,\label{estim L0}
\end{align}
on $[0,1]\times \R^3$. We now prove the energy estimate. After multiplication by $h$, the equation $L_0 h =f$ can be rewritten as
\begin{align}
\half  \dr_t h^2& =\frac{ f h}{L_0^t}  - \half \frac{ L_0^i}{L_0^t} \dr_i h^2.\label{L0t dt h}
\end{align}
For $s\in[0,T]$ we integrate \eqref{L0t dt h} on $\Sigma_s$ with respect to the usual Lebesgue measure and integrate by part in the last integral to obtain
\begin{align}
\half \int_{\Sigma_s} \dr_t h^2\d x& = \int_{\Sigma_s} \frac{ f h}{L_0^t} \d x + \half  \int_{\Sigma_s}\dr_i \left( \frac{ L_0^i}{L_0^t} \right)  h^2 \d x.\label{first integral}
\end{align}
In the integral of the LHS, we recognize the time derivative of an energy for $h$. Therefore \eqref{first integral} leads to
\begin{align*}
 \frac{\d}{\d t}  \l h \r_{L^2}^2(s)    & = 2 \int_{\Sigma_s} \frac{ f h}{L_0^t} \d x +   \int_{\Sigma_s}\dr_i \left( \frac{ L_0^i}{L_0^t} \right)  h^2 \d x.
\end{align*}
Using now the Cauchy-Schwarz inequality for the first integral in the RHS and \eqref{estim L0} we obtain
\begin{align*}
 \frac{\d}{\d t}  \l h \r_{L^2}^2(s)  \leq C(C_0) \left(  \l h \r_{L^2}^2(s) + \l f \r_{L^2}^2(s) \right).
\end{align*}
Integrating this inequality on $[0,t]$ for $t\leq T$ concludes the proof.
\end{proof}

For clarity we define
\begin{align*}
E_\sigma(h) & = \l  \dr_t h \r_{L^2_\sigma}^2 + \l  \nabla h \r_{L^2_\sigma}^2
\end{align*}
for $h$ a scalar function and $\sigma\in\R$.

\begin{lem}\label{lem EE}
Let $h$ be a scalar function on $[0,T]\times\R^3$, for all $t\in[0,T]$ we have
\begin{align*}
E_\sigma(h) (t) \lesssim E_\sigma(h) (0) +  \left( 1 + \l \dr g \r_{L^\infty} \right)  \int_0^t E_\sigma(h)(s)\d s + \int_0^t \l \tBox_gh \r_{L^2_\sigma}^2(s)\d s .
\end{align*}
\end{lem}

\begin{proof}
Let $w(x)=\left( 1+ |x|^2\right)^\sigma$. We set $f=\tBox_gh$ and multiply this identity by $w\dr_t h$ and integrate over $\Sigma_t$ for $t\in[0,T]$ with the usual Lebesgue measure:
\begin{align*}
\int_{\Sigma_t} wg^{tt}\dr_t h\dr^2_t h \d x + 2  \int_{\Sigma_t}wg^{ti}\dr_t h\dr_i\dr_t h \d x + \int_{\Sigma_t}wg^{ij}\dr_t h \dr_i \dr_j h \d x = \int_{\Sigma_t}w \dr_t hf \d x .
\end{align*}
We rewrite the first integral and integrate by parts in the second and third integrals to obtain:
\begin{align}
 -\half& \frac{\d E}{\d t}   -  \int_{\Sigma_t} w \left( \dr_t g^{tt}\right) (\dr_t h)^2 \d x - \int_{\Sigma_t} \dr_i \left( w g^{0i} \right) (\dr_t h)^2 \d x\nonumber
\\&\qquad - \int_{\Sigma_t}\dr_i\left(wg^{ij} \right)\dr_t h \dr_j h  \d x + \half \int_{\Sigma_t}w \left( \dr_t g^{ij} \right) \dr_i h  \dr_j h \d x  = \int_{\Sigma_t}w \dr_t hf \d x ,\label{energie égalité}
\end{align}
where
\begin{align*}
E(t) = \int_{\Sigma_t} w \left( -g^{tt}(\dr_t h)^2  + g^{ij}\dr_i h \dr_j h \right).
\end{align*}
Taking $\e$ small enough in \eqref{estim g0} we obtain the existence of a universal constant $C'>0$ such that 
\begin{align}
\frac{1}{C'} E_\sigma(h) \leq E \leq C' E_\sigma(h). \label{comparaison E et Esigma}
\end{align}
Using $|\nabla w|\lesssim w$, we obtain $ | \dr ( wg) | \lesssim \left( 1 + \l \dr g \r_{L^\infty} \right)  w  $. Going back to \eqref{energie égalité}, this implies
\begin{align*}
\frac{\d E}{\d t}(t) & \lesssim \left( 1 + \l \dr g \r_{L^\infty} \right) E_\sigma(h)(t) + \l f \r_{L^2_\sigma}^2(t),
\end{align*}
where we also used the Cauchy-Schwarz inequality. Integrating this inequality from 0 to $t\in[0,T]$ we obtain
\begin{align*}
E(t) \leq E(0) +  \left( 1 + \l \dr g \r_{L^\infty} \right)  \int_0^t E_\sigma(h)(s)\d s + \int_0^t \l f \r_{L^2_\sigma}^2(s)\d s.
\end{align*}
Using \eqref{comparaison E et Esigma} concludes the proof.
\end{proof}

\section{Solving the background system}\label{section solving bg system}

In this section we solve the background system \eqref{eq F1}-\eqref{eq F22} defined in Section \ref{section bg system}. The following theorem summarizes the estimates and the support properties of the background perturbations.

\begin{thm}\label{theo bg system}
Given initial data as in Corollary \ref{coro ID} and if $\e$ is small enough, there exists a unique solution 
\begin{align}
\left( F^{(1)},F^{(2,1)},F^{(2,2)}\right),\label{solution bg system n uplet}
\end{align}
of the background system \eqref{eq F1}-\eqref{eq F22} on $[0,1]\times \R^3$. Moreover:
\begin{itemize}
\item[(i)] the tensors in \eqref{solution bg system n uplet} are supported in $\{ |x|\leq C_{\mathrm{supp}}R\}$,  
\item[(ii)] there exists $C_1=C_1(C_0)>0$ such that the following estimate holds
\begin{align*}
\l  F^{(1)} \r_{H^N}   +  \l  F^{(2,1)} \r_{H^{N-2}}  +  \l  F^{(2,2)} \r_{H^{N-2}}  & \leq C_1\e   ,
\end{align*}
\item[(iii)] $F^{(1)}$ satisfies \eqref{pola F1 theo}, \eqref{energie F1 theo} and \eqref{eq F1 theo}.
\end{itemize}
\end{thm}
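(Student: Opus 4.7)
The system \eqref{eq F1}--\eqref{eq F22} is triangular: the equation for $F^{(1)}$ is homogeneous, the RHS of the $F^{(2,1)}$ equation depends only on background quantities and $F^{(1)}$, and the RHS of the $F^{(2,2)}$ equation depends on the background, $F^{(1)}$, and $F^{(2,1)}$ (through the term $F^{(2,1)}_{L_0 L_0}$). I would therefore proceed in three steps. At each step, the equation $\Ll_0 T = \Phi$ is a first-order linear transport equation along the complete null geodesic vector field $L_0$, whose coefficients are controlled by the background estimates \eqref{estim g0} and \eqref{estim u0-z}; existence and uniqueness are immediate by integration along the integral curves of $L_0$. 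Standard transport energy estimates (commuted with $\nabla^r$, using the estimates on $\Box_{g_0}u_0$ to absorb the zeroth-order term and the control of $\Gamma(g_0)$ to handle commutators with covariant derivatives) yield the Sobolev bounds: $F^{(1)}\in H^N$ since the equation is homogeneous and the initial data is bounded by \eqref{estim F 1 21 22 initial}; then $F^{(2,1)}\in H^{N-2}$, because the worst terms in its RHS are $\tilde{W}^{(1,1)}$ and $\D\hat{P}^{(0,1)}[g^{(1)}]$, which schematically involve $\dr^2 g_0$, $\dr^2 F^{(1)}$ (see \eqref{Wtilde 11}) and $\dr(g_0^{-2}F^{(1)}\dr g_0)$, losing two derivatives; the same count gives $F^{(2,2)}\in H^{N-2}$.

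For the support property, I would use finite speed of propagation for $L_0$: the initial data for $F^{(1)}$, $F^{(2,1)}$, $F^{(2,2)}$ are supported in $\{|x|\leq R\}$ by item (ii) of Corollary \ref{coro ID}, the RHS of the $F^{(2,i)}$ equations all contain $F^{(1)}$ (or $F^{(2,1)}$) as a factor and hence vanish outside the corresponding trapping region, and $L_0$-integral curves starting in $\{|x|\leq R\}\cap \Sigma_0$ stay in $J_0^+(\{|x|\leq R\})\subset \{|x|\leq C_{\mathrm{supp}}R\}$ by the assumption on $F_0$ in Section \ref{section BG}.

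For the polarization condition \eqref{pola F1 theo}, I would follow the strategy already hinted at in Section \ref{section propagation strategy}: derive a transport equation for $\Pol(F^{(1)})$ by applying $\Pol$ to \eqref{eq F1}. Since $L_0$ is $g_0$-null and geodesic ($\D_{L_0}L_0=0$) and $\dr u_0 = -g_0(L_0,\cdot)$, a direct computation using the definition \eqref{def polarization tensor} shows that $[\D_{L_0},\Pol]$ acts as a bounded zeroth-order operator on symmetric $2$-tensors, producing a linear homogeneous transport equation
\[
\Ll_0 \Pol(F^{(1)})_\sigma = A_\sigma{}^\tau\, \Pol(F^{(1)})_\tau
\]
with $A$ built from $\D L_0$ and $\Box_{g_0}u_0$. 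Since $\Pol(F^{(1)})\restriction{\Sigma_0}=0$ by \eqref{ID pola F1}, uniqueness for this linear transport equation forces $\Pol(F^{(1)})=0$ on all of $\mathcal{M}$.

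For the energy condition \eqref{energie F1 theo}, observe first that by definition \eqref{def energie E},
\[
\tfrac{1}{8}\lvert F^{(1)}\rvert^2_{g_0}-\tfrac{1}{16}(\tr_{g_0}F^{(1)})^2 \;=\; -\tfrac{1}{4}\,\mathcal{E}(F^{(1)},F^{(1)}).
\]
The equation $\Ll_0 F^{(1)}=0$ rewrites as $\D_{L_0}F^{(1)} = \tfrac12(\Box_{g_0}u_0) F^{(1)}$, so contracting with $F^{(1)}$ and with the trace gives $L_0 \mathcal{E}(F^{(1)},F^{(1)}) = (\Box_{g_0}u_0)\, \mathcal{E}(F^{(1)},F^{(1)})$. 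On the other hand, \eqref{eq F0} rewrites as $L_0 F_0 = \tfrac{1}{2}(\Box_{g_0}u_0)F_0$, hence $L_0(-4F_0^2) = (\Box_{g_0}u_0)(-4F_0^2)$. Thus the scalar $Y\vcentcolon= \mathcal{E}(F^{(1)},F^{(1)})+4F_0^2$ satisfies the homogeneous transport equation $L_0 Y = (\Box_{g_0}u_0) Y$ with $Y\restriction{\Sigma_0}=0$ by \eqref{ID energie F1}, so $Y\equiv 0$, which is \eqref{energie F1 theo}. The main obstacle is bookkeeping the commutators $[\Pol,\D_{L_0}]$ and $[\Ll_0,\nabla^r]$ precisely, but both are routine given the background estimates and the fact that the smallness of $\e$ allows all constants to depend only on $C_0$.
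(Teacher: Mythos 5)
Your proposal is correct and follows essentially the same route as the paper: exploit the triangular structure to solve the three transport equations successively via characteristics and the commuted $L_0$ energy estimate (the paper's Lemma \ref{lem L0 h =f}), with the same derivative count giving $H^N$ for $F^{(1)}$ and $H^{N-2}$ for $F^{(2,i)}$, the same support argument, and the same propagation of $\Pol\left(F^{(1)}\right)$ by deriving a homogeneous linear transport system for its null components from \eqref{eq F1} and the geodesic/null structure of $L_0$ (this closing of the commutator in terms of $\Pol$ itself is exactly the content of the paper's Lemma \ref{lem propa pola}, which you assert rather than verify, but the mechanism you invoke is the right one). The only genuine variation is the energy step: you differentiate the scalar $\E\left(F^{(1)},F^{(1)}\right)$ covariantly, using $\D_{L_0}F^{(1)}=\tfrac12(\Box_{g_0}u_0)F^{(1)}$ and $\D g_0=0$ to get $L_0\E=(\Box_{g_0}u_0)\E$ without any appeal to polarization, whereas the paper first uses $\Pol\left(F^{(1)}\right)=0$ to reduce $\E$ to the two transverse frame components and computes componentwise; your version is slightly more economical and reaches the same transport equation for $\E+4F_0^2$, which vanishes by \eqref{ID energie F1} and uniqueness, giving \eqref{energie F1 theo}.
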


We want the background perturbations to solve the equations \eqref{eq F1}-\eqref{eq F21}-\eqref{eq F22}. The equations solved by their coefficients are of the form
\begin{align}
\left( L_0 + \eta\right) h = f,  \label{eq modèle transport}
\end{align}
for $f$ and $\eta$ given scalar functions on $[0,1]\times \R^3$ (recall \eqref{def Ll_0}). Moreover, thanks to Corollary \ref{coro ID} the initial data on $\Sigma_0$ for the background perturbations are compactly supported.  For the model equation \eqref{eq modèle transport} this allows us to consider $f$ supported in $[0,1]\times K$ for $K$ a compact of $\R^3$ and $h\restriction{\Sigma_0}$ compactly supported. Since $L_0$ is the spacetime gradient of the background phase $u_0$ and thanks to \eqref{eq u0} and \eqref{estim u0-z} we can define characteristics for the equation \eqref{eq modèle transport} on $[0,1]\times \R^3$. The existence of a solution $h$ to this equation can thus be proved using the method of characteristics, we don't give the details. 

\saut
A solution $h$ of \eqref{eq modèle transport} can be proved to be as regular as $f$ using its expression given by the characteristics method. However, since the proof of Theorem \ref{theo main} is based on Sobolev spaces, we have derived an estimate for the transport operator $L_0$ in Lemma \ref{lem L0 h =f}. As a byproduct, this standard estimate proves the uniqueness of the solution to \eqref{eq modèle transport} and therefore of the solutions to \eqref{eq F1}-\eqref{eq F21}-\eqref{eq F22}. These equations have a triangular structure:
\begin{itemize}
\item \eqref{eq F1} only depends on the background quantities through the coefficients of the operator $\Ll_0$,
\item the RHS of \eqref{eq F21} depends in addition on $\dr^{\leq 2}F^{(1)}$,
\item the RHS of \eqref{eq F22} depends on the background quantities, $\dr^{\leq 2}F^{(1)}$ and $F^{(2,1)}$. 
\end{itemize}
Moreover, thanks to the support properties stated in Corollary \ref{coro ID}, we obtain that the background perturbations are supported in $\{|x|\leq C_{\mathrm{supp}}R\}$. The estimates stated in Theorem \ref{theo bg system} follow directly from Lemma \ref{lem L0 h =f}, the estimates of the RHS of \eqref{eq F21}-\eqref{eq F22} we prove in the following lemma and the initial regularity stated in Corollary \ref{coro ID}, see \eqref{estim F 1 21 22 initial}.

\begin{lem}
We have
\begin{align}
\l \text{RHS of \eqref{eq F21}} \r_{H^{N-2}} & \lesssim C(C_0) \l F^{(1)} \r_{H^N}  ,\label{estim eq F21}
\\ \l \text{RHS of \eqref{eq F22}} \r_{H^{N-2}} & \lesssim C(C_0) \left(  \l F^{(1)} \r_{H^N}  + \l F^{(2,1)} \r_{H^{N-2}}  \right)   .    \label{estim eq F22}
\end{align}
\end{lem}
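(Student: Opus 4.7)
The plan is to examine each term on the right-hand sides of \eqref{eq F21}--\eqref{eq F22} and apply the Moser-type multiplicative estimate of Proposition \ref{prop WSS chap 2}(i) combined with the background bounds \eqref{estim g0}--\eqref{estim u0-z}. The key simplification I would exploit is that every factor carrying $F^{(1)}$ or $F^{(2,1)}$ is compactly supported in $\{|x|\leq C_{\mathrm{supp}}R\}$ (by the transport structure of the equations and the compact support of the initial data), so weights at spatial infinity play no role: it suffices to place the background factors $g_0$, $g_0^{-1}$ and derivatives of $u_0$ in $L^\infty$ on this compact set. Sobolev embedding from Proposition \ref{prop WSS chap 2}(ii), applied to \eqref{estim g0}--\eqref{estim u0-z}, delivers this once $N\geq 10$, which is exactly the role of that regularity threshold (cf.\ Remark \ref{remark N geq 10}).

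For \eqref{estim eq F21} I would handle the three contributions separately. The term $\tilde{W}^{(1,1)}$, whose schematic form is recalled in \eqref{Wtilde 11}, contains as worst factor $g_0^{-2}\partial^2 F^{(1)}$, placing it in $H^{N-2}$ with norm $\lesssim C(C_0)\l F^{(1)}\r_{H^N}$. The semi-linear term $P^{(1,1)}$ in the form \eqref{P1k} is a product of at most first derivatives of $g_0$ and of $g^{(1)}$ with $L^\infty$ coefficients coming from $g_0^{-1}$ and $(g^{-1})^{(1)}$; putting the $F^{(1)}$-carrying factor in $H^{N-1}$, the full product lies in $H^{N-2}$ with the same bound. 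Finally $\D\hat{P}^{(0,1)}[g^{(1)}]$, with $\hat{P}^{(0,1)}$ recalled in \eqref{Phat0}, produces at worst $F^{(1)}\partial^2 g_0$ and $\partial F^{(1)}\,\partial g_0$, both in $H^{N-1}\subset H^{N-2}$. Summing yields \eqref{estim eq F21}.

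For \eqref{estim eq F22} the contributions of $\tilde{W}^{(1,2)}$ and $P^{(1,2)}$ are treated exactly as their $(1,1)$ counterparts. The genuinely new term $F^{(2,1)}_{L_0L_0}F^{(1)}$ is bounded in $H^{N-2}$ by $C(C_0)\l F^{(2,1)}\r_{H^{N-2}}\l F^{(1)}\r_{H^N}$ via the product law, which by smallness of $\varepsilon$ (already proved for $F^{(1)}$ before turning to \eqref{eq F22}) is absorbed into $C(C_0)(\l F^{(1)}\r_{H^N}+\l F^{(2,1)}\r_{H^{N-2}})$. The term $\D\hat{P}^{(0,2)}[g^{(1)}]$, given by \eqref{Phat0 2 g1}, is the derivative of a quadratic expression in $F^{(1)}$ with $\partial u_0$ coefficients, bounded by $C(C_0)\l F^{(1)}\r_{H^N}^2$ and again absorbed by smallness; the trilinear term $L_0^\sigma\hat{P}^{(0,1)}_\sigma[g^{(1)}]F^{(1)}$ is treated identically.

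The whole argument is a careful bookkeeping of products in Sobolev spaces and I do not anticipate any analytical obstacle. The only point I would double-check is that no term appearing in the Ricci expansion at the $\lambda^0$ or $\lambda^1$ level carries more than two derivatives of $F^{(1)}$ (or one derivative of $F^{(2,i)}$); this is precisely what the schematic expressions \eqref{Wtilde 11}, \eqref{P1k}, \eqref{Phat0} and \eqref{Phat0 2 g1} guarantee, and it is what fixes the final regularity claim $F^{(1)}\in H^N$, $F^{(2,i)}\in H^{N-2}$ stated in Theorem \ref{theo bg system}.
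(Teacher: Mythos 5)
Your proposal is correct and follows essentially the same route as the paper, which simply invokes the schematic expressions \eqref{Wtilde 11}, \eqref{P1k}, \eqref{Phat0}, \eqref{Phat0 2 g1} together with the background regularity of Section \ref{section BG}; you merely spell out the Sobolev product-law bookkeeping (and the absorption of the terms quadratic in $F^{(1)}$ via the already established smallness of $\l F^{(1)}\r_{H^N}$) that the paper leaves implicit.
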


\begin{proof}
The estimates \eqref{estim eq F21} and \eqref{estim eq F22} follow from \eqref{Wtilde 11}, \eqref{Wtilde 12}, \eqref{P1k} and \eqref{Phat0} and the background regularity stated in Section \ref{section BG}.
\end{proof}

\section{Solving the reduced system}\label{section solving reduced system}

In this section we solve the reduced system \eqref{eq F}-\eqref{eq h}. This is done by a bootstrap argument, i.e by showing that we can improve \textit{a priori} estimates using the equations. 

\subsection{Bootstrap assumptions}

As explained in Section \ref{section bad coupling strat}, the reduced system loses one derivative. The addition of the Fourier cut-off $\Pi_\leq$ in the equation for $\F$ implies that two derivatives of $\F$ are actually at the level of one derivative of $\h$, which is consistent with what we would obtain from an energy estimate on the wave equation for $\h$. Indeed, by commuting \eqref{eq F} with $\nabla^2$ (the case of time derivatives is similar) we obtain broadly from \eqref{P low}
\begin{align*}
\l  \nabla ^2 \F  \r_{L^2} \lesssim \int_0^t \l \nabla^2 \Pi_\leq \left( \h \right) \r_{L^2} \lesssim \frac{1}{\lambda} \int_0^t \l \nabla \h \r_{L^2}.
\end{align*}
Putting this in the energy estimate for $\h$ would give $\l \nabla \h \r_{L^2}\lesssim \frac{1}{\lambda}T$ with $T$ the time of existence. Compensating the largeness of $\frac{1}{\lambda}$ by the smallness of $T$ would prove well-posedness for the system \eqref{eq F}-\eqref{eq h} on $[0,T_\lambda]\times \R^3$ for $T_\lambda\to 0$ when $\lambda$ tends to 0.  This is far from what we want since it doesn't allow us to consider the high-frequency limit $\lambda\to 0$ of $g_\lambda$. For this we need a time of existence independent of $\lambda$. In other words, the addition of $\Pi_\leq$ removes the loss of derivative but does not remove the "loss of $\lambda$". This discussion explains our choice of bootstrap assumptions on $\F$ and $\h$:
\begin{itemize}
\item $\F$ is supported in $J_0^+(\{ |x| \leq R \})$ and satisfies
\begin{align}
\l \F \r_{L^2} + \max_{r\in\llbracket 1,6\rrbracket}\lambda^{r-1} \l \nabla^r \F \r_{L^2} + \max_{r\in\llbracket 0,5\rrbracket}\lambda^{r} \l \dr_t \nabla^r \F \r_{L^2} + \max_{r\in\llbracket 0,4\rrbracket}\lambda^{r+1} \l \dr_t^2 \nabla^r \F \r_{L^2} & \leq A_1 \e, \label{BA F k}\tag{\textbf{BA1}}
\\ \max_{r\in\llbracket 0,4\rrbracket} \lambda^r \l \nabla^r \tBox_{ g } \F \r_{L^2} & \leq A_2 \e, \label{BA box F k}\tag{\textbf{BA2}}
\end{align}
\item $\h$ satisfies
\begin{align}
\max_{r\in\llbracket 0,4\rrbracket}\lambda^r\left( \l \dr_t \nabla^r \h \r_{L^2_{\delta+1+r}} + \l \nabla \nabla^r \h \r_{L^2_{\delta+1+r}}  \right) + \max_{r\in\llbracket 0,3\rrbracket} \lambda^{r+1} \l \dr_t^2 \nabla^r \h \r_{L^2_{\delta+2+r}} & \leq A_3 \e   \label{BA h k}.\tag{\textbf{BA3}}
\end{align}
\end{itemize}
These bootstrap assumptions solve the "loss of $\lambda$" since \eqref{BA box F k} implies that $\tBox_g\F$ is bounded by $\lambda^0$ in $L^2$ while \eqref{BA F k} implies boundedness by $\frac{1}{\lambda}$. Recovering \eqref{BA box F k} is the main challenge we face. Note that the support property of $\F$ just follows from the transport equation it satisfies \eqref{eq F}, the fact that its RHS is supported in $J_0^+(\{ |x| \leq R \})$ and that $\F\restriction{\Sigma_0}=0$.

\saut 
As explained above, a local in time solution to \eqref{eq F}-\eqref{eq h} exists \textit{a priori} and by taking the constants $A_i$ large enough (compared to $C_{\mathrm{in}}$ of Corollary \ref{coro ID}) we can assume that the set of times $t$ such that \eqref{BA F k}-\eqref{BA box F k}-\eqref{BA h k} are satisfied on $[0,t]$ is non-empty. We define $T$ to be the supremum of these times. We have $0<T\leq 1$ but \textit{a priori} $T$ depends on $\lambda$. In the next sections, we prove that actually \eqref{BA F k}-\eqref{BA box F k}-\eqref{BA h k} hold with better constants on $[0,T]$. This will contradict the definition of $T$ and prove that a solution to \eqref{eq F}-\eqref{eq h} exists and satisfy \eqref{BA F k}-\eqref{BA box F k}-\eqref{BA h k} on $[0,1]$, that is the interval of existence of $g_0$, $F^{(1)}$, $F^{(2,1)}$ and $F^{(2,2)}$.

\saut
In the sequel, we will always precise the dependence on $\lambda$ of the estimates. For $X$ a numerical quantity, the notation $C(X)$ will denote any function of $X$. The symbol $\lesssim$ will denote $\leq C$ for $C$ depending only on $\delta$, $R$ and other irrelevant numerical constants.  We already assume that $A_1 \ll A_2 \ll A_3$ and that $A_i \gg C(C_0) $ for all the $C(C_0)$ that will appear in the proofs.

\begin{remark}
Since $\lambda$ can be chosen as small as we want, the bootstrap assumptions \eqref{BA F k} and \eqref{BA h k} imply that $\l X \r_{H^s} \lesssim \l \nabla^s X \r_{L^2}$ for $X\in\{ \F,\h\}$ (for $\h$ this holds with the appropriate weights at spacelike infinity), i.e that the main term in each $H^s$ norm is the norm of the top derivatives. Therefore in the sequel, when using (weighted) Sobolev embeddings we will only write down the norm of the top derivatives.
\end{remark}

\subsection{First consequences}

We give here the first consequences of the bootstrap assumptions \eqref{BA F k}-\eqref{BA box F k}-\eqref{BA h k}.  Note first that the inequality
\begin{align*}
\l \h \r_{L^2_{\delta+1}}(t) & \lesssim \l \h \r_{L^2_{\delta+1}}(0) + \int_0^t \l \dr_t \h \r_{L^2_{\delta+1}}\d s
\end{align*}
together with \eqref{BA h k} implies $\l \h \r_{L^2_{\delta+1}} \lesssim C_{\mathrm{in}}\e + T A_3\e$. Since $T\leq 1$ and $A_3$ is assumed to be larger than $C_{\mathrm{in}}$ we obtain
\begin{align}
\l \h \r_{L^2_{\delta+1}} \lesssim A_3 \e . \label{estim h L2}
\end{align}

\saut
We now deduce from the bootstrap assumptions some estimates for the metric $g$.  We introduce the following split of the metric $g$:
\begin{align}
g = g_{\text{BG}} + \tilde{g},\label{splitting g n}
\end{align}
where 
\begin{align}
g_{\text{BG}} & = g_0 + \lambda g^{(1)} + \lambda^2 \left(  F^{(2,1)} \sin\left( \frac{u_0}{\lambda}\right) + F^{(2,2)} \cos\left(\frac{2u_0}{\lambda}\right)   \right) + \lambda^3 g^{(3,\mathrm{BG})},\label{def gBG}
\\ \tilde{g} & = \lambda^2 \left(  \F\sin\left( \frac{u_0}{\lambda}\right)  + \h + \lambda g^{(3)}\left( \F \right) \right).\label{def gtilde}
\end{align}

\begin{remark}
Note that the term $g^{(3)}\left( \F \right)$ in $\tilde{g}$ will be omitted in the sequel (with the exception of Lemma \ref{lem B}) since it is basically a linear term in $\F$ (see Lemma \ref{lem g3}) with an extra $\lambda$, and thus behaves at least better than $\F\sin\left( \frac{u_0}{\lambda}\right)$.
\end{remark}

On the one hand, the metric $g_{\text{BG}}$ contains the background metric and all the background perturbations. Thanks to Theorem \ref{theo bg system} it is very regular but because of $g^{(1)}$ its behaviour with respect to $\lambda$ is bad. On the other hand,  the metric $\tilde{g}$ is less regular than $g_{\text{BG}}$ but satisfies better estimates in terms of $\lambda$. The following lemma gathers the required properties of $g$, $g_\BG$ and $\tilde{g}$.

\begin{lem}\label{lem g BG gtilde}
The following estimates hold
\begin{align}
\max_{r\in\llbracket 0,3\rrbracket}\lambda^{r-1} \l \nabla^r (g- g_0) \r_{L^\infty} & \lesssim A_3\e , \label{d g-g0 Linfini}
\\ \l \nabla^4 (g_{\BG}- g_0) \r_{L^\infty} & \lesssim C(C_0)\frac{\e}{\lambda^3}. \label{d4 gBG-g0 Linfini}
\end{align}
\end{lem}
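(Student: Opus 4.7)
The plan is to use the splitting $g - g_0 = (g_{\BG} - g_0) + \tilde{g}$ from \eqref{def gBG}--\eqref{def gtilde} and bound each piece separately in $L^\infty$ up to the stated derivative order. Estimate \eqref{d g-g0 Linfini} then follows by summing the two $L^\infty$ bounds, while \eqref{d4 gBG-g0 Linfini} involves only $g_{\BG}$, since $\tilde{g}$ cannot be controlled in $C^4$ by the bootstrap regularity available.

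For $g_{\BG} - g_0 = \lambda g^{(1)} + \lambda^2 \left( F^{(2,1)}\sin(u_0/\lambda) + F^{(2,2)} \cos(2u_0/\lambda) \right) + \lambda^3 g^{(3,\BG)}$, I apply Leibniz to each product $\mathrm{T}(u_0/\lambda)\cdot S$. Every derivative falling on $\mathrm{T}(u_0/\lambda)$ produces a factor $\lambda^{-1}$ times bounded derivatives of $u_0$, using \eqref{estim u0-z} and the background regularity; every derivative landing on the amplitude $S$ is controlled in $L^\infty$ by Sobolev embedding combined with Theorem \ref{theo bg system}, the margin $N \geq 10$ amply accommodating $r \leq 4$. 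Collecting powers of $\lambda$, the dominant contribution always comes from $\lambda g^{(1)}$ with all derivatives falling on the phase, giving
\[
\l \nabla^r (\lambda g^{(1)}) \r_{L^\infty} \lesssim \lambda^{1-r}\l F^{(1)}\r_{C^r} \lesssim \lambda^{1-r} C(C_0)\e,
\]
and analogously $\l \nabla^r (\lambda^2 F^{(2,i)}\,\cdots) \r_{L^\infty} \lesssim \lambda^{2-r} C(C_0)\e$ and $\l \nabla^r (\lambda^3 g^{(3,\BG)})\r_{L^\infty}\lesssim \lambda^{3-r}C(C_0)\e$. This directly yields \eqref{d4 gBG-g0 Linfini} at $r = 4$ and the $g_{\BG}$-part of \eqref{d g-g0 Linfini} for $r\leq 3$.

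For $\tilde{g} = \lambda^2\bigl(\F\sin(u_0/\lambda) + \h + \lambda g^{(3)}(\F)\bigr)$ the control comes from the bootstrap assumptions. Using Proposition \ref{prop WSS chap 2} (with $\delta > -3/2$) to embed $H^2_{\delta+r}\hookrightarrow L^\infty$, \eqref{BA h k} gives
\[
\l \nabla^r \h \r_{L^\infty} \lesssim \l \nabla^r \h \r_{H^2_{\delta+r}} \lesssim \frac{A_3 \e}{\lambda^{r+1}}
\]
for $r\leq 3$, since the bootstrap controls $\nabla^{r+1}\h$ in $L^2_{\delta+r+1}$ only up to $r=4$. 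Multiplying by $\lambda^2$ produces the desired $\lambda^{1-r}A_3\e$ bound for the $\h$-piece. The $\F$-piece is handled the same way using the unweighted Sobolev embedding (since $\F$ is compactly supported), with top-order term arising from all derivatives falling on the phase factor and controlled by $\l \F\r_{L^\infty}\lesssim \l \F\r_{H^2}\lesssim A_1\e/\lambda$ from \eqref{BA F k}; this yields $\l \nabla^r(\lambda^2\F\sin(u_0/\lambda))\r_{L^\infty}\lesssim \lambda^{1-r} A_1\e$ for $r\leq 4$. The $\lambda^3 g^{(3)}(\F)$-piece is strictly subleading by Lemma \ref{lem g3} (it carries an extra $\lambda$ factor). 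Summing with the $g_{\BG}$-bound from the previous paragraph yields \eqref{d g-g0 Linfini}. The main (and really only) obstruction to extending this to $r = 4$ is the $\h$-piece: the Sobolev embedding would require $\nabla^6 \h \in L^2$, which is beyond the range of \eqref{BA h k}, which is exactly why the statement at $r=4$ is restricted to $g_{\BG}-g_0$.
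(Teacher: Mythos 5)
Your proof is correct and follows essentially the same route as the paper: the same splitting $g-g_0=(g_{\BG}-g_0)+\tilde g$, background regularity (Theorem \ref{theo bg system} and Lemma \ref{lem g3}) for the $g_{\BG}$ part including the $r=4$ case, and the weighted Sobolev embedding of Proposition \ref{prop WSS chap 2} together with \eqref{BA h k} and \eqref{BA F k} for the $\h$ and $\F$ pieces of $\tilde g$. Your closing remark correctly identifies the $\h$-piece as the reason the $r=4$ bound is stated only for $g_{\BG}-g_0$.
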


\begin{proof}
Let $r\leq 3$. We have $g-g_0= g_{\mathrm{BG}}-g_0 + \tilde{g}$. For $g_{\mathrm{BG}}-g_0$ we simply use \eqref{def gBG}, the estimates of Theorem \ref{theo bg system} and Lemma \ref{lem g3} to obtain
\begin{align*}
\l \nabla^r(g_{\mathrm{BG}}-g_0) \r_{L^\infty} \lesssim C(C_0)\frac{\e}{\lambda^{r-1}}.
\end{align*}
The proof of \eqref{d4 gBG-g0 Linfini} is similar. For $\tilde{g}$ we use the weighted Sobolev embedding $H^2_{\delta+r}\xhookrightarrow{}L^\infty$ of Proposition \ref{prop WSS chap 2}:
\begin{align*}
\l \nabla^r \tilde{g} \r_{L^\infty} & \lesssim \lambda^2  \l \nabla^r \h \r_{L^\infty}  + \sum_{r_1+r_2=r}\frac{\l \nabla^{r_1}\F \r_{L^\infty}}{\lambda^{r_2-2}}
\\& \lesssim \lambda^2  \l \nabla^{r+2} \h \r_{L^2_{\delta+r+2}}  + \sum_{r_1+r_2=r}\frac{\l \nabla^{r_1+2}\F \r_{L^2}}{\lambda^{r_2-2}}
\\&\lesssim \lambda^2 \frac{A_3\e}{\lambda^{r+1}}  + \sum_{r_1+r_2=r}\frac{1}{\lambda^{r_2-2}} \times \frac{A_1\e}{\lambda^{r_1+1}}
\\&\lesssim \frac{A_3\e}{\lambda^{r-1}},
\end{align*}
where we used \eqref{BA h k},  \eqref{BA F k} and $r\leq 3$.
\end{proof}

Also note that the estimates in Lemma \ref{lem g BG gtilde} also hold for the inverse of $g$, which we can also decompose as in \eqref{splitting g n}. In what follows we won't make any difference between a coefficient of $g$ or a coefficient of its inverse.

\saut
We also deduce from the bootstrap assumptions an estimate for $\mathcal{R}(g)$, i.e the term in \eqref{eq h} containing only non-problematic terms.
\begin{lem}
The following estimate holds 
\begin{align}
\max_{r\in\llbracket 0,4\rrbracket} \lambda^r  \l \nabla^r \mathcal{R}(g)\r_{L^2_{\delta+1+r}} \lesssim  C(A_i)\e^2.\label{d R g L2}
\end{align}
\end{lem}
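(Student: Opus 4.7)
The plan is to decompose $\mathcal{R}(g)$ according to its definition \eqref{def R(g)} into three pieces---the wave remainder $\tilde{W}^{(\geq 2)}$, the gauge remainder $\left(\mathring{H}^\rho\dr_\rho g_{\alpha\beta} + g_{\rho(\alpha}\dr_{\beta)}\mathring{H}^\rho\right)^{(\geq 2)}$, and the semi-linear remainder $P^{(\geq 2)}_{\alpha\beta}$---and estimate each separately. For every piece I will use the splitting $g = g_{\BG} + \tilde{g}$ from \eqref{splitting g n}, the $L^\infty$ control on $g-g_0$ and its derivatives provided by Lemma \ref{lem g BG gtilde}, the background regularity from Section \ref{section BG} and Theorem \ref{theo bg system}, the bootstrap assumptions \eqref{BA F k} and \eqref{BA h k} (together with the $L^2$ bound \eqref{estim h L2}), and the structural properties of $g^{(3)}$ from Lemma \ref{lem g3}. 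The weighted Sobolev product estimates from Proposition \ref{prop WSS chap 2} are used throughout to distribute derivatives.

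For the wave piece, I rely on the schematic formula \eqref{Wtilde geq 2}. Each monomial there pairs a factor drawn from $(g^{-1})^{(\geq k)}$ (which by \eqref{inverse g}--\eqref{inverse g2} contains at least one small background perturbation, or $\F$ or $\h$) with a factor involving up to second derivatives of the background quantities $g_0, u_0, F^{(1)}, F^{(2,i)}$. To bound $\nabla^r$ of such a product in $L^2_{\delta+1+r}$ for $r\leq 4$, I apply Leibniz's rule and place the factor carrying the higher number of derivatives in $L^2$ with appropriate weight while the other is bounded in $L^\infty$ via Lemma \ref{lem g BG gtilde} and the weighted Sobolev embedding $H^2_{\delta+r}\hookrightarrow L^\infty$. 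In each term one factor of $\e$ comes from a background perturbation and a second factor of $\e$ from either another background quantity or from a bootstrap-controlled quantity, producing the $\e^2$ smallness. The explicit powers of $\lambda$ hidden inside $(g^{-1})^{(\geq k)}$ exactly compensate for the $\lambda^{-1}$ cost of each derivative hitting the oscillating phase $u_0/\lambda$ in $g^{(1)}$, $g^{(2)}$, $\F$ or $g^{(3)}$, yielding the required bound $\lambda^{-r}C(A_i)\e^2$.

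The gauge piece is handled analogously: using \eqref{Hrho formel} together with \eqref{def H rond}, Lemma \ref{lem H2}, Lemma \ref{lem H3}, and the schematic expressions \eqref{Htilde 2} and \eqref{H geq 3} for $\tilde{H}^{(2)}$ and $H^{(\geq 3)}$, each contribution to the order-$\geq 2$ part of $\mathring{H}^\rho\dr_\rho g_{\alpha\beta} + g_{\rho(\alpha}\dr_{\beta)}\mathring{H}^\rho$ is a product of factors that can again be partitioned into one background perturbation and one other small factor, the only subtlety being the first-order derivatives of $g^{(3)}$, for which I invoke \eqref{g3T(F)} to rewrite linear-in-$\F$ contributions with an extra $\lambda$. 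The same weighted-$L^2$/$L^\infty$ distribution argument yields the desired estimate. Finally, the semi-linear piece $P^{(\geq 2)}_{\alpha\beta}$ is schematically of the form $g^{-2}\dr g\, \dr g$ restricted to the $\lambda^{\geq 2}$ portion of its high-frequency expansion; after substituting the decomposition of $g$ and of $g^{-1}$, each contribution again carries at least two small factors and is treated by the same product-estimate scheme.

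The main obstacle is purely combinatorial: one must keep track of the powers of $\lambda$ coming from (a) the explicit prefactors in \eqref{inverse g}--\eqref{inverse g2}, (b) the derivatives of oscillating phases, and (c) the bootstrap assumptions \eqref{BA F k}--\eqref{BA h k}, and verify that in every monomial they combine to give at most $\lambda^{-r}$ after $r$ derivatives. Once this bookkeeping is done, no further structural input is required beyond Lemma \ref{lem g BG gtilde}, Proposition \ref{prop WSS chap 2}, Theorem \ref{theo bg system} and the bootstrap assumptions, and the estimate \eqref{d R g L2} follows.
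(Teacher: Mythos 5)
Your proposal follows essentially the same route as the paper: the decisive facts are that $\mathcal{R}(g)$ contains no second derivatives of $\F$, $g^{(3)}$ or $\h$, so every term can be estimated schematically by combining the background regularity (Theorem \ref{theo bg system}), the bootstrap assumptions \eqref{BA F k}--\eqref{BA h k}, the splitting \eqref{splitting g n} with Lemma \ref{lem g BG gtilde}, and the weighted product law of Proposition \ref{prop WSS chap 2}, together with exactly the $\lambda$-bookkeeping you describe. The only caution concerns your claim that every monomial carries two factors of $\e$: purely background terms such as $g^{-1}\dr^2 F^{(2,i)}$ in \eqref{Wtilde geq 2} are only linear in a small quantity (the paper's own proof treats this category equally loosely), which is harmless for the way \eqref{d R g L2} is used later.
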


\begin{proof}
In terms of expansions in powers of $\lambda$, the definition of $\mathcal{R}(g)$ (see \eqref{def R(g)}) ensures the correct behaviour, that is $\mathcal{R}(g)=\GO{\lambda^0}$ but since it contains oscillating terms one loses a power of $\lambda$ for each derivatives. This explains why
\begin{align}
 \l \nabla^r \mathcal{R}(g)\r_{L^2_{\delta+1+r}} \lesssim \frac{1}{ \lambda^r}.
\end{align}
In terms of its dependency on the metric coefficients, we don't need to be very precise about the exact expression of $\mathcal{R}(g)$, its only important property is that it does not contain second order derivatives of either $\F$, $g^{(3)}$ or $\h$. Indeed:
\begin{itemize}
\item the quadratic non-linearity term $P^{(\geq 2)}$ is of the form $g^{-2}\dr g \dr g$,
\item the $\dr^2\F$, $\tilde{\dr}^2g^{(3)}$ and $\dr^2\h$ coming from the wave part of the Ricci tensor are already in \eqref{eq h} and are thus absent from $\tilde{W}^{(\geq 2)}$,
\item the $\dr\F$, $\tilde{\dr} g^{(3)}$ and $\dr\h$ coming from the $H^\rho$ term in the Ricci tensor are already put into $\Upsilon^\rho$ and therefore absent from $\mathring{H}^\rho$, which implies that the $\dr^2\F$, $\tilde{\dr}^2g^{(3)}$ and $\dr^2\h$ coming from the gauge part are absent from $\mathcal{R}(g)$.
\end{itemize}
Therefore, the terms in $\mathcal{R}(g)$ can be put into two categories: first the purely background terms, i.e depends only on $g_0$, $F^{(1)}$ or $F^{(2,i)}$ and their first and second derivatives, and second the non-background terms, depending quadratically on zeroth or first order derivatives of $\F$ or $\h$ with background coefficients.

\saut
We can bound all of terms from the first category in $L^\infty$ thanks to the background regularity, Theorem \ref{theo bg system} and the fact that $N\geq 10$ (see Remark \ref{remark N geq 10}). Since these terms are all background, the constant appearing in the estimate is of the form $C(C_0)$.

\saut
The terms from the second category can all be estimated using the bootstrap assumptions \eqref{BA F k} and \eqref{BA h k} and by bounding the background coefficients in $L^\infty$. We only give the details for the worse term in terms of $\lambda$ behaviour and spatial support, i.e $\lambda^2g_0^{-2}\left( \dr\h\right)^2$:
\begin{align*}
\l \lambda^2 \nabla^r\left( g_0^{-2}\left( \dr\h\right)^2 \right) \r_{L^2_{\delta+r+1}} & \lesssim C(C_0) \lambda^2  \sum_{r_1+r_2=r} \l \dr \nabla^{r_1}  \h \dr \nabla^{r_2} \h  \r_{L^2_{\delta+r+1}}  .
\end{align*}
If $r_1\leq 2$, we put $\dr\nabla^{r_1}\h$ in $L^\infty$, i.e we use the product law of weighted Sobolev spaces from Proposition \ref{prop WSS chap 2}:
\begin{align*}
\l \dr \nabla^{r_1}  \h \dr \nabla^{r_2} \h  \r_{L^2_{\delta+r+1}}  & \lesssim \l \dr\nabla^{r_1}\h \r_{H^2_{\delta+r_1+1}} \l  \dr\nabla^{r_2}\h  \r_{L^2_{\delta+r_2+1}} 
\\& \lesssim  \frac{A_3\e}{\lambda^{r_1+2} } \times \frac{A_3\e}{\lambda^{r_2} }.
\end{align*}
If $r_1\geq 3$, then $r_2\leq 2$ (since $r\leq 4$) and we proceed similarly. Finally we obtain
\begin{align*}
\l \lambda^2 \nabla^r\left( g_0^{-2}\left( \dr\h\right)^2 \right) \r_{L^2_{\delta+r+1}} & \lesssim \frac{C(A_i)\e^2}{\lambda^r}.
\end{align*}

\end{proof}

\subsection{Estimates for $\F$}

We start by studying $\F$, which solves the transport equation \eqref{eq F}.  For clarity, we simply write $\h$ instead of $\h_{L_0L_0}$ appearing in this equation.

\begin{prop}\label{prop BA F n+1}
The following estimates hold
\begin{align}
\l \F \r_{L^2} & \lesssim  A_3\e^2 , \label{BA F n+1 a}
\\  \max_{r\in\llbracket 1,6\rrbracket}\lambda^{r-1} \l \nabla^r \F \r_{L^2} & \lesssim \left(  \lambda A_1 \e +  A_3\e^2 \right) ,\label{BA F n+1 b}
\\  \max_{r\in\llbracket 0,5\rrbracket}\lambda^{r} \l \dr_t \nabla^r \F \r_{L^2} + \max_{r\in\llbracket 0,4\rrbracket}\lambda^{r+1} \l \dr_t^2 \nabla^r \F \r_{L^2} & \lesssim \left(  \lambda A_1 \e +  A_3\e^2 \right) .\label{BA F n+1 c}
\end{align}
\end{prop}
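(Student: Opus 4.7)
The plan is to view \eqref{eq F} as a transport equation for $\F$ along the background null geodesic flow and apply the energy estimate of Lemma \ref{lem L0 h =f} after commuting spatial derivatives through $\Ll_0$, then solve algebraically for time derivatives. Rewrite the equation as
\[
L_0 \F_{\alpha\beta} = \half (\Box_{g_0}u_0) \F_{\alpha\beta} - L_0^\rho \Gamma^\mu_{\rho(\alpha}(g_0) \F_{\mu\beta)} + \half \Pi_\leq(\h_{L_0L_0}) F^{(1)}_{\alpha\beta}.
\]
The first two terms on the right are linear in $\F$ with smooth background coefficients and are absorbed by Gronwall's inequality applied to the transport estimate \eqref{estim transport L2}, leaving only the source $\half \Pi_\leq(\h_{L_0L_0}) F^{(1)}$. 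Since $\F\restriction{\Sigma_0} = 0$ by \eqref{ID F=0}, all the required estimates reduce to bounding the time integral of (derivatives of) this source plus commutators arising when $\nabla^r$ is pushed through $L_0$.

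For \eqref{BA F n+1 a}, note that $F^{(1)}$ is compactly supported by Theorem \ref{theo bg system}, so $\|\Pi_\leq(\h_{L_0L_0})F^{(1)}\|_{L^2}$ is controlled by $\|F^{(1)}\|_{L^\infty}\|\Pi_\leq(\h_{L_0L_0})\|_{L^2(\supp F^{(1)})}$, which by the $L^2$-boundedness of $\Pi_\leq$ and \eqref{estim h L2} is $\lesssim C(C_0) A_3 \e^2$. Gronwall then yields \eqref{BA F n+1 a}.

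For \eqref{BA F n+1 b}, commute $\nabla^r$ with $L_0$ for $r \in \llbracket 1,6 \rrbracket$. The source term is estimated via the refined Bernstein-type inequality $\|\nabla^r \Pi_\leq f\|_{L^2} \lesssim \lambda^{-(r-1)}\|\nabla f\|_{L^2}$ from Appendix \ref{section proof lem commute bis}, applied with $f = \h_{L_0L_0}$: combined with $L^\infty$ bounds on $\nabla^k F^{(1)}$ from Theorem \ref{theo bg system} and the bootstrap \eqref{BA h k}, this yields $\|\nabla^r(\Pi_\leq(\h_{L_0L_0})F^{(1)})\|_{L^2} \lesssim \lambda^{-(r-1)}A_3\e^2$, so that after integration $\lambda^{r-1}\|\nabla^r\F\|_{L^2}(t) \lesssim A_3\e^2$. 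The commutator $[\nabla^r, L_0]\F$ together with commutators with the lower-order background coefficients produce terms of schematic form $\nabla^{r-s}(\mathrm{bg})\cdot\nabla^s\F$ for $0 \leq s\leq r$: the top piece $s=r$ is absorbed by Gronwall, while for $1 \leq s \leq r-1$ one invokes the bootstrap \eqref{BA F k} to bound $\|\nabla^s\F\|_{L^2}$ by $A_1\e\lambda^{-(s-1)}$ (and for $s=0$ the already proven \eqref{BA F n+1 a}), contributions which after integration and multiplication by $\lambda^{r-1}$ give at worst $\lambda^{r-s} A_1 \e \geq \lambda A_1 \e$. Summing proves \eqref{BA F n+1 b}.

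Finally, \eqref{BA F n+1 c} follows by solving \eqref{eq F} algebraically for $\dr_t\F$: since $|L_0^t| \geq c > 0$ by \eqref{estim L0}, one may write $\dr_t\F$ as a linear combination with background coefficients of $\dr_i\F$, $\F$, and $\Pi_\leq(\h_{L_0L_0})F^{(1)}$. Applying $\nabla^r$ and inserting \eqref{BA F n+1 a}--\eqref{BA F n+1 b} for derivatives of order $r+1 \leq 6$ together with the source bound gives the desired estimate on $\lambda^r\|\dr_t\nabla^r\F\|_{L^2}$; differentiating once more in time and iterating yields the bound on $\lambda^{r+1}\|\dr_t^2\nabla^r\F\|_{L^2}$ for $r\leq 4$. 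The main obstacle is the handling of the Fourier cut-off: it is exactly the refined Bernstein inequality that converts the naive $\lambda^{-r}$ loss one would get from $\|\nabla^r \h\|_{L^2_{\delta+1+r}} \lesssim A_3 \e / \lambda^r$ into the sharper $\lambda^{-(r-1)}$ gain required to obtain $A_3 \e^2$ (rather than $A_3 \e^2 / \lambda$) on the right-hand side after multiplication by $\lambda^{r-1}$, which is what makes the bootstrap closable in the high-frequency limit.
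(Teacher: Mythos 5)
Your overall route is the paper's: treat \eqref{eq F} as a transport equation, apply the $L^2$ energy estimate of Lemma \ref{lem L0 h =f} plus Gronwall with $\F\restriction{\Sigma_0}=0$, gain a power of $\lambda$ on the source through the Fourier cut-off, and recover time derivatives algebraically from the equation. Part (a) and your source estimate in (b) are fine: the iterated Bernstein bound $\|\nabla^r\Pi_\leq f\|_{L^2}\lesssim\lambda^{-(r-1)}\|\nabla f\|_{L^2}$ is a harmless extension of \eqref{P low} and yields the same $A_3\e^2/\lambda^{r-1}$ that the paper gets by stripping one derivative and invoking \eqref{BA h k}.

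The gap is in the commutator step of (b). Since $L_0=L_0^t\dr_t+L_0^i\dr_i$, the commutator $[L_0,\nabla^r]\F$ is \emph{not} of the schematic form $\nabla^{r-s}(\mathrm{bg})\cdot\nabla^s\F$ with only spatial derivatives of $\F$: it contains terms $\nabla^{r-k}(\dr L_0^\mu)\,\dr_\mu\nabla^{k}\F$ with $k\le r-1$, and in particular the top-order term involving $\dr_t\nabla^{r-1}\F$ (this is why \eqref{commut L0 nabla r} is stated with $\dr$, which includes $\dr_t$). That term can neither be ``absorbed by Gronwall'' (it is not the quantity $\l \nabla^r\F\r_{L^2}$ being propagated) nor be estimated directly by the bootstrap \eqref{BA F k}: the latter only gives $A_1\e/\lambda^{r-1}$, so after multiplying by $\lambda^{r-1}$ you get a contribution $C(C_0)A_1\e$ with no factor of $\lambda$. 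This destroys precisely the improvement the proposition must deliver: a bound of the form $C(A_1\e+A_3\e^2)$ does not close the bootstrap, since the absolute constant cannot be taken $\leq 1$; only the extra $\lambda$ (respectively the extra $\e$) in $\lambda A_1\e+A_3\e^2$ allows the constant $A_1$ to be recovered. The same issue already appears at $r=1$ with $\dr_t\F$. The repair is exactly the move you reserve for part (c), but it has to be used \emph{inside} the Gronwall argument for (b): divide the transport equation by $L_0^t$ (legitimate by \eqref{estim L0}) and substitute, as in \eqref{dt F} for $r=1$ and \eqref{dt nabla r F expression} for $r\geq 2$, so that $\dr_t\nabla^{r-1}\F$ is rewritten as $L_0^i\nabla^r\F$ (Gronwall-absorbable) plus lower-order commutators and source terms which do carry the $\lambda$ gain; this is how the paper arrives at \eqref{dt nabla r F} and hence at the stated $\lambda A_1\e+A_3\e^2$.
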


\begin{proof}
We want to apply Lemma \ref{lem L0 h =f}, so we need to estimate the $L^2$ norm of the RHS of \eqref{eq F}, which we rewrite
\begin{align}
L_0\F_{\alpha\beta}  & = \half  \Pi_\leq \left( \h \right) F^{(1)}_{\alpha\beta} + \half (\Box_{g_0}u_0)\F_{\alpha\beta} + L_0^\mu \Gamma(g_0)^\nu_{\mu(\alpha}\F_{\nu\beta)}.\label{eq F true}
\end{align}
The RHS of \eqref{eq F true} is supported in $\{ |x|\leq C_{\mathrm{supp}}R\}$ thanks to the support property of $\F$ and $F^{(1)}$.  We estimate all the background quantities (see Theorem \ref{theo bg system}) in $L^\infty$ and obtain
\begin{align*}
\l \text{RHS of \eqref{eq F true}}\r_{L^2}& \lesssim C(C_0) \left( \e \l \Pi_\leq \left( \h\right) \r_{L^2} + \l \F \r_{L^2} \right)
\\& \lesssim C(C_0) A_3\e^2   + C(C_0) \l \F \r_{L^2},
\end{align*}
where we forgot about the projector $\Pi_\leq$ and used \eqref{estim h L2}. Therefore,  $\F\restriction{\Sigma_0}=0$ and Lemma \ref{lem L0 h =f} imply that for all $t\in [0,T]$ we have
\begin{align*}
\l \F \r^2_{L^2} (t) \lesssim C(C_0) A^2_3\e^4   + C(C_0) \int_0^t \l \F \r^2_{L^2}\d s,
\end{align*}
which gives \eqref{BA F n+1 a} after applying Gronwall's inequality.

\saut
We now turn to the proof of \eqref{BA F n+1 b}, for which we need to commute \eqref{eq F true} with spatial derivatives. For $r\in\llbracket 1,6 \rrbracket$ we schematically obtain
\begin{align}
L_0 \left( \nabla^r \F \right) & = \left[ L_0 , \nabla^r\right] \F + \sum_{r_1+r_2=r} \left( \nabla^{r_1}\Pi_\leq \left( \h\right) \nabla^{r_2}F^{(1)} + \nabla^{r_1}\left( \Box_{g_0}u_0 + L_0\Gamma(g_0) \right) \nabla^{r_2}\F\right).\label{eq nabla r F n+1}
\end{align}
From Lemma \ref{lem L0 h =f}, \eqref{eq nabla r F n+1} and $\F\restriction{\Sigma_0}=0$ we have for $t\in[0,T]$
\begin{align}
\l \nabla^r \F \r^2_{L^2} (t) \lesssim C(C_0)\int_0^t \left(  \l \left[ L_0 , \nabla^r\right] \F  \r^2_{L^2} +  \sum_{r'\leq r} \left(\e^2 \l\nabla^{r'} \Pi_\leq \left( \h\right)  \r^2_{L^2} +  \l \nabla^{r'}\F \r^2_{L^2} \right) \right)\d s , \label{nabla r F estimate}
\end{align} 
where we again estimate all the background quantities in $L^\infty$. 

\saut
We estimate the commutator $\left[ L_0 , \nabla^r\right] \F$ using the following formula, true for any linear operators:
\begin{align}
[A,B^n] = \sum_{k=0}^{n-1}B^{n-1-k}[A,B]B^k.\label{commutateur abstrait}
\end{align}
With the usual Leibniz rule this gives
\begin{align}
\left| \left[ L_0 , \nabla^r\right] \F \right| & \lesssim \sum_{k=0}^{r-1}\left| \nabla^{r-1-k}\left( (\nabla L_0^\alpha)\dr_\alpha \nabla^k\F \right) \right| \lesssim C(C_0)\sum_{k=0}^{r-1}\left|\dr \nabla^k \F \right|\label{commut L0 nabla r}
\end{align}
where we used the background regularity.  

\saut
We first treat the case $r=1$ in \eqref{nabla r F estimate}.  In order to treat the case of a time derivative in \eqref{commut L0 nabla r}, we rewrite the equation satisfied by $\F$
\begin{align}
L_0^t \dr_t \F = L_0^i \nabla \F + \Pi_\leq (\h) F^{(1)} + (\Box_{g_0}u_0 + L_0 \Gamma(g_0)) \F\label{dt F}
\end{align}
which together with \eqref{estim h L2} gives $\l \dr \F \r_{L^2} \lesssim C(C_0) \l \nabla \F \r_{L^2} + C(C_0)A_3\e^2 $. Therefore, using \eqref{BA F n+1 a} and \eqref{estim h L2} we obtain from \eqref{nabla r F estimate}
\begin{align*}
\l \nabla \F \r^2_{L^2} (t) & \lesssim C(C_0) A_3^2\e^4  + C(C_0)\int_0^t \left(  \e^2 \l\nabla \Pi_\leq \left( \h\right)  \r^2_{L^2} +  \l \nabla\F \r^2_{L^2}  \right)\d s
\\&\lesssim C(C_0) A_3^2\e^4  + C(C_0)\int_0^t   \l \nabla\F \r^2_{L^2} \d s,
\end{align*} 
where we also forgot about $\Pi_\leq$ and used \eqref{BA h k}. Using Gronwall's lemma, this proves \eqref{BA F n+1 b} in the case $r=1$. If now $r\geq 2$ in \eqref{nabla r F estimate}, we have from \eqref{BA F k}
\begin{align*}
\l \left[ L_0 , \nabla^r\right] \F  \r^2_{L^2} & \lesssim C(C_0)\sum_{k=0}^{r-2}\l \dr \nabla^k \F \r^2_{L^2} + C(C_0) \l \dr \nabla^{r-1} \F \r^2_{L^2}
\\&\quad \lesssim C(C_0) \frac{A_1^2\e^2}{\lambda^{2(r-2)}}+ C(C_0) \l \dr \nabla^{r-1} \F \r^2_{L^2}.
\end{align*}
For the second term in this last expression we rewrite the equation for $\nabla^{r-1}$:
\begin{align}
L_0^t\dr_t  \nabla^{r-1} \F & = \left[ L_0 , \nabla^{r-1}\right] \F + L_0^i   \nabla^{r} \F \label{dt nabla r F expression}
\\&\quad + \sum_{r_1+r_2=r-1} \left( \nabla^{r_1}\Pi_\leq \left( \h\right) \nabla^{r_2}F^{(1)} + \nabla^{r_1}\left( \Box_{g_0}u_0 + L_0\Gamma(g_0) \right) \nabla^{r_2}\F \right).\nonumber
\end{align}
which thanks to \eqref{commut L0 nabla r}, \eqref{BA F k}, \eqref{BA h k} and \eqref{P low} gives
\begin{align}
\l \dr \nabla^{r-1} \F \r^2_{L^2} & \lesssim C(C_0) \l \nabla^r \F \r_{L^2}^2 + C(C_0)\frac{A_3^2\e^4+ A_1^2\e^2}{\lambda^{2(r-2)} }.\label{dt nabla r F}
\end{align}
Therefore, we have estimated the commutator in \eqref{nabla r F estimate}:
\begin{align*}
\l \left[ L_0 , \nabla^r\right] \F  \r^2_{L^2} \lesssim C(C_0) \l \nabla^r \F \r_{L^2}^2 + C(C_0)\frac{\lambda^2 A_3^2\e^4+ \lambda^2 A_1^2\e^2}{\lambda^{2(r-1)} }.
\end{align*}
For the first sum in the time integral in \eqref{nabla r F estimate}, we distinguish between $r'=r$ and $r'\leq r-1$ and use \eqref{P low} and \eqref{BA h k} (and $r-1\leq 5$): 
\begin{align*}
\sum_{r'\leq r} \e^2\l\nabla^{r'} \Pi_\leq \left( \h_{L_0L_0}\right)  \r^2_{L^2} &  \lesssim \frac{\e^2}{\lambda^2}\l\nabla^{r-1} \h \r^2_{L^2} +  \sum_{r'\leq r-1} \e^2\l\nabla^{r'}  \h \r^2_{L^2} \lesssim \frac{A_3^2\e^4}{\lambda^{2(r-1)}}.
\end{align*}
For the second sum in the time integral in \eqref{nabla r F estimate}, we simply use \eqref{BA F k} if $r'\leq r-1$. Putting everything together gives
\begin{align*}
\l \nabla^r \F \r^2_{L^2} (t) \lesssim C(C_0)\left(  \frac{\lambda^2A_1^2\e^2}{\lambda^{2(r-1)}} +  \frac{A_3^2\e^4}{\lambda^{2(r-1)}} \right) + C(C_0)\int_0^t  \l \nabla^{r}\F \r^2_{L^2}  \d s ,
\end{align*} 
which gives \eqref{BA F n+1 b} after applying Gronwall's lemma. Now that \eqref{BA F n+1 b} is proved, we notice that \eqref{dt nabla r F} implies the first part of \eqref{BA F n+1 c}. For the second part, we apply $\dr_t$ to \eqref{dt nabla r F expression} and estimate the result with \eqref{BA F n+1 b}.
\end{proof}

\subsection{Estimates for $\tBox_g \F$}

In this section, we derive the estimates for $\tBox_{g}\F$.  This is the most crucial and intricate part of the proof, where we deal with the loss of derivatives exhibited in Section \ref{section bad coupling strat}. As explained in Section \ref{section regaining the missing derivative}, we start by using the high-frequency character of $g$ and decompose $\tBox_{g}\F_{\alpha\beta}$ as in
\begin{align}
\tBox_{g}\F_{\alpha\beta} & = \tBox_{g_0}\F_{\alpha\beta} + \left( g^{\mu\nu} - g_0^{\mu\nu} \right)\dr_\mu\dr_\nu \F_{\alpha\beta}\label{decomposition box F n+1}.
\end{align}
We are going to estimate separately the two terms in \eqref{decomposition box F n+1} in order to improve the bootstrap assumption \eqref{BA box F k}. 
\begin{itemize}
\item The term $\tBox_{g_0}\F_{\alpha\beta} $ is estimated in Section \ref{section estim box g0 F} thanks to the transport equation it satisfies. The heart of the proof is the use of various commutator estimates in Lemmas \ref{lem Ar} and \ref{lem A}. We also benefit from $g-g_0=\GO{\la}$ in Lemma \ref{lem C}. The final result of this section is the content of Proposition \ref{prop box g0 F n+1}.
\item The term $\left( g^{\mu\nu} - g_0^{\mu\nu} \right)\dr_\mu\dr_\nu \F_{\alpha\beta}$ is estimated in Section \ref{section (g-g0)d2F}. The heart of the proof is the $\la$ gain from $g-g_0=\GO{\la}$, used in Proposition \ref{prop g1d2F n+1}. Section \ref{section (g-g0)d2F} is concluded by Proposition \ref{prop BA box F n+1} with the final estimate on $\tBox_{g}\F_{\alpha\beta}$.
\end{itemize}

\subsubsection{Estimates for $\Box_{g_0}\F$}\label{section estim box g0 F}

In this section, we prove by a finite strong induction argument on the value $r\in \llbracket 0,4 \rrbracket$ that
\begin{align}
\l \nabla^r \Box_{g_0} \F \r_{L^2} \lesssim \frac{C(A_i)\e^2 + A_1\e}{\lambda^r}.\label{hypothèse de récurrence}
\end{align}
The proof of the induction step or the base case are similar. Therefore, we will prove \eqref{hypothèse de récurrence} for a fixed value of $r$ and assume that it holds for all value strictly less than $r$ in $\llbracket 0,4\rrbracket$. This induction assumption will only be used in the proof of Lemma \ref{lem Br}.

\saut
In order to estimate $\nabla^r\Box_{g_0}\F$, we start by deriving a transport equation from \eqref{eq F}:
\begin{align}
L_0 (\nabla^r\Box_{g_0}\F) = \nabla^r [ L_0 , \Box_{g_0} ] \F + [L_0 , \nabla^r]\Box_{g_0}\F + \nabla^r \Box_{g_0}L_0\F. \label{eq box F}
\end{align}
We give names to the three terms in \eqref{eq box F}:
\begin{align*}
\mathbf{A}_r & \vcentcolon  = \nabla^r [ L_0 , \Box_{g_0} ] \F    ,
\\ \mathbf{B}_r & \vcentcolon  =  [L_0 , \nabla^r]\Box_{g_0}\F  ,
\\ \mathbf{C}_r  & \vcentcolon  = \nabla^r \Box_{g_0}L_0\F   .
\end{align*}
These notations and the induction argument presented above will allow us to use the triangular structure of the equations
\begin{align*}
L_0 ( \Box_{g_0}\F) & = \mathbf{A}_0 + \mathbf{B}_0 + \mathbf{C}_0,
\\ L_0 (\nabla \Box_{g_0}\F) & = \mathbf{A}_1 + \mathbf{B}_1 + \mathbf{C}_1,
\\ & \;\, \vdots
\\ L_0 (\nabla^r\Box_{g_0}\F) & = \mathbf{A}_r + \mathbf{B}_r + \mathbf{C}_r.
\end{align*}
Indeed, during the proof of Lemma \ref{lem Br}, we will estimate $\mathbf{B}_r$ in terms of the $\mathbf{A}_k$, $\mathbf{B}_k$ and $\mathbf{C}_k$ for $0\leq k\leq r-1$ (see \eqref{estim Br inter 2}).

\saut
We start by estimating $\mathbf{A}_r$.
\begin{lem}\label{lem Ar}
We have
\begin{align*}
\l \mathbf{A}_r \r_{L^2} \lesssim \frac{A_1\e + A_3\e^2}{\lambda^r} +  \l \nabla^r \Box_{g_0}\F \r_{L^2} .
\end{align*}
\end{lem}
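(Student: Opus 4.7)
The plan is to apply Lemma \ref{lem commute null} directly, in the form \eqref{estim commute null} for $r=0$ and \eqref{estim commute null nabla r} for $r\geq 1$, and then control each term on the right-hand side using either the bootstrap assumptions or the transport equation \eqref{eq F true}.

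For $r\geq 1$, Lemma \ref{lem commute null} gives
\begin{align*}
\l \mathbf{A}_r \r_{L^2} = \l \nabla^r [L_0,\Box_{g_0}]\F \r_{L^2} \lesssim C(C_0)\left( \l \nabla^r \dr L_0 \F \r_{L^2} + \l \nabla^r \Box_{g_0}\F \r_{L^2} + \l \dr \F \r_{H^r} + \l \dr^2 \F \r_{H^{r-1}}\right).
\end{align*}
The second term is exactly the one retained on the right-hand side of the claimed inequality. The last two terms are the easy ones: since $\F$ is compactly supported, the bootstrap assumption \eqref{BA F k} directly yields $\l \dr \F \r_{H^r} + \l \dr^2 \F \r_{H^{r-1}} \lesssim A_1\e/\lambda^r$ (using $r\leq 4\leq 5$). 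The case $r=0$ is identical, using \eqref{estim commute null}.

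The main point, and the only term that requires an idea rather than a bootstrap lookup, is $\l \nabla^r \dr L_0 \F \r_{L^2}$. Here I substitute using the transport equation \eqref{eq F true}, which gives $L_0 \F_{\alpha\beta}$ as $\tfrac{1}{2}\Pi_\leq(\h_{L_0 L_0})F^{(1)}_{\alpha\beta}$ plus background times $\F$. Differentiating, the dangerous term is $\nabla^r \dr (\Pi_\leq(\h_{L_0L_0}) F^{(1)})$. The crucial observation is that $\Pi_\leq$ is a spatial Fourier multiplier, so it commutes with both $\dr_t$ and $\nabla$ and is bounded on $L^2$; combined with the compact support of $F^{(1)}$ (Theorem \ref{theo bg system}), a Leibniz expansion together with the background bound on $F^{(1)}$ in $L^\infty$ (derivatives up to $r \leq 4 \leq N-2$) and the bootstrap \eqref{BA h k} gives $\l \nabla^r \dr(\Pi_\leq(\h_{L_0L_0}) F^{(1)}) \r_{L^2} \lesssim C(C_0) A_3\e^2/\lambda^r$. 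The remaining background$\times \F$ contributions yield at most $A_1\e/\lambda^r$ by \eqref{BA F k}. Summing, $\l \nabla^r \dr L_0 \F \r_{L^2} \lesssim (A_1\e + A_3\e^2)/\lambda^r$, which combined with the preceding bounds yields the announced estimate.

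The only subtlety worth highlighting is the need to commute $\Pi_\leq$ past the derivatives; this is where the choice to put the projector on the right-hand side of the transport equation \eqref{eq F} (rather than using a Bernstein-type loss of one $\lambda^{-1}$) pays off, and why the induction hypothesis \eqref{hypothèse de récurrence} on lower $r$ is not needed at this step.
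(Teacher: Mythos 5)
Your proof is correct and follows essentially the same route as the paper: apply Lemma \ref{lem commute null}, bound the $\dr\F$ and $\dr^2\F$ terms by \eqref{BA F k}, and control $\l \nabla^r \dr L_0 \F \r_{L^2}$ by substituting the transport equation and using \eqref{BA F k}, \eqref{BA h k} together with the background regularity (the paper simply drops $\Pi_\leq$ as an $L^2$-bounded multiplier, which is exactly the observation you spell out).
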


\begin{proof}
We start with the case $r\geq 1$. We apply \eqref{estim commute null nabla r} to $\F$:
\begin{align*}
\l \nabla^r [L_0,\Box_{g_0}]\F \r_{L^2} \lesssim  \l \nabla^r  \dr L_0 \F \r_{L^2} + \l \nabla^r \Box_{g_0}\F \r_{L^2} + \l \dr \F \r_{H^r} + \l \dr^2 \F \r_{H^{r-1}}   . 
\end{align*}
Thanks to \eqref{BA F k} and $r\leq 4$ we have
\begin{align*}
\l \dr \F \r_{H^r} + \l \dr^2 \F \r_{H^{r-1}} \lesssim \frac{A_1\e}{\lambda^r}.
\end{align*}
Since the equation \eqref{eq F} schematically reads $L_0\F= \F + \h F^{(1)}$ we also have
\begin{align*}
\l \nabla^r  \dr L_0 \F \r_{L^2} & \lesssim \sum_{k=0}^r \left( \l \dr \nabla^k \F \r_{L^2} + \e \l \dr \nabla^k \h \r_{L^2} \right) 
\\&\lesssim \frac{A_1\e + A_3\e^2}{\lambda^r},
\end{align*}
where we used \eqref{BA F k}, \eqref{BA h k} and $r\leq 4$. This concludes the proof of the lemma in the case $r\geq 1$. The case $r=0$ is treated similarly with \eqref{estim commute null} instead of \eqref{estim commute null nabla r}.
\end{proof}

We now estimate $\mathbf{C}_r$. Using \eqref{eq F} we see that
\begin{align}
| \mathbf{C}_r | 	& \lesssim \left| \nabla^r\tBox_{g_0} \Pi_\leq \left( \h \right) F^{(1)}  \right| + \left| \nabla^r\Box_{g_0}\F \right| + | \mathcal{Q}_r|, \label{estim Cr}
\end{align}
where $\mathcal{Q}_r$ contains the terms where at most $r+1$ derivatives (out of the $r+2$ involved in $\nabla^r\tBox_{g_0}$) hit either $\F$ or $\h$.  In this case, the last derivative must therefore hit some background quantity. This implies that $\mathcal{Q}_r$ contains only lower order terms and using the background regularity we easily obtain $\l \mathcal{Q}_0 \r_{L^2} \lesssim A_1\e + A_3\e^2$ and if $1\leq r \leq 4$
\begin{align}
\l \mathcal{Q}_r \r_{L^2} & \lesssim  \sum_{r'\leq r-1}\left( \l \dr^2 \nabla^{r'}\F \r_{L^2}  + \e \l \dr^2 \nabla^{r'}\h \r_{L^2}  \right)\nonumber
\\&\lesssim   \frac{A_1\e + A_3\e^2}{\lambda^r} ,  \label{estim Q}
\end{align}
where the $\e$ in front of $\l \dr^2 \nabla^{r'}\h \r_{L^2}$ comes from $F^{(1)}$ and where we used \eqref{BA F k}, \eqref{BA h k} and $r-1\leq 3$.

\saut
We now turn to the estimates of the first term in \eqref{estim Cr}, i.e $\nabla^r\tBox_{g_0} \Pi_\leq \left( \h \right) $. Forgetting about $\Pi_\leq$ and $g_0$, this term is of the form $\Box \h$, for which we would like to use \eqref{eq h}. In order to do this, we need first to commute $\tBox_{g_0}$ and $\Pi_\leq$ and second to replace $\tBox_{g_0}$ by $\tBox_{g}$. This give the decomposition
\begin{align}
\nabla^r\tBox_{g_0} \Pi_\leq \left( \h \right) & = \nabla^r \left[ \tBox_{g_0},\Pi_\leq \right] \h +\nabla^r \Pi_\leq \left( \tBox_{g}\h\right)  - \nabla^r\Pi_\leq \left(\tBox_{g-g_0} \h \right) .\label{decomposition box P h}
\end{align}
The lemmas \ref{lem A}, \ref{lem B} and \ref{lem C} below estimate the three different terms in the decomposition \eqref{decomposition box P h}.  Since $\nabla^r\tBox_{g_0} \Pi_\leq \left( \h \right)$ is multiplied by $F^{(1)}$ in \eqref{estim Cr}, giving an extra $\e$, we don't need to be very precise in terms of the bootstrap constants $A_i$. For the same reason, the support property of $F^{(1)}$ implies that we can forget about the weight at spacelike infinity for $\h$ or $g$ (except in Lemma \ref{lem B} below).

\begin{lem}\label{lem A}
We have
\begin{align*}
\l \nabla^r \left[ \tBox_{g_0},\Pi_\leq \right] \h \r_{L^2} \lesssim  \frac{A_3\e}{\lambda^r}.
\end{align*}
\end{lem}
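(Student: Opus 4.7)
The plan is to exploit the fact that $\Pi_\leq$ is a Fourier multiplier in the spatial variables only, so it commutes with every partial derivative $\partial_\alpha$, temporal or spatial. The bracket identity $[AB,C]=A[B,C]+[A,C]B$ applied with $A=g_0^{\alpha\beta}$ and $B=\partial_\alpha\partial_\beta$ produces
\begin{align*}
\left[\tBox_{g_0},\Pi_\leq\right]\h = \left[g_0^{\alpha\beta},\Pi_\leq\right]\partial_\alpha\partial_\beta\h,
\end{align*}
and commuting $\nabla^r$ through $\Pi_\leq$ and expanding by Leibniz yields
\begin{align*}
\nabla^r\left[\tBox_{g_0},\Pi_\leq\right]\h = \sum_{r_1+r_2=r}\binom{r}{r_1}\left[\nabla^{r_1}g_0^{\alpha\beta},\Pi_\leq\right]\nabla^{r_2}\partial_\alpha\partial_\beta\h.
\end{align*}
It then suffices to bound each summand by $A_3\e/\lambda^r$.

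The strategy is to split the sum into cases according to whether at least one of $\alpha,\beta$ is spatial. When $\beta=j$ is spatial, I would pull out the spatial derivative via $\nabla^{r_2}\partial_\alpha\partial_j\h = \partial_j(\nabla^{r_2}\partial_\alpha\h)$, which puts the summand in the form required by Lemma \ref{lem commute bis}. The estimate then gives
\begin{align*}
\Bigl( \l \nabla^{r_1+1}g_0 \r_{L^\infty} + \lambda \l \nabla^{r_1}g_0 \r_{H^{7/2}} \Bigr) \l \nabla^{r_2}\partial_\alpha\h \r_{L^2} \lesssim C(C_0)\, \frac{A_3\e}{\lambda^{r_2}} \leq C(C_0)\,\frac{A_3\e}{\lambda^{r}},
\end{align*}
using the background regularity ($r_1+1\leq 5\leq N-2$ makes $\l\nabla^{r_1+1}g_0\r_{L^\infty}\lesssim C(C_0)$) and \eqref{BA h k} (valid since $r_2+1\leq 5$). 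The same device treats the purely timelike summand $[\nabla^{r_1}g_0^{00},\Pi_\leq]\nabla^{r_2}\partial_t^2\h$ whenever $r_2\geq 1$: I rewrite $\nabla^{r_2}\partial_t^2\h=\nabla(\nabla^{r_2-1}\partial_t^2\h)$ and invoke Lemma \ref{lem commute bis} with $v=\nabla^{r_2-1}\partial_t^2\h$, and \eqref{BA h k} controls $\l\nabla^{r_2-1}\partial_t^2\h\r_{L^2}$ since $r_2-1\leq 3$.

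The only remaining case is $r_2=0$ (hence $r_1=r$), corresponding to $[\nabla^r g_0^{00},\Pi_\leq]\partial_t^2\h$ with no spatial derivative available to hand to Lemma \ref{lem commute bis}. This is the mildly delicate point, and I would resolve it with the elementary Coifman--Meyer type commutator bound
\begin{align*}
\l [u,\Pi_\leq]f \r_{L^2} \lesssim \lambda \l \nabla u \r_{L^\infty} \l f \r_{L^2},
\end{align*}
which follows from the mean value theorem applied to $u(x)-u(y)$ combined with the fact that the kernel of $\Pi_\leq$ is concentrated at spatial scale $\lambda$. Applied with $u=\nabla^r g_0^{00}$ and $f=\partial_t^2\h$, together with $\l\partial_t^2\h\r_{L^2}\lesssim A_3\e/\lambda$ from \eqref{BA h k}, this yields $\lambda\cdot C(C_0)\cdot A_3\e/\lambda = C(C_0)A_3\e$, which lies within the claimed bound. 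The unifying mechanism in all cases is that the commutator $[g_0^{\alpha\beta},\Pi_\leq]$ trades one derivative for a factor of $\lambda$, bridging the one-derivative gap between the naive cost $\nabla^{r+2}\h$ of $\nabla^r\tBox_{g_0}\h$ and the $\nabla^{r+1}\h$ actually supplied by \eqref{BA h k}.
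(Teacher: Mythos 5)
Your proof is correct and follows essentially the same route as the paper: the same Leibniz decomposition after commuting $\nabla^r$ through $\Pi_\leq$, the same two commutator estimates (your ``elementary Coifman--Meyer type bound'' is precisely \eqref{commute BCD} of Lemma \ref{lem commute BCD}, and the derivative-gaining step is Lemma \ref{lem commute bis}), and the same input \eqref{BA h k} together with the background regularity. The only difference is bookkeeping: the paper splits on $r_2\leq 3$ (where \eqref{commute BCD} suffices) versus $r_2=4$ (where Lemma \ref{lem commute bis} is invoked to gain one spatial derivative), whereas you split according to whether a spatial derivative is available in $\dr_\alpha\dr_\beta$ or $\nabla^{r_2}$, which treats the same borderline top-order term by the same mechanism.
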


\begin{proof}
Thanks to $\left[ \nabla^r,\Pi_\leq \right]=0$ and to the Leibniz rule we obtain
\begin{align*}
\l \nabla^r \left[ \tBox_{g_0},\Pi_\leq \right] \h \r_{L^2} & \lesssim \sum_{r_1+r_2=r}  \l \left[  \nabla^{r_1} g_0 \dr^2 \nabla^{r_2},\Pi_\leq  \right] \h   \r_{L^2}.
\end{align*}
We first study the case where $r_2\leq 3$.  We use $\left[ \dr^2\nabla^{r_2},\Pi_\leq \right]=0$ and \eqref{commute BCD}
\begin{align*}
\l \left[  \nabla^{r_1} g_0 \dr^2 \nabla^{r_2},\Pi_\leq  \right] \h   \r_{L^2} & = \l \left[  \nabla^{r_1} g_0 ,\Pi_\leq  \right] \dr^2 \nabla^{r_2}\h   \r_{L^2} 
\\&\lesssim \lambda \l \nabla^{r_1+1} g_0 \r_{L^\infty} \l \dr^2 \nabla^{r_2} \h \r_{L^2}
\\&\lesssim C(C_0) \lambda  \times \frac{A_3 \e}{\lambda^{r_2+1}}
\\&\lesssim C(C_0) \frac{A_3\e}{\lambda^r},
\end{align*}
where we used the background regularity and \eqref{BA h k} (with $r_2\leq 3$). If now $r_2 =4$,  then $r_1=0$ and $r=4$. Since we can't estimate $\dr^2\nabla^{4} \h$, we need to gain one spatial derivative, and for this purpose we use \eqref{commute bis} instead of \eqref{commute BCD} to obtain
\begin{align*}
 \l \left[   g_0 \dr^2 \nabla^{4},\Pi_\leq  \right] \h  \r_{L^2} & =  \l \left[   g_0 ,\Pi_\leq  \right] \dr^2 \nabla^{4} \h   \r_{L^2}
 \\&\lesssim C(C_0) \l \dr^2 \nabla^3 \h \r_{L^2}
 \\&\lesssim C(C_0) \frac{A_3\e}{\lambda^4},
\end{align*}
where we also used the background regularity and \eqref{BA h k}.
\end{proof}

In the next lemma, we estimate the term $\nabla^r  \Pi_\leq \left( \tBox_{g}\h\right)$ in \eqref{decomposition box P h} and therefore we need to estimate the $L^2$ norm of
\begin{align}
\nabla^r  \Pi_\leq \left( \text{RHS of \eqref{eq h}}\right).\label{nabla pi RHS eq h}
\end{align}
In the forthcoming Section \ref{section estimate h} we will improve \eqref{BA h k} with the energy estimate for the wave operator. Therefore, we will apply $\nabla^r$ to \eqref{eq h} and will need an estimate for the \textit{weighted} $L^2$ norm of \eqref{nabla pi RHS eq h} \textit{without} the projector $\Pi_\leq$ (in addition to a commutator estimate, see Proposition \ref{prop BA h n+1}). This explains why in the next lemma we project ourself and first prove \eqref{lem B estim 1} and then get \eqref{lem B estim 2} as a consequence.

\begin{lem}\label{lem B}
We have
\begin{align}
\l\nabla^r   \tBox_{g}\h \r_{L^2_{\delta+r+1}} \lesssim  \frac{C(A_i)\e^2 + A_2\e}{\lambda^r}, \label{lem B estim 1}
\end{align}
which in particular implies
\begin{align}
\l\nabla^r  \Pi_\leq \left( \tBox_{g}\h\right) \r_{L^2} \lesssim C(A_i) \frac{\e}{\lambda^r}. \label{lem B estim 2}
\end{align}
\end{lem}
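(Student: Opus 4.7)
The plan is to use the wave equation \eqref{eq h} as the starting point: it expresses $\tBox_g \h$ explicitly as a sum of four terms,
\begin{align*}
\tBox_g \h_{\alpha\beta} = -  \sin\left( \tfrac{u_0}{\lambda} \right)\tBox_g \F_{\alpha\beta}  - \lambda \tBox_g g^{(3)}_{\alpha\beta} - \tfrac{1}{\lambda}\cos\left(\tfrac{u_0}{\lambda} \right)\Pi_\geq \left(\h_{L_0L_0}  \right)F^{(1)}_{\alpha\beta} + \mathcal{R}(g),
\end{align*}
so proving \eqref{lem B estim 1} reduces to estimating $\nabla^r$ of each of these four pieces in $L^2_{\delta+r+1}$. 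Observe that the first three pieces carry either $\F$, $g^{(3)}$ or $F^{(1)}$, all of which are compactly supported, so on these terms the weight is harmless and we may freely interchange $L^2$ and $L^2_{\delta+r+1}$.

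Next I would estimate each term as follows. For $\sin(u_0/\lambda)\tBox_g\F$, Leibniz gives a sum of terms where derivatives on the sine cost $\lambda^{-(r-k)}$ but are compensated by \eqref{BA box F k} bounding $\l\nabla^k\tBox_g\F\r_{L^2}\lesssim A_2\e/\lambda^k$, yielding the bound $A_2\e/\lambda^r$. For $\lambda\tBox_g g^{(3)}$, write $g^{(3)}=g^{(3)}(\F)+g^{(3,\BG)}$ via Lemma \ref{lem g3}: the background piece is estimated using Theorem \ref{theo bg system} and the background regularity; the $\F$ piece, being linear in $\F$ with no $\F$-derivatives, produces at worst $\dr^2\F$, which is bounded by \eqref{BA F k} at cost $\lambda^{-(r+1)}$, and the prefactor $\lambda$ restores the required $\lambda^{-r}$. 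For the third term, $\Pi_\geq$ commutes with $\nabla$ and satisfies the Bernstein-type identity $\l\Pi_\geq f\r_{L^2}\lesssim \lambda\l\nabla f\r_{L^2}$ (to be proved in Appendix \ref{section proof lem commute bis}); applying this after distributing $\nabla^r$ replaces $\lambda^{-1}\Pi_\geq$ by essentially an extra $\nabla$, so the term becomes of the size $\l\nabla^{k+1}\h\r_{L^2}\lesssim A_3\e/\lambda^{k+1}$ times background factors $\e$, giving the desired $\lambda^{-r}$ behaviour with a gain of $\e$. Finally the remainder $\mathcal{R}(g)$ is controlled directly by \eqref{d R g L2}.

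For the corollary \eqref{lem B estim 2}, I would combine the identity $[\Pi_\leq,\nabla]=0$ with the $L^2$-boundedness of the Fourier multiplier $\Pi_\leq$: the first three pieces in \eqref{eq h} are compactly supported so their $L^2_{\delta+r+1}$ and $L^2$ norms are equivalent; the term $\mathcal{R}(g)$ is handled using that, on the relevant range of the parameters $\delta$ and $r$, $L^2_{\delta+r+1}$ embeds continuously into $L^2$ (and otherwise by decomposing $\mathcal{R}(g)$ into a compactly supported bulk plus asymptotically flat remainders that are already in $L^2$ thanks to the background assumptions). The main obstacle is the third term: naively the prefactor $\lambda^{-1}$ would destroy the estimate, and only the Bernstein inequality \eqref{P high}, which exchanges this loss for the smallness of $\Pi_\geq$ on low frequencies, saves the bound. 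This is precisely the mechanism introduced in Section \ref{section regaining the missing derivative} to avoid the apparent loss of derivatives in the coupling \eqref{eq F strat}-\eqref{eq h strat}, and it is what makes the whole reduced system well-posed uniformly in $\lambda$.
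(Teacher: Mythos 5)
Your overall route is the same as the paper's: plug in \eqref{eq h}, estimate the four resulting pieces using \eqref{BA box F k} for the $\sin\left(\frac{u_0}{\lambda}\right)\tBox_g\F$ term, Lemma \ref{lem g3} for the $g^{(3)}$ term, the Bernstein estimate \eqref{P high} for the $\frac{1}{\lambda}\Pi_\geq$ term, and \eqref{d R g L2} for $\mathcal{R}(g)$, then deduce \eqref{lem B estim 2} from $L^2$-boundedness of $\Pi_\leq$. Your treatment of terms $I$, $III$, $IV$ and of the passage to \eqref{lem B estim 2} is essentially the paper's.

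However, there is a genuine gap in your estimate of $\lambda\tBox_g g^{(3)}$: you treat the $\F$-part of $g^{(3)}$ as if it were non-oscillating, asserting that the worst contribution is $\dr^2\F$ at cost $\lambda^{-(r+1)}$, restored by the prefactor $\lambda$. But by Lemma \ref{lem g3}, $g^{(3)}$ carries trigonometric factors $\mathrm{T}\left(\frac{u_0}{\lambda}\right)$, so when both derivatives of $\tBox_g$ (or further $\nabla^r$) hit the oscillation the naive count gives $\lambda\cdot\lambda^{-(r+2)}\l\F\r_{L^2}\sim A_1\e/\lambda^{r+1}$, one power of $\lambda$ worse than the target $\left(C(A_i)\e^2+A_2\e\right)/\lambda^r$; since the paper assumes $\lambda\leq\e$, this loss cannot be absorbed. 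The missing ingredient is the computation \eqref{calcul}: the only term with two oscillation derivatives comes multiplied by $g^{-1}(\d u_0,\d u_0)$, which is $\GO{\lambda^2}$ by the background eikonal equation \eqref{eq u0} together with the polarization conditions $F^{(1)}_{L_0L_0}=\F_{L_0L_0}=0$ (see \eqref{g(du_0,du_0)}), so that
\begin{align*}
\lambda\,\tBox_g\left(\mathrm{T}\left(\tfrac{u_0}{\lambda}\right)f\right) = g\,\dr f + \lambda\left(\tBox_g f + g f\right)
\end{align*}
schematically. Only after this cancellation do the surviving terms $g\,\dr\F$ and $\lambda\tBox_g\F$ close with \eqref{d g-g0 Linfini}, \eqref{BA F k} and \eqref{BA box F k}. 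A secondary point you also skip: at the endpoint $r=4$, when all four spatial derivatives fall on the metric coefficient in $g\,\dr\F$, the bound \eqref{d g-g0 Linfini} is unavailable and one must use the split \eqref{splitting g n}, putting $\nabla^4 g_{\BG}$ in $L^\infty$ via \eqref{d4 gBG-g0 Linfini} and $\nabla^4\tilde g$ in $L^2$ against $\dr\F$ in $L^\infty$, exactly as in the paper's proof.
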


\begin{proof}
We only need to prove \eqref{lem B estim 1}. As explained above, we rely on the equation satisfied by $\h$, that is \eqref{eq h}.  We have
\begin{align}
\l \nabla^r \tBox_g \h \r_{L^2_{\delta+1+r}} & \lesssim \sum_{r_1+r_2=r}\frac{\l \nabla^{r_1}\tBox_g\F\r_{L^2}}{\lambda^{r_2}} + \lambda\l\nabla^r \tBox_g g^{(3)} \r_{L^2} \label{estim nabla r box h}
\\&\quad + \frac{\e}{\lambda}\sum_{r_1+r_2=r}\frac{  \l \nabla^{r_1} \Pi_\geq \left(\h  \right)  \r_{L^2}  }{\lambda^{r_2}} + \l \nabla^r \mathcal{R}(g)\r_{L^2_{\delta+1+r}}\nonumber
\\& = \vcentcolon I + II + III + IV,\nonumber
\end{align}
where the $\e$ in front of $\Pi_\geq \left(\h  \right)$ comes from $F^{(1)}$.  For $I$ we simply use \eqref{BA box F k} to get 
\begin{align}
I \lesssim \frac{A_2\e}{\lambda^r}.\label{lem B I}
\end{align}
For $II$, we first note that $g^{(3)}$ is oscillating and therefore one looses potentially two powers of $\lambda$ in $\tBox_g g^{(3)}$. However this is not the case since if $\mathrm{T}$ is a trigonometric function and $f$ a scalar function then we have schematically
\begin{align}
\tBox_g\left( \lambda \mathrm{T}\left(\frac{u_0}{\lambda}\right)f\right) & = \frac{1}{\lambda}g^{-1}(\d u_0, \d u_0) f + g \dr f + \lambda \tBox_g f \nonumber
\\& = g \dr f + \lambda \left( \tBox_g f + g f \right),\label{calcul}
\end{align}
where we used $g^{-1}(\d u_0, \d u_0)=\GO{\lambda^2}$ (see \eqref{g(du_0,du_0)}) and did not write the trigonometric functions (see \eqref{box fct trigo bis 2} for an exact computation). Therefore, we use the decomposition of Lemma \ref{lem g3} and obtain from \eqref{calcul} on the one hand
\begin{align*}
\lambda\l\nabla^r \tBox_g g^{(3,\mathrm{BG})} \r_{L^2}  \lesssim \frac{C(A_i)\e^2 + \e}{\lambda^r},
\end{align*}
and on the other hand
\begin{align*}
\lambda\l\nabla^r \tBox_g g^{(3)}(\F) \r_{L^2} &\lesssim \sum_{r_1+r_2=r}\l \nabla^{r_1}g \dr \nabla^{r_2}\F\r_{L^2} + \lambda \l \nabla^r\tBox_g\F\r_{L^2} ,
\end{align*}
where for the sake of clarity we replaced each $g^{(3,\mathrm{T})}(\F)$ by $\F$ (see \eqref{g3T(F)}).  We study the sum. If $r_1\leq 3$, then we use \eqref{d g-g0 Linfini} and $\l g_0 \r_{L^\infty}\lesssim 1$ and \eqref{BA F k} to write 
\begin{align*}
\l \nabla^{r_1}g\dr \nabla^{r_2}\F \r_{L^2} \lesssim \frac{C(A_i)\e^2 + A_2\e}{\lambda^{r-1}}.
\end{align*}
If $r_1=4$, then we use the decomposition \eqref{splitting g n} and write
\begin{align*}
\l \nabla^4 g \dr \F \r_{L^2} & \lesssim \l \nabla^4 g_{\mathrm{BG}} \dr \F \r_{L^2}  +  \l \nabla^4 \tilde{g} \dr \F \r_{L^2} 
\\&\lesssim \l \nabla^4 g_{\mathrm{BG}} \r_{L^\infty} \l \dr \F \r_{L^2}  +  \l \nabla^4 \tilde{g} \r_{L^2} \l \dr \F \r_{L^\infty}
\\&\lesssim \frac{A_1\e^2}{\lambda^3} + \frac{C(A_i)\e^2}{\lambda^4} ,
\end{align*}
where we used \eqref{BA F k}. Using in addition \eqref{BA box F k} we obtain
\begin{align*}
\lambda\l\nabla^r \tBox_g g^{(3)}(\F) \r_{L^2} &\lesssim \frac{C(A_i)\e^2 + A_2\e}{\lambda^r}.
\end{align*}
We have proved that
\begin{align}
II \lesssim \frac{C(A_i)\e^2 + A_2\e}{\lambda^r}.\label{lem B II}
\end{align}
For $III$, we use \eqref{P high} to obtain
\begin{align}
III & \lesssim \e \sum_{r_1+r_2=r} \frac{  \l \nabla \nabla^{r_1} \h  \r_{L^2}  }{\lambda^{r_2}} \lesssim \frac{A_3\e^2}{\lambda^r},\label{lem B III}
\end{align}
where we used \eqref{BA h k} and $r_1\leq 4$. For $IV$, we simply use \eqref{d R g L2}. Together with \eqref{lem B I}, \eqref{lem B II} and \eqref{lem B III} this concludes the proof.
\end{proof}

\begin{lem}\label{lem C}
We have
\begin{align*}
\l \nabla^r\Pi_\leq \left(\Box_{g-g_0} \h \right) \r_{L^2} \lesssim  C(A_i) \frac{\e^2}{\lambda^r}  .
\end{align*}
\end{lem}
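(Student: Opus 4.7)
My plan is to decompose $g - g_0 = (g_{\BG} - g_0) + \tilde g$ following \eqref{splitting g n} and estimate each contribution to $\Pi_\leq(\Box_{g-g_0}\h)$ separately. Since $\Pi_\leq$ commutes with every partial derivative, I first Leibniz-expand
\begin{equation*}
\nabla^r \Pi_\leq\bigl((g-g_0)^{\mu\nu}\dr_\mu\dr_\nu \h\bigr) = \sum_{r_1+r_2=r}\binom{r}{r_1}\Pi_\leq\bigl(\nabla^{r_1}(g-g_0)^{\mu\nu}\,\dr_\mu\dr_\nu\nabla^{r_2}\h\bigr),
\end{equation*}
and then place $\nabla^{r_1}(g-g_0)$ in $L^\infty$ and $\dr_\mu\dr_\nu \nabla^{r_2}\h$ in $L^2$. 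The key structural gain is that $g_{\BG}-g_0 = \lambda g^{(1)} + \GO{\lambda^2}$ is $\GO{\lambda}$ in $L^\infty$ while $\tilde g$ is $\GO{\lambda^2}$ in $L^\infty$; combined with the bootstrap \eqref{BA h k} giving $\l\dr_\mu\dr_\nu\nabla^{r_2}\h\r_{L^2_{\delta+r_2+2}}\lesssim A_3\e/\lambda^{r_2+1}$ whenever $r_2\leq 3$, this already delivers the desired estimate $C(A_i)\e^2/\lambda^r$ thanks to the extra $\e$ carried by $F^{(1)}$ inside $g_{\BG}-g_0$ (via Theorem \ref{theo bg system}) or by $\h, \F$ inside $\tilde g$ (via Proposition \ref{prop WSS chap 2} and \eqref{BA F k}--\eqref{BA h k}). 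The support property of $F^{(1)}$ highlighted just before the lemma allows me to exchange weighted and unweighted $L^2$ norms throughout.

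For the background piece, $\l\nabla^{r_1}(g_{\BG}-g_0)\r_{L^\infty}$ is controlled by \eqref{d g-g0 Linfini} for $r_1\leq 3$ and by \eqref{d4 gBG-g0 Linfini} in the case $r_1=4$ (where $r_2=0$ and $\dr^2\h$ only loses one $\lambda$). For the $\tilde g$ piece I would use $\tilde g = \lambda^2(\sin(u_0/\lambda)\F + \h) + \GO{\lambda^3}$ together with the weighted Sobolev embeddings of Proposition \ref{prop WSS chap 2} applied to the bootstraps, which yields one more power of $\lambda$ than in the background piece and hence ample safety margin.

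The main obstacle, and the precise reason the projector $\Pi_\leq$ is needed here, is the endpoint $r_1=0$, $r_2=4$: the bootstrap \eqref{BA h k} controls $\nabla^5\h$, $\dr_t\nabla^4\h$ and $\dr_t^2\nabla^3\h$ but no second-order derivative of $\nabla^4\h$. I plan to factor out one spatial derivative via $\dr_\mu\dr_\nu\nabla^4 = \dr_\mu\dr_\nu\nabla\cdot\nabla^3$ and apply the Bernstein-type inequality $\l\Pi_\leq\nabla f\r_{L^2}\lesssim \lambda^{-1}\l f\r_{L^2}$ established in Appendix \ref{section proof lem commute bis}. Concretely, for every $(\mu,\nu)$ I would write
\begin{equation*}
\Pi_\leq\bigl((g-g_0)\dr_\mu\dr_\nu\nabla^4\h\bigr) = \Pi_\leq\nabla\bigl((g-g_0)\dr_\mu\dr_\nu\nabla^3\h\bigr) - \Pi_\leq\bigl(\nabla(g-g_0)\cdot\dr_\mu\dr_\nu\nabla^3\h\bigr),
\end{equation*}
bound the first piece by $\lambda^{-1}\l(g-g_0)\dr_\mu\dr_\nu\nabla^3\h\r_{L^2}$ (the $\lambda^{-1}$ being absorbed by the $\lambda$ in $\l g-g_0\r_{L^\infty}$) and the second by putting $\nabla(g-g_0)$ in $L^\infty$; in both pieces the remaining $\dr_\mu\dr_\nu\nabla^3\h$ is controlled by \eqref{BA h k} uniformly in $(\mu,\nu)$. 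This yields the required $C(A_i)\e^2/\lambda^4$, after which the non-endpoint cases amount to routine derivative counting.
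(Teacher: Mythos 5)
Your argument is correct, but it takes a genuinely different (and heavier) route than the paper. The paper's proof disposes of all endpoint issues in one stroke: for $r\geq 1$ it applies the Bernstein bound \eqref{P low} to the \emph{outer} operator, $\l \nabla^r\Pi_\leq (f)\r_{L^2}\lesssim \lambda^{-1}\l \nabla^{r-1} f\r_{L^2}$ with $f=(g-g_0)\dr^2\h$, so that only $r-1\leq 3$ derivatives are distributed by Leibniz; then every factor $\nabla^{r_1}(g-g_0)$ with $r_1\leq 3$ goes in $L^\infty$ via \eqref{d g-g0 Linfini} and every $\dr^2\nabla^{r_2}\h$ with $r_2\leq 3$ goes in $L^2$ via \eqref{BA h k}, and the extra $\lambda^{-1}$ is absorbed by the $\lambda$ gained from $\l g-g_0\r_{L^\infty}$. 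No splitting \eqref{splitting g n}, no use of \eqref{d4 gBG-g0 Linfini}, and no top-order endpoints ever appear. You instead Leibniz-expand the full $\nabla^r$ first and then repair the two endpoints separately: your treatment of $r_1=0$, $r_2=4$ (peeling off one spatial derivative and using \eqref{P low} plus a commutation with the product rule) is a correct local version of the paper's global trick and yields the right $\e^2/\lambda^4$.

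One caution on your other endpoint, $r_1=4$, $r_2=0$: for the $\tilde g$ part your stated pairing does not literally work. You cannot put $\nabla^4\tilde g$ in $L^\infty$ (that would require six derivatives of $\h$), and the crude alternative $\l\nabla^4\tilde g\r_{L^2}\l\dr^2\h\r_{L^\infty}\lesssim (\e/\lambda^2)(\e/\lambda^3)$ loses a power of $\lambda$. The estimate is still true, but only after expanding $\nabla^4\bigl(\sin(u_0/\lambda)\F\bigr)$ term by term and choosing, for each term, which factor is placed in $L^\infty$ (e.g.\ the worst term $\lambda^{-2}\F\,\dr^2\h$ must be bounded by $\lambda^{-2}\l\F\r_{L^\infty}\l\dr^2\h\r_{L^2}$ using \eqref{BA F k} and \eqref{BA h k}, while $\lambda^2\nabla^4\h\,\dr^2\h$ takes $\nabla^4\h$ in $L^2$ and $\dr^2\h$ in $L^\infty$), exactly as in \eqref{estim nabla4 b1}--\eqref{estim nabla4 b}. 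So your plan closes, but the "ample safety margin" claim hides the one place where the pairing must be chosen carefully; the paper's up-front use of \eqref{P low} is what lets it avoid this bookkeeping entirely.
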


\begin{proof}
We first treat the case $r=0$. We forget about the projector $\Pi_\leq$ and use \eqref{d g-g0 Linfini} and \eqref{BA h k} to obtain
\begin{align*}
\l  \Pi_\leq \left(\left(g-g_0\right)\dr^2 \h \right)  \r_{L^2} & \lesssim \l g-g_0 \r_{L^\infty} \l \dr^2 \h \r_{L^2}
\\&\lesssim \lambda A_3\e \times \frac{A_3\e}{\lambda}
\\&\lesssim C(A_i) \e^2.
\end{align*}
If $r\geq 1$ we use \eqref{P low} to get rid of one derivative and the Leibniz rule to obtain
\begin{align*}
\l  \nabla^r\Pi_\leq \left(\left(g-g_0\right)\dr^2 \h \right) \r_{L^2} & \lesssim \frac{1}{\lambda} \l  \nabla^{r-1} \left(\left(g-g_0\right)\dr^2 \h \right) \r_{L^2} 
\\&\lesssim \frac{1}{\lambda} \sum_{r_1+r_2=r-1} \l \nabla^{r_1}  \left(g-g_0\right)\r_{L^\infty} \l\dr^2\nabla^{r_2} \h \r_{L^2}
\\&\lesssim \frac{1}{\lambda}\times \frac{A_3\e}{\lambda^{r_1-1}}\times\frac{A_3\e}{\lambda^{r_2+1}}
\\&\lesssim C(A_i) \frac{\e^2}{\lambda^{r}},
\end{align*}
where we again use \eqref{d g-g0 Linfini} in addition to \eqref{BA h k} and the fact that $r_i\leq 3$.
\end{proof}

Putting together Lemmas \ref{lem A}, \ref{lem B} and \ref{lem C} and the estimate \eqref{estim Q} we obtain

\begin{lem}\label{lem Cr}
We have
\begin{align*}
\l \mathbf{C}_r \r_{L^2} \lesssim \frac{A_1\e + A_3\e^2}{\lambda^r} +  \l \nabla^r \Box_{g_0}\F \r_{L^2} .
\end{align*}
\end{lem}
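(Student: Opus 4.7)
The strategy is to combine the pointwise decomposition \eqref{estim Cr} with Lemmas \ref{lem A}, \ref{lem B}, \ref{lem C} and the remainder estimate \eqref{estim Q}. I start from \eqref{estim Cr}, which rewrites $\mathbf{C}_r = \nabla^r\Box_{g_0}L_0\F$ (using the transport equation \eqref{eq F true} for $\F$) as the sum of three pieces: (a) the top-order term $\nabla^r\tBox_{g_0}\bigl(\Pi_\leq(\h)F^{(1)}\bigr)$ in which every derivative of $\nabla^r\tBox_{g_0}$ lands on $\Pi_\leq(\h)$, (b) the diagonal term $\nabla^r\Box_{g_0}\F$, and (c) a remainder $\mathcal{Q}$ in which at least one derivative hits a background quantity.

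Piece (b) contributes $\l \nabla^r\Box_{g_0}\F\r_{L^2}$ directly to the right-hand side of the claim, and piece (c) is bounded by $\frac{A_1\e+A_3\e^2}{\lambda^r}$ by \eqref{estim Q}. It remains to handle (a). Here I exploit the compact support of $F^{(1)}$ in $\{|x|\leq C_{\mathrm{supp}}R\}$ together with the bound $\l F^{(1)}\r_{L^\infty}\lesssim \e$ (a consequence of Theorem \ref{theo bg system} and Sobolev embedding): these allow me to replace the weighted $L^2$-norm by an unweighted one and to factor out a gain of $\e$ from $F^{(1)}$. Thus the problem reduces to bounding $\e\,\l\nabla^r\tBox_{g_0}\Pi_\leq(\h)\r_{L^2}$, and I inject the decomposition \eqref{decomposition box P h}. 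Lemma \ref{lem A} controls the commutator piece by $\frac{A_3\e}{\lambda^r}$, Lemma \ref{lem B} controls $\nabla^r\Pi_\leq(\tBox_g\h)$ by $\frac{C(A_i)\e}{\lambda^r}$ (this is where the missing derivative is regained via the wave equation \eqref{eq h} and the Bernstein-type estimates on $\Pi_\leq,\Pi_\geq$), and Lemma \ref{lem C} controls $\nabla^r\Pi_\leq(\tBox_{g-g_0}\h)$ by $\frac{C(A_i)\e^2}{\lambda^r}$.

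After multiplying by the prefactor $\e$ coming from $F^{(1)}$, the contribution of (a) is therefore at most $\frac{C(A_i)\e^2}{\lambda^r}$, which is absorbed into $\frac{A_3\e^2}{\lambda^r}$ for $\e$ small enough compared to the bootstrap constants $A_i$. Adding (b) and (c) yields the announced estimate. The genuine analytical obstacle, namely the \emph{a priori} loss of one derivative in the coupling between $\F$ and $\h$, has already been resolved in Lemma \ref{lem B}; here the argument is just a bookkeeping combination of the three preceding lemmas and the elementary bound \eqref{estim Q}.
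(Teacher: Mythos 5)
Your proposal is correct and follows essentially the same route as the paper: the paper likewise splits $\mathbf{C}_r$ via \eqref{estim Cr} into the diagonal term $\nabla^r\Box_{g_0}\F$, the lower-order remainder $\mathcal{Q}$ estimated by \eqref{estim Q}, and the top-order term $\nabla^r\tBox_{g_0}\Pi_\leq(\h)F^{(1)}$, which is handled exactly as you do through the decomposition \eqref{decomposition box P h} and Lemmas \ref{lem A}, \ref{lem B}, \ref{lem C}, using the factor $F^{(1)}$ to gain an $\e$ and to discard the spatial weights. The only (harmless) wrinkle is the bookkeeping of constants, where your absorption of $C(A_i)\e^2$ into the stated bound mirrors the paper's own implicit convention.
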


We now estimate $\mathbf{B}_r$. 

\begin{lem}\label{lem Br}
We have
\begin{align*}
\l \mathbf{B}_r \r_{L^2} \lesssim \frac{A_1\e + C(A_i)\e^2}{\lambda^r} +  \l \nabla^r \Box_{g_0}\F \r_{L^2} .
\end{align*}
\end{lem}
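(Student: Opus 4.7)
The plan is to exploit the commutator formula \eqref{commutateur abstrait} together with the finite induction hypothesis \eqref{hypothèse de récurrence} at levels $r' \leq r-1$, reducing everything to quantities that either already satisfy the induction estimate or are the top-order term $\l \nabla^r\Box_{g_0}\F\r_{L^2}$ that is allowed to appear in the conclusion. The triangular structure announced in the remark following the definition of $\mathbf{A}_r,\mathbf{B}_r,\mathbf{C}_r$ is precisely what makes this closure possible.

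More concretely, I would first apply \eqref{commutateur abstrait} to $\mathbf{B}_r=[L_0,\nabla^r]\Box_{g_0}\F$, then Leibniz-expand each $\nabla^{r-1-k}[L_0,\nabla]\nabla^k$, and bound every factor depending on $L_0$ (and its spatial derivatives) in $L^\infty$ by a constant $C(C_0)$ via the background regularity. This gives schematically, as in \eqref{commut L0 nabla r},
\begin{align*}
|\mathbf{B}_r| \lesssim C(C_0)\sum_{m=0}^{r-1}|\dr\nabla^m\Box_{g_0}\F|.
\end{align*}
I would then split $\dr$ into a spatial derivative $\nabla$ and a time derivative $\dr_t$. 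For the spatial case, $\l\nabla^{m+1}\Box_{g_0}\F\r_{L^2}$ is controlled for $m+1\leq r-1$ by the induction hypothesis \eqref{hypothèse de récurrence}, giving a contribution $\lesssim (A_1\e+C(A_i)\e^2)/\lambda^r$; the case $m+1=r$ is kept as $\l\nabla^r\Box_{g_0}\F\r_{L^2}$, which is exactly the term allowed in the statement.

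For the time derivative case, I would use the transport equation \eqref{eq box F} at the lower level $m\leq r-1$,
\begin{align*}
L_0^t\dr_t\nabla^m\Box_{g_0}\F = -L_0^i\nabla^{m+1}\Box_{g_0}\F + \mathbf{A}_m + \mathbf{B}_m + \mathbf{C}_m,
\end{align*}
divide by $L_0^t$ (whose inverse is bounded thanks to \eqref{estim L0}) and take the $L^2$ norm. The $\nabla^{m+1}\Box_{g_0}\F$ term is handled as in the spatial case. The terms $\mathbf{A}_m$ and $\mathbf{C}_m$ are estimated by Lemmas \ref{lem Ar} and \ref{lem Cr} applied at level $m\leq r-1$, and the residual $\l\nabla^m\Box_{g_0}\F\r_{L^2}$ appearing in those bounds is absorbed by the induction hypothesis \eqref{hypothèse de récurrence} at level $m$. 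The term $\mathbf{B}_m$ is handled by applying Lemma \ref{lem Br} itself at the strictly lower level $m$, i.e.\ by the inductive step already carried out, and again absorbing the $\l\nabla^m\Box_{g_0}\F\r_{L^2}$ residual via \eqref{hypothèse de récurrence}.

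Summing the resulting contributions over $0\leq m\leq r-1$ produces exactly
\begin{align*}
\l\mathbf{B}_r\r_{L^2}\lesssim \sum_{m=0}^{r-1}\frac{A_1\e+C(A_i)\e^2}{\lambda^{m+1}} + \l\nabla^r\Box_{g_0}\F\r_{L^2} \lesssim \frac{A_1\e+C(A_i)\e^2}{\lambda^r} + \l\nabla^r\Box_{g_0}\F\r_{L^2},
\end{align*}
which is the claimed bound. The main conceptual obstacle is the self-referential appearance of $\mathbf{B}_m$ inside the argument for $\mathbf{B}_r$; this is resolved precisely because the finite induction on $r\in\llbracket 0,4\rrbracket$ gives Lemma \ref{lem Br} at all strictly lower levels, while the only term that cannot be closed in this way, namely $\l\nabla^r\Box_{g_0}\F\r_{L^2}$, is tolerated on the right-hand side and will later be absorbed by Gronwall when the transport estimate \eqref{eq box F} is closed at level $r$.
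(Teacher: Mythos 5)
Your proposal is correct and follows essentially the same route as the paper: the commutator bound \eqref{commut L0 nabla r}, the split into spatial and time derivatives, the use of the transport equation \eqref{eq box F} at lower levels to convert $\dr_t\nabla^m\Box_{g_0}\F$ into $\nabla^{m+1}\Box_{g_0}\F + \mathbf{A}_m+\mathbf{B}_m+\mathbf{C}_m$, Lemmas \ref{lem Ar} and \ref{lem Cr} together with the induction hypothesis \eqref{hypothèse de récurrence}, and the absorption of the lower-level $\mathbf{B}_m$ terms (which the paper phrases as an ``iterative argument'' rather than invoking the lemma at strictly lower levels of the strong induction, but this is the same mechanism). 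The only cosmetic deviation is that for the intermediate spatial terms you use \eqref{hypothèse de récurrence} where the paper uses the bootstrap \eqref{BA F k} directly; both yield the required $(A_1\e+C(A_i)\e^2)/\lambda^r$ bound.
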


\begin{proof}
Note that $\mathbf{B}_0=0$ so we can assume $r\geq 1$. We first use \eqref{commut L0 nabla r} to obtain
\begin{align}
\l \mathbf{B}_r \r_{L^2} & \lesssim \sum_{k=0}^{r-1} \l  \nabla^{k+1} \Box_{g_0}\F \r_{L^2} + \sum_{k=0}^{r-1} \l \dr_t \nabla^k \Box_{g_0}\F \r_{L^2}, \label{estim Br inter}
\end{align}
where we distinguished between time and spatial derivatives. For the first sum in \eqref{estim Br inter}, we note that if $k\leq r-2$ we have 
\begin{align*}
 \l  \nabla^{k+1} \Box_{g_0}\F \r_{L^2} \lesssim \sum_{\ell=0}^{k+1} \l \dr \nabla^{\ell}\F \r_{L^2} \lesssim \frac{A_1\e}{\lambda^r},
\end{align*}
where we used \eqref{BA F k}. This gives
\begin{align}
\sum_{k=0}^{r-1} \l  \nabla^{k+1} \Box_{g_0}\F \r_{L^2} \lesssim \l  \nabla^{r} \Box_{g_0}\F \r_{L^2} + \frac{A_1\e}{\lambda^r}.\label{estim Br first sum}
\end{align}
For the second sum in \eqref{estim Br inter} we need to proceed differently since we didn't include time derivative of $\Box_{g_0}\F$ in the bootstrap assumptions. We use the equation satisfied by $\nabla^k \Box_{g_0}\F$ which schematically reads
\begin{align*}
\dr_t \nabla^k \Box_{g_0}\F = \nabla^{k+1} \Box_{g_0}\F + \mathbf{A}_k + \mathbf{B}_k + \mathbf{C}_k.
\end{align*}
Therefore, if we sum these equalities from $0$ to $r-1$ and use \eqref{estim Br first sum} again we obtain
\begin{align}
\l \mathbf{B}_r \r_{L^2} & \lesssim \l  \nabla^{r} \Box_{g_0}\F \r_{L^2} + \frac{A_1\e}{\lambda^r}  +   \sum_{k=0}^{r-1} \left( \l \mathbf{A}_k \r_{L^2} + \l \mathbf{B}_k \r_{L^2} + \l \mathbf{C}_k  \r_{L^2}   \right) .\label{estim Br inter 2}
\end{align}
We can now use Lemmas \ref{lem Ar} and \ref{lem Cr}:
\begin{align*}
\sum_{k=0}^{r-1} \left( \l \mathbf{A}_k \r_{L^2}  + \l \mathbf{C}_k  \r_{L^2}   \right)& \lesssim \sum_{k=0}^{r-1} \left( \frac{A_1\e + A_3\e^2}{\lambda^k} +  \l \nabla^k \Box_{g_0}\F \r_{L^2} \right)
\\&\lesssim  \frac{A_1\e + C(A_i)\e^2}{\lambda^{r-1}},
\end{align*}
where we used the induction assumption \eqref{hypothèse de récurrence} for $\l \nabla^k \Box_{g_0}\F \r_{L^2}$ with $k\leq r-1$. Since we can assume $\lambda\leq \e$ this shows that 
\begin{align*}
\sum_{k=0}^{r-1} \left( \l \mathbf{A}_k \r_{L^2}  + \l \mathbf{C}_k  \r_{L^2}   \right) \lesssim \frac{C(A_i)\e^2}{\lambda^r}.
\end{align*}
Going back to \eqref{estim Br inter 2} this gives 
\begin{align*}
\l \mathbf{B}_r \r_{L^2} & \lesssim \l  \nabla^{r} \Box_{g_0}\F \r_{L^2} + \frac{A_1\e + C(A_i)\e^2}{\lambda^r}  +   \sum_{k=0}^{r-1}  \l \mathbf{B}_k \r_{L^2} .
\end{align*}
Thanks to an iterative argument, this shows that
\begin{align*}
\l \mathbf{B}_r \r_{L^2} & \lesssim \l  \nabla^{r} \Box_{g_0}\F \r_{L^2} + \frac{A_1\e + C(A_i)\e^2}{\lambda^r},
\end{align*}
where we again used the induction assumption \eqref{hypothèse de récurrence} for $\l \nabla^k \Box_{g_0}\F \r_{L^2}$ with $k\leq r-1$ and the fact that $\mathbf{B}_0=0$.
\end{proof}

Putting everything together, we finally obtain the following proposition.

\begin{prop}\label{prop box g0 F n+1}
The following estimate holds
\begin{align}
\max_{r\in\llbracket 0,4\rrbracket}\lambda^r\l \nabla^r \tBox_{g_0}\F \r_{L^2} & \lesssim C(A_i)\e^2+ A_1\e  .\label{estim box g0 F n+1}
\end{align}
\end{prop}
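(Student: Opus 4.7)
The plan is to prove the estimate \eqref{estim box g0 F n+1} by finite strong induction on $r \in \llbracket 0,4\rrbracket$, establishing the stronger pointwise-in-$r$ bound \eqref{hypothèse de récurrence}, namely
\[
\l\nabla^r \Box_{g_0}\F\r_{L^2} \lesssim \frac{C(A_i)\e^2 + A_1\e}{\lambda^r},
\]
exactly as set up in the preceding discussion. The key observation is that equation \eqref{eq box F}, i.e.\ $L_0(\nabla^r\Box_{g_0}\F) = \mathbf{A}_r + \mathbf{B}_r + \mathbf{C}_r$, is a transport equation in $L_0$, so the transport energy estimate (Lemma \ref{lem L0 h =f}) applies. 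The three source terms have already been estimated in Lemmas \ref{lem Ar}, \ref{lem Cr} and \ref{lem Br}, and crucially each bound is of the form $\tfrac{A_1\e + C(A_i)\e^2}{\lambda^r} + \l\nabla^r \Box_{g_0}\F\r_{L^2}$, where the self-referential term will be absorbed via Gronwall.

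The proof proceeds as follows. First I would verify the required initial data bound for $\nabla^r \Box_{g_0}\F\restriction{\Sigma_0}$. Since $\F\restriction{\Sigma_0}=0$ by \eqref{ID F=0}, all purely spatial derivatives of $\F$ vanish initially; the only nontrivial contributions to $\Box_{g_0}\F\restriction{\Sigma_0}$ come from $\dr_t^2\F$ and $\dr_t\F$, which can be read off by differentiating the transport equation \eqref{eq F true} in $t$ and substituting \eqref{ID dt F}. This reduces the initial size of $\nabla^r\Box_{g_0}\F$ to an expression controlled by $\l\dr_t\h\r_{H^r}$ and lower order terms, hence bounded by $C_{\mathrm{in}}\e/\lambda^r$ thanks to \eqref{estim h initial}, which is of the desired form. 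Second, I would combine Lemmas \ref{lem Ar}, \ref{lem Br} and \ref{lem Cr} to obtain
\[
\l \mathbf{A}_r + \mathbf{B}_r + \mathbf{C}_r\r_{L^2} \lesssim \frac{A_1\e + C(A_i)\e^2}{\lambda^r} + \l\nabla^r\Box_{g_0}\F\r_{L^2}.
\]
Plugging this into Lemma \ref{lem L0 h =f} and applying Gronwall's inequality on $[0,T]$ yields \eqref{hypothèse de récurrence} for the value of $r$ under consideration. Note that Lemma \ref{lem Br} itself invokes the induction hypothesis at indices $k \le r-1$ (see the derivation ending at \eqref{estim Br inter 2}), so the strong induction structure is genuinely used; Lemmas \ref{lem Ar} and \ref{lem Cr} are purely self-contained bounds.

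Finally, taking the maximum over $r \in \llbracket 0,4\rrbracket$ of the resulting estimates and multiplying by $\lambda^r$ gives \eqref{estim box g0 F n+1}. I do not anticipate a major obstacle: the heavy lifting has already been done in Lemmas \ref{lem A}, \ref{lem B}, \ref{lem C} (which rely on the Bernstein-type projector estimates \eqref{P low}--\eqref{P high} and the commutator estimate \eqref{commute bis}), in the commutator Lemma \ref{lem commute null} used inside Lemma \ref{lem Ar}, and in the triangular cascade $\mathbf{B}_k \mapsto \mathbf{B}_r$ used inside Lemma \ref{lem Br}. The only mild subtlety is to make sure the constants track correctly through the induction: at step $r$ the inductive input from step $r-1$ comes multiplied by a factor of $\lambda$ (since we are comparing $\lambda^{r-1}$ and $\lambda^r$ weights), which under the convention $\lambda \le \e$ used in Lemma \ref{lem Br} is precisely what is needed to absorb the $A_1\e$ contribution at the previous level into the $C(A_i)\e^2$ contribution at the current level without any loss in the smallness threshold.
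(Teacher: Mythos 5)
Your proposal is correct and follows essentially the same route as the paper: the transport energy estimate of Lemma \ref{lem L0 h =f} applied to \eqref{eq box F}, with the source bounded by Lemmas \ref{lem Ar}, \ref{lem Br}, \ref{lem Cr}, the initial data handled via $\F\restriction{\Sigma_0}=0$ and \eqref{ID dt F}, then Gronwall inside a strong induction on $r$. The only (immaterial) deviation is that you bound the initial data with \eqref{estim h initial} (giving $C_{\mathrm{in}}\e$) where the paper invokes \eqref{BA h k} (giving $A_3\e^2$); both are absorbed into the right-hand side of \eqref{estim box g0 F n+1} since $C_{\mathrm{in}}\ll A_1$.
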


\begin{proof}
We want to apply Lemma \ref{lem L0 h =f} to the equation \eqref{eq box F}. We estimate its RHS, by putting together Lemmas \ref{lem Ar}, \ref{lem Br} and \ref{lem Cr}:
\begin{align*}
\l \text{RHS of \eqref{eq box F}} \r_{L^2} & \lesssim \l \nabla^r \Box_{g_0} \F  \r_{L^2} +  \frac{C(A_i)\e^2+ A_1\e}{\lambda^r} .
\end{align*}
Lemma \ref{lem L0 h =f} then implies 
\begin{align}
\l \nabla^r \Box_{g_0}\F \r_{L^2}(t) \lesssim \l \nabla^r \Box_{g_0}\F \r_{L^2} (0) +  \frac{C(A_i)\e^2+ A_1\e}{\lambda^r} +  \int_0^t \l \nabla^r \Box_{g_0} \F  \r_{L^2} \d s .\label{estim finale nabla r box g0 F}
\end{align}
We estimate the initial data term in \eqref{estim finale nabla r box g0 F}. Since $\F\restriction{\Sigma_0}=0$, the only cases we need to consider are $\dr_t\nabla^{r+1}\F$ and $\dr^2_t\nabla^r \F$.  The initial data for such terms are obtained by differentiating \eqref{ID dt F}. Using the background regularity this gives
\begin{align*}
\l \dr_t\nabla^{r+1}\F \r_{L^2}(0) & \lesssim  \e \sum_{r'\leq r+1} \l \nabla^{r'}  \Pi_\leq \left( \h \right) \r_{L^2}(0),
\end{align*}
where the $\e$ comes from $F^{(1)}$ in front of $\Pi_\leq \left( \h \right)$ in \eqref{ID dt F}. Using $r\leq 4$ and \eqref{BA h k} we obtain
\begin{align*}
\l \dr_t\nabla^{r+1}\F \r_{L^2}(0) & \lesssim \frac{A_3\e^2}{\lambda^r}.
\end{align*}
The proof is similar for $\dr^2_t\nabla^r \F$ and we omit the details. Going back to \eqref{estim finale nabla r box g0 F} we have proved that
\begin{align*}
\l \nabla^r \Box_{g_0}\F \r_{L^2} \lesssim \frac{C(A_i)\e^2+ A_1\e}{\lambda^r} + \int_0^t \l \nabla^r \Box_{g_0} \F  \r_{L^2} \d s.
\end{align*}
Gronwall's inequality then implies that
\begin{align*}
\l \nabla^r \Box_{g_0}\F \r_{L^2} \lesssim \frac{C(A_i)\e^2+ A_1\e}{\lambda^r} .
\end{align*}
This concludes the induction proving that \eqref{hypothèse de récurrence} holds for all values of $r\in \llbracket 0,4 \rrbracket$. Since $\Box_{g_0}\F$ and $\tBox_{g_0}\F$ only differs by lower order terms, this also concludes the proof of \eqref{estim box g0 F n+1}.
\end{proof}

\subsubsection{Estimates for $(g-g_0)\dr^2\F$}\label{section (g-g0)d2F}

We now estimate the second term in \eqref{decomposition box F n+1}.

\begin{prop}\label{prop g1d2F n+1}
For $r\in \llbracket 0,4\rrbracket$ we have
\begin{align}
\max_{r\in\llbracket 0,4\rrbracket}\lambda^r\l \nabla^r\left( \left( g - g_0 \right)\dr^2 \F \right)  \r_{L^2} \lesssim C(A_i) \e^2 \label{BA box F n+1 b}.
\end{align}
\end{prop}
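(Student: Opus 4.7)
The plan is to expand $\nabla^r\bigl((g - g_0)\dr^2\F\bigr)$ by the Leibniz rule and estimate each summand in $L^2$ by $C(A_i)\e^2\lambda^{-r}$, reserving the decomposition $g = g_{\BG} + \tilde{g}$ from \eqref{splitting g n} for the single critical term. Schematically,
\begin{align*}
\nabla^r\bigl((g - g_0)\dr^2\F\bigr) = \sum_{r_1 + r_2 = r}\binom{r}{r_1}\nabla^{r_1}(g - g_0)\cdot\dr^2\nabla^{r_2}\F,
\end{align*}
so it suffices to treat each of these pieces.

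For every $r_1 \leq 3$, I would place the metric factor in $L^\infty$ using \eqref{d g-g0 Linfini}, which gives $\l\nabla^{r_1}(g-g_0)\r_{L^\infty}\lesssim A_3\e\,\lambda^{1-r_1}$, and the wave factor in $L^2$ using the bootstrap assumption \eqref{BA F k}, which gives $\l\dr^2\nabla^{r_2}\F\r_{L^2}\lesssim A_1\e\,\lambda^{-r_2-1}$ (splitting $\dr^2$ into $\nabla^2$, $\dr_t\nabla$, and $\dr_t^2$ and observing that each combination is within the bootstrap window since $r_2 + 2 \leq 6$). The product is of order $C(A_i)\e^2\lambda^{-r}$, as required. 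In particular this already settles the estimate for every $r \leq 3$.

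The only case this argument does not cover is $r = 4$, $r_1 = 4$, $r_2 = 0$, where Lemma \ref{lem g BG gtilde} fails to provide a uniform $L^\infty$ control of $\nabla^4(g - g_0)$. Here I would split $g - g_0 = (g_{\BG} - g_0) + \tilde{g}$. The background contribution $\nabla^4(g_{\BG} - g_0)\cdot\dr^2\F$ is controlled directly via \eqref{d4 gBG-g0 Linfini}, which bounds $\nabla^4(g_{\BG} - g_0)$ in $L^\infty$ by $C(C_0)\e\lambda^{-3}$, paired with $\l\dr^2\F\r_{L^2}\lesssim A_1\e\lambda^{-1}$ from \eqref{BA F k}. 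For the non-background contribution $\nabla^4\tilde{g}\cdot\dr^2\F$, I would invoke the explicit expression \eqref{def gtilde}, $\tilde{g} = \lambda^2\bigl(\F\sin(u_0/\lambda) + \h\bigr) + \lambda^3 g^{(3)}(\F)$, and expand $\tilde{g}\cdot\dr^2\F$ a second time by Leibniz, redistributing derivatives between $\F$ or $\h$, the oscillating factor $\sin(u_0/\lambda)$, and $\dr^2\F$. Each resulting summand is then estimated by pairing one factor in $L^\infty$ (obtained by Sobolev embedding, exploiting the compact support of $\F$ so that weights can be discarded) with a factor in $L^2$, using the bootstrap assumptions \eqref{BA F k} and \eqref{BA h k}. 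The $\lambda^2$ prefactor inherent in $\tilde{g}$ exactly absorbs the two extra inverse powers of $\lambda$ that arise when $\dr^2\F$ is placed in $L^\infty$ rather than $L^2$, and every piece ends up bounded by $C(A_i)\e^2\lambda^{-4}$.

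The main obstacle is precisely this top-order term $\nabla^4\tilde{g}\cdot\dr^2\F$, which is quadratic in the unknowns $(\F,\h)$ and sits at the extreme of the bootstrap window; what rescues the estimate is the $\lambda^2$ gain built into the definition of $\tilde{g}$, reflecting the fact that the non-background part of the metric enters the ansatz only at order $\lambda^2$.
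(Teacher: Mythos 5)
Your proposal is correct and follows essentially the same route as the paper: Leibniz expansion, the $L^\infty$--$L^2$ pairing via \eqref{d g-g0 Linfini} and \eqref{BA F k} for $r_1\leq 3$, and for the single critical term $r_1=4$ the splitting $g-g_0=(g_{\BG}-g_0)+\tilde{g}$ with \eqref{d4 gBG-g0 Linfini} for the background part and a second Leibniz expansion of $\nabla^4\tilde{g}\cdot\dr^2\F$ using \eqref{BA F k}, \eqref{BA h k} and the Sobolev bound $\l\dr^2\F\r_{L^\infty}\lesssim A_1\e\lambda^{-3}$, the $\lambda^2$ prefactor of $\tilde{g}$ absorbing the loss. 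This matches the paper's argument in both structure and the key estimates invoked.
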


\begin{proof}
We start by noticing that \eqref{BA F k} imply
\begin{align}
\max_{r\in\llbracket 0,4\rrbracket}\lambda^{r+1}\l \dr^2 \nabla^r \F \r_{L^2} \lesssim A_1 \e, \label{estim d2 F n+1 L2}
\end{align}
which in turn implies
\begin{align}
\max_{r\in\llbracket 0,2\rrbracket}\lambda^{r+3}\l \dr^2 \nabla^r \F \r_{L^\infty} \lesssim A_1 \e, \label{estim d2 F n+1 Linfini}
\end{align}
with the usual $H^2\xhookrightarrow{} L^\infty$ embedding. To prove the proposition, we start with the Leibniz rule:
\begin{align*}
\l \nabla^r\left(  \left( g - g_0 \right)\dr^2 \F \right)  \r_{L^2} & \lesssim \sum_{r_1+r_2=r}\l \nabla^{r_1}  \left( g - g_0 \right) \dr^2 \nabla^{r_2}\F  \r_{L^2}.
\end{align*}
If $r_1\leq 3$, then we use \eqref{d g-g0 Linfini} and \eqref{estim d2 F n+1 L2} to obtain
\begin{align*}
\l \nabla^{r_1}  \left( g - g_0 \right) \dr^2 \nabla^{r_2}\F  \r_{L^2} & \lesssim \l\nabla^{r_1}  \left( g - g_0 \right)  \r_{L^\infty} \l  \dr^2 \nabla^{r_2}\F \r_{L^2}
\\&\lesssim \frac{\e}{\lambda^{r_1-1}} \times \frac{A_1\e}{\lambda^{r_2+1}}
\\&\lesssim \frac{A_1\e^2}{\lambda^r},
\end{align*}
where we also used the fact that $r_2\leq 4$. 

\saut
If now $r_1=4$, we have $r=4$ ans $r_2=0$.  In this case we use the decomposition \eqref{splitting g n} to write
\begin{align}
\l \nabla^{4}  \left( g - g_0 \right) \dr^2 \F  \r_{L^2} & \lesssim \l \nabla^{4}  \left( g_\BG - g_0 \right) \dr^2 \F  \r_{L^2} + \l \nabla^{4}  \tilde{g} \dr^2 \F  \r_{L^2}.\label{estim nabla4}
\end{align}
For the first term in \eqref{estim nabla4} we put the background quantities in $L^\infty$:
\begin{align}
\l \nabla^{4}  \left( g_\BG - g_0 \right) \dr^2 \F  \r_{L^2} & \lesssim \l \nabla^{4}  \left( g_\BG - g_0 \right) \r_{L^\infty} \l \dr^2 \F \r_{L^2}\label{estim nabla4 a}
\\&\lesssim C(C_0) \frac{\e}{\lambda^3} \times \frac{A_1\e}{\lambda}\nonumber
\\&\lesssim C(C_0) \frac{A_1\e^2}{\lambda^4}, \nonumber
\end{align}
where we used \eqref{d4 gBG-g0 Linfini} and \eqref{estim d2 F n+1 L2}. For the second term in \eqref{estim nabla4} we use \eqref{def gtilde} to decompose it further:
\begin{align}
\l \nabla^{4}  \tilde{g} \dr^2 \F  \r_{L^2} & \leq \lambda^2 \l \nabla^{4}  \left(\sin\left(\frac{u_0}{\lambda}\right)\F  \right) \dr^2 \F  \r_{L^2} + \lambda^2 \l \nabla^{4}  \h \dr^2 \F  \r_{L^2} + \lambda^3 \l \nabla^{4}  g^{(3)}(\F) \dr^2 \F  \r_{L^2}.\label{d4 gtilde d2 F}
\end{align}
Using \eqref{BA h k} and \eqref{estim d2 F n+1 Linfini} we first obtain
\begin{align}
\lambda^2 \l \nabla^{4}  \h \dr^2 \F  \r_{L^2} & \lesssim \lambda^2 \l \nabla^{4}  \h \r_{L^2} \l \dr^2 \F  \r_{L^\infty} \label{estim nabla4 b1}
\\&\lesssim \lambda^2 \times \frac{A_3\e}{\lambda^3}  \times \frac{A_1\e}{\lambda^3}\nonumber
\\&\lesssim \frac{A_1A_3\e^2}{\lambda^4}.\nonumber
\end{align}
Now, by using the Leibniz rule on the product $\sin\left(\frac{u_0}{\lambda}\right)\F $ and putting aside the case where 4 derivatives hit $\sin\left(\frac{u_0}{\lambda}\right)$ we obtain
\begin{align}
\lambda^2 \l \nabla^{4}  \left(\sin\left(\frac{u_0}{\lambda}\right)\F  \right) \dr^2 \F  \r_{L^2} & \lesssim \frac{1}{\lambda^{2}} \l \F   \dr^2 \F  \r_{L^2}  + \sum_{\substack{a+b=4\\a\leq 3}} \frac{1}{\lambda^{a-2}} \l \nabla^b\F   \dr^2 \F  \r_{L^2}\label{estim nabla4 b}
\\&\lesssim \frac{1}{\lambda^{2}} \l \F \r_{L^\infty}  \l  \dr^2 \F  \r_{L^2}  + \sum_{\substack{a+b=4\\a\leq 3}} \frac{1}{\lambda^{a-2}} \l \nabla^b\F \r_{L^2}  \l  \dr^2 \F  \r_{L^\infty}\nonumber
\\&\lesssim \frac{1}{\lambda^{2}}  \times \frac{A_1\e}{\lambda}  \times \frac{A_1\e}{\lambda} + \sum_{\substack{a+b=4\\a\leq 3}} \frac{1}{\lambda^{a-2}} \times \frac{A_1\e}{\lambda^{b-1}} \times \frac{A_1\e}{\lambda^3}\nonumber
\\&\lesssim \frac{A_1^2\e^2}{\lambda^4}, \nonumber
\end{align}
where we used \eqref{estim d2 F n+1 L2} and \eqref{estim d2 F n+1 Linfini}. We treat the better term $\lambda^3 \l \nabla^{4}  g^{(3)}(\F) \dr^2 \F  \r_{L^2}$ in \eqref{d4 gtilde d2 F} similarly. Putting \eqref{estim nabla4}, \eqref{estim nabla4 a}, \eqref{estim nabla4 b1} and \eqref{estim nabla4 b}  together gives
\begin{align*}
\l \nabla^{4}  \left( g - g_0 \right) \dr^2 \F  \r_{L^2} \lesssim C(A_i) \frac{\e^2}{\lambda^4}.
\end{align*}
This concludes the proof of the proposition.
\end{proof}

Together with the decomposition \eqref{decomposition box F n+1}, the results of Propositions \ref{prop box g0 F n+1} and \ref{prop g1d2F n+1} finally imply 
\begin{prop}\label{prop BA box F n+1}
The following estimate holds
\begin{align*}
\max_{r\in\llbracket 0,4\rrbracket}\lambda^r\l  \nabla^r \tBox_{g} \F \r_{L^2} & \lesssim C(A_i)\e^2 + A_1\e.
\end{align*}
\end{prop}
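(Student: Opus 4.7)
The plan is essentially a one-line combination of the two preceding propositions via the decomposition \eqref{decomposition box F n+1}, which is
\begin{align*}
\tBox_{g}\F_{\alpha\beta} = \tBox_{g_0}\F_{\alpha\beta} + \left( g^{\mu\nu} - g_0^{\mu\nu} \right)\dr_\mu\dr_\nu \F_{\alpha\beta}.
\end{align*}
For each $r\in\llbracket 0,4\rrbracket$, I would apply $\nabla^r$ to both sides, take the $L^2$ norm, and use the triangle inequality to reduce the estimate of $\lambda^r\l\nabla^r\tBox_g\F\r_{L^2}$ to the sum of $\lambda^r\l\nabla^r\tBox_{g_0}\F\r_{L^2}$ and $\lambda^r\l\nabla^r\left((g-g_0)\dr^2\F\right)\r_{L^2}$. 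The first term is then bounded by $C(A_i)\e^2+A_1\e$ thanks to Proposition \ref{prop box g0 F n+1}, and the second by $C(A_i)\e^2$ thanks to Proposition \ref{prop g1d2F n+1}. Summing these two contributions and taking the maximum over $r\in\llbracket 0,4\rrbracket$ yields the claimed bound.

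In this sense there is no new obstacle to overcome at this stage: both genuine analytic challenges have already been addressed upstream. The hard step was really Proposition \ref{prop box g0 F n+1}, where the loss of derivatives in the coupling \eqref{eq F}-\eqref{eq h} was resolved by using the transport structure for $\tBox_{g_0}\F$ (equation \eqref{eq box F}), the commutator identity of Lemma \ref{lem commute null} to handle $\mathbf{A}_r$, the Fourier-cutoff trick encoded in $\Pi_\leq$ to handle $\mathbf{C}_r$ via \eqref{eq h} (Lemmas \ref{lem A}, \ref{lem B}, \ref{lem C}), and a nested induction combined with $\lambda\leq\e$ to handle $\mathbf{B}_r$. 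The role of Proposition \ref{prop g1d2F n+1} was then to absorb the error generated by replacing $\tBox_g$ by $\tBox_{g_0}$ in \eqref{decomposition box F n+1}, using the smallness of $g-g_0$ from Lemma \ref{lem g BG gtilde} together with \eqref{BA F k}, with a separate treatment of the top-order case $r_1=4$ via the splitting $g-g_0=(g_{\BG}-g_0)+\tilde g$.

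Hence the only thing to verify while writing the actual proof is that the two bounds align correctly in powers of $\lambda$ and in the bootstrap constants: Proposition \ref{prop g1d2F n+1} gives only $C(A_i)\e^2$, which is absorbed by the $C(A_i)\e^2+A_1\e$ coming from Proposition \ref{prop box g0 F n+1}. No further commutation, integration by parts, or bootstrap improvement is needed here. The statement of Proposition \ref{prop BA box F n+1} then serves as the bridge needed to close the bootstrap assumption \eqref{BA box F k} (provided $A_2$ is chosen large enough relative to $A_1$ and the implicit constants, and $\e$ small enough to absorb the $C(A_i)\e^2$ term), which is exactly the role it plays in the subsequent argument for the reduced system.
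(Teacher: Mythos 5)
Your proposal is correct and coincides with the paper's argument: the paper likewise obtains Proposition \ref{prop BA box F n+1} directly by combining the decomposition \eqref{decomposition box F n+1} with Propositions \ref{prop box g0 F n+1} and \ref{prop g1d2F n+1} via the triangle inequality, with no additional steps.
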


\subsection{Estimates for $\h$}\label{section estimate h}

\begin{prop}\label{prop BA h n+1}
We have
\begin{align}
\max_{r\in\llbracket 0,4 \rrbracket} \lambda^{2r} E_{\delta+1+r}\left( \nabla^r \h \right) & \lesssim e^{1+C(A_i)\e}\left(  C(A_i)\e^4 + \left( C^2_{\mathrm{in}} + A_2^2 \right) \e^2 \right) , \label{BA h n+1}
\\ \max_{r\in\llbracket 0,3 \rrbracket}\lambda^{r+1} \l \dr_t^2 \nabla^r \h \r_{L^2_{\delta+r+2}} & \lesssim e^{1+C(A_i)\e}\left(  C(A_i)\e^2 + \left( C_{\mathrm{in}} + A_2 \right) \e \right) .\label{BA dt2 h n+1}
\end{align}
\end{prop}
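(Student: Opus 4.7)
The plan is to apply the weighted energy estimate of Lemma \ref{lem EE} to each scalar component of $\nabla^r \h_{\alpha\beta}$ with weight exponent $\sigma = \delta+1+r$, and then to use the equation \eqref{eq h} together with Proposition \ref{prop BA box F n+1} to bound the source $\tBox_g(\nabla^r\h)$. Concretely, I would write
\[
\tBox_g(\nabla^r \h_{\alpha\beta}) = \nabla^r \bigl(\tBox_g \h_{\alpha\beta}\bigr) - [\nabla^r,\tBox_g]\h_{\alpha\beta},
\]
use \eqref{eq h} to substitute for $\tBox_g \h_{\alpha\beta}$, and estimate the four resulting pieces separately in $L^2_{\delta+1+r}$.

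The source contribution $\nabla^r(\sin(u_0/\lambda)\tBox_g\F)$ is the dominant one: on the compact support of $\F$ the weight is harmless, Leibniz applied to the trigonometric factor costs $\lambda^{-r_1}$, and Proposition \ref{prop BA box F n+1} gives $\|\nabla^{r_2}\tBox_g\F\|_{L^2}\lesssim (A_2\e + C(A_i)\e^2)/\lambda^{r_2}$, so altogether this term is bounded by $(A_2\e + C(A_i)\e^2)/\lambda^r$, producing the $A_2^2\e^2$ piece after multiplication by $\lambda^{2r}$ and squaring. The term $\lambda \nabla^r(\tBox_g g^{(3)})$ is handled via the identity \eqref{calcul} of Lemma \ref{lem B}, the cut-off term $\lambda^{-1}\cos(u_0/\lambda)\Pi_{\geq}(\h_{L_0L_0})F^{(1)}$ is controlled using $[\nabla^{r_2},\Pi_\geq]=0$ and the Bernstein bound \eqref{P high}, which exactly compensates the $1/\lambda$ prefactor, and the remainder $\nabla^r\mathcal{R}(g)$ is estimated by \eqref{d R g L2}. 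For the commutator $[\nabla^r,\tBox_g]\h$ I would decompose $g = g_0 + (g-g_0)$ and apply Lemma \ref{lem g BG gtilde}: for $r_1\leq 3$ the factor $\nabla^{r_1}(g-g_0)$ goes in $L^\infty$ and the bootstrap \eqref{BA h k} controls $\dr^2\nabla^{r_2}\h$ in the correct weighted space; when $r_1=4$ (so $r_2=0$) one splits $g = g_\BG + \tilde g$ exactly as in the proof of Proposition \ref{prop g1d2F n+1}, putting $\dr^2\h$ in $L^\infty$ via weighted Sobolev embedding for the $\tilde g$ piece and using \eqref{d4 gBG-g0 Linfini} for the background piece.

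Combining the estimates gives $\|\tBox_g(\nabla^r\h)\|_{L^2_{\delta+1+r}}^2 \lesssim (A_2^2 + C(A_i)\e^2)\e^2/\lambda^{2r}$, while the initial energy $E_{\delta+1+r}(\nabla^r\h)(0)$ is bounded by $C_{\mathrm{in}}^2\e^2/\lambda^{2r}$ thanks to \eqref{estim h initial}. Plugging these into Lemma \ref{lem EE} and applying Gronwall's inequality with the coefficient $1+C(A_i)\e$ yields \eqref{BA h n+1}.

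For \eqref{BA dt2 h n+1} I would invert the principal part: using uniform ellipticity of the time component (by \eqref{estim g0} with $\e$ small we have $|g^{tt}| \geq 1/2$), the equation \eqref{eq h} gives
\[
\dr_t^2\h_{\alpha\beta} = \frac{1}{g^{tt}}\bigl( \tBox_g\h_{\alpha\beta} - 2 g^{ti}\dr_t\dr_i\h_{\alpha\beta} - g^{ij}\dr_i\dr_j\h_{\alpha\beta} \bigr),
\]
and then commuting with $\nabla^r$ for $r\leq 3$, the RHS is controlled in $L^2_{\delta+2+r}$ by the freshly-proved \eqref{BA h n+1} (applied at level $r+1$) together with the bound on $\nabla^r\tBox_g\h$ obtained above. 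The main obstacle I anticipate is the commutator bookkeeping at the top order $r=4$ in step one, since there one has no room to put $\dr^2\nabla^4\h$ in any Sobolev space; the resolution, already built into our ansatz via the $g_\BG/\tilde g$ split, is to place $\nabla^4 g_\BG$ in $L^\infty$ (which costs $\lambda^{-3}$ but only with constant $C(C_0)$) and to place $\dr^2\h$ in $L^\infty$ for the $\tilde g$ piece (costing $\lambda^{-1}$ via the weighted Sobolev embedding), with the remaining $\lambda$ powers absorbed by the $\lambda^{2r}$ prefactor.
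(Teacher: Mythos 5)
Your proposal is correct and follows essentially the same route as the paper: commute $\nabla^r$ with $\tBox_g$, bound the commutator via the $g_\BG/\tilde g$ split at top order, bound $\nabla^r\tBox_g\h$ through the equation \eqref{eq h} (the paper packages this step as the already-proved estimate \eqref{lem B estim 1} of Lemma \ref{lem B}, which you re-derive inline), then apply Lemma \ref{lem EE} with $\sigma=\delta+1+r$, the initial bound \eqref{estim h initial} and Gronwall, and finally recover $\dr_t^2\nabla^r\h$ for $r\leq 3$ by inverting $g^{tt}$. The only cosmetic deviations are your attribution of the $\tBox_g\F$ bound to Proposition \ref{prop BA box F n+1} rather than the bootstrap \eqref{BA box F k} (both suffice and yield the stated $A_2^2\e^2$ contribution), and the smallness of $g^{tt}-g_0^{tt}$ should be quoted from \eqref{d g-g0 Linfini} rather than \eqref{estim g0} alone.
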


\begin{proof}
We start with the proof of \eqref{BA h n+1}.  We commute $\nabla^r$ and $\tBox_g$ (with $r\in\llbracket 0,4 \rrbracket$) to obtain the equation satisfied by $\nabla^r \h$ from \eqref{eq h}:
\begin{align}
\tBox_g \nabla^r \h & = \left[ \tBox_g , \nabla^r \right] \h   + \nabla^r \tBox_g \h .\label{eq nabla r h}
\end{align}
From the energy estimate in Lemma \ref{lem EE} we need to estimate the $L^2_{\delta+1+r}$ norm of the RHS of \eqref{eq nabla r h}. We estimate the commutator first. We have
\begin{align*}
\l \left[ \tBox_g , \nabla^r \right] \h \r_{L^2_{\delta+1+r}} & \lesssim  \sum_{r_1+r_2 = r-1} \l \nabla^{r_1+1}g \dr^2 \nabla^{r_2} \h \r_{L^2_{\delta+1+r}}.
\end{align*}
We have $r_1\leq 3$. If $r_1\leq 2$, then mimicking the proof of Lemma \ref{lem g BG gtilde} gives $\l \nabla^{r_1+1} g \r_{L^\infty_{r_1}}\lesssim \frac{ A_3\e}{\lambda^{r_1}}$ which in turn implies
\begin{align*}
\l\nabla^{r_1+1}g  \dr^2 \nabla^{r_2} \h \r_{L^2_{\delta+1+r}} & \lesssim \l \nabla^{r_1+1}g \r_{L^\infty_{r_1}} \l  \dr^2 \nabla^{r_2} \h  \r_{L^2_{\delta+2+r_2}}
\\&\lesssim  \frac{ A_3\e}{\lambda^{r_1}} \times \frac{A_3\e}{\lambda^{r_2+1}}
\\&\lesssim C(A_i)\frac{\e^2}{\lambda^r},
\end{align*}
where we used \eqref{BA h k}. If $r_1=3$, then $r=4$ and $r_2=0$ we use the decomposition \eqref{splitting g n}:
\begin{align*}
\l \nabla^{4}g \dr^2  \h \r_{L^2_{\delta+5}} & \lesssim \l \nabla^{4}g_{\mathrm{BG}} \dr^2  \h \r_{L^2_{\delta+5}} + \l \nabla^{4}\tilde{g} \dr^2  \h \r_{L^2_{\delta+5}} .
\end{align*}
For the first term, we put $\nabla^{4}g_{\mathrm{BG}}$ in $L^\infty_3$ as in the proof of Lemma \ref{lem g BG gtilde} and use \eqref{BA h k}. For the second term, we do as in the proof of Proposition \ref{prop g1d2F n+1}, see \eqref{estim nabla4 b1} and \eqref{estim nabla4 b}. This proves that the commutator in \eqref{eq nabla r h} satisfies
\begin{align}
\l \left[ \tBox_g , \nabla^r \right] \h \r_{L^2_{\delta+1+r}} & \lesssim C(A_i)\frac{\e^2}{\lambda^r}.\label{estim h commute}
\end{align}
We now estimate the second term in \eqref{eq nabla r h}. This has already been done in Lemma \ref{lem B}, we recall \eqref{lem B estim 1}:
\begin{align*}
\l\nabla^r   \tBox_{g}\h \r_{L^2_{\delta+r+1}} \lesssim  \frac{C(A_i)\e^2 + A_2\e}{\lambda^r} .
\end{align*}
Together with \eqref{estim h commute} this implies that
\begin{align}
\l\tBox_{g}\nabla^r   \h \r_{L^2_{\delta+r+1}} \lesssim  \frac{C(A_i)\e^2 + A_2\e}{\lambda^r} . \label{estim box nabla h}
\end{align}
We now apply Lemma \ref{lem EE} to \eqref{eq nabla r h} with $\sigma=\delta+r+1$. Thanks to \eqref{d g-g0 Linfini}, we have $\l \dr g \r_{L^\infty}\leq C(A_i)\e$ and Lemma \ref{lem EE} gives for $t\in[0,T]$ (after multiplication by $\lambda^{2r}$)
\begin{align*}
\lambda^{2r} &E_{\delta+r+1}(\nabla^r\h)(t) 
\\& \lesssim \lambda^{2r} E_{\delta+r+1}(\nabla^r\h)(0)  + (1+C(A_i)\e) \int_0^t \lambda^{2r} E_{\delta+r+1}(\nabla^r\h)(s)\d s + \int_0^t \lambda^{2r} \l  \tBox_g \nabla^r \h  \r^2_{L^2_{\delta+r+1}}(s)\d s
\\& \lesssim \lambda^{2r} E_{\delta+r+1}(\nabla^r\h)(0) + C(A_i)\e^4 + A_2^2 \e^2  + (1+C(A_i)\e) \int_0^t \lambda^{2r} E_{\delta+r+1}(\nabla^r\h)(s)\d s .
\end{align*}
We apply Gronwall's lemma to obtain
\begin{align*}
\lambda^{2r} E_{\delta+r+1}(\nabla^r\h)(t) \lesssim e^{1+C(A_i)\e}\left( \lambda^{2r} E_{\delta+r+1}(\nabla^r\h)(0) + C(A_i)\e^4 + A_2^2 \e^2 \right).
\end{align*}
In order to estimate $E_{\delta+r+1}(\nabla^r\h)(0)$, we use \eqref{estim h initial}:
\begin{align*}
\lambda^{2r} E_{\delta+r+1}(\nabla^r\h)(0)\leq C_{\mathrm{in}}^2\e^2.
\end{align*}
This concludes the proof of \eqref{BA h n+1}.

\saut
We now turn to the proof of \eqref{BA dt2 h n+1}. For this we write the operator $\tBox_g$ in the following schematic way:
\begin{align*}
\tBox_g & = g^{tt}\dr_t^2 + g\dr\nabla.
\end{align*}
Thanks to the background regularity we can divide by $g^{tt}$ and therefore obtain for $r\leq 3$:
\begin{align*}
\l \dr_t^2 \nabla^r \h \r_{L^2_{\delta+r+2}} & \lesssim \sum_{r_1+r_2=r} \l \nabla^{r_1} g \nabla^{r_2}\tBox_g \h \r_{L^2_{\delta+r+2}} +  \sum_{r_1+r_2+r_3=r}\l \nabla^{r_1}g\nabla^{r_2}g \dr\nabla^{r_3+1}\h \r_{L^2_{\delta+r+2}}
\\& =\vcentcolon I + II.
\end{align*}
For $I$, we use \eqref{d g-g0 Linfini}, the background regularity and \eqref{lem B estim 1}:
\begin{align*}
I \lesssim \frac{C(A_i)\e^2 + A_2\e}{\lambda^r}.
\end{align*}
For $II$ we proceed similarly but use \eqref{BA h n+1} instead:
\begin{align*}
II \lesssim \frac{e^{1+C(A_i)\e}\left(  C(A_i)\e^2 + \left( C_{\mathrm{in}} + A_2 \right) \e \right)}{\lambda^{r+1}}.
\end{align*}
This concludes the proof.
\end{proof}

\subsection{Conclusion of the bootstrap argument}

Looking at Propositions \ref{prop BA F n+1}-\ref{prop BA box F n+1}-\ref{prop BA h n+1}, we see that a choice of constants $A_i$ such that $A_1\ll A_2\ll A_3$ together with $\e$ and $\lambda$ small enough compared to 1 (but with $\e$ still independent from $\lambda$) allows us to improve the bootstrap assumptions \eqref{BA F k}-\eqref{BA box F k}-\eqref{BA h k}, as long as the time of existence of $(\F,\h)$ is less than 1. This shows that the time of existence of the solution $(\F,\h)$ is actually equal to 1. We have proved the following theorem.

\begin{thm}\label{theo reduced system}
Given initial data as in Corollary \ref{coro ID} and if $\e$ and $\lambda$ are small enough (with $\e$ independent from $\lambda$), there exists a solution $(\F,\h) $ to the reduced system \eqref{eq F}-\eqref{eq h} on $[0,1]\times \R^3$. Moreover there exists a numerical constant $C>0$ such that \eqref{BA F k}-\eqref{BA box F k}-\eqref{BA h k} hold with the $A_i$ replaced by $C$.
\end{thm}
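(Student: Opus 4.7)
The plan is a standard continuity/bootstrap argument that merely packages the three improved estimates already established in Propositions \ref{prop BA F n+1}, \ref{prop BA box F n+1}, and \ref{prop BA h n+1}. First, for each fixed $\lambda>0$, I would invoke local well-posedness of the coupled transport--wave system \eqref{eq F}--\eqref{eq h}: the wave equation for $\h$ is quasi-linear through the coefficients of $\tBox_g$, but since $g$ depends on $\h$ only through the factor $\lambda^2\h$ in \eqref{def gtilde}, the standard small-data local well-posedness theorem for quasi-linear wave equations in weighted Sobolev spaces (applied with the initial regularity provided by Corollary \ref{coro ID}) yields a short-time solution for $\h$; the transport equation \eqref{eq F} for $\F$ is then linear in $\F$ with smooth coefficients and is solved by the method of characteristics as in Section \ref{section bg estimates}. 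A fixed-point iteration couples the two, producing a maximal solution on some interval $[0,T_\lambda^{\max})\subseteq[0,1]$.

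Second, the bootstrap is initialised by choosing the constants hierarchically as $A_1\ll A_2\ll A_3$, all taken much larger than the initial-data constant $C_{\mathrm{in}}$ of Corollary \ref{coro ID} and any background constant $C(C_0)$ that appears in the proofs. Let $T\in (0,T_\lambda^{\max}]$ be the supremum of times on which the bootstrap assumptions \eqref{BA F k}--\eqref{BA box F k}--\eqref{BA h k} hold; initial regularity combined with the continuity of the relevant Sobolev norms ensures $T>0$. The three improvements now close the bootstrap: Proposition \ref{prop BA F n+1} bounds the LHS of \eqref{BA F k} by $\lesssim \lambda A_1\e + A_3\e^2$, which is $\leq (A_1/2)\e$ as soon as $\lambda$ is small and $\e\ll A_1/A_3$; Proposition \ref{prop BA box F n+1} bounds the LHS of \eqref{BA box F k} by $\lesssim C(A_i)\e^2+A_1\e$, which is $\leq (A_2/2)\e$ once $A_1\ll A_2$ and $\e$ is small in terms of the $A_i$; and Proposition \ref{prop BA h n+1} bounds the squared LHS of \eqref{BA h k} by $\lesssim e^{1+C(A_i)\e}(C_{\mathrm{in}}^2+A_2^2)\e^2+C(A_i)\e^4$, which is $\leq (A_3\e/2)^2$ once $A_2\ll A_3$ and $\e$ is small.

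Third, since these strict improvements hold on $[0,T]$, continuity of the Sobolev norms involved extends the bootstrap assumptions to some larger interval $[0,T+\eta]$, contradicting the maximality of $T$ unless $T=T_\lambda^{\max}=1$. Consequently, the bootstrap assumptions hold on the whole of $[0,1]\times\R^3$ with each $A_i$ replaced by half its value, hence with a single numerical constant $C$, which is the content of the theorem. The order of parameter selection is: first fix $A_1\ll A_2\ll A_3$, then $\e$ small in terms of the $A_i$ (crucially independent of $\lambda$, since the only $\lambda$-dependent contribution comes from the factor $\lambda A_1\e$ in Proposition \ref{prop BA F n+1}), and finally $\lambda\leq\lambda_0(\e)$. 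The main substantive obstacle has already been overcome in Section \ref{section estim box g0 F}, namely the recovery of the missing derivative in the $(\F,\h)$-coupling via the commutator estimate of Lemma \ref{lem commute null} combined with the Fourier cut-off decomposition \eqref{decomposition cruciale}; once that analysis is in hand, the present theorem is essentially a packaging statement.
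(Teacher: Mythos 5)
Your proposal is correct and follows essentially the same route as the paper: the paper's own proof of this theorem is exactly the continuity/bootstrap packaging of Propositions \ref{prop BA F n+1}, \ref{prop BA box F n+1} and \ref{prop BA h n+1}, with the hierarchical choice $A_1\ll A_2\ll A_3$, then $\e$ small (independently of $\lambda$), then $\lambda$ small, the $\lambda$-smallness being needed only for the $\lambda A_1\e$ term in Proposition \ref{prop BA F n+1}. You merely make explicit the local-existence and continuity steps that the paper leaves implicit ("a local in time solution exists \textit{a priori}"), which is consistent with, not divergent from, the paper's argument.
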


The estimates that $(\F,\h)$ satisfied, along with the ones satisfied by $F^{(1)}$, $F^{(2,1)}$ and $F^{(2,2)}$ (see Theorem \ref{theo bg system}) prove the main estimate \eqref{estim théorème} in Theorem \ref{theo main} if one defines 
\begin{align*}
\tilde{\h}_\lambda = \h + \lambda g^{(3)}.
\end{align*}
It only remains to prove that $g$ given by \eqref{g} solves the Einstein vacuum equations.

\section{Propagation of polarization and gauge conditions}\label{section solving EVE}

By proving Theorems \ref{theo bg system} and \ref{theo reduced system}, we construct a metric $g$ given by \eqref{g} solving both the background and the reduced systems on the manifold $[0,1]\times \R^3$. In this section, we conclude the proof of Theorem \ref{theo main} by proving that $g$ is actually a solution to the Einstein vacuum equations, i.e that $R_{\mu\nu}(g)=0$ on the same manifold. We will show in Section \ref{section expansion bianchi} that $R_{\mu\nu}(g)$ contains only polarization condition tensors $V^{(2,i)}$ or the generalised wave gauge term $\Upsilon^\rho$. The fact that these terms vanish will be proved in Sections \ref{section propa pola 23} using the contracted Bianchi identities, i.e the fact that the Einstein tensor of any Lorentzian metric is divergence free.

\subsection{Algebraic properties of $F^{(1)}$ and $\F$}\label{section algebraic properties F1}

In this section we prove that $F^{(1)}$ satisfies \eqref{pola F1 theo} and \eqref{energie F1 theo}. To this purpose, we define
\begin{align}
\underline{\zeta}_A &= \frac{1}{2}g(\D_{L_0}\Lb_0,e_A).\label{notation zeta}
\end{align}
Thanks to the properties of the background null frame, we obtain the following decomposition of $\D_{L_0}X$ for $X$ a vector field of the null frame:
\begin{align}
\D_{L_0}L_0 & = 0 , \label{D L L}
\\ \D_{L_0}\Lb_0 & =  2\underline{\zeta}_A e_A,\label{D L Lb}
\\ \D_{L_0} e_A & = \underline{\zeta}_AL_0 + \nabb_{L_0}e_A, \label{D L A}
\end{align}
where $\nabb_{L_0}X$ is the projection of $\D_{L_0}X$ onto $TP_{t,u}$. Note that the regularity of the background stated in Section \ref{section BG} is enough to ensure that $\underline{\zeta}_A$ as well as $g_0(\nabb_{L_0} e_A, e_B)$ are bounded on $\mathcal{M}$.

\begin{remark}
The notation $\underline{\zeta}$ in \eqref{notation zeta} is standard when working with null frames and goes back to \cite{Christodoulou1993}. The equations \eqref{D L L}, \eqref{D L Lb} and \eqref{D L A} are a subset of the structure equations of the null frame.
\end{remark}

\begin{lem}\label{lem propa pola}
Let $T$ be a symmetric 2-tensor.
\begin{enumerate}
\item[(i)] If $\Pol \left(\Ll_0 T \right)=0 $, then $\Pol(T)$ satisfies the following system
\begin{align}
\Ll_0 \Pol_{L_0}(T) & = 0,\label{eq Pol L T}
\\  \Ll_0 \Pol_{A}(T)& =  -2  \underline{\zeta}_A \Pol_{L_0}(T) - 2  \delta^{BC} g_0(\nabb_{L_0} e_A, e_B)\Pol_{C}(T),\label{eq Pol A T}
\\  \Ll_0 \Pol_{\Lb_0}(T) & =  - 4 \delta^{AB} \underline{\zeta}_A \Pol_{B}(T).\label{eq Pol Lb T}
\end{align}
\item[(ii)] If $(\Ll_0 T)_{L_0\Lb_0}=0$, then $T_{L_0\Lb_0}$ satisfies
\begin{align}
\Ll_0 T_{L_0\Lb_0} & =   4 \delta^{AB} \underline{\zeta}_A \Pol(T)_B \label{eq T Lb L}.
\end{align}
\item[(iii)] If $\Ll_0 T=0$, $\Pol(T)=0$ and $T_{L_0\Lb_0}=0$, then $\mathcal{E}(T,T)$ satisfies
\begin{align}
(-L_0+\Box_{g_0}u_0)\mathcal{E}(T,T)=0.\label{eq E(T,T)}
\end{align}
\end{enumerate}
\end{lem}

\begin{proof}
Using \eqref{D L L} we have
\begin{align*}
L_0 T_{L_0L_0} & = (\D_{L_0} T)_{L_0L_0} + 2 T_{L_0\alpha} \D_{L_0}L_0^\alpha = (\D_{L_0} T)_{L_0L_0} .
\end{align*}
Using in addition \eqref{D L A} we obtain
\begin{align*}
L_0 T_{L_0 A} & = (\D_{L_0} T)_{L_0A} + T_{L_0\alpha}\D_{L_0} e_A^\alpha + T_{A\alpha}\D_{L_0}L_0^\alpha
\\& = (\D_{L_0} T)_{L_0A}  + \underline{\zeta}_A T_{L_0L_0} + \delta^{BC} g_0(\nabb_{L_0} e_A, e_B) T_{L_0 C}.
\end{align*}
Similarly,  we obtain
\begin{align*}
L_0\left( \delta^{AB}T_{AB}\right) & = \delta^{AB} (\D_{L_0}T)_{AB} + \delta^{AB}T_{A\alpha}\D_{L_0}e_B^\alpha
\\& = \delta^{AB} (\D_{L_0}T)_{AB}  + 2 \delta^{AB} \underline{\zeta}_A T_{L_0B}  ,
\end{align*}
where we also use
\begin{equation*}
\delta^{AB}\delta^{CD} T_{AC}g_0(\nabb_{L_0}e_B,e_D) = 0,
\end{equation*}
which holds because $T$ is symmetric and $g_0(\nabb_{L_0}\cdot,\cdot)$ is antisymmetric. Recalling \eqref{def Ll_0}, this gives the following identities
\begin{align*}
\Ll_0 T_{L_0L_0} & = (\Ll_0T)_{L_0L_0} ,
\\ \Ll_0 T_{L_0A} & = (\Ll_0T)_{L_0A} -2  \underline{\zeta}_A T_{L_0L_0} -2  \delta^{BC} g_0(\nabb_{L_0} e_A, e_B)T_{L_0C},
\\  \Ll_0\left( \delta^{AB} T_{AB} \right)& =  \delta^{AB}(\Ll_0T)_{AB} - 4 \delta^{AB} \underline{\zeta}_A T_{BL_0}.
\end{align*}
Now, thanks to \eqref{Pol L}, \eqref{Pol A} and \eqref{Pol Lb}, the assumption $\Pol \left( \Ll_0T\right)=0$ can be used to rewrite the previous system as \eqref{eq Pol L T}, \eqref{eq Pol A T} and \eqref{eq Pol Lb T}. Similarly, we can show that
\begin{align*}
\Ll_0 T_{L_0\Lb_0} & = (\Ll_0T)_{L_0\Lb_0} - 4 \delta^{AB} \underline{\zeta}_A T_{BL_0},
\end{align*}
which becomes \eqref{eq T Lb L} after using the assumption $(\Ll_0T)_{L_0\Lb_0} =0$. We now prove \eqref{eq E(T,T)}. First, note that the assumptions $\Pol(T)=0$ and $T_{L_0\Lb_0}=0$ imply that
\begin{align}
\E\left(  T,T\right) & =  -\left( \left(T_{e_1e_1} \right)^2  +  \left(T_{e_1e_2} \right)^2 \right),\label{energie expression intermédiaire}
\end{align}
where we recall that $\E\left(  T,T\right)$ is defined in \eqref{def energie E}. We now use the additional assumption $\Ll_0 T=0$, the null structure equation \eqref{D L A} and the fact that $g_0(\nabb_{L_0}\cdot,\cdot)$ is antisymmetric to derive a transport equation for the quantities involved in \eqref{energie expression intermédiaire}:
\begin{align*}
- L_0 \left(T_{e_1e_1} \right)^2 + (\Box_{g_0}u_0) \left(T_{e_1e_1} \right)^2  & =  -   4  T_{e_1e_1}T_{e_1e_2} g_0(\nabb_{L_0}e_1,e_2) ,
\\ - L_0 \left(T_{e_1e_2} \right)^2 + (\Box_{g_0}u_0)\left(T_{e_1e_2} \right)^2 & =   2T_{e_1e_2}T_{e_1e_1}g_0(\nabb_{L_0}e_1,e_2) - 2T_{e_1e_2}T_{e_2e_2}g_0(\nabb_{L_0}e_1,e_2) .
\end{align*}
Thanks to $T_{e_2e_2}=-T_{e_1e_1}$ (consequence of $\Pol_{\Lb_0}(T)=0$) the RHS of the second equation becomes $4T_{e_1e_2}T_{e_1e_1}g_0(\nabb_{L_0}e_1,e_2) $ and we obtain \eqref{eq E(T,T)}
\end{proof}

When applied to $F^{(1)}$ and $\F$, this lemma gives the following.

\begin{coro}\label{coro pola F1}
The tensor $F^{(1)}$ satisfies 
\begin{align}
\Pol\left( F^{(1)}\right) & = 0,\label{pola F1}
\\ F^{(1)}_{L_0\Lb_0} & =0,\label{F1 L Lb}
\\  \E\left(  F^{(1)} ,  F^{(1)} \right)& = -4F_0^2.\label{energie}
\end{align}
The tensor $\F$ satisfies
\begin{align}
\Pol\left( \F\right) = 0.\label{pola F ==0}
\end{align}
\end{coro}

\begin{proof}
Since $F^{(1)}$ satisfies the transport equation \eqref{eq F1}, $\Pol\left( F^{(1)}\right)$ and $F^{(1)}_{L_0\Lb_0}$ satisfy the linear and homogeneous system \eqref{eq Pol L T}-\eqref{eq Pol A T}-\eqref{eq Pol Lb T}-\eqref{eq T Lb L}. Since these quantities vanish on $\Sigma_0$ (see Section \ref{section ID bg system}), we obtain \eqref{pola F1} and \eqref{F1 L Lb}. We can then use \eqref{eq F0} and \eqref{eq E(T,T)} and obtain
\begin{align}
\left( - L_0 + \Box_{g_0}u_0 \right) \left( \E\left(  F^{(1)} ,  F^{(1)} \right) + 4F_0^2 \right)=0.\label{eq quantite intermediaire}
\end{align}
Moreover, \eqref{ID energie F1} implies that $\E\left(  F^{(1)} ,  F^{(1)} \right) + 4F_0^2$ vanishes on $\Sigma_0$, equation \eqref{eq quantite intermediaire} then implies \eqref{energie}. The proof of \eqref{pola F ==0} is similar. Thanks to \eqref{pola F1} and \eqref{eq F}, we have $\Pol(\Ll_0 \F)=0$. Therefore $\Pol(\F)$ satisfies the linear and homogeneous system \eqref{eq Pol L T}-\eqref{eq Pol A T}-\eqref{eq Pol Lb T}, while vanishing initially (since $\F\restriction{\Sigma_0}=0$), we thus obtain \eqref{pola F ==0}.
\end{proof}

\begin{remark}
This corollary shows the consistency of the computations of Section \ref{section Ricci}, which were made under the assumptions that \eqref{pola F1} and \eqref{F1 L Lb} hold, since \eqref{pola F1} and \eqref{F1 L Lb} are equivalent to \eqref{assumptions on F1}.
\end{remark}

The transport equations for $F^{(2,i)}$, i.e \eqref{eq F21} and \eqref{eq F22}, are not as simple as \eqref{eq F1} and we can't hope to compute $\Pol\left(\Ll_0 F^{(2,i)}\right)$ easily. Moreover, the polarization conditions for $F^{(2,i)}$ are not simply $\Pol\left( F^{(2,i)}\right)=0$ but rather involve non-linear expressions depending on $F^{(1)}$ (see \eqref{def V 2 1} and \eqref{def V 2 2}). For these two reasons, the propagation of the second polarization, i.e the propagation of the polarization conditions satisfied by $F^{(2,i)}$, can't be proved with Lemma \ref{lem propa pola} and is postponed to Section \ref{section propa pola 23}.

\subsection{The Einstein tensor and the contracted Bianchi identities}\label{section expansion bianchi}

We compute the Ricci tensor of the solution $g$ to the background and reduced systems. This is based on Proposition \ref{prop expansion ricci tensor}, which provides the expression of the terms in the following expansion
\begin{align}
R_{\mu\nu} & = R^{(0)}_{\mu\nu}  +  \lambda R^{(1)}_{\mu\nu} +  \lambda^2 R^{(\geq 2)}_{\mu\nu}  .\label{expansion ricci formal}
\end{align}

\begin{remark}
In the sequel, we will encounter a lot of terms proportional to the polarization terms $V^{(2,j)}$ for which a precise expression is useless. Therefore, we introduce the following practical notation: we denote by $O\left( V^{(2)} \right)$ any combination of the tensors $V^{(2,j)}$ or its derivatives for $j=1,2$. This notation extends to tensors, in the sense that $O^\alpha\left( V^{(2)} \right)$ will denote any 1-tensor with coefficients of the form $O\left( V^{(2)} \right)$.
\end{remark}

\begin{prop}\label{prop final Ricci}
The three terms in the expansion \eqref{expansion ricci formal} of the Ricci tensor of the metric constructed in Theorems \ref{theo bg system} and \ref{theo reduced system} are given by
\begin{align*}
R^{(0)}_{\alpha\beta} & = -\half \sin\left(\frac{u_0}{\lambda}\right) \dr_{(\alpha} u_0 V^{(2,1)}_{\beta)} -2\cos\left(\frac{2u_0}{\lambda}\right) \dr_{(\alpha} u_0 V^{(2,2)}_{\beta)},
\\ R^{(1)}_{\alpha\beta} & = \half \cos\left( \frac{u_0}{\lambda} \right)\D_{(\alpha}V^{(2,1)}_{\beta)}  - \sin\left( \frac{2u_0}{\lambda} \right) \D_{(\alpha}V^{(2,2)}_{\beta)}  
\\&\quad +  O^\rho\pth{V^{(2)}} F^{(1)}_{\rho(\a} \dr_{\b)} u_0  + O\left( V^{(2)} \right) F^{(1)}_{\alpha\beta} 
\\&\quad  - \half \cos\left( \frac{u_0}{\lambda} \right)   \Pi_\geq \left( \h_{L_0L_0} \right) F^{(1)}_{\alpha\beta} + \half \dr_{(\a} u_0 (g_0)_{\rho \b)}  \dr_\theta(\Upsilon^\rho)^{(0)},
\\ R^{(\geq 2)}_{\a\b} & =\frac{1}{2\lambda}\cos\left(\frac{u_0}{\lambda} \right)\Pi_\geq \left(\h_{L_0L_0}  \right)F^{(1)}_{\alpha\beta} +\frac{ 1}{2} \pth{ \Upsilon^\rho \dr_\rho g_{\alpha\beta} + g_{\rho (\alpha}\dr_{\beta)}\Upsilon^\rho }^{(\geq 0)}.
\end{align*}
\end{prop}

We omit the proof of this proposition since it follows the same lines as the recipe presented in Section \ref{section towards the hierarchy}. Note that the term proportional to $\frac{1}{\lambda}\Pi_\geq \left(\h_{L_0L_0} \right)$ in $R^{(\geq 2)}$ cancels out with the same term present in $R^{(1)}$ so that it disappears from the final Einstein tensor computed below. Also note that 
\begin{align}
\frac{\la}{2} \dr_{(\a} u_0 (g_0)_{\rho \b)}  \dr_\theta(\Upsilon^\rho)^{(0)} + \frac{\la^2}{2} \pth{ \Upsilon^\rho \dr_\rho g_{\alpha\beta} + g_{\rho (\alpha}\dr_{\beta)}\Upsilon^\rho }^{(\geq 0)} & = \frac{\la^2}{2} \pth{ \Upsilon^\rho \dr_\rho g_{\alpha\beta} + g_{\rho (\alpha}\dr_{\beta)}\Upsilon^\rho }.\label{blablabla}
\end{align}
The first term on the LHS of \eqref{blablabla} corresponds to the $\Upsilon$ terms in $R^{(1)}_{\a\b}$, while the second one corresponds to the $\Upsilon$ terms in $R^{(\geq 2)}_{\a\b} $. Therefore the RHS of \eqref{blablabla} completely describes the $\Upsilon$ terms in the Ricci tensor.

\saut
As Proposition \ref{prop final Ricci} shows, the Ricci tensor of $g$ contains only gauge terms or the polarization conditions, i.e terms depending only on $\Upsilon$ and $V^{(2,i)}$. In order to show that $g$ is solution of the Einstein vacuum equations, it thus remains to show that they vanish. This is the content of Section \ref{section propa pola 23} and it is proved using the contracted Bianchi identities, i.e the fact that the Einstein tensor of any metric is divergence free, which will give us extra equations satisfied by $V^{(2,i)}$ and $\Upsilon$.  Even though this is standard when working with the (generalised or not) wave coordinates, the high-frequency character of $g$ (which manifests itself at this point by the presence of the polarization conditions tensors $V^{(2,i)}$) changes the situation. Schematically,  we will show that the divergence of $G$ satisfies
\begin{align*}
\dive_g G & = \mathrm{T}\left( \frac{u_0}{\lambda} \right) \Ll_0 V^{(2,i)} + \lambda^2 \tBox_g \Upsilon,
\end{align*}
with $\mathrm{T}$ a trigonometric function. The contracted Bianchi identities precisely state that for all $\lambda\in(0,\lambda_0]$ we have 
\begin{align}
\mathrm{T}\left( \frac{u_0}{\lambda} \right) \Ll_0 V^{(2,i)} + \lambda^2 \tBox_g \Upsilon = 0 ,\label{Bianchi schematic}
\end{align}
with $ \Ll_0 V^{(2,i)} $ independent of $\lambda$ and $\tBox_g \Upsilon$ depending on $\lambda$ through $g$ and $\Upsilon$. Our goal is to extract from \eqref{Bianchi schematic} the two equations $\Ll_0 V^{(2,i)} = 0$ and $ \tBox_g \Upsilon = 0$. However, because of the oscillating term in \eqref{Bianchi schematic} and the fact that $\tBox_g \Upsilon$ depends on $\lambda$ (since $\Upsilon$ depends on $\h$ or $\F$), we can't simply consider the expression in \eqref{Bianchi schematic} as a polynomial in $\lambda$ vanishing on the interval $(0,\lambda_0]$. Instead we multiply \eqref{Bianchi schematic} by $\mathrm{T}\left( \frac{u_0}{\lambda} \right)$ and use the fact that its weak limit in $L^2(K)$ when $\lambda$ tends to 0 is 0, where $K\vcentcolon=\{|x|\leq C_{\mathrm{supp}}R\}$ is the maximal support of all the background perturbations, and in particular the support of $V^{(2,i)}$ (this fact was justified in Section \ref{section BG}). In order to consider the limit $\la$ tends to 0 for $\tBox_g \Upsilon$ we need precise estimates obtained in Lemma \ref{lem dive G upsilon} and which follow from our bootstrap assumptions for $\h$ and $\F$.

\saut
We define the Einstein tensor of $g$ as
\begin{align}
G_{\alpha\beta} & = R_{\alpha\beta} - \half R g_{\alpha\beta},\label{def einstein tensor}
\end{align}
where $R=g^{\mu\nu}R_{\mu\nu}$ is the scalar curvature of $g$.  The terms $V^{(2,i)}$ and $\Upsilon$ will be treated differently so it is useful to decompose $G$ in the following way:
\begin{align}
G_{\alpha\beta} & = G\left(V \right)_{\alpha\beta} + G\left(\Upsilon\right)_{\alpha\beta} , \label{G decomposition}
\end{align}
with
\begin{align}
 G\left(\Upsilon\right)_{\alpha\beta}  & =\frac{\lambda^2}{2}  \left( g_{\rho(\alpha}\dr_{\beta)}\Upsilon^\rho - g_{\alpha\beta} \dr_\rho \Upsilon^\rho+  \Upsilon^\rho \dr_\rho g_{\alpha\beta} -\half g^{\mu\nu} \Upsilon^\rho \dr_\rho g_{\mu\nu} g_{\alpha\beta}\right). \label{G Upsilon}
\end{align}
and $G\left(V \right)$ depending only on $V^{(2,i)}$.  Thanks to Proposition \ref{prop final Ricci}, the tensor $G\left(V \right)$ admits a high-frequency expansion
\begin{align*}
G\left(V \right)_{\alpha\beta} & = G^{(0)}_{\alpha\beta}+ \lambda G^{(1)}_{\alpha\beta}  + \GO{\lambda^2} ,
\end{align*}
where 
\begin{align}
G^{(0)}_{\alpha\beta} & = -\half \sin\left(\frac{u_0}{\lambda}\right)\left(  \dr_{(\alpha} u_0 V^{(2,1)}_{\beta)} + V^{(2,1)}_{L_0} (g_0)_{\alpha\beta} \right) \label{G0}
 \\&\quad -2\cos\left(\frac{2u_0}{\lambda}\right) \left(  \dr_{(\alpha} u_0 V^{(2,2)}_{\beta)} + V^{(2,2)}_{L_0} (g_0)_{\alpha\beta}\right) ,\nonumber
\\ G^{(1)}_{\alpha\beta} & = \half \cos\left( \frac{u_0}{\lambda} \right)\left( \D_{(\alpha}V^{(2,1)}_{\beta)}  - \dive_{g_0}V^{(2,1)}(g_0)_{\alpha\beta}  \right) \label{G1}
\\&\quad - \sin\left( \frac{2u_0}{\lambda} \right) \left( \D_{(\alpha}V^{(2,2)}_{\beta)}  - \dive_{g_0}V^{(2,2)}(g_0)_{\alpha\beta}  \right)\nonumber
\\&\quad  +  O^\rho\pth{V^{(2)}} F^{(1)}_{\rho(\a} \dr_{\b)} u_0  +  O\left( V^{(2)} \right) F^{(1)}_{\alpha\beta} \nonumber.
\end{align}
The (contracted) Bianchi identities precisely reads
\begin{align}
\dive_g G & = 0.\label{Bianchi identities}
\end{align}
Thanks to \eqref{G decomposition}, \eqref{Bianchi identities} reads $\dive_g G(\Upsilon)+\dive_g G(V)=0$. These two terms are studied in Lemmas \ref{lem dive G upsilon} and \ref{lem B(V)} respectively. In the proof of Lemma \ref{lem dive G upsilon}, we only consider the $\dr\h$ and $\dr \F$ terms in $\Upsilon^\rho$ (recall \eqref{def Upsilon h}) since they satisfy the worst estimates. Moreover, since we only care about $L^2(K)$ norms with $K$ compact we can neglect the weights for $\h$.

\begin{lem}\label{lem dive G upsilon}
We have
\begin{align*}
\l \dive_g G\left(\Upsilon\right)_\alpha \r_{L^2(K)} & \lesssim \lambda.
\end{align*}
\end{lem}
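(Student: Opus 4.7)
Starting from \eqref{divergence G Upsilon}, it suffices to prove the two estimates
\[
\l \tBox_g \Upsilon^\rho \r_{L^2(K)} \lesssim \tfrac{1}{\lambda}, \qquad \l \tilde{\mathcal{B}}_\alpha \r_{L^2(K)} \lesssim \tfrac{1}{\lambda},
\]
since multiplying by the prefactor $\lambda^2/2$ yields the claimed bound. Recall from \eqref{def Upsilon h} that $\Upsilon^\rho$ is schematically a sum of terms of the form $g^{-2}\dr \h$, $\sin(u_0/\lambda)\,g^{-2}\dr \F$, $\lambda\, g^{-2}\dr g^{(3)}$, together with $g_0^{-1}g^{-1}\h\cdot(\dr g_0+\sin(u_0/\lambda)\dr u_0\, F^{(1)})$; the analysis will show that the first two give the worst contribution, while the others carry extra $\lambda$-powers or involve only $\h$ (not $\dr\h$).

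\textbf{Estimate of $\tBox_g\Upsilon$.} The naive bound on the $g^{-2}\dr \h$ piece would yield $\l\dr^3\h\r_{L^2}\lesssim \e/\lambda^2$ via \eqref{BA h k}, which is insufficient. Instead I will commute, writing $\tBox_g \dr \h = \dr(\tBox_g\h)+[\tBox_g,\dr]\h$, and substitute $\tBox_g\h$ from the wave equation \eqref{eq h}. Each of the four resulting terms contributes at most $1/\lambda$ in $L^2$: for $\dr(\sin(u_0/\lambda)\tBox_g\F)$, the bound follows from \eqref{BA box F k} with $r=0,1$; for $\lambda\dr\tBox_g g^{(3)}$, one uses Lemma \ref{lem g3} and the identity \eqref{calcul}; for $\dr(\tfrac{1}{\lambda}\cos(u_0/\lambda)\Pi_\geq(\h_{L_0L_0})F^{(1)})$, the Bernstein-like inequality \eqref{P high} is crucial, trading the $1/\lambda$ factor against the gap produced by $\Pi_\geq$; and for $\dr\mathcal{R}(g)$, the estimate follows from \eqref{d R g L2}. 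The commutator $[\tBox_g,\dr]\h \sim \dr g \cdot \dr^2\h$ is bounded by $1/\lambda$ using $\l\dr g\r_{L^\infty}\lesssim 1$ from Lemma \ref{lem g BG gtilde} together with \eqref{BA h k} for $r=1$. For the $\sin(u_0/\lambda)g^{-2}\dr\F$ piece of $\Upsilon$, I apply the identity \eqref{box fct trigo bis 2}: the $1/\lambda^2$ contribution involves $g^{-1}(du_0,du_0)$, which by \eqref{g(du_0,du_0)} is $O(\lambda^2)$ and thus harmless; the $1/\lambda$ contribution involves $g^{\mu\nu}\dr_\mu u_0\,\dr_\nu\dr\F$, which I rewrite as $-L_0(\dr\F)+O(\lambda^2\dr^2\F)$ and then use the transport equation \eqref{eq F true} to reduce $L_0\dr\F$ to $\dr(\Pi_\leq(\h)F^{(1)})+$ lower order, which is $O(\e)$ in $L^2$. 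The remaining $\sin(u_0/\lambda)\tBox_g\dr\F$ is handled by commuting again and invoking \eqref{BA box F k}.

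\textbf{Estimate of $\tilde{\mathcal{B}}_\alpha$.} This quantity consists of products of the form $\dr\Upsilon\cdot\dr g$, $\Upsilon\cdot\dr^2 g$, and $\Gamma\cdot\Upsilon$. The bootstrap assumptions give $\l\Upsilon\r_{L^2(K)}\lesssim \e$ via \eqref{BA h k} and \eqref{BA F k} at $r=0$, and $\l\dr\Upsilon\r_{L^2(K)}\lesssim 1/\lambda$ via the same at $r=1$. Combined with $\l\dr g\r_{L^\infty}\lesssim 1$ and $\l\dr^2 g\r_{L^\infty(K)}\lesssim 1/\lambda$ (both consequences of Lemma \ref{lem g BG gtilde} and its extension to second derivatives as in the proof of Proposition \ref{prop BA h n+1}), each product is bounded by $1/\lambda$ in $L^2(K)$.

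\textbf{Main obstacle.} The chief difficulty is the careful bookkeeping of $\lambda$-powers needed to avoid the forbidden $1/\lambda^2$ behavior when three spatial derivatives land on $\h$ or when $\tBox_g$ is applied to an oscillating factor $\sin(u_0/\lambda)$. The saving rests on the simultaneous use of three structural facts: (i) the eikonal cancellation \eqref{g(du_0,du_0)}, which eliminates what would be the leading $1/\lambda^2$ term; (ii) the Bernstein estimate \eqref{P high} for $\Pi_\geq$, which effectively trades a $1/\lambda$ for a derivative; and (iii) the wave equation \eqref{eq h} for $\h$, which lets us replace $\tBox_g\h$ by source terms with better $\lambda$-scaling than the direct bound $\l\dr^3\h\r_{L^2}\lesssim \e/\lambda^2$ from \eqref{BA h k} would suggest.
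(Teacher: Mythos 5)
Your route is essentially the paper's: you bound $\tilde{\mathcal{B}}_\alpha$ by $1/\lambda$ directly (your $L^\infty$/$L^2$ splitting there is fine and even slightly more explicit than the paper's), and you bound $\tBox_g\Upsilon^\rho$ by $1/\lambda$ by commuting one derivative past $\tBox_g$, substituting the wave equation \eqref{eq h} for $\tBox_g\h$, and treating the $\sin\left(\frac{u_0}{\lambda}\right)\dr\F$ part via \eqref{box fct trigo bis 2}, the eikonal cancellation \eqref{g(du_0,du_0)} and the transport equation -- exactly the paper's argument.

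The one step that does not go through as written is your appeal to \eqref{BA box F k} to control $\dr\bigl(\sin\left(\tfrac{u_0}{\lambda}\right)\tBox_g\F\bigr)$ and, later, $\tBox_g\dr\F$ ``by commuting again and invoking \eqref{BA box F k}''. The bootstrap assumption \eqref{BA box F k} only controls \emph{spatial} derivatives $\nabla^r\tBox_g\F$; it says nothing about $\dr_t\tBox_g\F$, which appears as soon as $\dr=\dr_t$, and this is precisely the subtlety the paper flags. The paper closes it by writing $\dr_t\tBox_g\F=[\dr_t,\tBox_g]\F+\tBox_g\dr_t\F$ and then using the solved-for transport equation \eqref{dt F} to replace $\dr_t\F$ by $\tfrac{1}{L_0^t}\bigl(L_0^i\nabla\F+\Pi_\leq(\h)F^{(1)}+\dots\bigr)$ before applying $\tBox_g$, which gives
\begin{align*}
\l \dr_t\tBox_g\F \r_{L^2} \lesssim \l \dr^2\F \r_{L^2} + \l \nabla\tBox_g\F \r_{L^2} + \l \dr^2\h \r_{L^2} \lesssim \frac{1}{\lambda}.
\end{align*}
You already use this transport-equation substitution for the $L_0\dr\F$ term, so the missing piece is a short addition in the same spirit; but as written, the time-derivative components of $\dr\Upsilon$ are justified by an estimate that does not apply to them, and the proof is incomplete without this extra commutation through \eqref{dt F}.
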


\begin{proof}
The expression in coordinates
\begin{align*}
\dive_g G\left(\Upsilon\right)_\alpha & = g^{\mu\nu} \dr_\mu G\left(\Upsilon\right)_{\nu\alpha}  - g^{\mu\nu} \Gamma^\rho_{\mu(\alpha} G\left(\Upsilon\right)_{\rho\nu)},
\end{align*}
and a standard computation first give
\begin{align}
\dive_g G\left(\Upsilon\right)_\alpha & = \frac{\lambda^2}{2} \left( g_{\rho\alpha}\tBox_g\Upsilon^\rho + \tilde{\mathcal{B}}_\alpha\right), \label{divergence G Upsilon}
\end{align}
with
\begin{align*}
 \tilde{\mathcal{B}}_\alpha & = g^{\mu\nu} \dr_{(\alpha}\Upsilon^\rho \dr_\mu g_{\rho\nu)} - g^{\mu\nu}\dr_\rho \Upsilon^\rho \dr_\mu g_{\nu\alpha}   + g^{\mu\nu}\dr_\mu \left(  \Upsilon^\rho \dr_\rho g_{\alpha\nu} -\half g^{\rho\sigma} \Upsilon^\gamma \dr_\gamma g_{\rho\sigma} g_{\alpha\nu} \right)  -  g^{\mu\nu} \Gamma^\rho_{\mu(\alpha} G\left(\Upsilon\right)_{\rho\nu)}.
\end{align*}
We use the notation $L^2=L^2(K)$ for clarity. We start with $\tilde{\mathcal{B}}_\alpha$. Recall that $g$ and $\dr g$ are bounded in $L^\infty$ irrespective of $\lambda$. Moreover, $\dr \h$ and $\dr \F$ are bounded in $L^2$ irrespective of $\lambda$ thanks to \eqref{BA F k} and \eqref{BA h k}. Therefore, the only terms potentially losing one power of $\lambda$ in $\tilde{\mathcal{B}}_\alpha$ are $\dr \Upsilon$ and $\dr^2 g$. Indeed thanks to \eqref{g}, \eqref{BA F k} and \eqref{BA h k} these terms are bounded by $\frac{1}{\lambda}$ in $L^2$, which shows that
\begin{align}
\l \tilde{\mathcal{B}}_\alpha \r_{L^2} \lesssim \frac{1}{\lambda}. \label{estim B tilde}
\end{align}
Let us now look at $\tBox_g \Upsilon^\rho$. We only focus on the main terms in $\Upsilon^\rho$, that is $\dr \h + \sin\left( \frac{u_0}{\lambda}\right)\dr \F$. For $\dr \h$, we use the equation \eqref{eq h} and commute it with one derivative to obtain
\begin{align*}
\tBox_g \dr \h & = \dr \tBox_g  \h  + [\tBox_g,\dr]\h.
\end{align*}
We have 
\begin{align*}
\l [\tBox_g,\dr]\h \r_{L^2} & \lesssim \l \dr g \r_{L^\infty} \l \dr^2 \h \r_{L^2} \lesssim \frac{1}{\lambda},
\end{align*}
where we used \eqref{BA h k}. We now use the equation \eqref{eq h} (again, we only write down the terms depending on $\F$ with the highest derivative):
\begin{align*}
\l \dr \tBox_g  \h \r_{L^2} & \lesssim \frac{1}{\lambda} \l \tBox_g \F \r_{L^2} + \l \nabla \tBox_g \F \r_{L^2} + \l \dr_t \tBox_g \F \r_{L^2} \lesssim \frac{1}{\lambda} + \l \dr_t \tBox_g \F \r_{L^2},
\end{align*}
where we used \eqref{BA box F k}. This estimate can't directly be used to bound $\dr_t \tBox_g \F$, instead we use \eqref{dt F}:
\begin{align*}
\l \dr_t \tBox_g \F \r_{L^2} & \lesssim \l [\dr_t, \tBox_g ] \F \r_{L^2} + \l \tBox_g \dr_t \F \r_{L^2}
\\& \lesssim \l \dr^2 \F \r_{L^2} + \l \nabla \tBox_g  \F \r_{L^2} + \l \dr^2 \h \r_{L^2}
\\& \lesssim \frac{1}{\lambda}.
\end{align*}
We proved that 
\begin{align}
\l \tBox_g \dr \h \r_{L^2} & \lesssim \frac{1}{\lambda}\label{estim box d h}.
\end{align}
For the $\sin\left( \frac{u_0}{\lambda}\right)\dr \F$ term in $\Upsilon^\rho$ we use \eqref{box fct trigo bis 2} to deduce that we need to estimate
\begin{align*}
  \frac{1}{\lambda}\cos\left( \frac{u_0}{\lambda}\right) \Ll_0 \dr \F  + \sin\left( \frac{u_0}{\lambda}\right) \left( \tBox_g \dr \F -\frac{1}{\lambda^2} g^{-1}(\d u_0, \d u_0) \dr \F \right).
\end{align*}
First, using \eqref{g(du_0,du_0)}, \eqref{BA F k} and \eqref{BA box F k} we obtain 
\begin{align*}
\l \tBox_g \dr \F -\frac{1}{\lambda^2} g^{-1}(\d u_0, \d u_0) \dr \F \r_{L^2} \lesssim \frac{1}{\lambda}.
\end{align*}
Moreover, using \eqref{eq F} commuted with one derivative we obtain
\begin{align*}
\l \Ll_0 \dr \F \r_{L^2} & \lesssim \l [\Ll_0, \dr]\F \r_{L^2}  + \l \dr  \h \r_{L^2} \lesssim 1,
\end{align*}
where we used \eqref{BA F k} and \eqref{BA h k}. We proved that
\begin{align}
\l \tBox_g \left( \sin\left( \frac{u_0}{\lambda}\right)\dr \F \right) \r_{L^2} \lesssim \frac{1}{\lambda}.\label{estim box d F}
\end{align}
Together with \eqref{estim box d h}, \eqref{estim box d F} proves that $\l \tBox_g \Upsilon^\rho \r_{L^2}\lesssim \frac{1}{\lambda}$, which concludes the proof.
\end{proof}

\begin{lem}\label{lem B(V)}
We have
\begin{align}
\dive_g G\left(V \right)_\alpha & = \frac{1}{\lambda} \mathcal{B}(V)^{(-1)}_\alpha + \mathcal{B}(V)^{(0)}_\alpha + \GO{\lambda},\label{divergence G V}
\end{align}
with
\begin{align}
\mathcal{B}(V)^{(-1)}_\alpha & =0,\label{B-1}
\\ \mathcal{B}(V)^{(0)}_\alpha & = \half \sin\left(\frac{u_0}{\lambda}\right)\left( -\Ll_0 V^{(2,1)}_\alpha + g_0^{\mu\nu}V^{(2,1)}_{\nu} \D_{[\mu}(L_0)_{\alpha]}   \right)\label{B0}
\\&\quad + 2 \cos\left( \frac{2u_0}{\lambda}\right) \left( -\Ll_0 V^{(2,2)}_\alpha + g_0^{\mu\nu}V^{(2,2)}_{\nu} \D_{[\mu}(L_0)_{\alpha]}   \right).\nonumber
\end{align}
\end{lem}

\begin{proof}
The $\frac{1}{\lambda}$ terms in $\dive_g G\left(V \right)_\alpha$ comes from the differentiation of the oscillating parts of $G^{(0)}_{\nu\alpha}$, i.e
\begin{align*}
\mathcal{B}(V)^{(-1)}_\alpha & = -\half \cos\left(\frac{u_0}{\lambda}\right)\dr^\nu u_0\left(  \dr_{(\alpha} u_0 V^{(2,1)}_{\nu)} + V^{(2,1)}_{L_0} (g_0)_{\alpha\nu} \right) 
\\&\quad +4\sin\left(\frac{2u_0}{\lambda}\right) \dr^\nu u_0\left(  \dr_{(\alpha} u_0 V^{(2,2)}_{\nu)} + V^{(2,2)}_{L_0} (g_0)_{\alpha\nu}\right),
\end{align*}
where we recall \eqref{G0}. If $i=1,2$, we have
\begin{align*}
\dr^\nu u_0\left(  \dr_{(\alpha} u_0 V^{(2,i)}_{\nu)} + V^{(2,i)}_{L_0} (g_0)_{\alpha\nu} \right) & = -\dr_\nu u_0 V^{(2,i)}_{L_0} +  V^{(2,i)}_{L_0}\dr_\nu u_0 = 0,
\end{align*}
which concludes the proof of \eqref{B-1}. Now for \eqref{B0}, from the definition \eqref{divergence G V} we obtain
\begin{align}
\dive_g G\left(V \right)_\alpha & = g^{\mu\nu} \dr_\mu G(V)_{\nu\alpha} - g^{\mu\nu} \Gamma^\rho_{\mu\nu} G(V)_{\rho\alpha} - g^{\mu\nu} \Gamma^\rho_{\mu\alpha} G(V)_{\rho\nu},\label{def Bianchi bis}
\end{align}
where $\Gamma^\rho_{\mu\nu}$ denotes the Christoffel symbols associated to $g$ computed in \eqref{christoffel}. Note that $F^{(1)}_{L_0\alpha}=0$ and $\tr_{g_0}F^{(1)}=0$ imply $g_0^{\mu\nu}(\tilde{\Gamma}^{(0)})^\rho_{\mu\nu}=0$ (see \eqref{gammatilde0}). This gives
\begin{align}
\mathcal{B}(V)_\alpha^{(0)} & = g_0^{\mu\nu}\dr_{\mu} u_0 \dr_\theta G_{\nu\alpha}^{(1)} + g_0^{\mu\nu}\tilde{\D}_\mu G^{(0)}_{\nu\alpha} - g_0^{\mu\nu} (\tilde{\Gamma}^{(0)})^\rho_{\mu\alpha}G^{(0)}_{\rho\nu},\label{inter B0}
\end{align}
where in $\tilde{\D}_\mu G^{(0)}$ we don't differentiate the oscillating functions, as for the corresponding notation $\tilde{\dr}$. We start by looking at the term depending on $G^{(1)}$, whose expression is given by \eqref{G1}. Thanks to $F^{(1)}_{L_0\alpha}=0$ the terms proportional to $O\left( V^{(2)} \right)$ or $O^\rho\left( V^{(2)} \right)$ in \eqref{G1} don't contribute to $\mathcal{B}(V)_\alpha^{(0)}$. We are left with
\begin{align}
g_0^{\mu\nu}\dr_{\mu} u_0 \dr_\theta G_{\nu\alpha}^{(1)} & = \half \sin\left( \frac{u_0}{\lambda} \right)\left( L_0^\nu\D_{(\alpha}V^{(2,1)}_{\nu)}  - \dive_{g_0}V^{(2,1)}(L_0)_\alpha \right) \label{B0 d theta G1}
\\&\quad + 2\cos\left( \frac{2u_0}{\lambda} \right) \left(L_0^\nu \D_{(\alpha}V^{(2,2)}_{\nu)}  - \dive_{g_0}V^{(2,2)}(L_0)_\alpha  \right).\nonumber
\end{align}
Let us now look at the third term in \eqref{inter B0}.  Let $i=1,2$, we expand using \eqref{gammatilde0}:
\begin{align}
 (g_0)^{\mu\nu}(\Tilde{\Gamma}^{(0)})^\rho_{\mu \alpha} &\left(  \dr_{(\rho} u_0 V^{(2,i)}_{\nu)} + V^{(2,i)}_{L_0} (g_0)_{\rho\nu} \right) \label{calcul utile}
 \\&  = -\half \sin\left(\frac{u_0}{\lambda}\right)  g_0^{\mu\nu}g_0^{\rho\sigma}\left( \dr_{(\mu}u_0 F^{(1)}_{\sigma \alpha)} - \dr_\sigma u_0 F^{(1)}_{\mu\alpha}  \right)\dr_{(\rho}u_0 V^{(2,i)}_{\nu)}\nonumber
\\&\quad -\half \sin\left(\frac{u_0}{\lambda}\right) g_0^{\mu\nu}g_0^{\rho\sigma}\left( \dr_{(\mu}u_0 F^{(1)}_{\sigma \alpha)} - \dr_\sigma u_0 F^{(1)}_{\mu\alpha}  \right)  V^{(2,i)}_{L_0} (g_0)_{\rho\nu}\nonumber
\\& = -\half \sin\left(\frac{u_0}{\lambda}\right)  g_0^{\mu\nu}g_0^{\rho\sigma}\left( \dr_{(\mu}u_0 F^{(1)}_{\sigma \alpha)} - \dr_\sigma u_0 F^{(1)}_{\mu\alpha}  \right)\dr_{\nu}u_0 V^{(2,i)}_{\rho}\nonumber
\\& = 0.\nonumber
\end{align}
Therefore the third term in \eqref{inter B0} vanishes. Finally, let us look at the second term in this expression.  From $\D g_0=0$ we obtain
\begin{align}
g_0^{\mu\nu}\D_\mu G^{(0)}_{\nu\alpha} & =  -\half \sin\left(\frac{u_0}{\lambda}\right)\left( - g_0^{\mu\nu}V^{(2,1)}_{\nu} \D_\mu(L_0)_{\alpha}   - V^{(2,1)}_{\alpha}\dive_{g_0}L_0 \right.\label{divergence G0}
\\&\left. \qquad\qquad\qquad\qquad - (L_0)_{\alpha} \dive_{g_0}V^{(2,1)}  - \D_{L_0} V^{(2,1)}_{\alpha} + \dr_\alpha V^{(2,1)}_{L_0}  \right) \nonumber
\\&\quad -2\cos\left(\frac{2u_0}{\lambda}\right)\left( - g_0^{\mu\nu}V^{(2,2)}_{\nu} \D_\mu(L_0)_{\alpha}   - V^{(2,2)}_{\alpha}\dive_{g_0}L_0 \right.\nonumber
\\&\left. \qquad\qquad\qquad\qquad - (L_0)_{\alpha} \dive_{g_0}V^{(2,2)}  - \D_{L_0} V^{(2,1)}_{\alpha} + \dr_\alpha V^{(2,2)}_{L_0}  \right) .\nonumber
\end{align}
Adding this to \eqref{B0 d theta G1} concludes the proof.
\end{proof}

\subsection{Conclusion}\label{section propa pola 23}

Before proving the propagation of the second polarization, we first prove that $V_{\Lb_0}^{(2,i)}=0$ on $\Sigma_0$. Recall that the tangential components of $V^{(2,i)}$ already vanish on $\Sigma_0$ thanks to our choice of initial data for $F^{(2,i)}$, see \eqref{ID V2i=0}.  Since $V_{\Lb_0}^{(2,i)}$ only involves the projection of the metric on $\Sigma_0$ (see \eqref{Pol Lb}), $V_{\Lb_0}^{(2,i)}=0$ can't be ensured by well-chosen initial data, it has to be directly satisfied by the solution of the constraint equations. This means that we could have performed this computation in \cite{Touati2023a}. However, it seems more consistent with the splitting between \cite{Touati2023a} and the present article to study $V_{\Lb_0}^{(2,i)}$ here. Moreover, we will benefit from the reduced form of the Einstein tensor obtained above, and the fact that the constraint equations are solved on the initial hypersurface.

\begin{lem}\label{lem V=0 initialement}
We have 
\begin{align*}
V^{(2,i)}\restriction{\Sigma_0}=0.
\end{align*}
\end{lem}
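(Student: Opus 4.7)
By \eqref{initial pola tangential} the three tangential components $V^{(2,i)}_{L_0}\restriction{\Sigma_0}$ and $V^{(2,i)}_{e_A}\restriction{\Sigma_0}$ already vanish (this is precisely how the time components of $F^{(2,i)}$ were defined in \eqref{ID F2 0A }--\eqref{ID F2 00}). Thus all that remains is to show
\begin{align*}
V^{(2,1)}_{\Lb_0}\restriction{\Sigma_0} = V^{(2,2)}_{\Lb_0}\restriction{\Sigma_0} = 0.
\end{align*}
The strategy is to feed these two unknowns into the constraint equations satisfied by $(\bar g,K)$, and to extract them through the weak limit mechanism of Remark \ref{convergence faible}.

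First I will translate the constraints into a relation on $G^{(0)}$. By construction in Section \ref{section ID reduced system}, the first and second fundamental forms of $\Sigma_0$ in $(\M,g)$ are $(\bar g,K)$ from Theorem \ref{theo initial data}, which solve \eqref{hamiltonian constraint general chap 3}--\eqref{momentum constraint general chap 3}. The Gauss--Codazzi identities therefore yield $G(T,T)\restriction{\Sigma_0}=0$, with $T$ the unit normal of $\Sigma_0$ for $g$. By Lemma \ref{lem unit normal}, $T=\dr_t+O(\lambda^2)$ in the sense that $T^0=1+O(\lambda^4)$ and $T^i=O(\lambda^2)$. Combined with the splitting \eqref{G decomposition}, the fact that $G(\Upsilon)$ carries a prefactor $\lambda^2$ and that $\Upsilon\restriction{\Sigma_0}=0$ by \eqref{ID Upsilon =0} (so only $\dr_t\Upsilon$ survives), this gives
\begin{align*}
0 = G(T,T)\restriction{\Sigma_0} = G^{(0)}_{00}\restriction{\Sigma_0} + O(\lambda),
\end{align*}
where the $O(\lambda)$ error collects $\lambda G^{(1)}_{00}$, the $Z^i G_{0i}$ contributions, and $G(\Upsilon)$.

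Next I will compute $G^{(0)}_{00}\restriction{\Sigma_0}$ in terms of $V^{(2,i)}_{\Lb_0}\restriction{\Sigma_0}$ using the background null frame. On $\Sigma_0$ the relations $L_0=|\nabla u_0|_{\bar g_0}(\dr_t+N_0)$ and $\Lb_0=|\nabla u_0|_{\bar g_0}^{-1}(\dr_t-N_0)$ give the decomposition
\begin{align*}
\dr_t = \frac{1}{2|\nabla u_0|_{\bar g_0}}L_0 + \frac{|\nabla u_0|_{\bar g_0}}{2}\Lb_0,
\end{align*}
so that for a 1-form $W$ with $W_{L_0}=0$ and $(g_0)_{00}=-1$ one has $\dr_0 u_0\, W_0 + W_{L_0}(g_0)_{00} = \frac{|\nabla u_0|^2_{\bar g_0}}{2}W_{\Lb_0}$. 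Applying this to $W=V^{(2,i)}$, which satisfies $V^{(2,i)}_{L_0}\restriction{\Sigma_0}=0$, and inserting into \eqref{G0} yields
\begin{align*}
G^{(0)}_{00}\restriction{\Sigma_0} = -\frac{|\nabla u_0|^2_{\bar g_0}}{4}\sin\!\left(\frac{u_0}{\lambda}\right) V^{(2,1)}_{\Lb_0}\restriction{\Sigma_0} - |\nabla u_0|^2_{\bar g_0}\cos\!\left(\frac{2u_0}{\lambda}\right) V^{(2,2)}_{\Lb_0}\restriction{\Sigma_0}.
\end{align*}
Crucially, the two coefficients $V^{(2,i)}_{\Lb_0}\restriction{\Sigma_0}$ are independent of $\lambda$, since they involve only $g_0$, $F^{(1)}$ and $F^{(2,i)}\restriction{\Sigma_0}$ via \eqref{V21}--\eqref{V22} and the initial data of Section \ref{section ID bg system}.

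Finally I will extract each component through the weak-limit argument of Remark \ref{convergence faible}. For any compactly supported test function $\varphi$ on $\Sigma_0$, integrating the identity $G^{(0)}_{00}\restriction{\Sigma_0} = O(\lambda)$ against $\sin(u_0/\lambda)\varphi$ and using $\sin^2(u_0/\lambda)\rightharpoonup \tfrac12$ while $\sin(u_0/\lambda)\cos(2u_0/\lambda)\rightharpoonup 0$ weakly in $L^2$ as $\lambda\to 0$, I obtain
\begin{align*}
\frac{|\nabla u_0|^2_{\bar g_0}}{8}\int_{\Sigma_0} V^{(2,1)}_{\Lb_0}\,\varphi = 0
\end{align*}
for all $\varphi$, hence $V^{(2,1)}_{\Lb_0}\restriction{\Sigma_0}=0$ thanks to $|\nabla u_0|_{\bar g_0}>c>0$ from \eqref{estim inf nabla u0}. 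Testing instead against $\cos(2u_0/\lambda)\varphi$ isolates $V^{(2,2)}_{\Lb_0}\restriction{\Sigma_0}=0$ by the same mechanism. The main obstacle is the careful bookkeeping of the $O(\lambda)$ remainder — one must verify that the $Z$-contributions, the $\lambda G^{(1)}$-terms (which involve spatial derivatives of $V^{(2,i)}$ whose initial values are not a priori zero), and $G(\Upsilon)\restriction{\Sigma_0}$ all converge weakly to $0$ after being multiplied by a bounded oscillating factor, which is automatic once their $L^1_{\text{loc}}$ size is $O(\lambda)$.
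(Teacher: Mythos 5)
Your proposal is correct and takes essentially the same route as the paper: both reduce to the $\Lb_0$ (equivalently the $0$) component via \eqref{initial pola tangential}, use the constraint $G(T,T)=0$ with $T=\dr_t+\GO{\lambda^2}$ from Lemma \ref{lem unit normal}, discard $G(\Upsilon)$ and the $\lambda G^{(1)}$ terms as $\GO{\lambda}$ in $L^2(K)$ via the bootstrap bounds, and then extract the two unknowns by testing against $\sin\left(\frac{u_0}{\lambda}\right)$ and $\cos\left(\frac{2u_0}{\lambda}\right)$ and passing to the weak limit of Remark \ref{convergence faible}. The only cosmetic differences are that the paper first obtains $V^{(2,i)}_0\restriction{\Sigma_0}=0$ and only afterwards converts to $V^{(2,i)}_{\Lb_0}$ through the null-frame relation, whereas you convert before taking limits, and your displayed coefficients differ from the paper's by a harmless factor of $2$ coming from the symmetrization convention.
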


\begin{proof}
Thanks to \eqref{initial pola tangential}, it remains to show that $V^{(2,i)}_{\Lb_0}=0$ on $\Sigma_0$. In fact, it is enough to show that 
\begin{align}
V^{(2,i)}_0 \restriction{\Sigma_0} =0 \label{V 0 = 0}.
\end{align}
Indeed, if \eqref{V 0 = 0} holds then $V^{(2,i)}_{\Lb_0}=-V^{(2,i)}_{L_0}$ (where we used that $L_0=|\nabla u_0|_{\bar{g}_0}(\dr_t + N_0)$ and $\Lb_0=|\nabla u_0|_{\bar{g}_0}(\dr_t - N_0)$ on $\Sigma_0$) and \eqref{initial pola tangential} then gives the result. Therefore, let us prove \eqref{V 0 = 0}. First, \eqref{G0} and \eqref{initial pola tangential} imply that on $\Sigma_0$
\begin{align*}
G(V)_{\alpha\beta} & = -\half \sin\left(\frac{u_0}{\lambda}\right) \dr_{(\alpha} u_0 V^{(2,1)}_{\beta)} - 2 \cos\left(\frac{2u_0}{\lambda}\right) \dr_{(\alpha} u_0 V^{(2,2)}_{\beta)} + \GO{\lambda}.
\end{align*} 
Moreover, $\Upsilon^\rho$ contains oscillating terms (see \eqref{def Upsilon h}) and also terms depending on $\lambda$ like $\h$ or $\F$. More precisely, if we only write the top derivatives of the non-background terms we have
\begin{align*}
| G(\Upsilon) | & \lesssim \lambda |\dr\F| + \lambda^2 \left( | \dr^2\F | + |\dr^2\h| \right).
\end{align*}
Estimates \eqref{BA F k} and \eqref{BA h k} then imply that $G(\Upsilon)$ is $\GO{\lambda}$ in $L^2(K)$. Conclusion, we have
\begin{align}
G_{\alpha\beta} & = -\half \sin\left(\frac{u_0}{\lambda}\right) \dr_{(\alpha} u_0 V^{(2,1)}_{\beta)} - 2 \cos\left(\frac{2u_0}{\lambda}\right) \dr_{(\alpha} u_0 V^{(2,2)}_{\beta)} + \GO{\lambda}, \label{G reduced}
\end{align}
where the $\GO{\lambda}$ has to be understood in $L^2(K)$.  Now, Corollary \ref{coro ID} ensures that the constraint equations are satisfied on $\Sigma_0$, which in particular gives $G_{TT}=0$ on $\Sigma_0$, where $T$ is the unit normal to $\Sigma_0$ for $g$.  Thanks to Lemma \ref{lem unit normal}, we have $T=\dr_t + \GO{\lambda^2}$ so the previous identity implies that $G_{00}=\GO{\lambda^2}$ on $\Sigma_0$. Thanks to \eqref{G reduced} and \eqref{dt u_0} this rewrites as
\begin{align}
- \sin\left(\frac{u_0}{\lambda}\right) |\nabla u_0|_{\bar{g}_0} V^{(2,1)}_{0} - 4 \cos\left(\frac{2u_0}{\lambda}\right) |\nabla u_0|_{\bar{g}_0} V^{(2,2)}_{0} = \GO{\lambda}.\label{lem 8 10 inter}
\end{align}
We multiply this identity by $\sin\left(\frac{u_0}{\lambda}\right)$ and take the weak limit in $L^2(K)$ when $\lambda$ tends to 0. Moreover, the weak limit of $\sin^2\left(\frac{u_0}{\lambda}\right)$ and $\sin\left(\frac{u_0}{\lambda}\right)\cos\left(\frac{2u_0}{\lambda}\right)$ are respectively $\half$ and 0, and $V^{(2,1)}_0$ does not depend on $\lambda$ so we obtain $V^{(2,1)}_0=0$. Then \eqref{lem 8 10 inter} becomes 
\begin{align}
 \cos\left(\frac{2u_0}{\lambda}\right) |\nabla u_0|_{\bar{g}_0} V^{(2,2)}_{0} = \GO{\lambda}.\label{lem 8 10 inter bis}
\end{align}
Multiplying this identity by $\cos\left(\frac{2u_0}{\lambda}\right)$ and taking the weak limit in $L^2(K)$ when $\lambda$ tends to 0 gives similarly that $V^{(2,2)}_0=0$. This proves \eqref{V 0 = 0} and concludes the proof of the lemma.
\end{proof}

We can now prove the propagation of the second polarization.

\begin{prop}\label{prop V2i}
The following holds on the whole spacetime
\begin{align}
V^{(2,i)}=0, \label{V2i=0}
\end{align}
for $i=1,2$.
\end{prop}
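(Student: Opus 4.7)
The plan is to extract from the contracted Bianchi identity $\dive_g G = 0$ a closed linear homogeneous transport system for $V^{(2,1)}$ and $V^{(2,2)}$, and then invoke the vanishing initial data from Lemma~\ref{lem V=0 initialement} together with the transport energy estimate of Lemma~\ref{lem L0 h =f}.

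Using the decomposition \eqref{G decomposition}, the Bianchi identities yield
\begin{equation*}
\dive_g G(V)_\alpha = -\dive_g G(\Upsilon)_\alpha.
\end{equation*}
On the right-hand side, Lemma~\ref{lem dive G upsilon} gives $\l \dive_g G(\Upsilon)_\alpha \r_{L^2(K)} \lesssim \lambda$. On the left-hand side, the expansion \eqref{divergence G V} together with Lemma~\ref{lem B-1} (which kills the $\lambda^{-1}$ term) and Lemma~\ref{lem B0} yields, as an identity in $L^2(K)$,
\begin{align*}
&\tfrac{1}{2}\sin\!\left(\tfrac{u_0}{\lambda}\right)\left( -\Ll_0 V^{(2,1)}_\alpha + g_0^{\mu\nu} V^{(2,1)}_\nu \D_{[\mu}(L_0)_{\alpha]}\right) \\
&\quad + 2\cos\!\left(\tfrac{2u_0}{\lambda}\right)\left( -\Ll_0 V^{(2,2)}_\alpha + g_0^{\mu\nu} V^{(2,2)}_\nu \D_{[\mu}(L_0)_{\alpha]}\right) = \GO{\lambda}
\end{align*}
where the remainder is bounded in $L^2(K)$ uniformly as $\lambda \to 0$, and the two bracketed quantities are \emph{independent of $\lambda$}.

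The key step is now to separate the two frequencies via weak $L^2(K)$ convergence, exactly as in the proof of Lemma~\ref{lem V=0 initialement}. Multiplying the identity by $\sin(u_0/\lambda)$ and passing to the weak limit $\lambda \to 0$, using Remark~\ref{convergence faible} (so that $\sin^2(u_0/\lambda) \rightharpoonup 1/2$ and $\sin(u_0/\lambda)\cos(2u_0/\lambda) \rightharpoonup 0$), extracts
\begin{equation*}
-\Ll_0 V^{(2,1)}_\alpha + g_0^{\mu\nu} V^{(2,1)}_\nu \D_{[\mu}(L_0)_{\alpha]} = 0.
\end{equation*}
Substituting this back and multiplying by $\cos(2u_0/\lambda)$, another weak limit extracts the analogous equation for $V^{(2,2)}$. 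Here one must test against smooth compactly supported functions and use that $V^{(2,i)}$ is independent of $\lambda$ and supported in $K$; the $\GO{\lambda}$ remainder vanishes in the limit.

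This gives a coupled linear homogeneous transport system for $(V^{(2,1)}, V^{(2,2)})$ with coefficients depending only on the background. Combined with $V^{(2,i)}\restriction{\Sigma_0} = 0$ from Lemma~\ref{lem V=0 initialement} and the standard transport estimate of Lemma~\ref{lem L0 h =f} component by component in the coordinate frame (treating the zeroth-order terms $g_0^{\mu\nu} V^{(2,i)}_\nu \D_{[\mu}(L_0)_{\alpha]}$ as lower-order source terms bounded by $V^{(2,i)}$ itself), Gronwall's inequality applied to $\sum_\alpha \l V^{(2,i)}_\alpha \r_{L^2}^2$ forces $V^{(2,i)} \equiv 0$ on $[0,1]\times\R^3$. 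The only subtlety, already handled in Lemma~\ref{lem V=0 initialement}, is to ensure that the weak-limit argument is applicable: this requires the remainders to live in $L^2(K)$ with uniform-in-$\lambda$ bounds, which is precisely what Lemma~\ref{lem dive G upsilon} and the expansion of $\dive_g G(V)$ provide.
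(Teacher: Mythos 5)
Your proposal is correct and follows essentially the same route as the paper: extract the $\lambda$-independent transport equations for $V^{(2,1)}$ and then $V^{(2,2)}$ from the Bianchi identity by multiplying by $\sin(u_0/\lambda)$ and $\cos(2u_0/\lambda)$ and passing to weak $L^2(K)$ limits (using Lemmas \ref{lem B-1}, \ref{lem B0}, \ref{lem dive G upsilon} and Remark \ref{convergence faible}), then conclude by the transport energy estimate of Lemma \ref{lem L0 h =f}, Gronwall, and the vanishing initial data of Lemma \ref{lem V=0 initialement}. No substantive differences from the paper's argument.
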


\begin{proof}
We look at the quantity $\sin\left(\frac{u_0}{\lambda}\right) \dive_g G_\alpha$, which thanks to Lemma \ref{lem B(V)} rewrites
\begin{align}
\sin\left(\frac{u_0}{\lambda}\right) \dive_g G_\alpha  & = \half \sin^2\left(\frac{u_0}{\lambda}\right)\left( -\Ll_0 V^{(2,1)}_\alpha + g_0^{\mu\nu}V^{(2,1)}_{\nu} \D_{[\mu}(L_0)_{\alpha]}   \right)\label{sin B}
\\&\quad + 2 \sin\left(\frac{u_0}{\lambda}\right)\cos\left( \frac{2u_0}{\lambda}\right) \left( -\Ll_0 V^{(2,2)}_\alpha + g_0^{\mu\nu}V^{(2,2)}_{\nu} \D_{[\mu}(L_0)_{\alpha]}   \right)\nonumber
\\&\quad + \GO{\lambda}, \nonumber
\end{align}
where we use the Bianchi identities \eqref{Bianchi identities}, Lemma \ref{lem dive G upsilon} and where the $\GO{\lambda}$ has to be understood in $L^2(K)$. The quantity 
\begin{align*}
-\Ll_0 V^{(2,1)}_\alpha + g_0^{\mu\nu}V^{(2,1)}_{\nu} \D_{[\mu}(L_0)_{\alpha]}  ,
\end{align*} 
only involves the background perturbations so it doesn't depend on $\lambda$. Therefore the weak limit when $\lambda$ tends to 0 of the first two lines in \eqref{sin B} is equal to 
\begin{align}
\frac{1}{4}\left( -\Ll_0 V^{(2,1)}_\alpha + g_0^{\mu\nu}V^{(2,1)}_{\nu} \D_{[\mu}(L_0)_{\alpha]}  \right).\label{weak limit 1}
\end{align}
Thanks to \eqref{Bianchi identities}, the weak limit of \eqref{sin B} is zero and we obtain the following equation for $V^{(2,1)}$:
\begin{align}
\Ll_0 V^{(2,1)}_\alpha = g_0^{\mu\nu}V^{(2,1)}_{\nu} \D_{[\mu}(L_0)_{\alpha]} \label{eq V21}.
\end{align}
If we now multiply $\dive_g G_\alpha$ by $\cos\left(\frac{2u_0}{\lambda}\right)$, Lemma \ref{lem B(V)} and \eqref{eq V21} imply
\begin{align*}
\cos\left(\frac{2u_0}{\lambda}\right)\dive_g G_\alpha & = 2 \cos^2\left( \frac{2u_0}{\lambda}\right) \left( -\Ll_0 V^{(2,2)}_\alpha + g_0^{\mu\nu}V^{(2,2)}_{\nu} \D_{[\mu}(L_0)_{\alpha]}   \right) + \GO{\lambda}.
\end{align*} 
We again take the weak limit in $L^2$ when $\lambda$ tends to 0 and thanks to \eqref{Bianchi identities} we obtain
\begin{align}
\Ll_0 V^{(2,2)}_\alpha = g_0^{\mu\nu}V^{(2,2)}_{\nu} \D_{[\mu}(L_0)_{\alpha]} \label{eq V22}.
\end{align}
Equations \eqref{eq V21} and \eqref{eq V22} are transport equations for the quantities $V^{(2,i)}$, $i=1,2$. Thanks to Lemma \ref{lem V=0 initialement} we know that 
\begin{align}
V^{(2,i)}\restriction{\Sigma_0}=0,\label{blabla}
\end{align}
for $i=1,2$. Together with the energy estimate for the transport operator $L_0$ (see Lemma \ref{lem L0 h =f}) and Gronwall's inequality,  \eqref{eq V22} and \eqref{blabla} lead to \eqref{V2i=0}.

\end{proof}

It remains to prove that $\Upsilon^\rho =0$. The Einstein tensor of $g$ now reduces to $G\left(\Upsilon\right)$, i.e
\begin{align}
G_{\alpha\beta} & = \frac{\lambda^2}{2}  \left( g_{\rho(\alpha}\dr_{\beta)}\Upsilon^\rho - g_{\alpha\beta} \dr_\rho \Upsilon^\rho+  \Upsilon^\rho \dr_\rho g_{\alpha\beta} -\half g^{\mu\nu} \Upsilon^\rho \dr_\rho g_{\mu\nu} g_{\alpha\beta}\right).\label{Einstein last expression}
\end{align}
At this point, it is standard to deduce from the Bianchi identities \eqref{Bianchi identities} the fact that $\Upsilon^\rho$ vanishes if it vanishes initially, we redo the proof for completeness.

\begin{prop}\label{prop Upsilon}
The following holds on the whole spacetime
\begin{align*}
\Upsilon^\rho = 0.
\end{align*}
\end{prop}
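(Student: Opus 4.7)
The plan is to show that $\Upsilon^\rho$ solves a homogeneous linear wave system with vanishing Cauchy data on $\Sigma_0$, and then conclude by energy estimates.

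First I would derive the wave equation. By Proposition \ref{prop V2i} the Einstein tensor $G_{\alpha\beta}$ of $g$ reduces to the expression \eqref{Einstein last expression}, which depends only on $g$ and $\Upsilon^\rho$. Plugging this into the identity \eqref{divergence G Upsilon} and using the contracted Bianchi identities $\dive_g G = 0$ yields
\begin{align*}
g_{\rho\alpha} \tBox_g \Upsilon^\rho + \tilde{\mathcal{B}}_\alpha = 0,
\end{align*}
where $\tilde{\mathcal{B}}_\alpha$ is linear in $\Upsilon^\rho$ and $\dr \Upsilon^\rho$ with coefficients built from $g$ and $\dr g$. Contracting with $g^{\alpha\sigma}$ produces a closed linear system of wave equations for $\Upsilon^\sigma$ whose principal part is $\tBox_g$ applied componentwise.

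Next I would verify that the Cauchy data vanish on $\Sigma_0$. The identity $\Upsilon^\rho\restriction{\Sigma_0}=0$ is precisely \eqref{ID Upsilon =0}. For the normal derivative I would use the constraint equations, which hold by Corollary \ref{coro ID} (i): the pair $(\bar g, K)$ from Theorem \ref{theo initial data} is the first and second fundamental form of $\Sigma_0$ in $(\mathcal{M}, g)$, so $G(T,T)\restriction{\Sigma_0}=0$ and $G(T,\dr_i)\restriction{\Sigma_0}=0$ where $T$ is the future unit normal for $g$ given by Lemma \ref{lem unit normal}. Substituting $\Upsilon\restriction{\Sigma_0}=0$ into \eqref{Einstein last expression} kills the undifferentiated $\Upsilon$ terms, and since tangential derivatives of $\Upsilon^\rho$ vanish on $\Sigma_0$, what remains on the right-hand side of the four constraints is a linear expression in $\dr_t \Upsilon^\rho\restriction{\Sigma_0}$ with a coefficient matrix of the schematic form $g_{\rho(\alpha} T^\mu \delta_{\mu 0}\delta_{\beta)}^{\cdot} - g_{\alpha\beta} g^{0\rho}$ evaluated at $T$. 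Since $g_{\alpha\beta}$ is a perturbation of $g_0$ and $T$ a small perturbation of $\dr_t$, this matrix is invertible for $\e$ and $\lambda$ small enough, forcing $\dr_t \Upsilon^\rho\restriction{\Sigma_0}=0$.

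Finally, with a homogeneous linear wave system and vanishing Cauchy data, I would conclude via the weighted energy estimate of Lemma \ref{lem EE} applied componentwise. Since the coefficients of the lower-order term $\tilde{\mathcal{B}}_\alpha$ and of $\tBox_g$ are bounded on $[0,1]\times\R^3$ uniformly in $\lambda$ (cf.\ Lemma \ref{lem g BG gtilde}), Gronwall's inequality yields $\Upsilon^\rho \equiv 0$. Combined with Proposition \ref{prop V2i} this gives $G_{\alpha\beta}=0$, equivalent in $3+1$ dimensions to $R_{\alpha\beta}(g)=0$, which completes the proof of Theorem \ref{theo main}.

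The main technical obstacle is the middle step: one must carefully identify which components of \eqref{Einstein last expression} survive on $\Sigma_0$ once $\Upsilon$ is set to zero, handle the oscillatory contributions hidden inside $\dr_\beta \Upsilon^\rho$ coming from the definition \eqref{def Upsilon h} (time derivatives of $\sin(u_0/\lambda) F^{(1)}$ and $\sin(u_0/\lambda) \F$ produce $\lambda^{-1}$ factors that must cancel against the $\lambda^2$ prefactor of $G(\Upsilon)$), and confirm that the $4\times 4$ algebraic system relating the constraints to $\dr_t\Upsilon^\rho\restriction{\Sigma_0}$ is non-degenerate uniformly in $\lambda$.
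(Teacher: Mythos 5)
Your proposal follows essentially the same route as the paper: the contracted Bianchi identities combined with \eqref{divergence G Upsilon} yield the homogeneous linear wave system \eqref{wave Upsilon} for $\Upsilon^\rho$, the gauge choice of initial data gives $\Upsilon^\rho\restriction{\Sigma_0}=0$, the constraints $G_{TT}=G_{Ti}=0$ on $\Sigma_0$ force $T\Upsilon^\rho\restriction{\Sigma_0}=0$, and uniqueness for the linear wave equation concludes. The invertibility you flag as the main obstacle is settled in the paper by an explicit triangular computation ($G_{TT}=-T\Upsilon^0/(1+Z^t)$ and then $G_{Ti}=g_{0i}\,T\Upsilon^0+\bar{g}_{ij}\,T\Upsilon^j$ with $\bar{g}$ Riemannian), and note that uniform-in-$\lambda$ boundedness of the lower-order coefficients is not actually needed, since the uniqueness argument is run at each fixed $\lambda$.
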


\begin{proof}
The main ingredient is already contained in \eqref{divergence G Upsilon}. Since $V^{(2,i)}=0$, we have $G_{\alpha\beta} = G\left(\Upsilon\right)_{\alpha\beta}$ and \eqref{divergence G Upsilon} together with \eqref{Bianchi identities} then implies 
\begin{align}
\tBox_g \Upsilon^\rho + A^{\mu\rho}_\nu \dr_\mu \Upsilon^\nu + B^\rho_\alpha \Upsilon^\alpha = 0 \label{wave Upsilon}.
\end{align}
Therefore, $\Upsilon$ solves \eqref{wave Upsilon}, i.e a system of linear wave equations on the spacetime.  If $\Upsilon^\rho$ and $T \Upsilon^\rho$ vanish on $\Sigma_0$, then $\Upsilon^\rho$ vanishes on the whole spacetime (recall that $T$ is the unit normal to $\Sigma_0$ for $g$, see Lemma \ref{lem unit normal}). From Corollary \ref{coro ID} and our choice of initial data for $g$, we deduce what we need.
\begin{itemize}
\item From Corollary \ref{coro ID} we have $\Upsilon^\rho\restriction{\Sigma_0}=0$. Thanks to \eqref{Einstein last expression} this also implies that on $\Sigma_0$ we have
\begin{align}
G_{\alpha\beta} & = \frac{\lambda^2}{2}  \left( g_{\rho(\alpha}\dr_{\beta)}\Upsilon^\rho - g_{\alpha\beta} \dr_t \Upsilon^0 \right). \label{Einstein last expression on sigma 0}
\end{align}
\item Corollary \ref{coro ID} also states that the constraint equations are solved on $\Sigma_0$, which reads $G_{TT}=0$ and $G_{Ti}=0$.  Since $T=(1+Z^t)\dr_t + \bar{Z}^i\dr_i$ and $g(T,\dr_i)=0$ the previous components of the Einstein tensor rewrite
\begin{align*}
G_{TT} & = 2T^\alpha g_{\rho\alpha} T\Upsilon^\rho + \dr_t \Upsilon^0 = - \frac{T\Upsilon^0}{1+Z^t}  ,
\end{align*}
and
\begin{align*}
G_{Ti} &  = g_{0i} T\Upsilon^0 + \bar{g}_{ij} T\Upsilon^j,
\end{align*}
where we used \eqref{Einstein last expression on sigma 0} and $\Upsilon^\rho\restriction{\Sigma_0}=0$ to cancel any spatial derivative of $\Upsilon^\rho$. Since $Z^t=\GO{\lambda^4}$ (see \eqref{def Zt}) the constraint equations indeed imply that $T\Upsilon^\rho\restriction{\Sigma_0}=0$.
\end{itemize}
\end{proof}

Propositions \ref{prop V2i} and \ref{prop Upsilon} concludes the proof of Theorem \ref{theo main}, by showing that the metric $g$ solves the Einstein vacuum equations.

\appendix

\section{Computations of Section \ref{section high-frequency ansatz}}\label{appendix calcul}

\subsection{Proof of Proposition \ref{prop expansion wave part}}

In this section, we derive the expressions of the coefficients in the expansion of the wave part of the Ricci tensor:
\begin{align}
\tBox_g g_{\alpha\beta} & = W^{(0)}_{\alpha\beta} + \lambda W^{(1)}_{\alpha\beta} + \lambda^2 W^{(\geq 2)}_{\alpha\beta}  .
\end{align}
The main ingredient is the computation \eqref{box fct trigo bis 2}, applied to each oscillating term in \eqref{g}. We also need to expand the eikonal term $g_0^{-1}(\d u_0, \d u_0)$, $ g^{\mu\nu}\dr_\mu u_0 \dr_\nu f$ and $\tBox_g f$ (for various $f$), all appearing in \eqref{box fct trigo bis 2}. For that, we use the expansion of the inverse given in \eqref{inverse g}-\eqref{inverse g2} and \eqref{assumptions on F1}. This gives
\begin{align}
&g^{-1}(\d u_0, \d u_0) \nonumber
\\& = g_0^{-1}(\d u_0,\d u_0) - \lambda F^{(1)}_{L_0L_0} \cos\left(\frac{u_0}{\lambda}\right)\nonumber
\\&\quad - \lambda^2 \left( \F_{L_0L_0} \sin\left(\frac{u_0}{\lambda}\right) +   F^{(2,1)}_{L_0L_0} \sin\left(\frac{u_0}{\lambda}\right) +  F^{(2,2)}_{L_0L_0} \cos\left(\frac{2u_0}{\lambda}\right) + \h_{L_0L_0} \right)   + \la^3(g^{-1})^{(\geq 3)}(\dr u_0)^2 \nonumber
\\ & = - \lambda^2 \left( \F_{L_0L_0} \sin\left(\frac{u_0}{\lambda}\right) +   F^{(2,1)}_{L_0L_0} \sin\left(\frac{u_0}{\lambda}\right) +  F^{(2,2)}_{L_0L_0} \cos\left(\frac{2u_0}{\lambda}\right) + \h_{L_0L_0} \right)   + \la^3(g^{-1})^{(\geq 3)}(\dr u_0)^2 ,\label{g(du_0,du_0)}
\end{align}
where we also used \eqref{eq u0}. The absence of a $\lambda^0$ term in \eqref{g(du_0,du_0)} explains why there is no $\lambda^{-1}$ term in the expansion of $\tBox_g g_{\alpha\beta}$ given by \eqref{expansion wave part}. Similarly we have 
\begin{align}
g^{\mu\nu}\dr_\mu u_0 \dr_\nu f & = -L_0 f + \la^2(g^{-1})^{(\geq 2)} \dr u_0 \dr f, \label{g du0 df}
\end{align}
and
\begin{align}
\tBox_g f & = \tBox_{g_0} f + \la (g^{-1})^{(\geq 1)} \dr^2 f, \label{box g f}
\end{align}
for $f$ a scalar function. Note that \eqref{box g f} also gives the expansion of the wave operator applied to the background term in \eqref{g}:
\begin{align*}
\tBox_g (g_0)_{\a\b} & = \tBox_{g_0} (g_0)_{\a\b} + \la (g^{-1})^{(\geq 1)} \dr^2 g_0.
\end{align*}
We now apply \eqref{box fct trigo bis 2} to each oscillating term in \eqref{g}, using systematically \eqref{g(du_0,du_0)}, \eqref{g du0 df} and \eqref{box g f}. For $\la g^{(1)}$ this gives
\begin{align*}
&\tBox_g \pth{ \la \cos\pth{\frac{u_0}{\la}} F^{(1)}_{\a\b} } 
\\& = - \frac{1}{\la} \cos\pth{\frac{u_0}{\la}}g^{-1}(\d u_0,\d u_0) F^{(1)}_{\a\b} - \sin\pth{\frac{u_0}{\la}}\pth{ 2 g^{\mu\nu}\dr_\mu u_0 \dr_\nu F^{(1)}_{\a\b} + (\tBox_g u_0) F^{(1)}_{\a\b} }
\\&\quad  + \la \cos\pth{\frac{u_0}{\la}} \tBox_g F^{(1)}_{\a\b}
\\& = - \sin\pth{\frac{u_0}{\la}}\pth{ -2L_0 F^{(1)}_{\a\b} + (\tBox_{g_0} u_0) F^{(1)}_{\a\b} } 
\\&\quad + \la \cos\pth{\frac{u_0}{\la}} \left( \F_{L_0L_0} \sin\left(\frac{u_0}{\lambda}\right) +   F^{(2,1)}_{L_0L_0} \sin\left(\frac{u_0}{\lambda}\right) +  F^{(2,2)}_{L_0L_0} \cos\left(\frac{2u_0}{\lambda}\right) + \h_{L_0L_0} \right)    F^{(1)}_{\a\b} 
\\&\quad + \la  \pth{ (g^{-1})^{(\geq 1)} F^{(1)} \dr^2 u_0 + g^{-1}\dr^2 F^{(1)} } + \la^2 \pth{ (g^{-1})^{(\geq 2)}\dr u_0 \dr F^{(1)} + (g^{-1})^{(\geq 3)}(\dr u_0)^2  F^{(1)} }.
\end{align*}
For $\la^2g^{(2)}$ this gives
\begin{align*}
&\tBox_g \pth{ \la^2 \sin\pth{\frac{u_0}{\la}} \F_{\a\b} } 
\\& = -\sin\left( \frac{u_0}{\lambda} \right) g^{-1}(\d u_0, \d u_0) \F_{\a\b} + \la\cos\left( \frac{u_0}{\lambda} \right) \left( 2g^{\mu\nu}\dr_\mu u_0 \dr_\nu \F_{\a\b} + (\tBox_g u_0)\F_{\a\b}  \right)
\\&\quad + \la^2\sin\left( \frac{u_0}{\lambda} \right)\tBox_g \F_{\a\b}
\\& =   \la\cos\left( \frac{u_0}{\lambda} \right) \left( -2L_0 \F_{\a\b} + (\tBox_{g_0} u_0)\F_{\a\b}  \right)  + \la^2\sin\left( \frac{u_0}{\lambda} \right)\tBox_g \F_{\a\b}
\\&\quad + \la^2 \pth{ (g^{-1})^{(\geq 2)}(\dr u_0)^2 \F + (g^{-1})^{(\geq 1)} \dr^2 u_0\F   } +   \la^3(g^{-1})^{(\geq 2)} \dr u_0 \dr \F ,
\end{align*}
and
\begin{align*}
&\tBox_g \pth{ \la^2 \sin\pth{\frac{u_0}{\la}} F^{(2,1)}_{\a\b} } 
\\& =   \la\cos\left( \frac{u_0}{\lambda} \right) \left( -2L_0 F^{(2,1)}_{\a\b} + (\tBox_{g_0} u_0)F^{(2,1)}_{\a\b}  \right) 
\\&\quad + \la^2 \pth{ (g^{-1})^{(\geq 2)}(\dr u_0)^2 F^{(2,1)} + (g^{-1})^{(\geq 1)} \dr^2 u_0 F^{(2,1)} + g^{-1}\dr^2 F^{(2,1)}  } +   \la^3(g^{-1})^{(\geq 2)} \dr u_0 \dr F^{(2,1)}  ,
\end{align*}
and
\begin{align*}
&\tBox_g \pth{ \la^2 \cos\pth{\frac{2u_0}{\la}} F^{(2,2)}_{\a\b} } 
\\& =  -2 \la\sin\left( \frac{2u_0}{\lambda} \right) \left( -2L_0 F^{(2,2)}_{\a\b} + (\tBox_{g_0} u_0)F^{(2,2)}_{\a\b}  \right) 
\\&\quad + \la^2 \pth{ (g^{-1})^{(\geq 2)}(\dr u_0)^2 F^{(2,2)} + (g^{-1})^{(\geq 1)} \dr^2 u_0F^{(2,2)} + g^{-1}\dr^2 F^{(2,2)} } +   \la^3(g^{-1})^{(\geq 2)} \dr u_0 \dr F^{(2,2)} .
\end{align*}
Collecting terms for each power of $\la$ in the previous expressions, we obtain the various expressions of Proposition \ref{prop expansion wave part}.

\subsection{Proof of Proposition \ref{prop expansion quad part}}

The quadratic non-linearity $P(g)(\dr g,\dr g)$ is of the form $(g^{-1}\dr g)^2$, see \eqref{quadratic non-linearity} for an exact expression. Therefore, we have schematically
\begin{align*}
P^{(0)} & = \underbrace{(g_0^{-1} \dr g_0)^2}_{\vcentcolon = A }  + \underbrace{g_0^{-2}\dr g_0 \dr u_0 \dr_\theta g^{(1)}}_{\vcentcolon = B} + \underbrace{(g_0^{-1}\dr u_0 \dr_\theta g^{(1)})^2}_{\vcentcolon = C},
\end{align*}
since $(\dr g)^{(0)}= \dr g_0 + \dr u_0 \dr_\theta g^{(1)}$. Note that the purely background term $A_{\a\b}$ simply corresponds to $P_{\a\b}(g_0)(\dr g_0,\dr g_0)$. For $B_{\a\b}$, we have from \eqref{quadratic non-linearity}
\begin{align*}
B_{\a\b} & =  g_0^{\mu\rho}g_0^{\nu\sigma}\dr_{(\alpha}(g_0)_{\rho\sigma}\dr_\mu u_0 \dr_\theta g^{(1)}_{\beta)\nu}  - \frac{1}{2}g_0^{\mu\rho}g_0^{\nu\sigma}\dr_\alpha (g_0)_{\rho\sigma} \dr_\beta u_0 \dr_\theta g^{(1)}_{\mu\nu} 
\\&\quad - g_0^{\mu\rho}g_0^{\nu\sigma}\dr_\rho (g_0)_{\alpha\nu}\dr_\sigma u_0 \dr_\theta g^{(1)}_{\beta\mu} +  g_0^{\mu\rho}g_0^{\nu\sigma}\dr_\rho (g_0)_{\sigma\alpha} \dr_\mu u_0 \dr_\theta g^{(1)}_{\nu\beta}
\\&\quad + g_0^{\mu\rho}g_0^{\nu\sigma}\dr_{(\alpha}u_0 \dr_\theta g^{(1)}_{\rho\sigma}\dr_\mu (g_0)_{\beta)\nu}  - \frac{1}{2}g_0^{\mu\rho}g_0^{\nu\sigma}\dr_\alpha u_0 \dr_\theta g^{(1)}_{\rho\sigma} \dr_\beta (g_0)_{\mu\nu} 
\\&\quad - g_0^{\mu\rho}g_0^{\nu\sigma}\dr_\rho u_0 \dr_\theta g^{(1)}_{\alpha\nu}\dr_\sigma (g_0)_{\beta\mu} +  g_0^{\mu\rho}g_0^{\nu\sigma}\dr_\rho u_0 \dr_\theta g^{(1)}_{\sigma\alpha} \dr_\mu (g_0)_{\nu\beta}
\\& = - L_0^\rho g_0^{\mu\sigma}\pth{ \dr_{(\alpha}(g_0)_{\rho\sigma} +  \dr_\rho (g_0)_{\sigma(\alpha} - \dr_\si (g_0)_{(\alpha\rho}    }  \dr_\theta g^{(1)}_{\b)\mu}
\\&\quad + g_0^{\mu\rho}g_0^{\nu\sigma}\dr_{(\alpha}u_0 \pth{ \dr_\rho (g_0)_{\beta)\sigma} - \frac{1}{2}\dr_{\b)} (g_0)_{\rho\sigma}   }\dr_\theta g^{(1)}_{\mu\nu}
\\& = - 2 L_0^\rho \Gamma(g_0)^\mu_{(\alpha\rho}      \dr_\theta g^{(1)}_{\b)\mu} + g_0^{\mu\rho}g_0^{\nu\sigma}\dr_{(\alpha}u_0 \pth{ \dr_\rho (g_0)_{\beta)\sigma} - \frac{1}{2}\dr_{\b)} (g_0)_{\rho\sigma}   }\dr_\theta g^{(1)}_{\mu\nu}.
\end{align*}
If we now use \eqref{expression g1} we obtain
\begin{align*}
B_{\a\b} & = -\sin\pth{\frac{u_0}{\la}} \pth{ - 2 L_0^\rho \Gamma(g_0)^\mu_{(\alpha\rho}   F^{(1)}_{\b)\mu} + g_0^{\mu\rho}g_0^{\nu\sigma}\dr_{(\alpha}u_0 \pth{ \dr_\rho (g_0)_{\beta)\sigma} - \frac{1}{2}\dr_{\b)} (g_0)_{\rho\sigma}   }F^{(1)}_{\mu\nu} }.
\end{align*}
For $C_{\a\b}$, we have from \eqref{quadratic non-linearity}
\begin{align*}
C_{\a\b} & =  g_0^{\mu\rho}g_0^{\nu\sigma}\dr_{(\alpha}u_0 \dr_\theta g^{(1)}_{\rho\sigma}\dr_\mu u_0 \dr_\theta g^{(1)}_{\beta)\nu}  - \frac{1}{2}g_0^{\mu\rho}g_0^{\nu\sigma}\dr_\alpha u_0 \dr_\theta g^{(1)}_{\rho\sigma} \dr_\beta u_0 \dr_\theta g^{(1)}_{\mu\nu} 
\\&\quad - g_0^{\mu\rho}g_0^{\nu\sigma}\dr_\rho u_0 \dr_\theta g^{(1)}_{\alpha\nu}\dr_\sigma u_0 \dr_\theta g^{(1)}_{\beta\mu} +  g_0^{\mu\rho}g_0^{\nu\sigma}\dr_\rho u_0 \dr_\theta g^{(1)}_{\sigma\alpha} \dr_\mu u_0 \dr_\theta g^{(1)}_{\nu\beta}
\\& =   - \frac{1}{2} \left| \dr_\theta g^{(1)} \right|^2_{g_0} \dr_\alpha u_0  \dr_\beta u_0
\\& =  - \frac{1}{2} \sin^2\pth{\frac{u_0}{\la}} \left| F^{(1)} \right|^2_{g_0} \dr_\alpha u_0  \dr_\beta u_0,
\end{align*}
where we used \eqref{eq u0}  and \eqref{assumptions on F1}. This proves \eqref{P0}. Similarly, we have schematically
\begin{align*}
P^{(1)} & = (g^{-1} \dr g)^{(0)} (g^{-1} \dr g)^{(1)} 
\\& =  \underbrace{g_0^{-1}(g^{-1})^{(1)} (\dr g_0)^2}_{\vcentcolon = D}   + \underbrace{g_0^{-1}(g^{-1})^{(1)} \dr g_0 \dr u_0 \dr_\theta g^{(1)}}_{\vcentcolon = E}     + \underbrace{g_0^{-1}(g^{-1})^{(1)}  (\dr u_0 \dr_\theta g^{(1)})^2}_{\vcentcolon = F} 
\\&\quad + \underbrace{g_0^{-2}\dr g_0 \tilde{\dr} g^{(1)}}_{\vcentcolon = G}    + \underbrace{g_0^{-2}\dr g_0 \dr u_0 \dr_\theta g^{(2)}}_{\vcentcolon = H}    + \underbrace{g_0^{-2}\dr u_0 \dr_\theta g^{(1)} \tilde{\dr} g^{(1)}}_{\vcentcolon = I}   + \underbrace{g_0^{-2}\dr u_0 \dr_\theta g^{(1)} \dr u_0 \dr_\theta g^{(2)}}_{\vcentcolon = J}   ,
\end{align*}
where we used $(\dr g)^{(1)} = \tilde{\dr} g^{(1)} + \dr u_0 \dr_\theta g^{(2)}$. For $D$, $E$, $G$ and $I$, we simply look for a schematic expression using \eqref{expression g1} and \eqref{inverse g1}:
\begin{align*}
D & = \cos\pth{\frac{u_0}{\la}} g_0^{-3}F^{(1)} (\dr g_0)^2,
\\ E & = \sin\pth{\frac{2u_0}{\la}}g_0^{-3}(F^{(1)})^2 \dr g_0 \dr u_0,
\\ G & = \cos\pth{\frac{u_0}{\la}} g_0^{-2}\dr g_0 \dr F^{(1)},
\\ I & = \sin\pth{\frac{2u_0}{\la}} g_0^{-2}\dr u_0 F^{(1)} \dr F^{(1)}.
\end{align*}
For $F_{\a\b}$, we have from \eqref{quadratic non-linearity}, \eqref{eq u0}, \eqref{inverse g1} and \eqref{assumptions on F1}
\begin{align*}
F_{\a\b} & =  g_0^{\mu\rho}(g^{(1)})^{\nu\sigma} \dr_{(\alpha}u_0 \dr_\theta g^{(1)}_{\rho\sigma}\dr_\mu u_0 \dr_\theta g^{(1)}_{\beta)\nu}    - \frac{1}{2}\pth{ g_0^{\mu\rho}(g^{(1)})^{\nu\sigma} + (g^{(1)})^{\mu\rho}g_0^{\nu\sigma} }\dr_\alpha u_0 \dr_\theta g^{(1)}_{\rho\sigma} \dr_\beta u_0 \dr_\theta g^{(1)}_{\mu\nu}  
\\& = \cos\pth{\frac{u_0}{\la}}\sin^2\pth{\frac{u_0}{\la}}  \dr_{(\alpha}u_0  \bigg( -g_0^{\mu\rho}(F^{(1)})^{\nu\sigma}  F^{(1)}_{\rho\sigma}\dr_\mu u_0 F^{(1)}_{\beta)\nu} 
\\&\left. \hspace{5cm} + \frac{1}{4}\pth{ g_0^{\mu\rho}(F^{(1)})^{\nu\sigma} + (F^{(1)})^{\mu\rho}g_0^{\nu\sigma} } F^{(1)}_{\rho\sigma} \dr_{\beta)} u_0 F^{(1)}_{\mu\nu}  \right).
\end{align*}
We now look at the terms depending on $g^{(2)}$, i.e $H$ and $J$. The term $H$ is actually equivalent to the term $B$ above, therefore we have
\begin{align*}
H_{\a\b} & =  - 2 L_0^\rho \Gamma(g_0)^\mu_{(\alpha\rho}      \dr_\theta g^{(2)}_{\b)\mu} + g_0^{\mu\rho}g_0^{\nu\sigma}\dr_{(\alpha}u_0 \pth{ \dr_\rho (g_0)_{\beta)\sigma} - \frac{1}{2}\dr_{\b)} (g_0)_{\rho\sigma}   }\dr_\theta g^{(2)}_{\mu\nu}
\\& = \cos\pth{\frac{u_0}{\la}} \bigg( - 2 L_0^\rho \Gamma(g_0)^\mu_{(\alpha\rho}    (\F + F^{(2,1)})_{\b)\mu} 
\\&\hspace{4cm}+ g_0^{\mu\rho}g_0^{\nu\sigma}\dr_{(\alpha}u_0 \pth{ \dr_\rho (g_0)_{\beta)\sigma} - \frac{1}{2}\dr_{\b)} (g_0)_{\rho\sigma}   } (\F + F^{(2,1)})_{\mu\nu} \bigg)
\\&\quad -2\sin\pth{\frac{2u_0}{\la}} \pth{ - 2 L_0^\rho \Gamma(g_0)^\mu_{(\alpha\rho}   F^{(2,2)}_{\b)\mu} + g_0^{\mu\rho}g_0^{\nu\sigma}\dr_{(\alpha}u_0 \pth{ \dr_\rho (g_0)_{\beta)\sigma} - \frac{1}{2}\dr_{\b)} (g_0)_{\rho\sigma}   } F^{(2,2)}_{\mu\nu} },
\end{align*}
where we used \eqref{expression g2}. Finally for $J$, we have from \eqref{quadratic non-linearity}, \eqref{eq u0} and \eqref{assumptions on F1}
\begin{align*}
J_{\a\b} & = g_0^{\mu\rho}g_0^{\nu\sigma}  \dr_{(\alpha}u_0 \pth{ \dr_\mu u_0\pth{ \dr_\theta g^{(1)}_{\rho\sigma} \dr_\theta g^{(2)}_{\beta)\nu} +  \dr_\theta g^{(2)}_{\rho\sigma} \dr_\theta g^{(1)}_{\beta)\nu}}   - \half  \dr_{\beta)} u_0  \dr_\theta g^{(1)}_{\rho\sigma}  \dr_\theta g^{(2)}_{\mu\nu}    }
\\& = -\sin\pth{\frac{u_0}{\la}}\cos\pth{\frac{u_0}{\la}} g_0^{\mu\rho}g_0^{\nu\sigma}  \dr_{(\alpha}u_0 \pth{ \dr_\mu u_0\pth{ F^{(1)}_{\rho\sigma} \F_{\beta)\nu} +  \F_{\rho\sigma} F^{(1)}_{\beta)\nu}}   - \half  \dr_{\beta)} u_0  F^{(1)}_{\rho\sigma}  \F_{\mu\nu}    }
\\&\quad -\sin\pth{\frac{u_0}{\la}} \cos\pth{\frac{u_0}{\la}} g_0^{\mu\rho}g_0^{\nu\sigma}  \dr_{(\alpha}u_0 \pth{ \dr_\mu u_0\pth{ F^{(1)}_{\rho\sigma} F^{(2,1)}_{\beta)\nu} +  F^{(2,1)}_{\rho\sigma} F^{(1)}_{\beta)\nu}}   - \half  \dr_{\beta)} u_0  F^{(1)}_{\rho\sigma}  F^{(2,1)}_{\mu\nu}    }
\\&\quad +2\sin\pth{\frac{u_0}{\la}} \sin\pth{\frac{2u_0}{\la}} g_0^{\mu\rho}g_0^{\nu\sigma}  \dr_{(\alpha}u_0 \pth{ \dr_\mu u_0\pth{ F^{(1)}_{\rho\sigma} F^{(2,2)}_{\beta)\nu} +  F^{(2,2)}_{\rho\sigma} F^{(1)}_{\beta)\nu}}   - \half  \dr_{\beta)} u_0  F^{(1)}_{\rho\sigma}  F^{(2,2)}_{\mu\nu}    },
\end{align*}
where we also used \eqref{expression g1} and \eqref{expression g2}. This concludes the proof of \eqref{P1}.

\subsection{Proof of Proposition \ref{prop expansion gauge part}}

The gauge term $H^\rho=g^{\mu\nu}\Gamma^\rho_{\mu\nu}$ is of the form $g^{-2}\dr g$. In order to expand $H^\rho$, we will first expand $\dr g$ and then $g^{-2}$. More precisely, we have
\begin{align*}
H^\rho & = g^{\rho\si}g^{\mu\nu}\pth{ \dr_\mu g_{\si\nu} - \half \dr_\si g_{\mu\nu} }.
\end{align*}
For convenience, let us define
\begin{align*}
H^\rho\left[ T \right] & = g^{\rho\si}g^{\mu\nu}\pth{ \dr_\mu T_{\si\nu} - \half \dr_\si T_{\mu\nu} },
\end{align*}
for $T$ a symmetric 2-tensor. With this notation, we have
\begin{align*}
H^\rho & = H^\rho[g_0] + \la H^\rho\left[ g^{(1)} \right]  + \la^2 H^\rho\left[ g^{(2)} \right] + \la^2 H^\rho\left[ \h \right] + \la^3 H^\rho\left[ g^{(3)} \right].   
\end{align*}
We expand each term in this expression. For $g_0$, we use \eqref{wave condition g0} and obtain
\begin{align*}
H^\rho[g_0] & = g^{\rho\si}g_0^{\mu\nu}\pth{ \dr_\mu (g_0)_{\si\nu} - \half \dr_\si (g_0)_{\mu\nu} } + \la g^{\rho\si}(g^{\mu\nu})^{(\geq 1)}\pth{ \dr_\mu (g_0)_{\si\nu} - \half \dr_\si (g_0)_{\mu\nu} }
\\& = \la g^{\rho\si}(g^{\mu\nu})^{(\geq 1)}\pth{ \dr_\mu (g_0)_{\si\nu} - \half \dr_\si (g_0)_{\mu\nu} }.
\end{align*}
We now use further the expansion of the inverse (especially \eqref{inverse g2}) and obtain
\begin{align*}
H^\rho[g_0] & = \la \cos\pth{\frac{u_0}{\la}} g_0^{-3}F^{(1)}\dr g_0 - \la^2 g_0^{\rho\si}\h^{\mu\nu}\pth{ \dr_\mu (g_0)_{\si\nu} - \half \dr_\si (g_0)_{\mu\nu} }  
\\&\quad  + \la^2 g_0^{-3}\pth{ (1+g_0^{-1}) (F^{(1)})^2 + F^{(2,i)} + \F }\dr g_0
\\&\quad +  \la^3 \pth{ g^{-1}(g^{-1})^{(\geq 3)}\dr g_0 + (g^{-1})^{(\geq 2)}(g^{-1})^{(1)}\dr g_0 +  (g^{-1})^{(\geq 1)}(g^{\mu\nu})^{(2)}\dr g_0 }.
\end{align*}
If $\mathrm{T}$ is an oscillating function and $S$ a symmetric 2-tensor then
\begin{align*}
H^\rho \left[ \mathrm{T}\pth{\frac{u_0}{\la}} S \right] & = \frac{1}{\la} \mathrm{T}'\pth{\frac{u_0}{\la}}g^{\rho\si}g^{\mu\nu}\pth{ \dr_\mu u_0 S_{\si\nu} - \half \dr_\si u_0 S_{\mu\nu} } + \mathrm{T}\pth{\frac{u_0}{\la}} H^\rho\left[ S \right] 
\\& = \frac{1}{\la} \mathrm{T}'\pth{\frac{u_0}{\la}}g^{\rho\si}\Pol_\si(S) +  \mathrm{T}'\pth{\frac{u_0}{\la}}g^{\rho\si}(g^{\mu\nu})^{(\geq 1)}\pth{ \dr_\mu u_0 S_{\si\nu} - \half \dr_\si u_0 S_{\mu\nu} } 
\\&\quad + \mathrm{T}\pth{\frac{u_0}{\la}} H^\rho\left[ S \right] .
\end{align*}
Using also $\Pol\left( F^{(1)} \right)=0$ and $F^{(1)}_{L_0 \a}=0$, this gives
\begin{align*}
\la H^\rho\left[ g^{(1)} \right]  & =  -  \frac{\la}{4} \sin\pth{\frac{2u_0}{\la}}g_0^{\rho\si} \dr_\si u_0 \left| F^{(1)}\right|^2_{g_0}  +\la^2\sin\pth{\frac{u_0}{\la}}g_0^{\rho\si}\h^{\mu\nu}\pth{ \dr_\mu u_0 F^{(1)}_{\si\nu} - \half \dr_\si u_0 F^{(1)}_{\mu\nu} } 
\\&\quad + \la \cos\pth{\frac{u_0}{\la}} g^{-2}\dr F^{(1)}
\\&\quad   + \la^2\pth{ g_0^{-3}\pth{ (F^{(1)})^2 + F^{(2,i)} + \F }\dr u_0 F^{(1)}+ (g^{-1})^{(\geq 1)}(g^{-1})^{( 1)}\dr u_0 F^{(1)} } 
\\&\quad + \la^3 \pth{ g^{-1}(g^{-1})^{(\geq 3)}\dr u_0 F^{(1)}  +  (g^{-1})^{(\geq 1)}(g^{-1})^{(2)}\dr u_0 F^{(1)} },
\end{align*}
and
\begin{align*}
\la^2 H^\rho\left[ g^{(2)} \right]  & = \la \cos\pth{\frac{u_0}{\la}}\pth{ g_0^{\rho\si}\Pol_\si\pth{\F}  + g_0^{\rho\si}\Pol_\si\pth{F^{(2,1)}} } - \la \sin\pth{\frac{2u_0}{\la}}g_0^{\rho\si}\Pol_\si\pth{F^{(2,2)}} 
\\&\quad + \la^2 \sin\pth{\frac{u_0}{\la}} H^\rho\left[ \F \right] + \la^2 \pth{ g^{-2}\dr F^{(2,i)} + (g^{-1})^{(\geq 1)}g^{-1}\dr u_0 (\F + F^{(2,i)})   },
\end{align*}
and
\begin{align*}
\la^3  H^\rho\left[ g^{(3)} \right]  & = \la^2 g^{\rho\si}\Pol_\si\pth{ \dr_\theta g^{(3)} }  + \la^3 H^\rho\left[ \tilde{\dr}g^{(3)} \right] + \la^3 g^{-1}(g^{-1})^{(\geq 1)}\dr u_0 \dr_\theta g^{(3)} .
\end{align*}
Collecting terms for each power of $\la$ in the previous expressions, we obtain the various expressions of Proposition \ref{prop expansion gauge part}, using in particular the fact that
\begin{align*}
\Upsilon^\rho & =  H^\rho\left[ \h \right] + \sin\pth{\frac{u_0}{\la}} H^\rho\left[ \F \right]  +  \la H^\rho\left[ \tilde{\dr}g^{(3)} \right]  
\\&\quad - g_0^{\rho\sigma} \h^{\mu\nu} \left( \dr_\mu (g_0)_{\sigma\nu} - \half \dr_\sigma (g_0)_{\mu\nu} - \sin\left(\frac{u_0}{\lambda}\right)  \left(   \dr_\mu u_0 F^{(1)}_{\sigma\nu} - \half \dr_\sigma u_0 F^{(1)}_{\mu\nu} \right) \right).
\end{align*}

\subsection{Proof of Proposition \ref{prop expansion ricci tensor}}

From \eqref{Ricci generalised wave coordinates} we obtain
\begin{align*}
2R^{(0)}_{\a\b} & = - W^{(0)}_{\a\b} + P^{(0)}_{\a\b} + (g_0)_{\rho(\a} \dr_{\b)}u_0 \dr_\theta (H^{(1)})^\rho,
\end{align*}
where we used \eqref{Hrho formel} to compute $\pth{ H^\rho \dr_\rho g_{\alpha\beta} + g_{\rho (\alpha}\dr_{\beta)}H^\rho }^{(0)}$. We now use \eqref{W0}, \eqref{P0} and \eqref{H1} to obtain
\begin{align*}
2R^{(0)}_{\a\b} & = - \tBox_{g_0}(g_0)_{\alpha\beta} + \sin\left( \frac{u_0}{\lambda} \right) \pth{ -2L_0 F^{(1)}_{\alpha\beta} + (\tBox_{g_0} u_0)F^{(1)}_{\alpha\beta}  }
\\&\quad +P_{\alpha\beta}(g_0)(\dr g_0, \dr g_0) - \frac{1}{4}  \left| F^{(1)} \right|^2_{g_0}  \dr_\alpha u_0 \dr_\beta u_0 
\\&\quad -\sin\pth{ \frac{u_0}{\la}} \pth{ -2 L_0^\rho \Gamma(g_0)^\nu_{(\alpha \rho} F^{(1)}_{\beta)\nu} + \dr_{(\alpha}u_0\hat{P}^{(0,1)}_{\beta)} }  +\frac{1}{4} \cos\pth{\frac{2u_0}{\la}}  \left| F^{(1)} \right|^2_{g_0}  \dr_{\alpha} u_0  \dr_{\beta} u_0
\\&\quad - \sin\pth{ \frac{u_0}{\la}}   \dr_{(\a}u_0  \pth{  \Pol_{\b)} \left( F^{(2,1)} \right) +  \Pol_{\b)}\left(\F\right) + (g_0)_{\rho\b)}(\Tilde{H}^{(1,1)})^\rho   }
\\&\quad + 2 \cos\pth{ \frac{2u_0}{\la}}  \dr_{(\a}u_0  \pth{  -2 \Pol_{\b)} \left( F^{(2,2)} \right)    -\frac{1}{4} \dr_{\b)} u_0 \left| F^{(1)} \right|^2_{g_0} }.
\end{align*}
Regrouping terms with respect to their oscillation behaviour concludes the proof of \eqref{expression Ricci0}. Similarly, from \eqref{Ricci generalised wave coordinates} we obtain
\begin{align*}
2R^{(1)}_{\a\b} & =  - W^{(1)}_{\a\b} + P^{(1)}_{\a\b} + \pth{ H^\rho \dr_\rho g_{\alpha\beta} + g_{\rho (\alpha}\dr_{\beta)}H^\rho }^{(1)}.
\end{align*}
Using \eqref{Hrho formel} we compute the last term:
\begin{align*}
\pth{ H^\rho \dr_\rho g_{\alpha\beta} + g_{\rho (\alpha}\dr_{\beta)}H^\rho }^{(1)} & = (H^{(1)})^\rho\pth{ \dr_\rho (g_0)_{\a\b} - \sin\pth{\frac{u_0}{\la}}\dr_\rho u_0 F^{(1)}_{\a\b} }
\\&\quad + (g_0)_{\rho (\a} \pth{ \tilde{\dr}_{\b)} (H^{(1)})^\rho + \dr_{\b)} u_0 \dr_\theta \pth{ (H^{(2)})^\rho + (\Upsilon^\rho)^{(0)} } }
\\&\quad + \cos\pth{\frac{u_0}{\la}}F^{(1)}_{\rho(\a} \dr_{\b)} u_0 \dr_\theta (H^{(1)})^\rho.
\end{align*}
We now use \eqref{W1}, \eqref{P1}, \eqref{H1} and \eqref{H2} to obtain
\begin{align*}
2R^{(1)}_{\a\b} & =  - \cos\left(\frac{u_0}{\lambda}\right) \pth{ -2 L_0F^{(2,1)}_{\alpha\beta} + (\tBox_{g_0}u_0)F^{(2,1)}_{\alpha\beta} +  \tilde{W}^{(1,1)}_{\alpha\beta}   -2 L_0\F_{\alpha\beta} + (\tBox_{g_0}u_0)\F_{\alpha\beta}   + \h_{L_0L_0}F^{(1)}_{\alpha\beta} }  
\\&\quad +2  \sin\left(\frac{2u_0}{\lambda}\right) \pth{  -2 L_0F^{(2,2)}_{\alpha\beta} + (\tBox_{g_0}u_0)F^{(2,2)}_{\alpha\beta} +  \tilde{W}^{(1,2)}_{\alpha\beta}  } 
\\&\quad -\frac{1}{2} \sin\left(\frac{2u_0}{\lambda}\right)\F_{L_0L_0}F^{(1)}_{\alpha\beta} - \cos\left(\frac{u_0}{\lambda}\right)\cos\left(\frac{2u_0}{\lambda}\right) F^{(2,2)}_{L_0L_0} F^{(1)}_{\alpha\beta}
\\&\quad + \cos\pth{\frac{u_0}{\la}}\pth{ -2L_0^\rho \Gamma(g_0)^\nu_{(\alpha\rho}\pth{  \F_{\nu\beta)} +  F^{(2,1)}_{\nu\beta)}} + P^{(1,1)}_{\a\b} + \dr_{(\alpha}u_0 \hat{P}^{(1,1)}_{\beta)} } 
\\&\quad + \sin\pth{\frac{2u_0}{\la}}\pth{ 4L_0^\rho \Gamma(g_0)^\nu_{(\alpha\rho}  F^{(2,2)}_{\nu\beta)}  +P^{(1,2)}_{\alpha\beta} +  \dr_{(\alpha}u_0  \hat{P}^{(1,2)}_{\beta)} } + \cos\left( \frac{3u_0}{\lambda}\right)\dr_{(\alpha} u_0  \hat{P}_{\beta)}^{(1,3)}  
\\&\quad +   \cos\left( \frac{u_0}{\lambda} \right) \left( g_0^{\rho\sigma} \Pol_\sigma \left( F^{(2,1)} \right) + g_0^{\rho\sigma} \Pol_\sigma\left(\F\right) + (\Tilde{H}^{(1,1)})^\rho \right)\pth{ \dr_\rho (g_0)_{\a\b} - \sin\pth{\frac{u_0}{\la}}\dr_\rho u_0 F^{(1)}_{\a\b} }
\\&\quad +  \sin\left( \frac{2u_0}{\lambda} \right) \left( -2 g_0^{\rho\sigma} \Pol_\sigma \left( F^{(2,2)} \right)    -\frac{1}{4}g_0^{\rho\sigma} \dr_\sigma u_0 \left| F^{(1)} \right|^2_{g_0} \right)\pth{ \dr_\rho (g_0)_{\a\b} - \sin\pth{\frac{u_0}{\la}}\dr_\rho u_0 F^{(1)}_{\a\b} }
\\&\quad +  \cos\pth{ \frac{u_0}{\la}} (g_0)_{\rho (\a}   \dr_{\b)}  \left( g_0^{\rho\sigma} \Pol_\sigma \left( F^{(2,1)} \right) + g_0^{\rho\sigma} \Pol_\sigma\left(\F\right) + (\Tilde{H}^{(1,1)})^\rho \right)
\\&\quad + \sin\pth{ \frac{2u_0}{\la}} (g_0)_{\rho (\a}     \dr_{\b)}\left( -2 g_0^{\rho\sigma} \Pol_\sigma \left( F^{(2,2)} \right)    -\frac{1}{4}g_0^{\rho\sigma} \dr_\sigma u_0 \left| F^{(1)} \right|^2_{g_0} \right) 
\\&\quad +   \dr_{(\a} u_0  \pth{  \Pol_{\b)} \left( \dr_\theta^2 g^{(3)} \right) + (g_0)_{\rho \b)}  \dr_\theta(\tilde{H}^{(2)})^\rho + (g_0)_{\rho \b)}  \dr_\theta(\Upsilon^\rho)^{(0)} } 
\\&\quad - \half \sin\pth{ \frac{2u_0}{\la}}  F^{(1)}_{\rho(\a} \dr_{\b)} u_0  \left( g_0^{\rho\sigma} \Pol_\sigma \left( F^{(2,1)} \right) + g_0^{\rho\sigma} \Pol_\sigma\left(\F\right) + (\Tilde{H}^{(1,1)})^\rho \right)
\\&\quad +2 \cos\pth{\frac{u_0}{\la}} \cos\pth{ \frac{2u_0}{\la}}  F^{(1)}_{\rho(\a} \dr_{\b)} u_0  \left( -2 g_0^{\rho\sigma} \Pol_\sigma \left( F^{(2,2)} \right)    -\frac{1}{4}g_0^{\rho\sigma} \dr_\sigma u_0 \left| F^{(1)} \right|^2_{g_0} \right).
\end{align*}
We regroup terms with respect to their oscillation behaviour and use the following fact to conclude the proof of \eqref{expression Ricci1}:
\begin{align}
g_0^{\rho\sigma}\Omega_\sigma \dr_\rho (g_0)_{\alpha\beta} + (g_0)_{\rho(\alpha} \dr_{\beta)} \left( g_0^{\rho\sigma} \Omega_\sigma \right) = \D_{(\alpha} \Omega_{\beta)},\label{simple fact}
\end{align}
where $\Omega$ is any 1-tensor. We conclude the proof by proving \eqref{expression Ricci2}. From \eqref{Ricci generalised wave coordinates} we obtain
\begin{align*}
2R^{(\geq 2)}_{\a\b} & =  - W^{(\geq 2)}_{\a\b} + P^{(\geq 2)}_{\a\b} + \pth{ H^\rho \dr_\rho g_{\alpha\beta} + g_{\rho (\alpha}\dr_{\beta)}H^\rho }^{(\geq 2)},
\end{align*}
and we simply use \eqref{W2} to conclude.

\section{The commutator $[\Box_{g_0},L_0]$ and proof of Lemma \ref{lem commute null}}\label{section background null structure}

This appendix is devoted to the proof of Lemma \ref{lem commute null}, which estimates the commutator $[\Box_{g_0},L_0]$, where $L_0$ is defined by \eqref{def L0} and $g_0$ is the background metric. A general computation using the expression of the wave operator in coordinates shows that if $X$ is any vector field and $f$ a scalar function we have
\begin{align}
[ \Box_{g_0} , X ] f & = 2 g_0^{\alpha\beta} g_0^{\mu\nu}(\D_\alpha X)_\mu \dr_\beta \dr_\nu f  + \dr_\alpha f \left( \Box_{g_0}X^\alpha + X \left( g_0^{\rho\sigma}\Gamma(g_0)^\alpha_{\rho\sigma} \right) \right)\nonumber
\\& = 2 g_0^{\alpha\beta} g_0^{\mu\nu}(\D_\alpha X)_\mu \dr_\beta \dr_\nu f  + \dr_\alpha f  \Box_{g_0}X^\alpha, \label{commute general computation}
\end{align}
where we also used the wave coordinate condition for the background \eqref{wave condition g0}. In the context of Lemma \ref{lem commute null}, we consider $X=L_0$ and $f$ compactly supported and we are only interested in the principal terms in $[ \Box_{g_0} , L_0 ] f$. Indeed the non-principal part simply satisfies
\begin{align}
\left| \dr_\alpha f  \Box_{g_0}L_0^\alpha  \right| & \lesssim | \dr f |.\label{estim l.o.t commutateur}
\end{align}
with a constant only depending on the background spacetime.  It remains to estimate the second derivatives of $f$ in \eqref{commute general computation}, i.e
\begin{align*}
g_0^{\alpha\beta} g_0^{\mu\nu}(\D_\alpha L_0)_\mu \dr_\beta \dr_\nu f,
\end{align*}
by using the expression of the background metric in the background null frame defined in Section \ref{section BG}:
\begin{align}
g_0^{\alpha\beta} & = - \half L_0^{(\alpha}\Lb_0^{\beta)} + \delta^{AB} e_A^\alpha e_B^\beta.\label{g_0 alpha beta}
\end{align}
Using \eqref{g_0 alpha beta} twice and the fact that $L_0$ is geodesic, that is $\D_{L_0}L_0=0$, we obtain
\begin{align*}
g_0^{\alpha\beta} g_0^{\mu\nu}(\D_\alpha L_0)_\mu \dr_\beta \dr_\nu f & = - \half g_0^{\mu\nu}(\D_{L_0} L_0)_\mu \dr_\nu \Lb_0 f - \half  g_0^{\mu\nu}(\D_{\Lb_0} L_0)_\mu  \dr_\nu L_0 f 
\\&\quad + \delta^{AB}  g_0^{\mu\nu}(\D_{e_A} L_0)_\mu  \dr_\nu e_B f
\\& =  - \half  g_0^{\mu\nu}(\D_{\Lb_0} L_0)_\mu  \dr_\nu L_0 f -\half  \delta^{AB}   L_0^{\mu} (\D_{e_A} L_0)_\mu  \Lb_0 e_B f 
\\&\quad -\half  \delta^{AB}   \Lb_0^{\mu} (\D_{e_A} L_0)_\mu  L_0 e_B f   + \delta^{AB}   \delta^{CD} e_C^\mu  (\D_{e_A} L_0)_\mu  e_D e_B f
\\&= - \half  g_0^{\mu\nu}(\D_{\Lb_0} L_0)_\mu  \dr_\nu L_0 f  -\half  \delta^{AB}   \Lb_0^{\mu} (\D_{e_A} L_0)_\mu  L_0 e_B f  
\\&\quad + \delta^{AB}   \delta^{CD} e_C^\mu  (\D_{e_A} L_0)_\mu  e_D e_B f,
\end{align*}
where we also used $L_0^{\mu} (\D_{e_A} L_0)_\mu=0$, which follows from the fact $g_0(L_0,L_0)=0$ is constant. Using now the fact that $\left| L_0 e_B f \right| \lesssim | \dr L_0 f | + |\dr f|$ and recalling \eqref{commute general computation} and \eqref{estim l.o.t commutateur} we obtain
\begin{align}
\l  [ \Box_{g_0} , L_0 ] f   \r_{L^2} & \lesssim \l e_Ae_B f \r_{L^2}  +  \l \dr L_0 f \r_{L^2} + \l \dr f \r_{L^2},\label{commute box inter a}
\end{align}
where we also used the compact support of $f$ to switch from pointwise estimates to $L^2$ norms.
Therefore, in order to prove Lemma \ref{lem commute null} it remains to estimate $\l e_Ae_B f \r_{L^2}$. This follows from elliptic estimates on $P_{t,u}$. If $K$ denotes the Gauss curvature of $P_{t,u}$ and if $\d \mu_{t,u}$ denotes the volume form on $P_{t,u}$ induced by $\mathring{g}_0$, then the scalar Bochner identity (Proposition 3.5 in \cite{Szeftel2018}) reads
\begin{align}
\int_{P_{t,u}} \left|\mathrm{Hess} f\right|^2_{\mathring{g}_0}  \d \mu_{t,u} &  = \int_{P_{t,u}}  (\lap f)^2   \d \mu_{t,u} - \int_{P_{t,u}} K |\nabb f|^2_{\mathring{g}_0} \d \mu_{t,u}.\label{Bochner}
\end{align}
where
\begin{align*}
 \lap f & = \delta^{AB}e_Ae_Bf + \delta^{AB}\delta^{CD}g(\nabb_{e_A} e_C,e_B)e_Df,
\\ \nabb f & = \delta^{AB}(e_A f) e_B .
\end{align*}
Thanks to the regularity assumptions stated in Section \ref{section BG} and the fact that the 2-surfaces $P_{t,u}$ foliates $\Sigma_t$, we can integrate \eqref{Bochner} in the $N_0$ direction and obtain 
\begin{align*}
\l e_A e_B f \r_{L^2} \lesssim \l \lap f \r_{L^2} + \l \dr f \r_{L^2},
\end{align*}
where we recall that $\l \cdot \r_{L^2}$ denotes $\l \cdot \r_{L^2(\Sigma_t)}$ and where the implicit constant depends only on background quantities. Together with \eqref{commute box inter a} this gives
\begin{align}
\l  [ \Box_{g_0} , L_0 ] f   \r_{L^2} & \lesssim  \l \lap f \r_{L^2} +  \l \dr L_0 f \r_{L^2} + \l \dr f \r_{L^2}.\label{commute box inter b}
\end{align}
We now use Lemma 2.5 of \cite{Szeftel2012} which gives the expression of the wave operator in the null frame and implies
\begin{align*}
\l \lap f \r_{L^2} \lesssim \l \Box_{g_0} f\r_{L^2}  + \l \dr L_0 f \r_{L^2} + \l \dr f \r_{L^2}.
\end{align*}
Together with \eqref{commute box inter b} this concludes the proof of the first part of Lemma \ref{lem commute null}. 

\saut
The second part is proved in a similar way. For $r\geq 1$, we apply \eqref{estim commute null} to $\nabla^r f$ and obtain
\begin{align*}
\l [L_0,\Box_{g_0}]\nabla^r f \r_{L^2} & \lesssim  \l \dr L_0 \nabla^r f \r_{L^2} + \l \Box_{g_0} \nabla^r f \r_{L^2} + \l \dr f \r_{H^r}
\\&\lesssim \l \nabla^r \dr L_0  f \r_{L^2} + \l \nabla^r \Box_{g_0}  f \r_{L^2} + \l \dr f \r_{H^r} + \l  [\dr L_0, \nabla^r] f  \r_{L^2}  + \l  [\Box_{g_0},\nabla^r] f  \r_{L^2}.
\end{align*}
Moreover we have
\begin{align*}
\l  [\dr L_0, \nabla^r] f  \r_{L^2}  + \l  [\Box_{g_0},\nabla^r] f  \r_{L^2} \lesssim \l \dr^2 f \r_{H^{r-1}},
\end{align*}
where we recall that $r\geq 1$.  It remains to notice that
\begin{align*}
\l \nabla^r [L_0,\Box_{g_0}] f \r_{L^2} \lesssim \l [L_0,\Box_{g_0}]\nabla^r f \r_{L^2} +\l \dr f \r_{H^r} + \l \dr^2 f \r_{H^{r-1}}.
\end{align*}
This concludes the proof of Lemma \ref{lem commute null}.

\section{The spectral projections}\label{section LP}

In this section, we prove Lemma \ref{lem commute bis}. The proof is based on the dyadic decomposition at the heart of Littlewood-Paley theory, which we present shortly.

\subsection{Littlewood-Paley theory}\label{section littlewood paley}

This presentation of the Littlewood-Paley theory is based on \cite{Bahouri2011}. We start by considering two smooth radial functions $\chi$ and $\ffi$ from $\R^3$ to the interval $[0,1]$ supported in $\left\{|\xi|\leq \frac{4}{3}\right\}$ and $\left\{ \frac{3}{4} \leq|\xi|\leq \frac{8}{3}\right\}$ respectively and such that
\begin{align}
\chi + \sum_{j\geq 0} \ffi\left(2^{-j}\cdot\right) =1,\label{partition de l'unité}
\end{align}
\begin{align*}
|j-i|\geq 2 \Longrightarrow \supp{\ffi\left(2^{-j}\cdot\right)}\cap \supp{\ffi\left(2^{-i}\cdot\right)}=\emptyset,
\end{align*}
\begin{align*}
j\geq 1 \Longrightarrow \supp{\chi} \cap \supp{\ffi\left(2^{-j}\cdot\right)} = \emptyset,
\end{align*}
\begin{align}
\half \leq \chi^2 + \sum_{j\geq 0}  \ffi^2\left(2^{-j}\cdot\right) \leq 1.\label{almost orthogonality}
\end{align}
The existence of $\chi$ and $\ffi$ is the content of Proposition 2.10 in \cite{Bahouri2011}. They define a dyadic partition of unity, with which we can define the dyadic blocks $\dy_j$. If $u$ is a tempered distribution on $\R^3$ we set $\dy_i u=0$ if $i\leq -2$ and
\begin{align*}
\dy_{-1} u & = \mathcal{F}^{-1}\left(  \chi \mathcal{F}(u)   \right),
\\ \dy_{j} u & = \mathcal{F}^{-1}\left(  \ffi\left(2^{-j}\cdot\right)  \mathcal{F}(u)   \right),
\end{align*}
for $j\geq 0$ and where $\mathcal{F}$ denotes the Fourier transform on $\R^3$. We also define
\begin{align*}
S_j u= \sum_{j'\leq j-1} \dy_{j'}u ,
\end{align*}
for $j\geq 0$.  For $u$ a tempered distribution, we use the notation $\spe{u}=\supp{\mathcal{F}(u)}$. Therefore the support properties of $\chi$ and $\ffi$ imply
\begin{align}
\spe{\dy_{-1}u} & \subset \left\{|\xi|\leq \frac{4}{3}\right\},\label{support D-1}
\\ \spe{\dy_j u} & \subset 2^j \left\{ \frac{3}{4} \leq|\xi|\leq \frac{8}{3}\right\},\label{support Dj}
\end{align}
for $j\geq 0$. The following proposition contains all the basic estimates we need on the dyadic blocks.

\begin{prop}\label{prop dyadic blocks}
Let $u$ be a tempered distribution.
\begin{enumerate}
\item[(i)] The so-called Bernstein estimates hold
\begin{align}
\l X_j\nabla^\alpha u  \r_{L^p} & \lesssim 2^{\alpha j} \l X_j u \r_{L^p},\label{Bernstein estim}
\end{align}
for $X_j\in\{S_j,\dy_j\}$, $\alpha=0,1$ and for all $1\leq p \leq + \infty$.
\item[(ii)] Moreover, the following estimates hold
\begin{align}
\l X_j u \r_{L^\infty}\lesssim 2^{\frac{3}{2}j} \l X_j u \r_{L^2},\label{Linfini L2}
\end{align}
for $X_j\in\{S_j,\dy_j\}$.
\end{enumerate}
\end{prop}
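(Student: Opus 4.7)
\medskip

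\noindent\textbf{Proof plan for Proposition \ref{prop dyadic blocks}.}

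The two statements are the classical Bernstein inequalities for Littlewood--Paley blocks. The strategy in both cases is to exploit the spectral localisation \eqref{support D-1}--\eqref{support Dj}, which allows one to reproduce a dyadic block by convolution against a smooth kernel obtained from a fattened bump function. The main (and only) technical input is Young's convolution inequality combined with the correct scaling of $L^1$ and $L^2$ norms of dilated Schwartz functions.

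For the first item (the case $\alpha=0$ being trivial), I would fix auxiliary radial functions $\tilde\chi,\tilde\ffi\in C^\infty_c(\R^3)$ such that $\tilde\chi\equiv 1$ on $\supp\chi$ and $\tilde\ffi\equiv 1$ on $\supp\ffi$, with $\tilde\ffi$ supported in an annulus not containing the origin. Then by \eqref{support D-1}--\eqref{support Dj} one has
\begin{equation*}
\widehat{\dy_j u}(\xi)=\tilde\ffi(2^{-j}\xi)\widehat{\dy_j u}(\xi),\qquad \widehat{S_j u}(\xi)=\tilde\chi(2^{-j}\xi)\widehat{S_j u}(\xi),
\end{equation*}
so that $\nabla \dy_j u = 2^j\,\Psi_j \ast \dy_j u$, where $\Psi_j(x)=2^{3j}\check\Psi(2^j x)$ and $\Psi(\xi)=i\xi\,\tilde\ffi(\xi)$ (with an analogous formula for $S_j$ using $\tilde\chi$). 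Since $\Psi,\tilde\chi\in \mathcal{S}(\R^3)$ the rescaled kernels have $L^1$ norm independent of $j$, and Young's inequality $L^1\ast L^p\hookrightarrow L^p$ yields \eqref{Bernstein estim} with implicit constant depending only on $\chi,\ffi$.

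For the second item, the same convolution identity $\dy_j u = \check{\tilde\ffi}_j \ast \dy_j u$ (with $\check{\tilde\ffi}_j(x)=2^{3j}\check{\tilde\ffi}(2^j x)$) combined with Young $L^2\ast L^2\hookrightarrow L^\infty$ gives
\begin{equation*}
\l \dy_j u\r_{L^\infty}\leq \l \check{\tilde\ffi}_j\r_{L^2}\l \dy_j u\r_{L^2}=2^{\frac{3}{2}j}\l\check{\tilde\ffi}\r_{L^2}\l\dy_j u\r_{L^2},
\end{equation*}
which is \eqref{Linfini L2}; the same argument with $\tilde\chi$ in place of $\tilde\ffi$ covers $S_j$. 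No genuine obstacle is expected here — the only point requiring care is the choice of the fattened bump functions and the bookkeeping of the dilation factors $2^{3j}$, $2^{3j/2}$, $2^j$ in $L^1$, $L^2$, and derivative norms respectively.
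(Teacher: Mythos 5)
Your proposal is correct and follows essentially the same route as the paper: reproduce $X_j u$ by convolution with a rescaled kernel whose Fourier transform is a fattened cutoff equal to $1$ on the spectrum, then apply Young's inequality ($L^1\ast L^p\to L^p$ for \eqref{Bernstein estim}, $L^2\ast L^2\to L^\infty$ for \eqref{Linfini L2}). The only cosmetic difference is that the paper uses a single ball-supported cutoff $\phi$ for both $S_j$ and $\dy_j$ and places the derivative directly on the kernel $\nabla^\alpha\mathcal{F}^{-1}(\phi)$, whereas you distinguish an annulus cutoff for $\dy_j$ from a ball cutoff for $S_j$; both yield the same scaling and constants.
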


\begin{proof}
For $j\geq -1$, we have $\spe{X_j \nabla u}\subset 2^j B$ for $B$ a ball centered at 0 in $\R^3$. Let $\phi$ be a compactly supported function on $\R^3$ such that $\phi\restriction{B}=1$.  This implies 
\begin{align*}
\mathcal{F}\left( X_j  u \right) = \phi \left( 2^{-j}\cdot \right) \mathcal{F}\left( X_j  u \right).
\end{align*}
Therefore, for $\alpha=0,1$ we have
\begin{align}
X_j \nabla^\alpha u = 2^{(3+\alpha)j} \left(\nabla^\alpha\mathcal{F}^{-1}(\phi) \right)\left( 2^{j}\cdot \right) * X_j u,\label{expression convolée}
\end{align}
where the symbol $*$ denotes the convolution between two functions. The Young inequality implies
\begin{align*}
\l X_j \nabla^\alpha u  \r_{L^p} &\lesssim 2^{(3+\alpha)j} \l \left(\nabla^\alpha\mathcal{F}^{-1}(\phi) \right)\left( 2^{j}\cdot \right)\r_{L^1} \l X_j u\r_{L^p}
\\&\lesssim 2^{\alpha j} \l \nabla^\alpha\mathcal{F}^{-1}(\phi)  \r_{L^1} \l X_j u\r_{L^p},
\end{align*}
where we change variable in the last step. This proves \eqref{Bernstein estim}. Now, if $\alpha=0$ in \eqref{expression convolée}, another case of Young inequality implies
\begin{align*}
\l X_j u \r_{L^\infty} & \lesssim 2^{3j} \l \mathcal{F}^{-1}(\phi) \left( 2^{j}\cdot \right) \r_{L^2}  \l X_j u \r_{L^2}
\\&\lesssim 2^{\frac{3}{2}j} \l \mathcal{F}^{-1}(\phi)  \r_{L^2}  \l X_j u \r_{L^2},
\end{align*}
which concludes the proof of the proposition.
\end{proof}

The property \eqref{partition de l'unité} allows us to decompose each function $u$ into an infinite sum of smooth and localized in Fourier space functions:
\begin{align}
u = \sum_{j\geq -1} \dy_j u .\label{decomposition u}
\end{align}
The series \eqref{decomposition u} is \textit{a priori} purely formal, and though each dyadic blocks $\dy_j u $ is smooth the function $u$ might have to be defined merely as a distribution. However, one can link the regularity of $u$ and the summability properties of the series.  The next proposition gives a characterization of the Sobolev spaces in terms of the convergence of the series \eqref{decomposition u}.

\begin{prop}\label{charac Hs}
If $s\in\R$ and $u$ is a tempered distribution, then 
\begin{align*}
u\in H^s\iff \left(2^{js}\l \dy_j u \r_{L^2} \right)_{j\in\N}\in \ell^2(\N).
\end{align*}
Moreover, there exists $C>0$ such that for all tempered distribution $u$ we have
\begin{align}
\frac{1}{C}\sum_{j\geq -1}2^{2js}\l \dy_j u \r_{L^2}^2 \leq \l u \r^2_{H^s} \leq C \sum_{j\geq -1}2^{2js}\l \dy_j u \r_{L^2}^2.\label{equivalence norme}
\end{align}
\end{prop}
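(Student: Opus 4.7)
The plan is to reduce everything to Plancherel's identity in Fourier space and to exploit the dyadic localization of the blocks $\Delta_j u$. First I would recall that $\|u\|_{H^s}^2 = \int_{\R^3}(1+|\xi|^2)^s |\widehat{u}(\xi)|^2\, \d\xi$, and that by definition of the Littlewood-Paley projectors,
\begin{align*}
\|\Delta_{-1}u\|_{L^2}^2 = \int_{\R^3}\chi(\xi)^2|\widehat{u}(\xi)|^2 \, \d\xi, \qquad \|\Delta_j u\|_{L^2}^2 = \int_{\R^3}\varphi(2^{-j}\xi)^2|\widehat{u}(\xi)|^2 \, \d\xi \quad (j\geq 0).
\end{align*}
The only inputs on top of this are the support properties \eqref{support D-1}--\eqref{support Dj} of the dyadic blocks and the almost orthogonality identity \eqref{almost orthogonality}.

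Next I would establish the key pointwise comparison between $(1+|\xi|^2)^s$ and the dyadic weight $2^{2js}$. On $\supp\chi$ the weight $(1+|\xi|^2)^s$ is bounded above and below by positive constants depending only on $s$. For $j\geq 0$, on $\supp\varphi(2^{-j}\cdot)$ we have $\frac{3}{4}2^j \leq |\xi|\leq \frac{8}{3}2^j$, so
\begin{align*}
c_s \, 2^{2js} \leq (1+|\xi|^2)^s \leq C_s \, 2^{2js}
\end{align*}
for constants $c_s,C_s>0$ depending only on $s$. Inserting this into the formula for $\|\Delta_j u\|_{L^2}^2$ yields
\begin{align*}
c_s \, 2^{2js}\|\Delta_j u\|_{L^2}^2 \leq \int_{\R^3}(1+|\xi|^2)^s \varphi(2^{-j}\xi)^2 |\widehat{u}(\xi)|^2\,\d\xi \leq C_s \, 2^{2js}\|\Delta_j u\|_{L^2}^2,
\end{align*}
and the analogous two-sided bound holds for $j=-1$ with $\chi$ in place of $\varphi(2^{-j}\cdot)$.

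Summing over $j\geq -1$ and using \eqref{almost orthogonality}, i.e.\ the fact that $\chi^2 + \sum_{j\geq 0}\varphi(2^{-j}\cdot)^2$ is bounded above by $1$ and below by $\tfrac{1}{2}$, one recovers $\int (1+|\xi|^2)^s|\widehat{u}|^2\,\d\xi$ up to multiplicative constants. Concretely, the upper bound in \eqref{equivalence norme} comes from the lower bound $\tfrac{1}{2}\leq \chi^2+\sum_j\varphi(2^{-j}\cdot)^2$, whereas the lower bound in \eqref{equivalence norme} comes from the upper bound $\chi^2+\sum_j\varphi(2^{-j}\cdot)^2\leq 1$. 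This immediately gives the two-sided estimate \eqref{equivalence norme}, and the characterization $u\in H^s\iff (2^{js}\|\Delta_j u\|_{L^2})_j\in\ell^2(\N)$ then follows by definition of $H^s$ as the space of tempered distributions with finite $H^s$ norm.

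There is no serious obstacle here; the only point requiring a touch of care is the separate treatment of the low-frequency block $\Delta_{-1}u$, whose dyadic weight $2^{-s}$ is a harmless constant but must be matched against the bounded weight $(1+|\xi|^2)^s$ on $\supp\chi$ rather than against a power of $2^j$. Everything else is a direct consequence of Plancherel, the support properties of $\chi$ and $\varphi(2^{-j}\cdot)$, and the almost orthogonality bound \eqref{almost orthogonality}.
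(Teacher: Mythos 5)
Your proposal is correct and follows essentially the same route as the paper: Plancherel, the two-sided almost-orthogonality bound \eqref{almost orthogonality}, and the support properties \eqref{support D-1}--\eqref{support Dj} to compare $\langle\xi\rangle^{2s}$ with $2^{2js}$ on each dyadic block (with the low-frequency block handled separately). Your identification of which side of \eqref{almost orthogonality} yields which inequality in \eqref{equivalence norme} matches the paper's argument.
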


\begin{proof}
By definition of the space $H^s$, we have $\l u \r_{H^s}  = \l \langle \cdot\rangle^s \mathcal{F}(u) \r_{L^2}$.  Using the first inequality in \eqref{almost orthogonality} we obtain
\begin{align*}
\l u \r^2_{H^s} &  \leq 2 \left( \int_{\R^3} \langle \xi \rangle^{2s}\chi^2(\xi)|\mathcal{F}(u)(\xi)|^2\d\xi + \sum_{j\geq 0} \int_{\R^3} \langle \xi \rangle^{2s}\ffi^2\left( 2^{-j} \xi\right) |\mathcal{F}(u)(\xi)|^2\d\xi\right)
\\&\lesssim 2^{-2s} \l \dy_{-1}u \r_{L^2}^2+ \sum_{j\geq 0}2^{2js} \l \dy_j u \r^2_{L^2},
\end{align*}
where we used the support properties of $\chi$ and $\ffi$. This proves half of \eqref{equivalence norme}, the other half is proved similarly using the other inequality in \eqref{almost orthogonality}.
\end{proof} 

The decomposition \eqref{decomposition u} is also very useful to study products of functions. Indeed, formally we have
\begin{align*}
uv = \sum_{j,j'\geq -1} \dy_j u \dy_{j'}v
\end{align*}
for $u$ and $v$ two tempered distributions.  As it is well known the product $uv$ could not even be a tempered distribution. In order to understand the product's properties, Bony introduced in \cite{Bony1981} what is now called the Bony decomposition:
\begin{mydef}\label{def bony}
For $u$ and $v$ two tempered distributions, we define the paraproduct $T_uv$ and the remainder $R(u,v)$ by
\begin{align*}
T_uv & = \sum_j S_{j-1}u \dy_jv,
\\ R(u,v) & = \sum_{|j-k|\leq 1}\dy_j u \dy_k v.
\end{align*}
The Bony decomposition of $uv$ is then
\begin{align*}
uv = T_uv + T_vu + R(u,v).
\end{align*}
\end{mydef}

In the sequel, we don't directly use Bony's decomposition but rather the spectral property of the general terms in the series defining $T_uv$ and $R(u,v)$, namely if $i\geq -1$ and $| j-k|\leq 1$:
\begin{align}
\spe{S_{i-1}u \dy_iv} & \subset 2^i \mathcal{C}'\label{support anneau},
\\ \spe{\dy_j u \dy_k v} & \subset 2^j B'\label{support boule},
\end{align}
with $\mathcal{C}'=\left\{ \frac{1}{12}\leq |\xi| \leq 3 \right\}$ and $B' =\left\{ 0\leq |\xi| \leq \frac{32}{3} \right\}$ .

\subsection{Properties of $\Pi_\leq$ and $\Pi_\geq$}\label{section def projectors}

Recall Definition \ref{def P}, where we define the operators $\Pi_\leq$ and $\Pi_\geq$. Their behaviour with respect to derivatives is similar to the one of dyadic blocks (see Proposition \ref{prop dyadic blocks}) thanks to the support property of their symbols.

\begin{lem}\label{lem P high low}
Let $f$ be a scalar function. We have
\begin{align}
\lambda \l \nabla \Pi_{\leq }(f) \r_{L^2} & \lesssim \l f \r_{L^2},\label{P low}
\\ \l \Pi_{\geq }(f) \r_{L^2} & \lesssim \lambda \l \nabla f \r_{L^2}\label{P high}.
\end{align}
\end{lem}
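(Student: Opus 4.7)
The two estimates are standard Bernstein-type bounds for Fourier multipliers, and the plan is to prove them both by passing to the Fourier side via Plancherel's identity and exploiting the support of the respective symbols $\chi_\lambda$ and $1-\chi_\lambda$ introduced in Definition \ref{def P}.

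For \eqref{P low}, the plan is to observe that $\nabla \Pi_{\leq}(f) = \mathcal{F}^{-1}(i\xi\, \chi_\lambda(\xi)\, \mathcal{F}(f))$, so that its Fourier transform is supported where $\chi_\lambda$ is supported, i.e.\ in $\{|\xi|\leq 2/\lambda\}$. On this set $|\xi|\,|\chi_\lambda(\xi)| \leq 2/\lambda$, hence the symbol is pointwise bounded by $2/\lambda$, and Plancherel's theorem yields
\begin{align*}
\|\nabla \Pi_{\leq}(f)\|_{L^2} = \|i\xi\, \chi_\lambda(\xi)\, \mathcal{F}(f)\|_{L^2} \leq \frac{2}{\lambda}\,\|\mathcal{F}(f)\|_{L^2} = \frac{2}{\lambda}\,\|f\|_{L^2},
\end{align*}
which is the desired inequality after multiplication by $\lambda$.

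For \eqref{P high}, the plan is to factor the missing power of $|\xi|$ out of the high-frequency symbol. Define the multiplier
\begin{align*}
m_\lambda(\xi) \vcentcolon= \frac{1-\chi_\lambda(\xi)}{\lambda\,|\xi|}
\end{align*}
on the set $\{1-\chi_\lambda(\xi)\neq 0\}\subset\{|\xi|\geq 1/\lambda\}$, and zero elsewhere. The key point is that on the support of $1-\chi_\lambda$ one has $\lambda|\xi|\geq 1$, so $|m_\lambda(\xi)|\leq 1$ uniformly in $\lambda$ and $\xi$. Writing $\widehat{\Pi_{\geq}(f)}(\xi) = (1-\chi_\lambda(\xi))\,\mathcal{F}(f)(\xi) = \lambda\, m_\lambda(\xi)\,|\xi|\,\mathcal{F}(f)(\xi)$ and applying Plancherel gives
\begin{align*}
\|\Pi_{\geq}(f)\|_{L^2} = \lambda\,\|m_\lambda(\xi)\,|\xi|\,\mathcal{F}(f)(\xi)\|_{L^2} \leq \lambda\,\|\,|\xi|\,\mathcal{F}(f)(\xi)\|_{L^2} = \lambda\,\|\nabla f\|_{L^2},
\end{align*}
which is \eqref{P high}.

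There is no real obstacle here: both inequalities reduce to a pointwise estimate on the symbol of the multiplier under consideration, combined with the $L^2$-isometry of the Fourier transform. The only mild subtlety, in the second estimate, is that one must verify that the factorization $1-\chi_\lambda = \lambda|\xi|\cdot m_\lambda$ produces a bounded multiplier, which follows immediately from the location of the support of $1-\chi_\lambda$ in $\{|\xi|\geq 1/\lambda\}$.
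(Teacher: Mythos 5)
Your proof is correct and follows essentially the same route as the paper: both estimates are reduced, via Plancherel, to the pointwise bound $|\xi|\chi_\lambda(\xi)\lesssim 1/\lambda$ on $\{|\xi|\leq 2/\lambda\}$ and to the fact that $1-\chi_\lambda$ is supported in $\{\lambda|\xi|\geq 1\}$, exactly as in the paper's argument.
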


\begin{proof}
To prove \eqref{P low}, we use Parseval's identity twice and the support property of $\chi_\lambda$:
\begin{align*}
 \l \nabla \Pi_{\leq }(f) \r_{L^2}^2 \leq \int_{|\xi|\leq \frac{2}{\lambda} }|\xi|^2  \left| \mathcal{F}(f)(\xi) \right|^2  \d\xi \lesssim \frac{1}{\lambda^2} \int_{\R^4} \left| \mathcal{F}(f)(\xi) \right|^2  \d\xi =  \frac{1}{\lambda^2}  \l f \r_{L^2}.
\end{align*}
The proof of \eqref{P low} is similar:
\begin{align*}
 \l \Pi_{\geq }(f) \r_{L^2}^2 &  \leq \int_{\lambda|\xi|\geq 1}  \left| \mathcal{F}(f)(\xi) \right|^2 \d\xi \lesssim \lambda^2 \int_{\R^4}  |\xi|^2 \left| \mathcal{F}(f)(\xi) \right|^2 \d\xi = \lambda^2 \l \nabla f \r_{L^2}^2.
\end{align*}
\end{proof}

The following lemma is a special case of Lemma 2.97 in \cite{Bahouri2011}. For the sake of completeness, we redo the proof.

\begin{lem}\label{lem commute BCD}
Let $u$ and $v$ be two tempered distributions. We have 
\begin{align}
\l \left[ u , \Pi_\leq  \right] v \r_{L^2} \lesssim \lambda \l\nabla u \r_{L^\infty} \l v\r_{L^2}.\label{commute BCD}
\end{align}
\end{lem}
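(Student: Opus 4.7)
The plan is to exploit the fact that $\Pi_\leq$ is a convolution operator to rewrite the commutator as an explicit integral, then use Taylor's inequality on $u$ to bring out the factor $\lambda$.

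First, since $\Pi_\leq$ is the Fourier multiplier with symbol $\chi_\lambda(\xi) = \chi_1(\lambda\xi)$, I would write $\Pi_\leq f = K_\lambda * f$, where $K_\lambda = \mathcal{F}^{-1}(\chi_\lambda)$. By the scaling $\chi_\lambda(\xi) = \chi_1(\lambda\xi)$, we get $K_\lambda(x) = \lambda^{-3} K_1(x/\lambda)$ with $K_1 = \mathcal{F}^{-1}(\chi_1)$, which is Schwartz since $\chi_1$ is smooth and compactly supported. In particular, the weighted $L^1$ norm $\||y| K_1(y)\|_{L^1(\R^3)}$ is finite, and by the change of variables $y = x/\lambda$,
\begin{align*}
\int_{\R^3} |x| \, |K_\lambda(x)| \, \d x = \lambda \int_{\R^3} |y| \, |K_1(y)| \, \d y \lesssim \lambda.
\end{align*}

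Next, I would write the commutator as
\begin{align*}
[u,\Pi_\leq]v(x) = u(x)(K_\lambda * v)(x) - K_\lambda * (uv)(x) = \int_{\R^3} K_\lambda(x-y)\bigl(u(x) - u(y)\bigr) v(y) \, \d y,
\end{align*}
and apply the mean value estimate $|u(x)-u(y)| \leq \|\nabla u\|_{L^\infty} |x-y|$. This yields the pointwise bound
\begin{align*}
\bigl|[u,\Pi_\leq]v(x)\bigr| \leq \|\nabla u\|_{L^\infty} \bigl(\,|\cdot|\,|K_\lambda| * |v|\,\bigr)(x).
\end{align*}
Applying Young's convolution inequality with exponents $1 + 2 \to 2$ and the previous computation then gives
\begin{align*}
\bigl\|[u,\Pi_\leq]v\bigr\|_{L^2} \leq \|\nabla u\|_{L^\infty} \bigl\|\,|x|\, K_\lambda\,\bigr\|_{L^1} \|v\|_{L^2} \lesssim \lambda \|\nabla u\|_{L^\infty} \|v\|_{L^2},
\end{align*}
which is exactly \eqref{commute BCD}.

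There is no real obstacle here; the only slightly delicate point is keeping track of the scaling in $\lambda$, which is encoded entirely in the dilation relation $K_\lambda(x) = \lambda^{-3}K_1(x/\lambda)$ and the resulting factor $\lambda$ in $\||x| K_\lambda\|_{L^1}$. The argument is robust: it uses only that the symbol $\chi_\lambda$ is a dilation of a fixed Schwartz multiplier, so it extends immediately to the $\Pi_\geq$ analogue if needed. In the sequel (the proof of Lemma \ref{lem commute bis}), the same kernel representation combined with Bony's paraproduct decomposition will be used to also handle a low-regularity piece of $u$, which is where the extra $\lambda \|u\|_{H^{7/2}}$ term comes in; but for the plain commutator \eqref{commute BCD} the elementary convolution argument above suffices.
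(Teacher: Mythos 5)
Your proof is correct and follows essentially the same route as the paper: the paper also writes $\Pi_\leq$ as convolution against the dilated kernel $\lambda^{-3}(\mathcal{F}^{-1}\chi_1)(\lambda^{-1}\cdot)$, expresses the commutator as an integral against $K_\lambda(x-y)(u(y)-u(x))$, bounds the increment of $u$ by $\|\nabla u\|_{L^\infty}|x-y|$ via Taylor's formula, and concludes with Young's inequality using the finiteness of $\|\,|z|\,\mathcal{F}^{-1}\chi_1(z)\|_{L^1}$. The only cosmetic difference is that you use the mean value inequality directly where the paper writes out the Taylor integral remainder.
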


\begin{proof}
Using the usual notation of pseudo-differential calculus we have $\Pi_\leq = \chi_1\left( \lambda \nabla \right)$. For $f$ a function we have $\chi_1\left( \lambda \nabla \right)f=\lambda^{-3} \left(\mathcal{F}^{-1}\chi_1\right)\left( \lambda^{-1}\cdot\right) * f $. This gives
\begin{align*}
\left( \left[ u ,  \chi_1\left( \lambda \nabla \right)  \right] v \right) (x) & = \frac{1}{\lambda^3}\int_{\R^3} \left(\mathcal{F}^{-1}\chi_1\right)\left( \lambda^{-1}(x-y)\right) (u(y)-u(x))v(y) \d y
\\& = - \frac{1}{\lambda^3}\int_{[0,1]\times\R^3}\left(\mathcal{F}^{-1}\chi_1\right)\left( \lambda^{-1}z\right) \langle \nabla u(x-z\tau),z \rangle v(x-z) \d z\d\tau,
\end{align*}
where we used the Taylor expansion for formula for $u$. Defining $\Psi(z)=z\left(\mathcal{F}^{-1}\chi_1\right)(z)$ we obtain
\begin{align*}
\left| \left( \left[ u ,  \chi_1\left( \lambda \nabla \right)  \right] v \right) (x)  \right| & \leq \frac{\l\nabla u \r_{L^\infty}}{\lambda^2} \int_{[0,1]\times\R^3}  \left| \Psi\left( \lambda^{-1}z \right) \right|    | v(x-z)| \d z\d\tau.
\end{align*}
We take the $L^2$ norm of this inequality to obtain after a last change of variable in $\Psi$:
\begin{align*}
\l \left[ u ,  \chi_1\left( \lambda \nabla \right)  \right] v \r_{L^2} \leq \lambda \l\nabla u \r_{L^\infty} \l v\r_{L^2} \l \Psi \r_{L^1}.
\end{align*}
This concludes the proof, noting that $\l \Psi \r_{L^1}\lesssim \l \chi_1 \r_{C^k}$ for $k$ large enough.
\end{proof}

\subsection{Proof of Lemma \ref{lem commute bis}}\label{section proof lem commute bis}

Let $N\in\N$ the unique integer such that $\frac{1}{\lambda}\leq 2^N < \frac{2}{\lambda}$.  In this proof, we will mainly use the fact that $\chi_\lambda(\xi)=1$ if $|\xi|\leq 2^{N-1}$ and $\chi_\lambda(\xi)=0$ if $|\xi|\geq 2^{N+1}$ (recall that $\chi_\lambda$ is defined in Definition \ref{def P}). To benefit from this, we use the dyadic decompositions of $u$ and $v$ and regroup terms following the paraproduct decomposition (see Definition \ref{def bony}) to obtain:
\begin{align*}
\left[ u , \Pi_\leq  \right]\nabla v  & = \sum_j \left[ S_{j-1} u,\Pi_\leq \right] \dy_j \nabla v  + \sum_j \left[ \dy_j u,\Pi_\leq \right] S_{j-1} \nabla v + \sum_{|i-j|\leq 1} \left[ \dy_i u,\Pi_\leq \right] \dy_j \nabla v
\\& = \vcentcolon A_1 + A_2 + A_3.
\end{align*}
We start with $A_1$.  We have
\begin{align}
\left[ S_{j-1} u,\Pi_\leq \right] \dy_j \nabla v & = S_{j-1} u \Pi_\leq \left(  \dy_j \nabla v \right) - \Pi_\leq \left( S_{j-1} u \dy_j \nabla v \right)\label{A1 terme general}.
\end{align}
If $j$ is such that $2^j\times \frac{1}{12}\geq 2^{N+1}$ then \eqref{support Dj} implies $\Pi_\leq \dy_j =0$ and \eqref{support anneau} implies that $\Pi_\leq \left( S_{j-1} u \dy_j \nabla v \right)=0$.  Therefore in this case the two terms in \eqref{A1 terme general} vanish.  If $3\times 2^j \leq 2^{N-1}$ then $\Pi_\leq \dy_j =\dy_j$ and $\Pi_\leq \left( S_{j-1} u \dy_j \nabla v \right)=S_{j-1} u \dy_j \nabla v$, so the two terms in \eqref{A1 terme general} are equal. Therefore there exists $C>0$ independent from $\lambda$ such that
\begin{align*}
\l A_1 \r_{L^2} & \leq \sum_{\frac{1}{C\lambda}\leq 2^j \leq \frac{C}{\lambda}   }\l  \left[ S_{j-1} u,\Pi_\leq \right] \dy_j \nabla v  \r_{L^2}.
\end{align*}
We now use \eqref{commute BCD} and \eqref{Bernstein estim} to obtain
\begin{align*}
\l A_1 \r_{L^2} & \lesssim \lambda \sum_{\frac{1}{C\lambda}\leq 2^j \leq \frac{C}{\lambda}   }    \l \nabla S_{j-1}u \r_{L^\infty} \l \dy_j\nabla v \r_{L^2}
\\& \lesssim \lambda \l \nabla u \r_{L^\infty} \sum_{\frac{1}{C\lambda}\leq 2^j \leq \frac{C}{\lambda}   }   2^j \l \dy_j v \r_{L^2}.
\end{align*}
Using now the Cauchy-Schwarz inequality for finite sums, the characterization of $L^2$ given in Proposition \ref{charac Hs} and $\left( \sum_{\frac{1}{C\lambda}\leq 2^j \leq \frac{C}{\lambda}   }   2^{pj}\right)^{\frac{1}{p}} \lesssim \lambda^{-1}$ we obtain
\begin{align}
\l A_1 \r_{L^2} & \lesssim \l \nabla u \r_{L^\infty} \l v \r_{L^2}.\label{conclusion A1}
\end{align}

\par\leavevmode\par
For $A_2$, we have
\begin{align}
\left[ \dy_j u,\Pi_\leq \right] S_{j-1} \nabla v & = \dy_j u \Pi_\leq \left( S_{j-1}  \nabla v \right) - \Pi_\leq \left( \dy_j u S_{j-1}  \nabla v \right)\label{A2 terme general}.
\end{align}
If $3\times 2^j \leq 2^{N-1}$ then $\Pi_\leq S_{j-1}=S_{j-1}$ and \eqref{support anneau} still implies $\Pi_\leq \left( \dy_j u S_{j-1}  \nabla v \right)= \dy_j u S_{j-1}  \nabla v $, so the two terms in \eqref{A2 terme general} are equal. If $2^j\times\frac{1}{12}\geq 2^{N+1}$, then $\Pi_\leq S_{j-1}=\Pi_\leq$ and $\Pi_\leq \left( \dy_j u S_{j-1}  \nabla v \right)=0$. Therefore there exists $C$ independent from $\lambda$ such that
\begin{align*}
\l A_2\r_{L^2} & \leq \sum_{\frac{1}{C\lambda}\leq 2^j \leq \frac{C}{\lambda} } \l \left[ \dy_j u,\Pi_\leq \right] S_{j-1} \nabla v \r_{L^2} + \sum_{ \frac{C}{\lambda}\leq 2^j} \l \dy_j u \Pi_\leq \left(  \nabla v \right) \r_{L^2}.
\end{align*}
We treat the first sum as we treated $A_1$:
\begin{align*}
\sum_{\frac{1}{C\lambda}\leq 2^j \leq \frac{C}{\lambda} } \l \left[ \dy_j u,\Pi_\leq \right] S_{j-1} \nabla v \r_{L^2} & \lesssim \lambda  \sum_{\frac{1}{C\lambda}\leq 2^j \leq \frac{C}{\lambda} } \l \nabla \dy_j u \r_{L^\infty} \l S_{j-1} \nabla v \r_{L^2}
\\&\lesssim \l \nabla  u \r_{L^\infty} \l  v \r_{L^2} \lambda  \sum_{\frac{1}{C\lambda}\leq 2^j \leq \frac{C}{\lambda} } 2^j
\\&\lesssim \l \nabla  u \r_{L^\infty} \l  v \r_{L^2}.
\end{align*}
For the second sum,  we use \eqref{P low} and then \eqref{Linfini L2}
\begin{align*}
\sum_{ \frac{C}{\lambda}\leq 2^j} \l \dy_j u \Pi_\leq \left(  \nabla v \right) \r_{L^2} & \lesssim\frac{1}{\lambda} \l  v \r_{L^2} \sum_{ \frac{C}{\lambda}\leq 2^j} \l \dy_j u \r_{L^\infty}
\\& \lesssim\frac{1}{\lambda} \l  v \r_{L^2} \sum_{ \frac{C}{\lambda}\leq 2^j} 2^{-2j} 2^{\frac{7}{2}j} \l \dy_j u \r_{L^2}
\\& \lesssim \lambda \l u \r_{H^{\frac{7}{2}}}\l  v \r_{L^2} ,
\end{align*}
where we used the Cauchy-Schwarz inequality in $\ell^2$, the characterization of $H^{\frac{7}{2}}$ given in Proposition \ref{charac Hs} and $\left( \sum_{ \frac{C}{\lambda}\leq 2^j} 2^{-2j} \right)^\half \lesssim \lambda^2$. We finally obtain
\begin{align}
\l A_2 \r_{L^2} \lesssim \left( \l \nabla  u \r_{L^\infty}  +   \lambda \l u \r_{H^{\frac{7}{2}}}  \right) \l  v \r_{L^2} .\label{conclusion A2}
\end{align}

\par\leavevmode\par
For $A_3$, we define 
\begin{align}
R_j = \sum_{i=j-1}^{j+1} \left[ \dy_i u,\Pi_\leq \right] \dy_j \nabla v =  \sum_{i=j-1}^{j+1}\left(  \dy_iu\Pi_\leq \left( \dy_j\nabla v\right) - \Pi_\leq \left( \dy_i u \dy_j\nabla v \right)  \right),\label{Rj}
\end{align}
so that $A_3=\sum_j R_j$. 
If $2^j\times\frac{32}{3}\leq 2^{N-1}$ then \eqref{support boule} implies $\Pi_\leq \left( \dy_i u \dy_j\nabla v \right)  = \dy_i u \dy_j\nabla v $ and \eqref{support Dj} implies $\Pi_\leq \dy_j = \dy_j$, so that the two terms in \eqref{Rj} are equal and $R_j=0$. Therefore there exists a constant $C>0$ independent from $\lambda$ such that
\begin{align*}
\l A_3 \r_{L^2} \leq \sum_{\frac{C}{\lambda}\leq 2^j} \sum_{i=j-1}^{j+1}  \l \left[ \dy_i u,\Pi_\leq \right] \dy_j \nabla v  \r_{L^2}.
\end{align*}
We use \eqref{commute BCD}, \eqref{Bernstein estim} and \eqref{Linfini L2}
\begin{align*}
\l A_3 \r_{L^2} & \lesssim \lambda \sum_{\frac{C}{\lambda}\leq 2^j} \sum_{i=j-1}^{j+1} 2^{\frac{5}{2}i+j} \l  \dy_i u\r_{L^2} \l \dy_j v \r_{L^2}
\\& \lesssim \lambda \sum_{\frac{C}{\lambda}\leq 2^j} 2^{\frac{7}{2}j} \l  \dy_j u\r_{L^2} \l \dy_j v \r_{L^2},
\end{align*}
where we neglect the sum over $i$. Using again the Cauchy-Schwarz inequality in $\ell^2$, the characterization of $H^{\frac{7}{2}}$ and $L^2$ given in Proposition \ref{charac Hs} we finally obtain
\begin{align*}
\l A_3 \r_{L^2} \lesssim   \lambda \l u \r_{H^{\frac{7}{2}}}   \l  v \r_{L^2}.
\end{align*}
Together with \eqref{conclusion A1} and \eqref{conclusion A2} this last estimate concludes the proof of Lemma \ref{lem commute bis}.

\bibliographystyle{alpha}
\newcommand{\etalchar}[1]{$^{#1}$}

\end{document}